\newtheorem{assumption}{Assumption}
\newtheorem{proposition}{Proposition}
\newtheorem{remark}{Remark}
\newtheorem{lemma}{Lemma}
\newtheorem{corollary}{Corollary}
\begin{document}

\bibliographystyle{asa}

{
\singlespacing
\title{\textbf{Testing the Number of Regimes in Markov Regime Switching Models}
\author{Hiroyuki Kasahara\thanks{This research is supported by the Natural Science and Engineering Research Council of Canada, JSPS Grant-in-Aid for Scientific Research (C) No.\ 17K03653, and the Institute of Statistical Mathematics Cooperative Use Registration (2017 ISM CUR-171). The authors thank the seminar participants at Indiana University, LSE, and Vanderbilt University for their helpful comments. The authors also thank Chiyoung Ahn for excellent research assistance and Marine Carrasco, Liang Hu, and Werner Ploberger for making their code available.}\\
Vancouver School of Economics\\
University of British Columbia\\
hkasahar@mail.ubc.ca \and Katsumi Shimotsu\\
Faculty of Economics \\
University of Tokyo\\
shimotsu@e.u-tokyo.ac.jp
}}
\date{January 28, 2018} 
\maketitle
}
\vspace{-0.2in}
\begin{abstract}
Markov regime switching models have been used in numerous empirical studies in economics and finance. However, the asymptotic distribution of the likelihood ratio test statistic for testing the number of regimes in Markov regime switching models has been an unresolved problem. This paper derives the asymptotic distribution of the likelihood ratio test statistic for testing the null hypothesis of $M_0$ regimes against the alternative hypothesis of $M_0 + 1$ regimes for any $M_0\geq 1$ both under the null hypothesis and under local alternatives. We show that the contiguous alternatives converge to the null hypothesis at a rate of $n^{-1/8}$ in regime switching models with normal density. The asymptotic validity of the parametric bootstrap is also established.
\end{abstract}

Key words: Differentiable in quadratic mean expansion; likelihood ratio test; Markov regime switching model; parametric bootstrap.

\section{Introduction}
The Markov regime switching model has been a popular framework for empirical work in economics and finance. Following the seminal contribution by \citet{hamilton89em}, it has been used in numerous empirical studies to model, for example, the business cycle \citep{hamilton05stlouis, morleypiger12restat}, stock market volatility \citep{hamiltonsusmel94joe}, international equity markets \citep{angbekaert02rfs, okimoto08jfqa}, monetary policy \citep{schorfheide05red, simszha06aer, bianchi13restud}, and economic growth \citep{kahnrich07jme}. Comprehensive theoretical accounts and surveys of applications are provided by \citet{hamilton08palgrave, hamilton16hdbk} and \citet{angtimmermann12annual}.

The number of regimes is an important parameter in applications of Markov regime switching models. Despite its importance, however, testing for the number of regimes in Markov regime switching models has been an unsolved problem because the standard asymptotic analysis of the likelihood ratio test statistic (LRTS) breaks down because of problems such as unidentifiable parameters, the true parameter being on the boundary of the parameter space, and the degeneracy of the Fisher information matrix. Testing the number of regimes for Markov regime switching models with normal density, which are popular in empirical applications, poses a further difficulty because normal density has the undesirable mathematical property that the second-order derivative with respect to the mean parameter is linearly dependent on the first derivative with respect to the variance parameter, leading to further singularity.

This paper proposes the likelihood ratio test of the null hypothesis of $M_0$ regimes against the alternative hypothesis of $M_0 + 1$ regimes for any $M_0\geq 1$ and derives its asymptotic distribution. To the best of our knowledge, the asymptotic distribution of the LRTS has not been derived for testing the null hypothesis of $M_0$ regimes with $M_0 \geq 2$. To test the null hypothesis of no regime switching, namely $M_0=1$, \citet{hansen92jae} derives an upper bound of the asymptotic distribution of the LRTS, and \citet{garcia98ier} also studies this problem. \citet{carrasco14em} propose an information matrix-type test for parameter constancy in general dynamic models including regime switching models. \citet{chowhite07em} derive the asymptotic distribution of the quasi-LRTS for testing the single regime against two regimes by rewriting the model as a two-component mixture model, thereby ignoring the temporal dependence of the regimes.\footnote{\citet{cartersteigerwald12em} show that ignoring temporal dependence may render the quasi-maximum likelihood estimator inconsistent.} \citet{quzhuo17wp} extend the analysis of \citet{chowhite07em} and derive the asymptotic distribution of the LRTS that properly takes into account the temporal dependence of the regimes under some restrictions on the transition probabilities of latent regimes. \citet{marmer08empirical} and \citet{dufour17emreviews} develop tests for the null hypothesis of no regime switching by using different approaches from the LRTS. The studies discussed above focus on testing the single regime against two regimes. To the best of our knowledge, however, the asymptotic distribution of the LRTS for testing the null hypothesis of $M_0$ regimes with $M_0 \geq 2$ remains unknown.

Several papers in the literature consider tests when some parameters are not identified under the null hypothesis. These include \citet{davies77bm, davies87bm}, \citet{andrewsploberger94em, andrewsploberger95as}, \citet{hansen96em}, \citet{andrews01em}, and \citet{liushao03as}, among others. Estimation and testing with a degenerate Fisher information matrix are investigated in an iid setting by \citet{chesher84em}, \citet{leechesher86joe}, \citet{rotnitzky00bernoulli}, and \citet{gukoenkervolgushev17wp}, among others. \citet{chen14joe} examine uniform inference on the mixing probability in mixture models.

To facilitate the analysis herein, we develop a version of Le Cam's differentiable in quadratic mean (DQM) expansion that expands the likelihood ratio under the loss of identifiability, while adopting the reparameterization and higher-order expansion of \citet{kasaharashimotsu15jasa}. In an iid setting, \citet{liushao03as} develop a DQM expansion under the loss of identifiability in terms of the generalized score function. We extend \citet{liushao03as} to accommodate dependent and heterogeneous data as well as modify them to fit our context of parametric regime switching models. Using a DQM-type expansion has an advantage over the ``classical'' approach based on the Taylor expansion up to the Hessian term because deriving a higher-order expansion becomes tedious as the order of expansion increases in a Markov regime switching model. Furthermore, regime switching models with normal components are not covered by \citet{liushao03as} because their Theorem 4.1 assumes that the generalized score function is obtained by expanding the likelihood ratio twice, whereas our Section \ref{subsec:hetero_normal} shows that the score function is a function of the fourth derivative of the likelihood ratio in the normal case.

Our approach follows \citet{dmr04as} [DMR hereafter], who derive the asymptotic distribution of the maximum likelihood estimator (MLE) of regime switching models. We express the higher-order derivatives of the period density ratios in terms of the conditional expectation of the derivatives of the period \textit{complete-data} log-density, i.e., the log-density when the state variable is observable, by applying the missing information principle \citep[][]{woodbury71biometrics,louis82jrssb} and extending the analysis of DMR. We then show that these derivatives of the period density ratios can be approximated by a stationary, ergodic, and square integrable martingale difference sequence by conditioning on the infinite past, and this approximation is shown to satisfy the regularity conditions for our DQM expansion.

We first derive the asymptotic null distribution of the LRTS for testing $H_0: M=1$ against $H_A: M=2$. When the regime-specific density is not normal, the log-likelihood function is locally approximated by a quadratic function of the \textit{second-order} polynomials of the reparameterized parameters. When the density is normal, the degree of deficiency of the Fisher information matrix and required order of expansion depend on the value of the unidentified parameter; in particular, when the latent regime variables are serially uncorrelated, the model reduces to a finite mixture normal model in which the fourth-order DQM expansion is necessary to derive a quadratic approximation of the log-likelihood function. We expand the log-likelihood with respect to the judiciously chosen polynomials of the reparameterized parameters---which involves the \textit{fourth-order} polynomials---to obtain a uniform approximation of the log-likelihood function in quadratic form and derive the asymptotic null distribution of the LRTS by maximizing the quadratic form under a set of cone constraints building on the results of \citet{andrews99em,andrews01em}.

To derive the asymptotic null distribution of the LRTS for testing $H_0: M=M_0$ against $H_A: M=M_0+1$ for $M_0\geq 2$, we partition a set of parameters that describes the true null model in the alternative model into $M_0$ subsets, each of which corresponds to a specific way of generating the null model. We show that the asymptotic distribution of the LRTS is characterized by the maximum of the $M_0$ random variables, each of which represents the LRTS for testing each of the $M_0$ subsets.

We also derive the asymptotic distribution of the LRTS under local alternatives. \citet{carrasco14em} show that the contiguous local alternatives of their tests are of order $n^{-1/4}$, where $n$ is the sample size. In a related problem of testing the number of components in finite mixture normal regression models, \citet{kasaharashimotsu15jasa} show that the contiguous local alternatives are of order $n^{-1/8}$ \citep[see also][]{chenli09as, chenlifu12jasa, honguyen16as}. We show that the value of the unidentified parameter affects the convergence rate of the contiguous local alternatives. When the regime-specific density is normal, some contiguous local alternatives are of the order $n^{-1/8}$, and the LRT is shown to have non-trivial power against them. The tests of \citet{carrasco14em} do not have power against such alternatives, whereas the test of \citet{quzhuo17wp} rules out such alternatives because of their restriction on the parameter space.

The asymptotic validity of the parametric bootstrap is also established both under the null hypothesis and under local alternatives. The simulations show that our bootstrap LRT has good finite sample properties. Our results also imply that the bootstrap LRT is valid for testing the number of hidden states in hidden Markov models because this paper's model includes the hidden Markov model as a special case. Although several papers have analyzed the asymptotic property of the MLE of the hidden Markov model,\footnote{See, for example, \citet{leroux92spa}, \citet{francq98stat}, \citet{krishnamurthy98jtsa}, \citet{brr98as}, \citet{jp99as}, \citet{legrandmevel00math}, and \citet{doucmatias01bernouiil}.} the asymptotic distribution of the LRTS for testing the number of hidden states has been an open question.%
\footnote{\citet{gassiat00esaim} show that the LRTS for testing $H_0: M=1$ against $H_A: M=2$ diverges when state-specific densities have \emph{known and distinct} parameter values. \citet{dannemann08cjstat} analyze the modified quasi-LRTS for testing the null of two states against three.}

The remainder of this paper is organized as follows. After introducing the notation and assumptions in Section 2, we discuss the degeneracy of the Fisher information matrix and loss of identifiability in regime switching models in Section 3. Section 4 establishes the DQM-type expansion. Section 5 presents the uniform convergence of the derivatives of the density ratios. Sections 6 and 7 derive the asymptotic null distribution of the LRTS. Section 8 derives the asymptotic distribution under local alternatives. Section 9 establishes the consistency of the parametric bootstrap. Section 10 reports the results from the simulations and an empirical application, using U.S. GDP per capita quarterly growth rate data. Section 11 collects the proofs and auxiliary results.

\section{Notation and assumptions}

Let $:=$ denote ``equals by definition.'' Let $\Rightarrow$ denote the weak convergence of a sequence of stochastic processes indexed by $\pi$ for some space $\Pi$. For a matrix $B$, let $\lambda_{\min}(B) $ and $\lambda_{\max}(B)$ be the smallest and largest eigenvalues of $B$, respectively. For a $k$-dimensional vector $x = (x_1,\ldots,x_k)'$ and a matrix $B$, define $|x| := \sqrt{x'x}$ and $|B| := \sqrt{\lambda_{\max}(B'B)}$. For a $k\times 1$ vector $a=(a_1,\ldots,a_k)'$ and a function $f(a)$, let $\nabla_{a} f(a):=(\partial f(a)/\partial a_{1},\ldots,\partial f(a)/\partial a_{k})'$, and let $\nabla_{a}^j f(a)$ denote a collection of derivatives of the form $(\partial^j/\partial a_{i_1}\partial a_{i_2} \ldots \partial a_{i_j})f(a)$. Let $\mathbb{I}\{A\}$ denote an indicator function that takes the value 1 when $A$ is true and 0 otherwise. $\mathcal{C}$ denotes a generic non-negative finite constant whose value may change from one expression to another. Let $a \vee b :=\max\{a,b\}$ and $a \wedge b :=\min\{a,b\}$. Let $\lfloor x \rfloor$ denote the largest integer less than or equal to $x$, and define $(x)_+ := \max\{x,0\}$. Given a sequence $\{f_k\}_{k=1}^n$, let $\nu_n(f_k) := n^{-1/2} \sum_{k=1}^n [f_k - \mathbb{E}_{\vartheta^*}(f_k)]$. For a sequence $X_{n\varepsilon}$ indexed by $n=1,2,\ldots$ and $\varepsilon$, we write $X_{n\varepsilon} = O_{p\varepsilon}(a_n)$ if, for any $\delta>0$, there exist $\varepsilon>0$ and $M, n_0 <\infty$ such that $\mathbb{P}(|X_{n\varepsilon}/a_n| \leq M) \geq 1- \delta$ for all $n > n_0$, and we write $X_{n\varepsilon} = o_{p\varepsilon}(a_n)$ if, for any $\delta_1,\delta_2>0$, there exist $\varepsilon>0$ and $n_0$ such that $\mathbb{P}(|X_{n\varepsilon}/a_n| \leq \delta_1) \geq 1- \delta_2$ for all $n > n_0$. Loosely speaking, $X_{n\varepsilon} = O_{p\varepsilon}(a_n)$ and $X_{n\varepsilon} = o_{p\varepsilon}(a_n)$ mean that $X_{n\varepsilon} = O_{p}(a_n)$ and $X_{n\varepsilon} = o_{p}(a_n)$ when $\varepsilon$ is sufficiently small, respectively. All limits are taken as $n \to \infty$ unless stated otherwise. The proofs of all the propositions and lemmas are presented in the appendix.
 
Consider the Markov regime switching process defined by a discrete-time stochastic process $\{(X_k,Y_k,W_k)\}$, where $(X_k,Y_k,W_k)$ takes values in a set $\mathcal{X}_M\times \mathcal{Y}\times \mathcal{W}$ with $\mathcal{Y}\subset \mathbb{R}^{q_y}$ and $\mathcal{W}\subset \mathbb{R}^{q_w}$, and let $\mathcal{B}(\mathcal{X}_M\times\mathcal{Y}\times\mathcal{W})$ denote the associated Borel $\sigma$-field. For a stochastic process $\{Z_k\}$ and $a<b$, define ${\bf Z}_{a}^b:=(Z_a,Z_{a+1},\ldots,Z_b)$. Denote $\overline {\bf Y}_{k-1}:=(Y_{k-1},\ldots,Y_{k-s})$ for a fixed integer $s$ and $\overline {\bf Y}^b_{a}:=(\overline {\bf Y}_{a},\overline {\bf Y}_{a+1},\ldots,\overline {\bf Y}_{b})$. Here, $Y_k$ is an observable variable, $X_k$ is an unobservable state variable, $\overline {\bf Y}_{k-1}$ is the lagged $Y_k$'s used as a covariate, and $W_k$ is a weakly exogenous covariate. DMR's model does not include $W_k$. 
\begin{assumption} \label{assn_a1}
(a) $\{X_k\}_{k=0}^\infty$ is a first-order Markov chain with the state space $\mathcal{X}_M:=\{1,2,\ldots,M\}$. (b) For each $k\geq 1$, $X_k$ is independent of $({\bf X}_{0}^{k-2},\overline {\bf Y}_{0}^{k-1},{\bf W}_{0}^\infty)$ given $X_{k-1}$. (c) For each $k \geq 1$, $Y_k$ is conditionally independent of $( {\bf Y}_{-s+1}^{k-s-1}, {\bf X}_{0}^{k-1}, {\bf W}_{0}^{k-1},{\bf W}_{k+1}^{\infty})$ given $(\overline {\bf Y}_{k-1},X_k,W_k)$. (d) $ {\bf W}_{1}^{\infty}$ is conditionally independent of $(\overline {\bf Y}_{0},X_{0})$ given $W_0$.\footnote{Assumptions \ref{assn_a1}(a)--(d) imply that $W_k$ is conditionally independent of $({\bf X}_{0}^{k-1},\overline{\bf Y}_{0}^{k-1})$ given ${\bf W}_{0}^{k-1}$.} (e) $\{(X_k,Y_k,W_k)\}_{k=0}^{\infty}$ is a strictly stationary ergodic process.
\end{assumption} 
When $W_k$ is absent, DMR provide a sufficient condition for the ergodicity of $(X_k,Y_k)$ in their Assumption (A2). We assume the ergodicity of $(X_k,Y_k,W_k)$ for brevity.

The unobservable Markov chain $\{X_k\}$ is called the \textit{regime}. The integer $M$ represents the number of regimes specified in the model. The parameter $\vartheta_M=(\vartheta_{M,y}',\vartheta_{M,x}')'$ belongs to $\Theta_M= \Theta_{M,y}\times \Theta_{M,x}$, a compact subset of $\mathbb{R}^{q_M}$. $\vartheta_{M,x}$ contains the parameter of the transition probability of $X_k$, which we denote by $q_{\vartheta_{M,x}}(x_{k-1},x_k):= \mathbb{P}(X_{k}=x_k|X_{k-1}=x_{k-1})$. Let $p_{ij}:= q_{\vartheta_{M,x}}(i,j)$ for $i=1,\ldots,M$ and $j=1,\ldots,M-1$, and $q_{\vartheta_{M,x}}(i,M)$ is determined by $q_{\vartheta_{M,x}}(i,M) = 1-\sum_{j = 1}^{M - 1} p_{ij}$. $\vartheta_{M,y}=(\theta_1',\ldots,\theta_M',\gamma')'$ contains the parameter of the conditional density of $Y_k$ given $(\overline {\bf Y}_{k-1},X_k,W_k)$, which is given by $g_{\vartheta_{M,y}}(y_k|\overline{\bf y}_{k-1},x_k,w_k) := \sum_{j \in \mathcal{X}_M} \mathbb{I}\{x_k=j\}f(y_k|\overline{\bf{y}}_{k-1},w_k;\gamma,\theta_j)$. Here, $\gamma$ is the structural parameter that does not vary across regimes, $\theta_j$ is the regime-specific parameter that varies across regimes, and $f(y_k|\overline{\bf{y}}_{k-1},w_k;\gamma,\theta_j)$ is the conditional density of $y_k$ given $(\overline{\bf{y}}_{k-1},w_k)$ when $x_k=j$. 
Let 
\begin{align*}
p_{\vartheta} (y_k,x_k| \overline{\bf{y}}_{k-1},x_{k-1},w_k) &:= q_{\vartheta_x}(x_{k-1},x_k)g_{\vartheta_y}(y_k|\overline{\bf{y}}_{k-1},x_k,w_k) \\
 & = q_{\vartheta_x}(x_{k-1},x_k)\sum_{j \in \mathcal{X}_M} \mathbb{I}\{x_k=j\}f(y_k|\overline{\bf{y}}_{k-1},w_k;\gamma,\theta_j).
\end{align*} 
We assume $\Theta_{M,y}=\Theta_{\theta} \times \cdots \times \Theta_\theta\times \Theta_\gamma$, and the true parameter value is denoted by $\vartheta_M^*$. 

We make the following assumptions that correspond to (A1)--(A3) in DMR.
\begin{assumption}\label{assn_a2}
(a) $0<\sigma_{-}:=\inf_{\vartheta_{M,x}\in\Theta_{M,x}} \min_{x,x'\in \mathcal{X}_M} q_{\vartheta_{M,x}}(x,x')$ and \\$\sigma_+:=\sup_{\vartheta_{M,x}\in\Theta_{M,x}}\max_{x,x'\in\mathcal{X}_M}q_{\vartheta_{M,x}}(x,x')<1$ for each $M$. (b) For all $y'\in \mathcal{Y}$, $\overline y\in \mathcal{Y}^s$, and $w\in \mathcal{W}$, $0<\inf_{\vartheta_y\in\Theta_{M,y}} \sum_{x\in \mathcal{X}_M} g_{\vartheta_{M,y}}(y'|\overline y,x,w)$ and $\sup_{\vartheta_{M,y}\in\Theta_{M,y}} \sum_{x\in \mathcal{X}_M} g_{\vartheta_{M,y}}(y'|\overline y,x,w) <\infty$. (c) $b_+:=\sup_{\vartheta_{M,y}\in\Theta_y} \sup_{\overline {\bf y}_0,y_1,w,x} g_{\vartheta_{M,y}}(y_1|\overline {\bf y}_0,x,w)<\infty$ and $\mathbb{E}_{\vartheta^*}(|\log b_{-}(\overline{\bf{Y}}_0^1,W_1)|)<\infty$, where $b_{-}(\overline{\bf y}_0^1,w_1):=\inf_{\vartheta_{M,y}\in\Theta_{M,y}} \sum_{x\in\mathcal{X}_M} g_{\vartheta_{M,y}}(y_1|\overline{\bf y}_0,x,w_1)$.
 \end{assumption} 

As discussed on p. 2260 of DMR, Assumption \ref{assn_a2}(a) implies that the Markov chain $\{X_k\}$ has a unique invariant distribution and is uniformly ergodic for all $\theta_{M,x} \in \Theta_{M,x}$.\footnote{Assumptions \ref{assn_a1}(c) and \ref{assn_a2}(a) are also employed in DMR. As discussed in \citet{kasaharashimotsu17hamilton}, these assumptions together rule out models in which the conditional density $Y_k$ depends on both current and lagged regimes. \citet{kasaharashimotsu17hamilton} show the asymptotic normality of the MLE while relaxing Assumption \ref{assn_a2}(a) to allow for $\inf_{\vartheta_{M,x}\in\Theta_{M,x}} \min_{x,x'\in \mathcal{X}_M} q_{\vartheta_{M,x}}(x,x')=0$. It is possible to derive the asymptotic distribution of the LRT under similar assumptions to \citet{kasaharashimotsu17hamilton}, albeit with a tedious derivation.} For notational brevity, we drop the subscript $M$ from $\mathcal{X}_M$, $\vartheta_{M}$, $\Theta_{M}$, etc., unless it is important to clarify the specific value of $M$. Assumptions \ref{assn_a1}(b) and (c) imply that $\{Z_k\}_{k=0}^{\infty}:=\{(X_k,\overline{\bf Y}_{k})\}_{k=0}^{\infty}$ is a Markov chain on $\mathcal{Z}:=\mathcal{X}\times\mathcal{Y}^s$ given $\{W_k\}_{k=0}^{\infty}$, and $Z_k$ is conditionally independent of $({\bf Z}_0^{k-2},{\bf W}_{0}^{k-1},{\bf W}_{k+1}^{\infty})$ given $(Z_{k-1},W_k)$. Consequently, Lemma 1, Corollary 1, and Lemma 9 of DMR go through even in the presence of $\{W_k\}_{k=0}^{\infty}$. Because $\{(Z_k,W_k)\}_{k=0}^\infty$ is stationary, we can and will extend $\{(Z_k,W_k)\}_{k=0}^\infty$ to a stationary process $\{(Z_k,W_k)\}_{k=-\infty}^\infty$ with doubly infinite time. We denote the probability measure and associated expectation of $\{(Z_k,W_k)\}_{k=\infty}^\infty$ under stationarity by $\mathbb{P}_\vartheta$ and $\mathbb{E}_\vartheta$, respectively.\footnote{DMR use $\overline{\mathbb{P}}_\vartheta$ and $\overline{\mathbb{E}}_\vartheta$ to denote probability and expectation under stationarity on $\{Z_k\}_{k=\infty}^\infty$, because their Section 7 deals with the case when $Z_0$ is drawn from an arbitrary distribution. Because we assume $\{(Z_k,W_k)\}_{k=\infty}^\infty$ is stationary, we use notations such as $\mathbb{P}_\vartheta$ and $\mathbb{E}_\vartheta$ without an overline for simplicity.}

Under Assumptions \ref{assn_a1}(a)--(d), the density of ${\bf Y}_{1}^n$ given $X_0=x_0$, $\overline{\bf Y}_{0}$ and ${\bf W}_{0}^n$ is given by
\begin{equation} \label{cond_density}
p_{\vartheta_{M}}({\bf Y}_1^n| \overline{\bf Y}_{0},{\bf W}_{0}^n,x_0) = \sum_{{\bf x}_1^n\in \mathcal{X}_{M}^n}\prod_{k=1}^n p_{\vartheta_M}(Y_k,x_k|\overline{\bf Y}_{k-1},x_{k-1},W_k).
\end{equation}
Define the conditional log-likelihood function and stationary log-likelihood function as
\begin{align*}
\ell_n(\vartheta,x_0) & := \log p_\vartheta({\bf Y}_1^n| \overline{\bf Y}_{0},{\bf W}_{0}^n,x_0)=\sum_{k=1}^n \log p_\vartheta (Y_k| \overline{\bf Y}_{0}^{k-1},{\bf W}_{0}^{k},x_0) ,\\
\ell_n(\vartheta) & := \log p_\vartheta({\bf Y}_1^n| \overline{\bf Y}_{0},{\bf W}_{0}^n)=\sum_{k=1}^n \log p_\vartheta (Y_k| \overline{\bf Y}_{0}^{k-1},{\bf W}_{0}^{k}), 
\end{align*}
where we use the fact that $p_\vartheta (Y_k| \overline{\bf Y}_{0}^{k-1},{\bf W}_{0}^{n},x_0) = p_\vartheta (Y_k| \overline{\bf Y}_{0}^{k-1},{\bf W}_{0}^{k},x_0)$ and $p_\vartheta (Y_k| \overline{\bf Y}_{0}^{k-1},{\bf W}_{0}^{n}) = p_\vartheta (Y_k| \overline{\bf Y}_{0}^{k-1},{\bf W}_{0}^{k})$, which follows from Assumption \ref{assn_a1}. Note that
\begin{align*}
& p_\vartheta (Y_k| \overline{\bf Y}_{0}^{k-1},{\bf W}_{0}^{k},x_0) - p_\vartheta (Y_k| \overline{\bf Y}_{0}^{k-1},{\bf W}_{0}^{k}) \\
& = \sum_{(x_{k-1},x_k)\in\mathcal{X}^2} p_\vartheta(Y_k,x_k|\overline{\bf Y}_{k-1},x_{k-1},W_k) \times \left(\mathbb{P}_\vartheta(x_{k-1}|\overline{\bf Y}_{0}^{k-1},{\bf W}_{0}^{k-1},x_0)- \mathbb{P}_\vartheta(x_{k-1}|\overline{\bf Y}_{0}^{k-1},{\bf W}_{0}^{k-1}) \right), 
\end{align*}
and $\mathbb{P}_\vartheta(x_{k-1}|\overline{\bf Y}_{0}^{k-1},{\bf W}_{0}^{k-1}) = \sum_{x_0 \in \mathcal{X}}\mathbb{P}_\vartheta(x_{k-1}|\overline{\bf Y}_{0}^{k-1},{\bf W}_{0}^{k-1},x_0)\mathbb{P}_\vartheta(x_0|\overline{\bf Y}_{0}^{k-1},{\bf W}_{0}^{k-1})$. Let $\rho:=1-\sigma_-/\sigma_+\in [0,1)$. Lemma \ref{x_diff}(a) in the appendix shows that, for all probability measures $\mu_1$ and $\mu_2$ on $\mathcal{B}(\mathcal{X})$ and all $(\overline{\bf y}_{0}^{k-1},{\bf w}_{0}^{k-1})$,
\begin{equation}\label{DMR-C1}
\sup_{A}\left| \sum_{x_0\in \mathcal{X}} \mathbb{P}_\vartheta(X_{k-1}\in A|\overline{\bf y}_{0}^{k-1},{\bf w}_{0}^{k-1},x_0)\mu_1(x_0)- \sum_{x_0\in\mathcal{X}} \mathbb{P}_\vartheta(X_{k-1}\in A|\overline{\bf y}_{0}^{k-1},{\bf w}_{0}^{k-1},x_0)\mu_2(x_0) \right| \leq \rho^{k-1}.
\end{equation} 
Consequently, $ p_\vartheta (Y_k| \overline{\bf Y}_{0}^{k-1},{\bf W}_{0}^{k-1},x_0) - p_\vartheta (Y_k| \overline{\bf Y}_{0}^{k-1},{\bf W}_{0}^{k-1})$ goes to zero at an exponential rate as $k\rightarrow \infty$. Therefore, as shown in the following proposition, the difference between $\ell_n(\theta,x_0)$ and $\ell_n(\theta)$ is bounded by a deterministic constant, and the maximum of $\ell_n(\vartheta,x_0)$ and the maximum of $\ell_n(\vartheta)$ are asymptotically equivalent.
\begin{proposition}\label{uconlike}
Under Assumptions \ref{assn_a1} and \ref{assn_a2}, for all $x_0\in \mathcal{X}$,
\[
\sup_{\vartheta\in \Theta} |\ell_n(\vartheta,x_0)- \ell_n(\vartheta)| \leq 1/(1-\rho)^2 \quad \mathbb{P}_{\theta^*}\text{-a.s}. 
\]
\end{proposition}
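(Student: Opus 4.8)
The plan is to bound the difference termwise and sum a geometric series. Writing $\ell_n(\vartheta,x_0)-\ell_n(\vartheta)=\sum_{k=1}^n[\log p_\vartheta(Y_k|\overline{\bf Y}_0^{k-1},{\bf W}_0^k,x_0)-\log p_\vartheta(Y_k|\overline{\bf Y}_0^{k-1},{\bf W}_0^k)]$, I would abbreviate the two conditional densities in the $k$th summand as $A_k^\vartheta$ and $B_k^\vartheta$ and use the elementary mean value bound $|\log A-\log B|\le|A-B|/(A\wedge B)$ for positive $A,B$. Since every estimate below is uniform in $\vartheta$, it then suffices to show $\sup_{\vartheta}|\log A_k^\vartheta-\log B_k^\vartheta|\le \rho^{k-1}\rho/(1-\rho)$ for each $k$; summing $\sum_{k=1}^n \rho^{k-1}\le 1/(1-\rho)$ and carrying the extra factor $\rho/(1-\rho)\le 1/(1-\rho)$ yields the claimed bound $1/(1-\rho)^2$ (indeed the slightly sharper $\rho/(1-\rho)^2$).

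For the numerator I would start from the decomposition displayed just before \eqref{DMR-C1} and sum out $x_k$, giving $A_k^\vartheta-B_k^\vartheta=\sum_{x_{k-1}}h(x_{k-1})[\mathbb{P}_\vartheta(x_{k-1}|\overline{\bf Y}_0^{k-1},{\bf W}_0^{k-1},x_0)-\mathbb{P}_\vartheta(x_{k-1}|\overline{\bf Y}_0^{k-1},{\bf W}_0^{k-1})]$, where $h(x_{k-1}):=p_\vartheta(Y_k|\overline{\bf Y}_{k-1},x_{k-1},W_k)=\sum_{x_k}q_{\vartheta_x}(x_{k-1},x_k)f(Y_k|\overline{\bf Y}_{k-1},W_k;\gamma,\theta_{x_k})$ is common to both $A_k^\vartheta$ and $B_k^\vartheta$ and does not involve $x_0$. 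Because the bracketed signed measure integrates to zero, I can replace $h$ by $h-c$ for any constant $c$; taking $c$ to be the midpoint $(\overline h+\underline h)/2$, with $\overline h:=\max_x h(x)$ and $\underline h:=\min_x h(x)$, bounds the centered integrand by $(\overline h-\underline h)/2$ in absolute value. Applying \eqref{DMR-C1} with $\mu_1$ the point mass at $x_0$ and $\mu_2=\mathbb{P}_\vartheta(X_0=\cdot|\overline{\bf Y}_0^{k-1},{\bf W}_0^{k-1})$---for which the two mixtures reduce exactly to $\mathbb{P}_\vartheta(\cdot|\overline{\bf Y}_0^{k-1},{\bf W}_0^{k-1},x_0)$ and $\mathbb{P}_\vartheta(\cdot|\overline{\bf Y}_0^{k-1},{\bf W}_0^{k-1})$ by the identity preceding \eqref{DMR-C1}---then gives $|A_k^\vartheta-B_k^\vartheta|\le(\overline h-\underline h)\rho^{k-1}$.

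The crux is the denominator, which is also the step I expect to be the main obstacle to obtaining a clean, data-free constant. Since $A_k^\vartheta$ and $B_k^\vartheta$ are both averages of the values $h(x_{k-1})$ under probability measures, $A_k^\vartheta\wedge B_k^\vartheta\ge\underline h>0$, so $|\log A_k^\vartheta-\log B_k^\vartheta|\le(\overline h-\underline h)\rho^{k-1}/\underline h$. The key observation is that $h(x_{k-1})$ is a convex combination of the \emph{same} regime densities $\{f(Y_k|\overline{\bf Y}_{k-1},W_k;\gamma,\theta_{x_k})\}_{x_k}$ for every $x_{k-1}$, with mixing weights $q_{\vartheta_x}(x_{k-1},x_k)\in[\sigma_-,\sigma_+]$ by Assumption \ref{assn_a2}(a). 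Writing $S:=\sum_{x_k}f(Y_k|\overline{\bf Y}_{k-1},W_k;\gamma,\theta_{x_k})$, this yields $\sigma_- S\le h(x_{k-1})\le\sigma_+ S$ uniformly in $x_{k-1}$ and in $\vartheta$, hence $\overline h/\underline h\le\sigma_+/\sigma_-=1/(1-\rho)$ and $(\overline h-\underline h)/\underline h\le\rho/(1-\rho)$. The data-dependent factor $S$ cancels, leaving the deterministic per-term bound $\rho^{k-1}\rho/(1-\rho)$ that is uniform in $\vartheta$. Summing as in the first paragraph, and noting that the whole argument is valid on the probability-one event on which each $S>0$ (so that the logarithms are well defined) by Assumption \ref{assn_a2}(b), completes the proof $\mathbb{P}_{\theta^*}$-almost surely.
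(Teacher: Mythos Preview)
Your proof is correct and is essentially the argument the paper has in mind: the paper's own proof simply says it is ``essentially identical to the proof of Lemma 2 in DMR,'' noting only that Lemma~\ref{x_diff}(a) (your equation~\eqref{DMR-C1}) replaces DMR's Corollary~1 to accommodate the exogenous covariates~${\bf W}_0^n$. You have spelled out precisely that DMR argument---termwise control via $|\log A-\log B|\le|A-B|/(A\wedge B)$, the forgetting bound~\eqref{DMR-C1} for the numerator, and the uniform bracketing $\sigma_-S\le h(x_{k-1})\le\sigma_+S$ so that the data-dependent factor~$S$ cancels and the ratio $(\overline h-\underline h)/\underline h\le\rho/(1-\rho)$ is deterministic---and you even recover the slightly sharper constant $\rho/(1-\rho)^2$.
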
 
As discussed on p. 2263 of DMR, the stationary density $p_\vartheta (Y_k| \overline{\bf Y}_{0}^{k-1},{\bf W}_{0}^{k})$ is not available in closed form for some models with autoregression. For this reason, we consider the log-likelihood function when the initial distribution of $X_0$ follows some distribution \\$\xi_M \in \Xi_M:=\{\xi(x_0)_{x_0 \in \mathcal{X}_M} :\xi(x_0)\geq 0$ and $\sum_{x_0 \in \mathcal{X}_M}\xi(x_0)=1\}$.

Define the MLE, $\hat\vartheta_{M,\xi_M}$, by the maximizer of the log-likelihood:
\begin{equation} \label{density}
\ell_n(\vartheta_M,\xi_M):=\log \left( \sum_{x_0=1}^M p_{\vartheta_M}({\bf Y}_1^n| \overline{\bf Y}_{0},{\bf W}_{0}^n,x_0) \xi_M(x_0) \right),
\end{equation}
where $p_{\vartheta_{M}}({\bf Y}_1^n| \overline{\bf Y}_{0},{\bf W}_{0}^n,x_0)$ is given in (\ref{cond_density}). We define the \textit{number of regimes} by the smallest number $M$ such that the data density admits the representation (\ref{density}). Our objective is to test $H_0: M=M_0$ against $H_A: M=M_0+1$. 

\section{Degeneracy of the Fisher information matrix and non-identifiability under the null hypothesis} 
 
Consider testing $H_0:M=1$ against $H_A:M=2$ in a two-regime model. The null hypothesis can be written as $H_0:\theta_1^*=\theta_2^*$.\footnote{The null hypothesis of $H_0: M=1$ also holds when $(p_{11},p_{22}) = (1,0)$. We impose Assumption \ref{assn_a2}(a) to bound $p_{jj}$ away from 0 and 1 because the log-likelihood function may become unbounded as $p_{11}$ or $p_{22}$ tends to 1 in view of \citet{gassiat00esaim}.} When $\theta_1=\theta_2$, the parameter $\vartheta_{2,x}$ is not identified because $Y_k$ has the same distribution across regimes. Furthermore, Section \ref{sec: testing-1} shows that when $\theta_1=\theta_2$, the scores with respect to $\theta_1$ and $\theta_2$ are linearly dependent, so that the Fisher information matrix is degenerate.

The log-likelihood function of Markov regime switching models with normal density has further degeneracy. In a two-regime model where $Y_k$ in the $j$-th regime follows $N(\mu_j,\sigma_j^2)$, the model reduces to a heteroscedastic normal mixture model when the $X_k$'s are serially uncorrelated, i.e., $p_{11}=1-p_{22}$. \citet{kasaharashimotsu15jasa} show that in a heteroscedastic normal mixture model, the first and second derivatives of the log-likelihood function are linearly dependent and the score function is a function of the fourth-order derivative. Consequently, one needs to expand the log-likelihood function four times to derive the score function.

\section{Quadratic expansion under the loss of identifiability}\label{sec:expansion}

When testing the number of regimes by the LRT, a part of $\vartheta$ is not identified under the null hypothesis. Let $\pi$ denote the part of $\vartheta$ that is not identified under the null, and split $\vartheta$ as $\vartheta = (\psi',\pi')'$. For example, in testing $H_0: M=1$ against $H_A: M=2$, we have $\psi=\vartheta_{2,y}$ and $\pi=\vartheta_{2,x}$. We denote the conditional log-likelihood function as $\ell_n(\psi,\pi, x_0) := \ell_n(\vartheta, x_0)$ and use $p_\vartheta$ and $p_{\psi\pi}$ interchangeably. 

Denote the true parameter value of $\psi$ by $\psi^*$, and denote the set of $(\psi,\pi)$ corresponding to the null hypothesis by $\Gamma^*= \{(\psi,\pi)\in\Theta: \psi=\psi^*\}$. 
Let $t_\vartheta$ be a continuous function of $\vartheta$ such that $t_{\vartheta}=0$ if and only if $\psi=\psi^*$. For $\varepsilon>0$, define the neighborhood of $\Gamma^*$ by
\[
\mathcal{N}_{\varepsilon} := \{ \vartheta \in \Theta: |t_\vartheta|< \varepsilon\}.
\]
When the MLE is consistent, the asymptotic distribution of the LRTS is determined by the local properties of the likelihood functions in $\mathcal{N}_{\varepsilon}$.

We establish a general quadratic expansion of the log-likelihood function $\ell_n(\psi,\pi,\xi):= \ell_n(\vartheta,\xi)$ defined in (\ref{density}) around $\ell_n(\psi^*,\pi,\xi)$ that expresses $\ell_n(\psi,\pi,\xi)-\ell_n(\psi^*,\pi,\xi)$ as a quadratic function of $t_\vartheta$. Once we derive a quadratic expansion, the asymptotic distribution of the LRTS can be characterized by taking its supremum with respect to $t_\vartheta$ under an appropriate constraint and using the results of \citet{andrews99em,andrews01em}. 
 
Denote the conditional density ratio by
\begin{equation} \label{density_ratio}
l_{\vartheta k x_0} := \frac{ p_{\psi\pi} (Y_k| \overline{{\bf Y}}_{0}^{k-1},{\bf W}_{0}^k,x_0)}{ p_{\psi^*\pi} (Y_k| \overline{{\bf Y}}_{0}^{k-1},{\bf W}_{0}^k,x_0)},
\end{equation} 
so that $\ell_n(\psi,\pi, x_0) - \ell_n(\psi^*,\pi, x_0) = \sum_{k=1}^n\log l_{\vartheta k x_0}$. We assume that $l_{\vartheta k x_0}$ can be expanded around $l_{\vartheta^* k x_0}=1$ as follows. With a slight abuse of the notation, let $P_n(f_k) := n^{-1} \sum_{k=1}^n f_k$ and recall $\nu_n(f_k) := n^{-1/2} \sum_{k=1}^n [f_k - \mathbb{E}_{\vartheta^*}(f_k)]$. 
\begin{assumption}\label{assn_a3}
For all $k=1,\ldots,n$, $l_{\vartheta k x_0} -1$ admits an expansion
\begin{equation} \label{lkx0_expand}
l_{\vartheta k x_0} -1 = t_\vartheta' s_{\pi k} + r_{\vartheta k} + u_{\vartheta k x_0},
\end{equation}
where $t_\vartheta$ satisfies $\psi \to \psi^*$ if $t_\vartheta \to 0$ and $(s_{\pi k}, r_{\vartheta k}, u_{\vartheta k x_0})$ satisfy, for some $C \in (0,\infty)$, $\delta >0$, $\varepsilon >0$, and $\rho \in (0,1)$, (a) $\mathbb{E}_{\vartheta^*}\sup_{\pi \in \Theta_\pi} \left|s_{\pi k}\right|^{2+\delta} < C$, (b) $\sup_{\pi \in \Theta_\pi}| P_n(s_{\pi k}s_{\pi k}') - \mathcal{I}_\pi| = o_p(1)$ with $0<\inf_{\pi\in \Theta_{\pi} }\lambda_{\min}(\mathcal{I}_\pi) \leq \sup_{\pi\in \Theta_{\pi} }\lambda_{\max}(\mathcal{I}_\pi)<\infty$, (c) $\mathbb{E}_{\vartheta^*}[\sup_{\vartheta \in \mathcal{N}_\varepsilon} |r_{\vartheta k}/(|t_\vartheta||\psi-\psi^*|)|^2] < \infty$, (d) $\sup_{\vartheta \in \mathcal{N}_\varepsilon} [ \nu_n(r_{\vartheta k})/(|t_\vartheta||\psi-\psi^*|)] = O_p(1)$, (e) $\sup_{x_0 \in \mathcal{X}}\sup_{\vartheta \in \mathcal{N}_\varepsilon} P_n (|u_{\vartheta k x_0}|/|\psi-\psi^*|)^j = O_p(n^{-1})$ for $j=1,2,3$, (f) $\sup_{x_0 \in \mathcal{X}}\sup_{\vartheta \in \mathcal{N}_\varepsilon} P_n (|s_{\pi k}||u_{\vartheta k x_0}|/|\psi-\psi^*|) = O_p(n^{-1})$, (g) $\sup_{\vartheta \in \mathcal{N}_{\varepsilon}} |\nu_n(s_{\pi k})| =O_p(1)$. 
\end{assumption} 
We first establish an expansion of $\ell_n(\psi,\pi, x_0)$ in the neighborhood of $\mathcal{N}_{c/\sqrt{n}}$ for any $c>0$. 
\begin{proposition} \label{Ln_thm1}
Suppose that Assumptions \ref{assn_a3} (a)--(f) hold. Then, for any $c>0$,
\begin{equation*} 
 \sup_{x_0 \in \mathcal{X}} \sup_{\vartheta \in \mathcal{N}_{c/\sqrt{n}}} \left| \ell_n(\psi,\pi, x_0) - \ell_n(\psi^*,\pi, x_0) - \sqrt{n}t_\vartheta' \nu_n (s_{\pi k}) + n t_\vartheta' \mathcal{I}_\pi t_\vartheta/2 \right| = o_p(1).
\end{equation*}
\end{proposition}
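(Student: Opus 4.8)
The plan is to start from the exact identity $\ell_n(\psi,\pi,x_0)-\ell_n(\psi^*,\pi,x_0)=\sum_{k=1}^n\log l_{\vartheta k x_0}$ and to Taylor-expand each summand around $l_{\vartheta^* k x_0}=1$. Writing $a_{\vartheta k x_0}:=l_{\vartheta k x_0}-1=t_\vartheta's_{\pi k}+r_{\vartheta k}+u_{\vartheta k x_0}$ as in Assumption \ref{assn_a3}, I would use $\log(1+a)=a-\tfrac12 a^2+R$ with $|R|\le \tfrac13|a|^3$, valid once $|a|\le 1/2$, so that
\[
\sum_{k=1}^n\log l_{\vartheta k x_0}=\sum_{k=1}^n a_{\vartheta k x_0}-\tfrac12\sum_{k=1}^n a_{\vartheta k x_0}^2+\sum_{k=1}^n R_{\vartheta k x_0}.
\]
Throughout I would work uniformly over the finite set $x_0\in\mathcal X$ and over $\vartheta\in\mathcal N_{c/\sqrt n}$, using that on this shrinking neighborhood $|t_\vartheta|\le c/\sqrt n$ and, by continuity of $t_\vartheta$ and compactness of $\Theta$ together with $t_\vartheta=0\Leftrightarrow\psi=\psi^*$, also $\sup_{\vartheta\in\mathcal N_{c/\sqrt n}}|\psi-\psi^*|\to0$.

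For the linear term I would split $\sum_k a_{\vartheta k x_0}=t_\vartheta'\sum_k s_{\pi k}+\sum_k r_{\vartheta k}+\sum_k u_{\vartheta k x_0}$ and re-express each piece through $\nu_n$. The delicate point is that $t_\vartheta'\sum_k s_{\pi k}=\sqrt n\,t_\vartheta'\nu_n(s_{\pi k})+n\,t_\vartheta'\mathbb E_{\vartheta^*}(s_{\pi k})$ and $\sum_k r_{\vartheta k}=\sqrt n\,\nu_n(r_{\vartheta k})+n\,\mathbb E_{\vartheta^*}(r_{\vartheta k})$ each carry an $O(n)$ mean term that need not vanish on its own: under a degenerate reparameterization $|t_\vartheta|$ can be of much smaller order than $|\psi-\psi^*|$, so $n\,\mathbb E_{\vartheta^*}(r_{\vartheta k})=O(n|t_\vartheta||\psi-\psi^*|)$ can diverge. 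These terms cancel because $l_{\vartheta k x_0}$ is a conditional density ratio: its stationary part $1+t_\vartheta's_{\pi k}+r_{\vartheta k}$ has conditional expectation one under $\mathbb E_{\vartheta^*}$ (correct specification holds for every $\pi$ under the null), which forces $t_\vartheta'\mathbb E_{\vartheta^*}(s_{\pi k})+\mathbb E_{\vartheta^*}(r_{\vartheta k})=0$. With the $O(n)$ means gone, Assumption \ref{assn_a3}(d) gives $\sqrt n\,\nu_n(r_{\vartheta k})=O_p(\sqrt n|t_\vartheta||\psi-\psi^*|)=O_p(|\psi-\psi^*|)=o_p(1)$, and Assumption \ref{assn_a3}(e) with $j=1$ gives $\sum_k|u_{\vartheta k x_0}|=|\psi-\psi^*|\,O_p(1)=o_p(1)$, leaving exactly $\sqrt n\,t_\vartheta'\nu_n(s_{\pi k})$.

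For the quadratic term I would expand $a_{\vartheta k x_0}^2$ and treat its pieces separately. The leading piece is $-\tfrac12 t_\vartheta'(\sum_k s_{\pi k}s_{\pi k}')t_\vartheta=-\tfrac n2 t_\vartheta' P_n(s_{\pi k}s_{\pi k}')t_\vartheta$, which by Assumption \ref{assn_a3}(b) equals $-\tfrac n2 t_\vartheta'\mathcal I_\pi t_\vartheta$ up to $\tfrac n2|t_\vartheta|^2 o_p(1)=o_p(1)$ since $n|t_\vartheta|^2\le c^2$. The cross and square terms are negligible by Cauchy--Schwarz with the order bounds: Assumption \ref{assn_a3}(c) yields $\sum_k r_{\vartheta k}^2=O_p(n|t_\vartheta|^2|\psi-\psi^*|^2)$, Assumption \ref{assn_a3}(e) with $j=2$ yields $\sum_k u_{\vartheta k x_0}^2=|\psi-\psi^*|^2O_p(1)=o_p(1)$, Assumption \ref{assn_a3}(f) controls $t_\vartheta'\sum_k s_{\pi k}u_{\vartheta k x_0}$, and $\mathbb E_{\vartheta^*}\sup_\pi|s_{\pi k}|^{2+\delta}<C$ (hence $\sum_k|s_{\pi k}|^2=O_p(n)$) controls $t_\vartheta'\sum_k s_{\pi k}r_{\vartheta k}$; each collapses to $o_p(1)$ after using $n|t_\vartheta|^2\le c^2$ and $|\psi-\psi^*|\to0$.

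Finally, for the cubic remainder I would bound $\sum_k|a_{\vartheta k x_0}|^3\le \mathcal C\sum_k(|t_\vartheta's_{\pi k}|^3+|r_{\vartheta k}|^3+|u_{\vartheta k x_0}|^3)$. The only non-routine input is the $s$-term: with only $2+\delta$ moments I would invoke the maximal inequality $\max_{k\le n}|s_{\pi k}|=o_p(n^{1/(2+\delta)})$, so that $\sum_k|s_{\pi k}|^3\le \max_k|s_{\pi k}|\cdot\sum_k|s_{\pi k}|^2=o_p(n^{1+1/(2+\delta)})$, whence $|t_\vartheta|^3\sum_k|s_{\pi k}|^3=o_p(n^{-1/2+1/(2+\delta)})=o_p(1)$ precisely because $\delta>0$; the $r$- and $u$-terms are handled analogously using (c) and (e) with $j=3$. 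The same maximal bounds give $\sup_{x_0}\sup_{\vartheta\in\mathcal N_{c/\sqrt n}}\max_k|a_{\vartheta k x_0}|=o_p(1)$, which validates the cubic Taylor remainder uniformly. I expect the main obstacle to be the mean-cancellation step in the linear term: one must recognize that the individually divergent $O(n)$ mean contributions are annihilated by the density-ratio (correct-specification) identity rather than by the regularity bounds (a)--(f) alone, and that the initial-condition discrepancy is quarantined in $u_{\vartheta k x_0}$ and made summable by (e). A secondary difficulty is extracting the uniform cubic control from only $2+\delta$ moments via the maximal inequality.
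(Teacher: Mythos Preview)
Your approach is correct but takes a genuinely different route from the paper. The paper works with the Hellinger-type residual $h_{\vartheta k x_0}:=\sqrt{l_{\vartheta k x_0}}-1$ (the standard Le Cam/DQM device) and the identity $2h=(l-1)-h^2$, then shows separately that $nP_n(h^2)=n t_\vartheta'\mathcal I_\pi t_\vartheta/4+o_p(1)$ and $nP_n(h)=\sqrt n\,t_\vartheta'\nu_n(s_{\pi k})/2-n t_\vartheta'\mathcal I_\pi t_\vartheta/8+o_p(1)$. Your direct cubic Taylor expansion of $\log(1+a)$ with $a=l-1$ is more elementary and avoids the square-root trick; the cubic piece you control via $\max_k|s_{\pi k}|=o_p(n^{1/(2+\delta)})$ is handled in the paper through the $(l-1)^3$ term arising from $4h^2=(l-1)^2-(l-1)^3(\sqrt l+3)/(\sqrt l+1)^3$, so at that stage the two arguments are parallel. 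The Hellinger route pays a small dividend later (Proposition~\ref{Ln_thm2}), where the nonnegativity of $h^2$ yields a clean one-sided bound via $\log(1+h)\le h$ without any uniform smallness of $a$ outside the $c/\sqrt n$-neighborhood.

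On the mean-cancellation step you flag as the main obstacle: you are right that this is the crux, and in fact the paper's proof glosses over it as well (its claim that ``(\ref{hk_appn2}) follows from (\ref{hk_appn}), (\ref{hk_1}), and Assumptions~\ref{assn_a3}(d) and (e)'' tacitly uses $\mathbb E_{\vartheta^*}[t_\vartheta's_{\pi k}+r_{\vartheta k}]=0$). Your stated justification is slightly imprecise, though: $1+t_\vartheta's_{\pi k}+r_{\vartheta k}$ is not itself a conditional density ratio but a \emph{truncated Taylor expansion} of one, including a remainder evaluated at an intermediate $\overline\psi$. The clean argument, used implicitly in the applications, is that $s_{\pi k}$ and $r_{\vartheta k}$ are built from terms of the form $\nabla_\psi^j\overline p_{\psi\pi}(Y_k|\overline{\mathbf Y}_0^{k-1})/\overline p_{\psi^*\pi}(Y_k|\overline{\mathbf Y}_0^{k-1})$, each of which has conditional mean zero because $\int\nabla_\psi^j\overline p_{\psi\pi}(y_k|\cdot)\,dy_k=\nabla_\psi^j 1=0$ and $\overline p_{\psi^*\pi}(Y_k|\overline{\mathbf Y}_0^{k-1})$ coincides with the true conditional density under $\mathbb P_{\vartheta^*}$ (see (\ref{p_bar})). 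This is structure coming from the density-ratio setup (\ref{density_ratio})--(\ref{p_bar}), not a consequence of Assumptions~\ref{assn_a3}(a)--(f) alone; both proofs rely on it.
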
 
The following proposition expands $\ell_n(\psi,\pi, x_0)$ in $A_{n\varepsilon}(x_0,\eta) := \{\vartheta \in \mathcal{N}_{\varepsilon}: \ell_n(\psi,\pi, x_0)-\ell_n(\psi^*,\pi, x_0) \geq -\eta \}$ for some $\eta \in [0,\infty)$. This proposition is useful for deriving the asymptotic distribution of the LRTS because a consistent MLE is in $ A_{n\varepsilon}(x_0,\eta)$ by definition. Let $A_{n\varepsilon c}(x_0,\eta) := A_{n\varepsilon }(x_0,\eta) \cup \mathcal{N}_{c/\sqrt{n}}$.
\begin{proposition} \label{Ln_thm2}
Suppose that Assumption \ref{assn_a3} holds. Then, for any $\eta>0$, (a) $\sup_{x_0 \in \mathcal{X}} \sup_{\vartheta \in A_{n\varepsilon}(x_0,\eta)} |t_\vartheta| = O_{p \varepsilon }(n^{-1/2})$, and (b) for any $c>0$,
\begin{equation*} 
 \sup_{x_0 \in \mathcal{X}} \sup_{\vartheta \in A_{n\varepsilon c}(x_0,\eta) }\left|\ell_n(\psi,\pi, x_0)-\ell_n(\psi^*,\pi, x_0) - \sqrt{n} t_\vartheta' \nu_n (s_{\pi k}) + n t_\vartheta' \mathcal{I}_\pi t_\vartheta/2 \right| = o_{p \varepsilon }(1).
\end{equation*}
\end{proposition}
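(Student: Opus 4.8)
The plan is to obtain part (b) as a consequence of part (a) together with Proposition~\ref{Ln_thm1}, and to establish the rate in part (a) by a Wald-type argument. Proposition~\ref{Ln_thm1} only controls the likelihood ratio on the $n^{-1/2}$-shell $\mathcal{N}_{c/\sqrt n}$, whereas a consistent MLE is known a priori only to lie in $A_{n\varepsilon}(x_0,\eta)\subseteq\mathcal{N}_\varepsilon$; the crux of (a) is therefore to produce a \emph{one-sided} upper bound for $\Lambda_n(\vartheta,x_0):=\ell_n(\psi,\pi,x_0)-\ell_n(\psi^*,\pi,x_0)=\sum_{k=1}^n\log l_{\vartheta k x_0}$ carrying a strictly negative quadratic term in $t_\vartheta$ that remains valid throughout $\mathcal{N}_\varepsilon$. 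Write $\kappa(\varepsilon):=\sup_{\vartheta\in\mathcal{N}_\varepsilon}|\psi-\psi^*|$, which tends to $0$ as $\varepsilon\to0$ since $t_\vartheta\to0$ forces $\psi\to\psi^*$.

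For part (a), I would start from the elementary inequality $\log(1+x)\le x-\kappa_0 x^2$, valid for some fixed $\kappa_0>0$ on any compact subinterval of $(0,\infty)$; by Assumption~\ref{assn_a2} the one-step predictive densities are bounded above and below away from zero uniformly (as in DMR), so $l_{\vartheta k x_0}$ lies in such an interval and hence
\begin{equation*}
\Lambda_n(\vartheta,x_0)\le \sum_{k=1}^n (l_{\vartheta k x_0}-1) - \kappa_0\sum_{k=1}^n (l_{\vartheta k x_0}-1)^2 .
\end{equation*}
Substituting (\ref{lkx0_expand}), I would bound the two sums separately and uniformly over $x_0\in\mathcal{X}$ and $\vartheta\in\mathcal{N}_\varepsilon$. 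For the linear sum, the grouped mean $\sum_k\mathbb{E}_{\vartheta^*}(l_{\vartheta k x_0}-1)$ is $O(1)$: because $l_{\vartheta k x_0}$ is a conditional density ratio, $\mathbb{E}_{\vartheta^*}(l_{\vartheta k x_0}-1)$ differs from $0$ only through the initial-condition discrepancy, which is $O(\rho^{k})$ by (\ref{DMR-C1}) and hence summable. The fluctuation $\sqrt n\,\nu_n(l_{\vartheta k x_0}-1)=\sqrt n[t_\vartheta'\nu_n(s_{\pi k})+\nu_n(r_{\vartheta k})+\nu_n(u_{\vartheta k x_0})]$ is $O_{p\varepsilon}(\sqrt n|t_\vartheta|)$ by (g), by (d) (which contributes $O_{p\varepsilon}(\sqrt n|t_\vartheta|\kappa(\varepsilon))$), and by (e). For the quadratic sum, the leading piece gives $\sum_k (t_\vartheta's_{\pi k})^2=n t_\vartheta'P_n(s_{\pi k}s_{\pi k}')t_\vartheta\ge n|t_\vartheta|^2(\lambda_{\min}(\mathcal{I}_\pi)-o_p(1))$ by (b), while the cross terms with $r_{\vartheta k},u_{\vartheta k x_0}$ are $o_{p\varepsilon}(n|t_\vartheta|^2)$ by (a), (c), (e), (f) together with the factor $\kappa(\varepsilon)$; hence $\sum_k(l_{\vartheta k x_0}-1)^2\ge \tfrac{c_0}{2}n|t_\vartheta|^2$ with probability tending to one for small $\varepsilon$, where $c_0=\inf_\pi\lambda_{\min}(\mathcal{I}_\pi)>0$. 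Combining, on $A_{n\varepsilon}(x_0,\eta)$ where $\Lambda_n(\vartheta,x_0)\ge-\eta$, I obtain a quadratic inequality of the form $\tfrac{\kappa_0 c_0}{2}n|t_\vartheta|^2\le C+M\sqrt n|t_\vartheta|$ with $C$ a constant and $M=O_{p\varepsilon}(1)$ uniform over $x_0,\vartheta$; solving it yields $\sup_{x_0}\sup_{A_{n\varepsilon}(x_0,\eta)}|t_\vartheta|=O_{p\varepsilon}(n^{-1/2})$.

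Part (b) then follows quickly. By part (a) there is a fixed $K<\infty$ such that, for every $\delta>0$, one can choose $\varepsilon$ small and $n_0$ so that with probability at least $1-\delta$ (for $n>n_0$) one has $A_{n\varepsilon}(x_0,\eta)\subseteq\mathcal{N}_{K/\sqrt n}$ for all $x_0$, whence $A_{n\varepsilon c}(x_0,\eta)=A_{n\varepsilon}(x_0,\eta)\cup\mathcal{N}_{c/\sqrt n}\subseteq\mathcal{N}_{(K\vee c)/\sqrt n}$. Applying Proposition~\ref{Ln_thm1} with the fixed constant $c'=K\vee c$ gives the quadratic expansion with $o_p(1)$ error uniformly on $\mathcal{N}_{c'/\sqrt n}$, and intersecting the two high-probability events upgrades this to the desired $o_{p\varepsilon}(1)$ bound on $A_{n\varepsilon c}(x_0,\eta)$.

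The main obstacle I anticipate is the bookkeeping needed to make every bound uniform over the \emph{entire} neighborhood $\mathcal{N}_\varepsilon$ (where $|t_\vartheta|$ may be as large as order $\varepsilon$), rather than only on the $n^{-1/2}$-shell, and to verify that all cross and remainder terms are genuinely dominated by the leading quadratic form $\tfrac{\kappa_0 c_0}{2}n|t_\vartheta|^2$; this is where the smallness of $\kappa(\varepsilon)$ is indispensable and why the conclusion is an $O_{p\varepsilon}$, not an $O_p$, statement. The one genuinely delicate point is resisting the temptation to split $\sum_k(l_{\vartheta k x_0}-1)$ into separate means of $s_{\pi k}$, $r_{\vartheta k}$, and $u_{\vartheta k x_0}$: the term $n\,\mathbb{E}_{\vartheta^*}(r_{\vartheta k})$ is only $O(n|t_\vartheta|\kappa(\varepsilon))$ and would \emph{not} be negligible against $n|t_\vartheta|^2$ in the regime $|t_\vartheta|\sim n^{-1/2}$, so the mean of the density ratio must be controlled as a single group through the geometric forgetting in (\ref{DMR-C1}).
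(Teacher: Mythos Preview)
Your overall architecture is right and matches the paper: establish part (a) by a one-sided Wald-type upper bound on $\Lambda_n(\vartheta,x_0)$ that carries a strictly negative quadratic in $t_\vartheta$, solve the resulting quadratic inequality, and then deduce (b) from (a) together with Proposition~\ref{Ln_thm1}. Your argument for (b) is essentially identical to the paper's.

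The genuine gap is your starting inequality $\log(1+x)\le x-\kappa_0 x^2$. This fails for large positive $x$ (the right-hand side tends to $-\infty$ while the left tends to $+\infty$), so it requires $l_{\vartheta k x_0}$ to be bounded above uniformly in $k$, $x_0$, and $\vartheta\in\mathcal{N}_\varepsilon$. You justify this by appeal to Assumption~\ref{assn_a2}, but Proposition~\ref{Ln_thm2} is stated under Assumption~\ref{assn_a3} alone; and even granting Assumption~\ref{assn_a2}, it does not yield a deterministic uniform bound on the ratio---see Lemma~\ref{lemma-ratio}, where the ratio is only bounded by a random variable $K_k\in L^{q_g}$. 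So the crucial lower bound $\sum_k(l_{\vartheta k x_0}-1)^2\gtrsim n|t_\vartheta|^2$ cannot be obtained this way.

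The paper circumvents exactly this issue by passing to $h_{\vartheta k x_0}=\sqrt{l_{\vartheta k x_0}}-1$ and using $\log(1+h)\le h$ together with the exact identity $2h=(l-1)-h^2$, so that $\Lambda_n\le \sqrt{n}\,\nu_n(l_{\vartheta k x_0}-1)-nP_n(h_{\vartheta k x_0}^2)$. The unboundedness of $l$ now sits only in the denominator of $h^2=(l-1)^2/(\sqrt{l}+1)^2$, and is handled by the truncation $h^2\ge \mathbb{I}\{l\le\kappa\}(l-1)^2/(\sqrt{\kappa}+1)^2$: the truncated part gives the desired $\ge\tau\, t_\vartheta'\mathcal{I}_\pi t_\vartheta$ via Assumption~\ref{assn_a3}(b), while the tail $P_n(\mathbb{I}\{l>\kappa\}|s_{\pi k}|^2)$ is made small by H\"older's inequality using Assumption~\ref{assn_a3}(a) and $P_n(\mathbb{I}\{l>\kappa\})\le\kappa^{-1}P_n(l_{\vartheta k x_0})\to\kappa^{-1}$.

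A secondary point: your ``genuinely delicate'' detour through $\sum_k\mathbb{E}_{\vartheta^*}(l_{\vartheta k x_0}-1)$ and the forgetting bound (\ref{DMR-C1}) is unnecessary and again imports assumptions outside Assumption~\ref{assn_a3}. Because $p_{\psi^*\pi}(Y_k\mid\overline{\bf Y}_0^{k-1},{\bf W}_0^k,x_0)$ equals the true conditional density under the null, $\mathbb{E}_{\vartheta^*}(l_{\vartheta k x_0})=1$ exactly, so $nP_n(l-1)=\sqrt{n}\,\nu_n(l-1)$; the means of $s_{\pi k}$, $r_{\vartheta k}$, $u_{\vartheta k x_0}$ never need to be controlled separately, and Assumptions~\ref{assn_a3}(d),(e),(g) on the centered process suffice.
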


The following corollary of Propositions \ref{Ln_thm1} and \ref{Ln_thm2} shows that $\ell_n(\vartheta,\xi)$ defined in (\ref{density}) admits a similar expansion to $\ell_n(\vartheta,x_0)$ for all $\xi$. Consequently, the asymptotic distribution of the LRTS does not depend on $\xi$, and $\ell_n(\vartheta,\xi)$ may be maximized in $\vartheta$ while fixing $\xi$ or jointly in $\vartheta$ and $\xi$. Let $A_{n\varepsilon } (\xi,\eta) := \{\vartheta \in \mathcal{N}_{\varepsilon }: \ell_n(\psi,\pi,\xi) - \ell_n(\psi^*,\pi,\xi) \geq -\eta \}$ and $A_{n\varepsilon c} (\xi,\eta):= A_{n\varepsilon} (\xi,\eta)\cup \mathcal{N}_{c/\sqrt{n}}$, which includes a consistent MLE with any $\xi$.
\begin{corollary} \label{cor_appn}
(a) Under the assumptions of Proposition \ref{Ln_thm1}, we have \\
$\sup_{\xi \in \Xi}\sup_{\vartheta \in \mathcal{N}_{c/\sqrt{n}}} \left| \ell_n(\psi,\pi,\xi) - \ell_n(\psi^*,\pi,\xi) - \sqrt{n}t_\vartheta' \nu_n (s_{\pi k}) + n t_\vartheta' \mathcal{I}_\pi t_\vartheta/2 \right| = o_p(1)$ for any $c>0$.
(b) Under the assumptions of Proposition \ref{Ln_thm2}, for any $\eta>0$ and $c>0$, $\sup_{\xi\in \Xi}\sup_{\vartheta \in A_{n\varepsilon} (\xi,\eta) } |t_\vartheta| = O_{p \varepsilon }(n^{-1/2})$ and $\sup_{\xi\in \Xi}\sup_{\vartheta \in A_{n\varepsilon c} (\xi,\eta) } | \ell_n(\psi,\pi,\xi) - \ell_n(\psi^*,\pi,\xi) - \sqrt{n}t_\vartheta' \nu_n (s_{\pi k}) + n t_\vartheta' \mathcal{I}_\pi t_\vartheta/2 | = o_{p \varepsilon }(1)$.
\end{corollary}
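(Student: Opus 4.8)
The plan is to exploit the fact that the likelihood $\ell_n(\vartheta,\xi)$ in (\ref{density}) is a log-sum-exp of the fixed-$x_0$ likelihoods $\ell_n(\vartheta,x_0)$ with the probability weights $\xi$, and then to reduce both parts to the already-established fixed-$x_0$ statements in Propositions \ref{Ln_thm1} and \ref{Ln_thm2}. Writing $a_{x_0}(\vartheta):=\ell_n(\psi,\pi,x_0)-\ell_n(\psi^*,\pi,x_0)$ and factoring $e^{\ell_n(\psi^*,\pi,x_0)}$ out of the numerator and denominator gives
\begin{equation*}
\ell_n(\psi,\pi,\xi)-\ell_n(\psi^*,\pi,\xi)=\log\left(\sum_{x_0\in\mathcal{X}}w_{x_0}e^{a_{x_0}(\vartheta)}\right),\qquad w_{x_0}:=\frac{\xi(x_0)e^{\ell_n(\psi^*,\pi,x_0)}}{\sum_{x_0'\in\mathcal{X}}\xi(x_0')e^{\ell_n(\psi^*,\pi,x_0')}},
\end{equation*}
where $w_{x_0}\geq0$ and $\sum_{x_0}w_{x_0}=1$. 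Since the right-hand side is a convex combination passed through $\log$, it lies between $\min_{x_0}a_{x_0}(\vartheta)$ and $\max_{x_0}a_{x_0}(\vartheta)$. Writing $Q_n(\vartheta):=\sqrt{n}t_\vartheta'\nu_n(s_{\pi k})-nt_\vartheta'\mathcal{I}_\pi t_\vartheta/2$ for the target quadratic, this yields the \emph{deterministic} bound $|\ell_n(\psi,\pi,\xi)-\ell_n(\psi^*,\pi,\xi)-Q_n(\vartheta)|\leq\max_{x_0\in\mathcal{X}}|a_{x_0}(\vartheta)-Q_n(\vartheta)|$ that is valid for every $\xi$.

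For part (a) the sandwich bound is all that is needed: the right-hand side no longer involves $\xi$, so
\[
\sup_{\xi\in\Xi}\sup_{\vartheta\in\mathcal{N}_{c/\sqrt{n}}}\left|\ell_n(\psi,\pi,\xi)-\ell_n(\psi^*,\pi,\xi)-Q_n(\vartheta)\right|\leq\sup_{x_0\in\mathcal{X}}\sup_{\vartheta\in\mathcal{N}_{c/\sqrt{n}}}|a_{x_0}(\vartheta)-Q_n(\vartheta)|,
\]
and the right-hand side is $o_p(1)$ by Proposition \ref{Ln_thm1}.

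For part (b) the difficulty—which I expect to be the main obstacle—is that the constraint set $A_{n\varepsilon}(\xi,\eta)$ is defined through $\ell_n(\cdot,\xi)$, whereas Proposition \ref{Ln_thm2} controls the sets $A_{n\varepsilon}(x_0,\eta)$ defined through $\ell_n(\cdot,x_0)$. I would bridge the two using Proposition \ref{uconlike}: since $|\ell_n(\vartheta,x_0)-\ell_n(\vartheta)|\leq1/(1-\rho)^2$ for every $\vartheta$ and $x_0$, the differences $a_{x_0}(\vartheta)$ all lie within $C_0:=4/(1-\rho)^2$ of one another. Hence if $\ell_n(\psi,\pi,\xi)-\ell_n(\psi^*,\pi,\xi)\geq-\eta$ then, by the sandwich, $\max_{x_0}a_{x_0}(\vartheta)\geq-\eta$ and therefore $a_{x_0}(\vartheta)\geq-(\eta+C_0)$ for \emph{every} $x_0$; consequently $A_{n\varepsilon}(\xi,\eta)\subseteq A_{n\varepsilon}(x_0,\eta+C_0)$ and $A_{n\varepsilon c}(\xi,\eta)\subseteq A_{n\varepsilon c}(x_0,\eta+C_0)$ for every $x_0$, uniformly in $\xi$. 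The first claim then follows from $\sup_{\xi}\sup_{\vartheta\in A_{n\varepsilon}(\xi,\eta)}|t_\vartheta|\leq\sup_{x_0}\sup_{\vartheta\in A_{n\varepsilon}(x_0,\eta+C_0)}|t_\vartheta|=O_{p\varepsilon}(n^{-1/2})$ by Proposition \ref{Ln_thm2}(a). For the second claim, the set inclusion shows that any $\vartheta\in A_{n\varepsilon c}(\xi,\eta)$ lies in $A_{n\varepsilon c}(x_0,\eta+C_0)$ for every $x_0$, so $\max_{x_0}|a_{x_0}(\vartheta)-Q_n(\vartheta)|\leq\sup_{x_0}\sup_{\tilde\vartheta\in A_{n\varepsilon c}(x_0,\eta+C_0)}|a_{x_0}(\tilde\vartheta)-Q_n(\tilde\vartheta)|$, which combined with the sandwich bound and Proposition \ref{Ln_thm2}(b) (applied with threshold $\eta+C_0$) gives $o_{p\varepsilon}(1)$. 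The only nontrivial ingredient is the bounded-difference Proposition \ref{uconlike}, which absorbs the $\xi$-dependence at the cost of inflating the slack $\eta$ by the fixed constant $C_0$, and this inflation leaves the $O_{p\varepsilon}/o_{p\varepsilon}$ rates intact.
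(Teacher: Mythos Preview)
Your argument is correct. Part (a) is exactly the paper's idea, though you have made the sandwich explicit: the paper simply states $\inf_{x_0}\ell_n(\psi,\pi,x_0)\leq\ell_n(\psi,\pi,\xi)\leq\sup_{x_0}\ell_n(\psi,\pi,x_0)$ and appeals to Proposition~\ref{Ln_thm1}, whereas your log-sum-exp identity shows more carefully that the \emph{difference} $\ell_n(\psi,\pi,\xi)-\ell_n(\psi^*,\pi,\xi)$ is trapped between $\min_{x_0}a_{x_0}(\vartheta)$ and $\max_{x_0}a_{x_0}(\vartheta)$, which is what one actually needs.

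Part (b) differs in route. The paper never invokes Proposition~\ref{uconlike}; it simply observes that $\vartheta\in A_{n\varepsilon}(\xi,\eta)$ forces $\max_{x_0}a_{x_0}(\vartheta)\geq-\eta$, so $\vartheta\in A_{n\varepsilon}(x_0,\eta)$ for \emph{some} $x_0$, i.e.\ $A_{n\varepsilon}(\xi,\eta)\subseteq\bigcup_{x_0}A_{n\varepsilon}(x_0,\eta)$. Proposition~\ref{Ln_thm2}(a) then gives the $|t_\vartheta|$ rate directly, without inflating $\eta$. Once $|t_\vartheta|=O_{p\varepsilon}(n^{-1/2})$ is in hand, the expansion claim reduces back to part~(a) (equivalently Proposition~\ref{Ln_thm1}) applied with a large $c$, since the expansion there is already uniform in $x_0$. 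Your route via Proposition~\ref{uconlike}---inflating $\eta$ to $\eta+C_0$ so that the inclusion holds for \emph{every} $x_0$---also works and is self-contained, but it brings in an extra ingredient that the paper's argument avoids. The payoff of your version is that it applies Proposition~\ref{Ln_thm2}(b) verbatim (same $x_0$ inside and out); the payoff of the paper's version is economy, since ``for some $x_0$'' is all that is needed to control $|t_\vartheta|$.
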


\section{Uniform convergence of the derivatives of the log-density and density ratios} \label{sec:uniform_convergence}

In this section, we establish approximations that enable us to apply the results in Section \ref{sec:expansion} to the log-likelihood function of regime switching models. Because of the presence of singularity, the expansion (\ref{lkx0_expand}) of the density ratio $l_{\vartheta k x_0}$ involves the higher-order derivatives of the density ratios $\nabla_{\psi}^j l_{\vartheta k x_0}$ with $j\geq 2$. Note that $\nabla_\psi^j l_{\vartheta k x_0}$ can be expressed in terms of the derivatives of log-densities, $\nabla_\psi^j \log p_{\psi\pi} (Y_k| \overline{{\bf Y}}_{0}^{k-1},{\bf W}_{0}^k,x_0)$. We show that these derivatives are approximated by their stationary counterpart that condition on the infinite past $(\overline{\bf{Y}}_{-\infty}^{k-1},{\bf W}_{-\infty}^{k})$ in place of $(\overline{\bf{Y}}_{0}^{k-1},{\bf W}_{0}^{k})$. Consequently, the sequence $\{\nabla_\psi^j l_{\vartheta k x_0}\}_{k=0}^{\infty}$ is approximated by a stationary martingale difference sequence. 
 
For $1 \leq k \leq n$ and $m \geq 0$, let 
\begin{equation} \label{p_bar}
\overline p_{\vartheta}({\bf Y}_{-m +1 }^k|\overline{\bf{Y}}_{-m},{\bf W}_{-m}^{k}) := \sum_{{\bf x}_{-m}^k\in \mathcal{X}^{k+m+1}}\prod_{t=-m+1}^k p_{\vartheta}(Y_t,x_t|\overline{\bf Y}_{t-1},W_t,x_{t-1})\mathbb{P}_{\vartheta^*}(x_{-m}|\overline{\bf{Y}}_{-m},{\bf W}_{-m}^{k}),
\end{equation}
denote the stationary density of ${\bf Y}_{-m +1 }^k$ associated with $\vartheta$ conditional on $\{\overline{\bf{Y}}_{-m},{\bf W}_{-m}^{k}\}$, where $X_{-m}$ is drawn from its \emph{true} conditional stationary distribution $\mathbb{P}_{\vartheta^*}(x_{-m}|\overline{\bf{Y}}_{-m}^{k-1},{\bf W}_{-m}^{k})$. Let $\overline p_{\vartheta}(Y_k|\overline{\bf{Y}}_{-m}^{k-1},{\bf W}_{-m}^{k}) : = \overline p_{\vartheta}({\bf Y}_{-m+1}^k|\overline{\bf{Y}}_{-m},{\bf W}_{-m}^{k})/\overline p_{\vartheta}({\bf Y}_{-m+1}^{k-1}|\overline{\bf{Y}}_{-m},{\bf W}_{-m}^{k-1})$ denote the associated conditional density of $Y_k$ given $(\overline{\bf{Y}}_{-m}^{k-1},{\bf W}_{-m}^{k})$.\footnote{Note that DMR use the same notation $\overline p_{\vartheta}(\cdot|\overline{\bf{Y}}_{-m}^{k-1})$ for a different purpose. On p.\ 2263 and in some other (but not all) places, DMR use $\overline p_{\vartheta}(Y_k|\overline{\bf{Y}}_{0}^{k-1})$ to denote an (ordinary) stationary conditional distribution of $Y_k$.
}

Define the density ratio as $l_{k,m,x}(\vartheta) := p_{\vartheta}(Y_k|\overline{\bf{Y}}_{-m}^{k-1},{\bf W}_{-m}^{k},X_{-m}=x)/p_{\vartheta^*}(Y_k|\overline{\bf{Y}}_{-m}^{k-1},{\bf W}_{-m}^{k},X_{-m}=x)$. For $j=1,2,\ldots,6$, $1 \leq k \leq n$, $m \geq 0$, and $x \in \mathcal{X}$, define the derivatives of the log-densities and density ratios by, with suppressing the subscript $\vartheta$ from $\nabla_\vartheta^j$ for brevity,
\begin{align*} 
& \nabla^j \ell_{k,m,x}(\vartheta) := \nabla^j \log p_{\vartheta}(Y_k|\overline{\bf{Y}}_{-m}^{k-1},{\bf W}_{-m}^{k},X_{-m}=x),\quad \nabla^j l_{k,m,x}(\vartheta) := \frac{\nabla^j p_{\vartheta}(Y_k|\overline{\bf{Y}}_{-m}^{k-1},{\bf W}_{-m}^{k},X_{-m}=x)}{ p_{\vartheta^*}(Y_k|\overline{\bf{Y}}_{-m}^{k-1},{\bf W}_{-m}^{k},X_{-m}=x)},\\
& \nabla^j \overline \ell_{k,m}(\vartheta):= \nabla^j \log \overline p_{\vartheta}(Y_k|\overline{\bf{Y}}_{-m}^{k-1},{\bf W}_{-m}^{k}), \quad\text{and}\quad \nabla^j \overline l_{k,m}(\vartheta):= \frac{ \nabla^j \overline p_{\vartheta}(Y_k|\overline{\bf{Y}}_{-m}^{k-1},{\bf W}_{-m}^{k})}{\overline p_{\vartheta^*}(Y_k|\overline{\bf{Y}}_{-m}^{k-1},{\bf W}_{-m}^{k})}.
\end{align*} 
The following assumption corresponds to (A6)--(A8) in DMR and is tailored to our setting where some elements of $\vartheta_x^*$ are not identified and $\mathcal{X}$ is finite. Note that Assumptions (A6) and (A7) in DMR pertaining to $q_{\vartheta_x}(x,x')$ hold in our case because the $p_{ij}$'s are bounded away from 0 and 1. Let $G_{\vartheta k} := \sum_{x_k\in\mathcal{X}}g_{\vartheta_y}(Y_k|\overline{\bf{Y}}_{k-1}, x_k, W_k)$. $G_{\vartheta k}$ satisfies Assumption \ref{assn_a4}(b) in general when $\mathcal{N}^*$ is sufficiently small.
\begin{assumption}\label{assn_a4}
There exists a positive real $\delta$ such that on $\mathcal{N}^*:= \{\vartheta\in\Theta: |\vartheta_y-\vartheta_y^*| < \delta\}$ the following conditions hold: (a) For all $(\overline{\bf{y}},y',x,w)\in \mathcal{Y}^s \times \mathcal{Y} \times \mathcal{X} \times \mathcal{W}$, $g_{\vartheta_y}(y'|\overline{\bf{y}},x,w)$ is six times continuously differentiable on $\mathcal{N}^*$. (b) $\mathbb{E}_{\vartheta^*}[\sup_{\vartheta\in \mathcal{N}^*} \sup_{x\in \mathcal{X}} | \nabla^j\log g_{\vartheta_y}(Y_1|\overline{\bf{Y}}_0,x,W) |^{2q_j}]< \infty$ for $j=1,2,\ldots,6$ and $\mathbb{E}_{\vartheta^*}\sup_{\vartheta \in \mathcal{N}^*} |G_{\vartheta k}/G_{\vartheta^* k}|^{q_g} < \infty$ with $q_1=6q_0, q_2=5q_0, \ldots, q_6=q_0$, where $q_0=(1+\varepsilon) q_\vartheta$ and $q_g=(1+\varepsilon)q_\vartheta/\varepsilon$ for some $\varepsilon>0$ and $q_\vartheta >\max\{3,\text{dim}(\vartheta)\}$. (c) For almost all $(\overline{\bf{y}},y',w) \in \mathcal{Y}^s \times \mathcal{Y}\times \mathcal{W}$, $\sup_{\vartheta\in \mathcal{N}^*} g_{\vartheta_y}(y'|\overline{\bf{y}},x,w) < \infty$ and, for almost all $(\overline{\bf{y}},x,w)\in \mathcal{Y}^s\times \mathcal{X} \times \mathcal{W}$, for $j=1,2,\ldots,6$, there exist functions $f^j_{\overline{\bf{y}},w,x}: \mathcal{Y} \rightarrow \mathbb{R}^+$ in $L^1$ such that $|\nabla^j g_{\vartheta_y}(y'|\overline{\bf{y}},x,w) |\leq f^j_{\overline{\bf{y}}, x, w}(y')$ for all $\vartheta\in \mathcal{N}^*$.
\end{assumption} 

The following proposition shows that $\{\nabla^j \ell_{k,m,x}(\vartheta)\}_{m\geq 0}$ and $\{\nabla^j \overline \ell_{k,m}(\vartheta)\}_{m\geq 0}$ are $L^{r_j}(\mathbb{P}_{\vartheta^*})$-Cauchy sequences that converge to $\nabla^j \ell_{k,\infty}(\vartheta)$ $\mathbb{P}_{\vartheta^*}$-a.s.\ and in $L^{r_j}(\mathbb{P}_{\vartheta^*})$ uniformly in $\vartheta\in \mathcal{N}^*$ and $x \in \mathcal{X}$. 
\begin{proposition} \label{proposition-ell}
Under Assumptions \ref{assn_a1}, \ref{assn_a2}, and \ref{assn_a4}, for $j=1,\ldots,6$, there exist random variables $(K_j, \{M_{j,k} \}_{k=1}^n )\in L^{r_j}(\mathbb{P}_{\vartheta^*})$ and $\rho_* \in (0,1)$ such that, for all $1 \leq k \leq n$ and $m' \geq m \geq 0$,
\begin{align*}
(a) &\quad \sup_{x\in\mathcal{X}}\sup_{\vartheta\in \mathcal{N}^*}|\nabla^j\ell_{k,m,x}(\vartheta) - \nabla^j\overline\ell_{k,m}(\vartheta) | \leq K_j (k+m)^7\rho_*^{k+m-1}\quad \text{$\mathbb{P}_{\vartheta^*}$-a.s.},\\
(b) &\quad \sup_{x\in\mathcal{X}}\sup_{\vartheta\in \mathcal{N}^*}|\nabla^j\ell_{k,m,x}(\vartheta) - \nabla^j \ell_{k,m',x}(\vartheta) | \leq K_j(k+m)^7\rho_*^{k+m-1}\quad \text{$\mathbb{P}_{\vartheta^*}$-a.s.},\\
(c) & \quad \sup_{m \geq 0}\sup_{x\in\mathcal{X}}\sup_{\vartheta\in \mathcal{N}^*}|\nabla^j\ell_{k,m,x}(\vartheta)| + \sup_{m \geq 0}\sup_{\vartheta\in \mathcal{N}^*}|\nabla^j\overline\ell_{k,m}(\vartheta)|\leq M_{j, k} \quad \text{$\mathbb{P}_{\vartheta^*}$-a.s.},
\end{align*}
where $r_1 = 6q_0$, $r_2 = 3q_0$, $r_3 = 2q_0$, $r_4 =3q_0/2$, $r_5 =6q_0/5$, and $r_6 =q_0$. (d) Uniformly in $\vartheta\in \mathcal{N}^*$ and $x \in \mathcal{X}$, $\nabla^j\ell_{k,m,x}(\vartheta)$ and $\nabla^j\overline\ell_{k,m}(\vartheta)$ converge $\mathbb{P}_{\vartheta^*}$-a.s.\ and in $L^{r_j}(\mathbb{P}_{\vartheta^*})$ to $\nabla^j \ell_{k,\infty}(\vartheta) \in L^{r_j}(\mathbb{P}_{\vartheta^*})$ as $m\rightarrow \infty$.
\end{proposition}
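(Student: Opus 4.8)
The plan is to extend the geometric forgetting (mixing) arguments of DMR from first- and second-order derivatives to derivatives up to order six, and to accommodate the exogenous covariates $\{W_k\}$. The backbone is Assumption \ref{assn_a2}(a), which furnishes a uniform minorization (Doeblin) bound on the transition kernel $q_{\vartheta_x}$; as noted after that assumption, this yields geometric contraction of the conditional smoothing and prediction laws of the hidden chain, so DMR's Lemma 1 and Corollary 1 carry over (conditionally on the $W$'s) with the same $\rho$. Concretely, the total-variation distance between two conditional laws of the hidden state near time $k$ started from different initial distributions at time $-m$ decays like $\rho^{k+m}$, uniformly in $\vartheta\in\mathcal{N}^*$. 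The first substantive step is to represent $\nabla^j\ell_{k,m,x}(\vartheta)$ via the missing information principle (Louis's identity) pushed to order $j$: writing the observed-data density as a sum over latent paths of the complete-data density and differentiating, each $\nabla^j\log p_\vartheta(Y_k\mid\cdots,X_{-m}=x)$ becomes a finite combination of conditional expectations and conditional cumulants, under the smoothing law, of products of the complete-data derivatives $\nabla^i\log p_\vartheta(Y_t,x_t\mid\overline{\bf Y}_{t-1},x_{t-1},W_t)$ with $i\le j$ and $t$ ranging over $\{-m+1,\dots,k\}$. The same representation governs $\nabla^j\overline\ell_{k,m}(\vartheta)$, the only difference being the initial law at time $-m$ (a point mass $\delta_x$ versus the true stationary conditional law). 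Extending this Fa\`a-di-Bruno bookkeeping from $j\le 2$ (all DMR needed) to $j\le 6$ is the combinatorial core of the argument.

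Given this representation, parts (a) and (b) follow by comparing the two functionals term by term. Each term is a smoothed conditional moment whose dependence on the remote past enters solely through the initial law of $X_{-m}$; applying the geometric forgetting to each factor and telescoping bounds the difference by $\rho^{k+m}$ times an integrable envelope built from the domination functions $f^j$ of Assumption \ref{assn_a4}(c). The polynomial prefactor $(k+m)^7$ arises because the $j$-th order cumulant expansion is a sum over tuples $(t_1,\dots,t_j)\in\{-m+1,\dots,k\}^j$, so at most $(k+m)^{j+1}\le(k+m)^7$ index combinations (including a further window factor from renormalization) must be collected before the per-term geometric decay is invoked; absorbing constants into a slightly larger $\rho_*\in(0,1)$ yields the stated $K_j(k+m)^7\rho_*^{k+m-1}$. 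Part (b), the Cauchy property in $m$, uses the identical mechanism, comparing the windows $[-m,k]$ and $[-m',k]$ and forgetting the block before $-m$.

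For the domination (c) I would bound each of the finitely many product terms in the representation by H\"older's inequality: a product of complete-data derivatives of orders $i_1+\cdots+i_r=j$ lies in $L^{r_j}$ because each factor of order $i$ is in $L^{2q_i}=L^{2(7-i)q_0}$ by Assumption \ref{assn_a4}(b). The exponents $q_j=(7-j)q_0$ are calibrated precisely so that the resulting H\"older exponent meets the target $r_j=6q_0/j$ in every admissible split (the tightest case being $j$ first-order factors, since then $\sum_s 1/(7-i_s)=j/6\le j/3$), together with Assumption \ref{assn_a4}(b)'s bound on $G_{\vartheta k}/G_{\vartheta^* k}$ that controls the observed-data normalization. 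The envelopes $K_j$ and $M_{j,k}$ are then assembled from these dominating functions. Part (d) is immediate: summability of $\sum_m (k+m)^7\rho_*^{k+m-1}$ and the a.s.\ Cauchy bounds (a)--(b) produce an a.s.\ limit $\nabla^j\ell_{k,\infty}(\vartheta)\in L^{r_j}(\mathbb{P}_{\vartheta^*})$, and the domination (c) upgrades a.s.\ convergence to $L^{r_j}$ convergence by dominated convergence, with uniformity in $(\vartheta,x)$ inherited from the suprema already present in the bounds.

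The main obstacle is the second step: organizing the higher-order missing-information expansion so that every term is simultaneously (i) a smoothed conditional moment to which forgetting applies at the stated polynomial-times-geometric rate, and (ii) dominated in the sharp $L^{r_j}$ norm dictated by the tapering moment exponents $q_j$. Beyond $j=2$ the number and variety of cumulant terms grow rapidly, and the delicate point is verifying that the geometric forgetting rate is not degraded---beyond the harmless polynomial factor---when several complete-data derivatives at distinct times are multiplied inside a single conditional cumulant, while keeping the moment budget tight enough to land exactly in $L^{r_j}$.
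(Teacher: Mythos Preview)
Your overall strategy matches the paper's: decompose $\nabla^j\ell_{k,m,x}$ via a higher-order Louis identity into sums of conditional cumulants of complete-data derivatives (this is the paper's Lemma \ref{ell_lambda}, producing the objects $\Delta^{\mathcal{I}(j)}_{j,k,m,x}$), then bound the $\Delta$-differences (Lemma \ref{lemma-bound-1}) and combine by H\"older. The $L^{r_j}$ bookkeeping and parts (c)--(d) are essentially as you describe.

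However, your mechanism for the geometric rate in (a)--(b) has a real gap. You write that ``applying the geometric forgetting to each factor \ldots\ bounds the difference by $\rho^{k+m}$.'' That is not what happens: forgetting the initial condition at time $-m$ gives decay $\rho^{(m+t_1)}$ where $t_1$ is the \emph{earliest} time index in the tuple, and $t_1$ can be as small as $-m+1$, so this factor alone can be $O(1)$. Summing such a bound over all $(t_1,\ldots,t_j)$ would not produce $\rho_*^{k+m-1}$. Relatedly, the $(k+m)^7$ prefactor does \emph{not} come from counting tuples; if one simply counted $(k+m)^j$ terms and multiplied by a per-term geometric factor, the geometric decay would be destroyed for tuples with small $t_1$.

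What the paper actually uses (inside Lemma \ref{lemma-bound-1}, via Lemmas \ref{lemma_ijl} and \ref{rho_sum}) is that the centered cumulants $\Phi^{\mathcal{I}(j)}_{\vartheta\mathcal{T}(j)}$ themselves decay geometrically in \emph{every} time gap, $|\Phi^{\mathcal{I}(j)}_{\vartheta\mathcal{T}(j)}[\cdot]|\le C\rho^{\max_i(t_{i+1}-t_i-1)_+}\|\phi^{\mathcal{I}(j)}_{\mathcal{T}(j)}\|_\infty$, while the telescoped difference in $\Delta$ additionally carries factors $\rho^{(m+t_1-1)_+}$ (initial forgetting) and $\rho^{(k-1-t_j)_+}$ (from the change of conditioning set $\overline{\bf Y}_{-m}^{k}$ versus $\overline{\bf Y}_{-m}^{k-1}$). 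Taking the \emph{minimum} of these $j+1$ geometric factors and summing over ordered tuples via Lemma \ref{rho_sum} yields $\sum\min(\cdot)\le C\rho^{\lfloor(k+m-1)/4j\rfloor}$, which is absorbed into $\rho_*$. The $(k+m)^7$ prefactor instead comes from Lemma \ref{hoelder}, which bounds $\max_{t_1,\ldots,t_j}\prod_s\|\phi^{i_s}_{t_s}\|_\infty$ by $(k+m)^{j+1}$ times a random variable in the right $L^{r}$ space. This cumulant gap-decay is precisely the ``delicate point'' you flag in your last paragraph; the resolution is the combination of Lemmas \ref{lemma_ijl} and \ref{rho_sum}, not the direct per-term forgetting you sketch.
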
 
As shown by the following proposition, we may prove the uniform convergence of the derivatives of the density ratios by expressing them as polynomials of the derivatives of the log-density and applying Proposition \ref{proposition-ell} and H\"older's inequality.
\begin{proposition}\label{lemma-omega}
Under Assumptions \ref{assn_a1}, \ref{assn_a2}, and \ref{assn_a4}, for $j=1,\ldots,6$, there exist random variables $\{K_{j,k}\}_{k=1}^n \in L^{q_\vartheta}(\mathbb{P}_{\vartheta^*})$ and $\rho_* \in (0,1)$ such that, for all $1 \leq k \leq n$ and $m' \geq m \geq 0$,
\begin{align*}
(a) &\quad \sup_{x\in\mathcal{X}}\sup_{\vartheta\in \mathcal{N}^*}|\nabla^j l_{k,m,x}(\vartheta) - \nabla^j \overline l_{k,m}(\vartheta) | \leq K_{j,k} (k+m)^{7}\rho_*^{k+m-1} \quad \mathbb{P}_{\vartheta^*}\text{-a.s.},\\
(b) &\quad \sup_{x\in\mathcal{X}}\sup_{\vartheta\in \mathcal{N}^*}|\nabla^j l_{k,m,x}(\vartheta) - \nabla^j l_{k,m',x}(\vartheta) | \leq K_{j,k} (k+m)^{7}\rho_*^{k+m-1} \quad \text{$\mathbb{P}_{\vartheta^*}$-a.s.}, \\ 
(c) &\quad \sup_{m\geq 0}\sup_{x\in\mathcal{X}}\sup_{\vartheta\in \mathcal{N}^*}|\nabla^j l_{k,m,x}(\vartheta)| + \sup_{m\geq 0}\sup_{\vartheta\in \mathcal{N}^*}|\nabla^j\overline{l}_{k,m}(\vartheta)| \leq K_{j,k} \quad \mathbb{P}_{\vartheta^*}\text{-a.s.}
\end{align*}
(d) Uniformly in $\vartheta\in \mathcal{N}^*$ and $x \in \mathcal{X}$, $\nabla^j l_{k,m,x}(\vartheta)$ and $\nabla^j\overline l_{k,m}(\vartheta)$ converge $\mathbb{P}_{\vartheta^*}$-a.s.\ and in $L^{q_\vartheta}(\mathbb{P}_{\vartheta^*})$ to $\nabla^j l_{k,\infty}(\vartheta) \in L^{q_\vartheta}(\mathbb{P}_{\vartheta^*})$ as $m\rightarrow \infty$. (e) $\sup_{\vartheta\in \mathcal{N}^*}|\nabla^j \overline l_{k,0}(\vartheta) - \nabla^j l_{k,\infty}(\vartheta) | \leq K_{j,k} k^{7}\rho_*^{k-1}$ $\mathbb{P}_{\vartheta^*}$-a.s. 
\end{proposition}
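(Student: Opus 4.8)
The plan is to follow the hint and reduce everything to Proposition \ref{proposition-ell} via the Fa\`a di Bruno representation of derivatives of a density ratio in terms of derivatives of the corresponding log-density. Writing $p_\vartheta = p_{\vartheta^*}\cdot l_{k,m,x}(\vartheta)$ and $\ell_{k,m,x}(\vartheta) = \log p_\vartheta$ (with $p_{\vartheta^*}$ constant in $\vartheta$), the complete Bell polynomials give, for each $j$,
\[
\nabla^j l_{k,m,x}(\vartheta) = l_{k,m,x}(\vartheta)\cdot B_j\bigl(\nabla^1 \ell_{k,m,x}(\vartheta),\ldots,\nabla^j \ell_{k,m,x}(\vartheta)\bigr),
\]
where $B_j$ is a universal polynomial whose monomials $\prod_i (\nabla^i \ell)^{\alpha_i}$ are indexed by the partitions $j = \sum_i i\,\alpha_i$ (read in the obvious multilinear sense, so that bounds are on norms). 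The identical representation holds for the barred quantities $\nabla^j \overline l_{k,m}(\vartheta)$ in terms of $\nabla^i \overline \ell_{k,m}(\vartheta)$ and, passing to the limit, defines $\nabla^j l_{k,\infty}(\vartheta)$ through $\nabla^i \ell_{k,\infty}(\vartheta)$. First I would record this representation and observe that the leading factor $l_{k,m,x}(\vartheta)$ is controlled by $G_{\vartheta k}/G_{\vartheta^* k}$, which lies in $L^{q_g}$ by Assumption \ref{assn_a4}(b).

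For part (c), I would bound each monomial of $B_j$ using Proposition \ref{proposition-ell}(c): the factor $(\nabla^i \ell)^{\alpha_i}$ is dominated by $M_{i,k}^{\alpha_i}$, and H\"older's inequality places the product in $L^{q_0}$. The exponents $r_1 = 6q_0,\ldots,r_6 = q_0$ in Proposition \ref{proposition-ell} are calibrated so that every partition of $j\leq 6$ balances exactly under H\"older; for instance, the monomial $(\nabla^4\ell)(\nabla^1\ell)^2$ arising for $j=6$ satisfies $1/r_4 + 2/r_1 = 2/(3q_0) + 1/(3q_0) = 1/q_0$, and the extreme partitions give $1/r_1,\ldots,1/r_6$ summing to $1/q_0$ likewise. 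One further application of H\"older combines the leading factor ($L^{q_g}$) with the Bell polynomial ($L^{q_0}$); since $1/q_g + 1/q_0 = 1/q_\vartheta$, the product lies in $L^{q_\vartheta}$. This produces a dominating random variable $K_{j,k}\in L^{q_\vartheta}$, which bounds $\nabla^j \overline l_{k,m}(\vartheta)$ by the same argument, yielding (c).

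For the difference bounds (a), (b), and (e), I would expand the difference of two products by the telescoping identity $\prod_i a_i - \prod_i b_i = \sum_i (\prod_{i'<i} b_{i'})(a_i - b_i)(\prod_{i'>i} a_{i'})$, applied to the monomials of $B_j$ and to the leading factor. Each resulting term carries exactly one difference factor of the form $\nabla^i \ell_{k,m,x}(\vartheta) - \nabla^i \overline\ell_{k,m}(\vartheta)$ for (a), or $\nabla^i \ell_{k,m,x}(\vartheta) - \nabla^i \ell_{k,m',x}(\vartheta)$ for (b), which Proposition \ref{proposition-ell}(a)--(b) bounds by $K_i (k+m)^7 \rho_*^{k+m-1}$, multiplied by remaining factors dominated as in part (c). Because only one factor per term carries the geometric decay while the rest are merely bounded, the rate $(k+m)^7 \rho_*^{k+m-1}$ is inherited exactly rather than compounded, and the accumulated prefactor is again an $L^{q_\vartheta}$ random variable by H\"older. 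Part (e) is the special case $m=0$ of (a) composed with the $m\to\infty$ limit. Finally, part (d) follows because (a)--(b) exhibit $\{\nabla^j l_{k,m,x}(\vartheta)\}_m$ as an a.s.\ Cauchy sequence with a summable geometric majorant, giving a.s.\ convergence to $\nabla^j l_{k,\infty}(\vartheta)$, while the uniform $L^{q_\vartheta}$ bound in (c) upgrades this to $L^{q_\vartheta}$ convergence by dominated convergence.

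The main obstacle I anticipate is combinatorial bookkeeping rather than any single hard estimate: one must verify that for \emph{every} partition appearing in $B_j$ the H\"older exponents close to exactly $1/q_0$ (and then to $1/q_\vartheta$ after absorbing the leading factor), and organize the telescoping so that each term retains a single decaying difference factor carrying the uniform rate. Because the exponents $r_j$ in Proposition \ref{proposition-ell} were chosen with precisely this accounting in mind, each individual step is routine; the care lies in handling all partitions of $j$ up to $j=6$ uniformly and in keeping the $k$-dependence (entering through the $M_{i,k}$'s) confined to a single integrable prefactor $K_{j,k}$.
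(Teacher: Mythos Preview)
Your approach is essentially identical to the paper's: both express $\nabla^j l_{k,m,x}(\vartheta)$ as the density ratio $l_{k,m,x}(\vartheta)$ times a universal polynomial $\Psi^j_{k,m,x}(\vartheta)$ in the log-density derivatives $\nabla^i\ell_{k,m,x}(\vartheta)$ (the paper writes these polynomials out explicitly rather than invoking Fa\`a di Bruno/Bell), then control each monomial via Proposition~\ref{proposition-ell} and H\"older with exactly the exponent accounting you describe, and handle differences by telescoping. The one point you should make explicit is that in the telescoping, one term carries the difference of the \emph{leading} factors $l_{k,m,x}(\vartheta)-\overline l_{k,m}(\vartheta)$ rather than a difference of the form $\nabla^i\ell_{k,m,x}-\nabla^i\overline\ell_{k,m}$; this term needs its own geometric bound, which the paper supplies via a separate lemma (Lemma~\ref{lemma-ratio}) showing $\sup_{x,\vartheta}|l_{k,m,x}(\vartheta)-\overline l_{k,m}(\vartheta)|\leq K_k\rho^{k+m-1}$ with $K_k\in L^{q_g}$, derived from the forgetting bound~(\ref{DMR-C1}) and the sandwich $p_\vartheta(Y_k|\cdot)\in[\sigma_- G_{\vartheta k},\sigma_+ G_{\vartheta k}]$. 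Once that is in place, your argument goes through exactly as the paper's does.
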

When we apply the results in Section \ref{sec:expansion} to regime switching models, $l_{k,0,x}(\vartheta)$ corresponds to $l_{\vartheta k x_0}$ on the left-hand side of (\ref{lkx0_expand}), and $s_{\pi k}$ in (\ref{lkx0_expand}) is a function of the $\nabla^j \overline l_{k, 0}(\vartheta)$'s. Proposition \ref{lemma-omega} and the dominated convergence theorem for conditional expectations \citep[][Theorem 5.5.9]{durrett10book} imply that $\mathbb{E}_{\vartheta^*}[\nabla^j l_{k,\infty}(\vartheta)|\overline{\bf{Y}}_{-\infty}^{k-1}]=0$ for all $\vartheta\in \mathcal{N}^*$. Therefore, $\{\nabla^j l_{k,\infty}(\vartheta)\}_{k=-\infty}^{\infty}$ is a stationary, ergodic, and square integrable martingale difference sequence, and $\{\nabla^j l_{k,\infty}(\vartheta)\}_{j=1}^5$ satisfies Assumption \ref{assn_a3}(a)(b)(g).

\section{Testing homogeneity} \label{sec: testing-1}

Before developing the LRT of $M_0$ components, we analyze a simpler case of testing the null hypothesis $H_0: M=1$ against $H_A:M=2$ when the data are from $H_0$. We assume that the parameter space for $\vartheta_{2,x}=(p_{11},p_{22})'$ takes the form $[\epsilon, 1-\epsilon]^2$ for a small $\epsilon \in (0,1/2)$. Denote the true parameter in the one-regime model by $\vartheta_1^*:= ((\theta^*)',(\gamma^*)')'$. The two-regime model gives rise to the true density $p_{\vartheta_1^*}({\bf Y}_1^n| \overline{{\bf Y}}_{0}, x_0)$ if the parameter $\vartheta_2=(\theta_1,\theta_2,\gamma,p_{11},p_{22})'$ lies in a subset of the parameter space
\[
\Gamma^* := \left\{ (\theta_1,\theta_2,\gamma,p_{11},p_{22})\in \Theta_{ 2}: \theta_1=\theta_2=\theta^*\ \text{and}\ \gamma=\gamma^* \right\}.
\]
Note that $(p_{11},p_{22})$ is not identified under $H_0$.

Let $\ell_n(\vartheta_2,\xi_2) := \log \left( \sum_{x_0=1}^2 p_{\vartheta_2}({\bf Y}_1^n| \overline{\bf Y}_{0},{\bf W}_{0}^n,x_0) \xi_2(x_0) \right)$ denote the two-regime log-likelihood for a given initial distribution $\xi_2(x_0) \in \Xi_2$, and let $\hat\vartheta_2:= \arg\max_{\vartheta_2 \in \Theta_{2}} \ell_n(\vartheta_2,\xi_2)$ denote the MLE of $\vartheta_2$ given $\xi_2$, where $\xi_2$ is suppressed from $\hat \vartheta_2$ because $\xi_2$ does not matter asymptotically. Let $\hat \vartheta_1$ denote the one-regime MLE that maximizes the one-regime log-likelihood function $\ell_{0,n}(\vartheta_1) := \sum_{k=1}^n \log f (Y_k|\overline{\bf Y}_{k-1},W_k;\gamma,\theta)$ under the constraint $\vartheta_1 = (\theta',\gamma')' \in \Theta_1$. 

We introduce the following assumption for the consistency of $\hat \vartheta_1$ and $\hat \vartheta_2$. Assumption \ref{A-consist}(b) corresponds to Assumption (A4) of DMR. Assumption \ref{A-consist}(c) is a standard identification condition for the one-regime model. Assumption \ref{A-consist}(d) implies that the Kullback--Leibler divergence between $ p_{\vartheta_{1}^*}(Y_1 |\overline{\bf{Y}}_{-m}^0,{\bf W}_{-m}^0)$ and $ p_{\vartheta_{2}}(Y_1 |\overline{\bf{Y}}_{-m}^0,{\bf W}_{-m}^0)$ is 0 if and only if $\vartheta_2 \in \Gamma^*$.
\begin{assumption} \label{A-consist} 
(a) $\Theta_1$ and $\Theta_2$ are compact, and $\vartheta_1^*$ is in the interior of $\Theta_1$. (b) For all $(x,x') \in \mathcal{X}$ and all $(\overline{{\bf y}},y',w)\in \mathcal{Y}^s\times \mathcal{Y}\times \mathcal{W}$, $f(y'|\overline{\bf{y}}_0,w;\gamma,\theta)$ is continuous in $(\gamma,\theta)$. (c) If $\vartheta_{1}\neq \vartheta_1^*$, then $\mathbb{P}_{\vartheta^*_{1}}\left(f(Y_1|\overline{\bf{Y}}_0,W_1;\gamma,\theta) \neq f(Y_1|\overline{\bf{Y}}_0,W_1;\gamma^*,\theta^*) \right)>0$. (d) $\mathbb{E}_{\vartheta^*_{1}}[ \log p_{\vartheta_{2}}(Y_1 |\overline{\bf{Y}}_{-m}^0,{\bf W}_{-m}^1)] = \mathbb{E}_{\vartheta^*_{1}}[\log p_{\vartheta_{1}^*}(Y_1 |\overline{\bf{Y}}_{-m}^0,{\bf W}_{-m}^1) ]$ for all $m \geq 0$ if and only if $\vartheta_{2}\in \Gamma^*$. 
\end{assumption} 
The following proposition shows the consistency of the MLEs of $\vartheta_1^*$ and $\vartheta_{2,y}^*$. 
\begin{proposition} \label{P-consist} Suppose that Assumptions \ref{assn_a1}, \ref{assn_a2}, and \ref{A-consist} hold. Then, under the null hypothesis of $M=1$, $\hat\vartheta_1 \overset{p}{\rightarrow} \vartheta_1^*$ and $\inf_{\vartheta_2\in\Gamma^*}|\hat\vartheta_2-\vartheta_2|\overset{p}{\rightarrow} 0$.
\end{proposition}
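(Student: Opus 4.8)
The plan is to prove the two consistency claims separately, since they concern two different likelihood functions.

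For the one-regime MLE $\hat\vartheta_1$, I would use the standard Wald-type argument for stationary ergodic data. First, by Proposition \ref{uconlike} (or rather its one-regime analogue, which holds trivially here since the one-regime log-likelihood $\ell_{0,n}(\vartheta_1)=\sum_{k=1}^n \log f(Y_k|\overline{\bf Y}_{k-1},W_k;\gamma,\theta)$ involves no latent chain and no initial-condition issue), the normalized log-likelihood $n^{-1}\ell_{0,n}(\vartheta_1)$ converges. Specifically, under the stationarity and ergodicity of Assumption \ref{assn_a1}(e), the continuity from Assumption \ref{A-consist}(b), and the compactness from Assumption \ref{A-consist}(a), a uniform law of large numbers (e.g.\ the ergodic theorem applied to the envelope-dominated family) gives $\sup_{\vartheta_1\in\Theta_1}|n^{-1}\ell_{0,n}(\vartheta_1)-\mathbb{E}_{\vartheta_1^*}\log f(Y_1|\overline{\bf Y}_0,W_1;\gamma,\theta)|\overset{p}{\to}0$. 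The limiting criterion is uniquely maximized at $\vartheta_1^*$ by the information inequality combined with the identification condition Assumption \ref{A-consist}(c): if $\vartheta_1\neq\vartheta_1^*$, then $f(\cdot;\gamma,\theta)$ differs from $f(\cdot;\gamma^*,\theta^*)$ on a positive-probability set, so by the strict concavity of the logarithm the expected log-density is strictly smaller. Standard M-estimation theory then yields $\hat\vartheta_1\overset{p}{\to}\vartheta_1^*$.

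For the two-regime MLE, the claim $\inf_{\vartheta_2\in\Gamma^*}|\hat\vartheta_2-\vartheta_2|\overset{p}{\to}0$ asserts that $\hat\vartheta_2$ converges to the \emph{set} $\Gamma^*$ (since $\vartheta_{2,x}$ is unidentified under $H_0$, one cannot hope for convergence to a point). Here I would follow DMR's treatment of the MLE for regime-switching models. The key ingredients are: (i) Proposition \ref{uconlike}, which shows $\sup_{\vartheta_2}|\ell_n(\vartheta_2,x_0)-\ell_n(\vartheta_2)|$ is bounded by a deterministic constant, so the initial distribution $\xi_2$ is asymptotically irrelevant and it suffices to work with the stationary log-likelihood $\ell_n(\vartheta_2)$; (ii) a uniform ergodic theorem for the stationary conditional log-density, which—using the exponential forgetting in \eqref{DMR-C1} and the $L^1$-domination built into Assumption \ref{assn_a2}(c)—gives $\sup_{\vartheta_2\in\Theta_2}|n^{-1}\ell_n(\vartheta_2)-\mathbb{E}_{\vartheta_1^*}\log p_{\vartheta_2}(Y_1|\overline{\bf Y}_{-\infty}^0,{\bf W}_{-\infty}^1)|\overset{p}{\to}0$, where the limit is the stationary expected conditional log-density obtained by letting $m\to\infty$ in Assumption \ref{A-consist}(d). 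The limiting criterion is maximized exactly on $\Gamma^*$: the difference from its value at $\vartheta_1^*$ equals the negative of a Kullback--Leibler divergence, which is nonnegative and, by Assumption \ref{A-consist}(d), equals zero if and only if $\vartheta_2\in\Gamma^*$. A set-consistency argument (compactness of $\Theta_2$, closedness of $\Gamma^*$, and the well-separatedness of the maximizing set) then delivers $\inf_{\vartheta_2\in\Gamma^*}|\hat\vartheta_2-\vartheta_2|\overset{p}{\to}0$.

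The main obstacle is establishing the uniform ergodic theorem in step (ii) for the two-regime model. Unlike the one-regime case, the stationary conditional density $p_{\vartheta_2}(Y_k|\overline{\bf Y}_0^{k-1},{\bf W}_0^k)$ conditions on a growing past and is not a simple additive functional, so the usual ergodic theorem does not apply directly. The standard device—which I would borrow from DMR—is to approximate $\log p_{\vartheta_2}(Y_k|\overline{\bf Y}_0^{k-1},{\bf W}_0^k)$ by its infinite-past counterpart $\log p_{\vartheta_2}(Y_k|\overline{\bf Y}_{-\infty}^{k-1},{\bf W}_{-\infty}^k)$, which \emph{is} a stationary ergodic sequence; the approximation error is controlled uniformly in $\vartheta_2$ by the geometric forgetting bound \eqref{DMR-C1} together with the integrability in Assumption \ref{assn_a2}(c). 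Making this approximation uniform over the compact parameter space $\Theta_2$—so that the uniform LLN and the uniqueness of the maximizing set can be combined—is the delicate part, but it is exactly the argument DMR carry out for their consistency proof, and Assumptions \ref{assn_a1}, \ref{assn_a2}, and \ref{A-consist} have been tailored to reproduce it in the present setting with the covariate $W_k$ present.
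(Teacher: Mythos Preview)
Your proposal is correct and follows essentially the same route as the paper's proof: for $\hat\vartheta_1$ you invoke the standard M-estimation argument (uniform LLN via ergodicity plus the envelope from Assumption~\ref{assn_a2}(c), identification from Assumption~\ref{A-consist}(c)), and for $\hat\vartheta_2$ you reduce to the stationary log-likelihood via Proposition~\ref{uconlike}, approximate by the infinite-past conditional density using DMR's forgetting bounds, and apply set-consistency with the maximizing set $\Gamma^*$ identified through Assumption~\ref{A-consist}(d). The paper does exactly this, citing Theorem~2.1 of Newey--McFadden and Lemmas~3--4 and Proposition~2 of DMR explicitly, but the substance is identical.
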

 
Let $LR_n:=2[\ell_n(\hat\vartheta_2,\xi_2) - \ell_{0,n}(\hat\vartheta_1)]$ denote the LRTS for testing $H_0:M_0=1$ against $H_A:M_0=2$. Following the notation of Section \ref{sec:expansion}, we split $\vartheta_2$ into $\vartheta_2=(\psi,\pi)$, where $\pi$ is the part of $\vartheta$ not identified under the null hypothesis; the elements of $\psi$ are delineated later. In the current setting, $\pi$ corresponds to $\vartheta_{2,x}=(p_{11},p_{22})'$. Define $\varrho := \text{corr}_{\vartheta_{2,x}} (X_k,X_{k+1}) = p_{11}+p_{22}-1$ and $\alpha := \mathbb{P}_{\vartheta_{2,x}}(X_k=1)= (1-p_{22})/(2-p_{11}-p_{22})$. The parameter spaces for $\varrho$ and $\alpha$ under restriction $p_{11},p_{22} \in [\epsilon, 1-\epsilon]$ are given by $\Theta_{\varrho}:= [-1+2\epsilon,1-2\epsilon]$ and $\Theta_{\alpha}:= [\epsilon, 1-\epsilon]$, respectively. Because the mapping from $(p_{11},p_{22})$ to $(\varrho,\alpha)$ is one-to-one, we reparameterize $\pi$ as $\pi := (\varrho,\alpha)' \in \Theta_{\pi }:=\Theta_{\varrho }\times \Theta_{\alpha }$, and let $p_{\psi\pi}(\cdot|\cdot) := p_{\vartheta_2} (\cdot|\cdot)$. Henceforth, we suppress ${\bf W}_{0}^n$ for notational brevity and write, for example, $p_{\psi\pi}({\bf Y}_1^n| \overline{{\bf Y}}_{0}, {\bf W}_{0}^n,x_0)$ as $p_{\psi\pi}({\bf Y}_1^n|\overline{{\bf Y}}_{0},x_0)$ and $p_{\psi\pi} (y_k,x_k| \overline{\bf{y}}_{k-1},x_{k-1},w_k)$ as $p_{\psi\pi} (y_k,x_k| \overline{\bf{y}}_{k-1},x_{k-1})$ when doing so does not cause confusion. 

We derive the asymptotic distribution of the LRTS by applying Corollary \ref{cor_appn} to $\ell_n(\psi,\pi,\xi_2)$ and representing $s_{\pi k}$ and $t_\vartheta$ in (\ref{lkx0_expand}) in terms of $\vartheta$, $f(Y_k|\overline{\bf{Y}}_{k-1};\gamma,\theta)$, and its derivatives; $s_{\pi k}$ involves higher-order derivatives, and $t_\vartheta$ consists of the functions of the polynomials of (reparameterized) $\vartheta$. Section \ref{subsec:nonnormal} analyzes the case when the regime-specific distribution of $y_k$ is not normal with unknown variance. Section \ref{subsec:hetero_normal} analyzes the case when the regime-specific distribution $y_k$ is normal with regime-specific and unknown variance. Section \ref{subsec:homo_normal} handles the normal distribution where the variance is unknown and common across regimes. Note that because $\overline{\bf Y}_{-\infty}^\infty$ and ${\bf X}_{-\infty}^\infty$ are independent when $\psi = \psi^*$, we have 
\begin{equation}\label{indep}
\mathbb{P}_{\psi^*\pi}({\bf X}_{-\infty}^\infty|\overline{\bf Y}_{-\infty}^\infty) = \mathbb{P}_{\psi^*\pi}({\bf X}_{-\infty}^\infty).
\end{equation}
Define $q_k := \mathbb{I}\{X_k=1\}$, so that $\alpha = \mathbb{E}_{\psi^*\pi}[q_k]$.

\subsection{Non-normal distribution}\label{subsec:nonnormal}
 
When we apply Corollary \ref{cor_appn} to regime switching models, $s_{\pi k}$ is a function of \\$\nabla^j\overline p_{\psi^*\pi} (Y_k| \overline{{\bf Y}}_{0}^{k-1})/\overline p_{\psi^*\pi} (Y_k| \overline{{\bf Y}}_{0}^{k-1})$'s with $ \overline p_{\psi\pi} (Y^k_{1}| \overline{{\bf Y}}_{0})$ defined in (\ref{p_bar}). In order to express $\nabla^j\overline p_{\psi^* \pi} (Y_k| \overline{{\bf Y}}_{0}^{k-1}) / \overline p_{\psi^* \pi} (Y_k| \overline{{\bf Y}}_{0}^{k-1})$ in terms of $\nabla^j f(y|x;\gamma,\theta)$ via the Louis information principle (Lemma \ref{louis} in the appendix), we first derive the derivatives of the \emph{complete data} conditional density $p_{\vartheta_2} (y_k,x_k| \overline{\bf{y}}_{k-1},x_{k-1}) = g_{\vartheta_{2,y}}(y_k|\overline{\bf{y}}_{k-1}, x_k) q_{\vartheta_{2,x}}(x_{k-1},x_k)=\sum_{j =1}^2 \mathbb{I}\{x_k=j\}f(y_k|\overline{\bf{y}}_{k-1};\gamma,\theta_j)q_{\vartheta_{2,x}}(x_{k-1},x_k)$.

Consider the following reparameterization. Let
\begin{equation}
\left(
\begin{array}{c}
\lambda \\
\nu \\
\end{array}
\right):=
\left(
\begin{array}{c}
\theta_1 - \theta_2 \\
\alpha\theta_1 + (1-\alpha)\theta_2\\
\end{array}
\right),
\quad \text{so that} \quad
\left(
\begin{array}{c}
\theta_1\\
\theta_2 \\
\end{array}
\right)=
\left(
\begin{array}{c}
\nu+ (1-\alpha) \lambda\\
\nu- \alpha\lambda\\
\end{array}
\right). \label{repara}
\end{equation}
Let $\eta := (\gamma',\nu')'$ and $\psi_{\alpha}:=(\eta',\lambda')' \in \Theta_\eta \times \Theta_\lambda$. Under the null hypothesis of one regime, the true value of $\psi_{\alpha}$ is given by $\psi_{\alpha}^*:=(\gamma^*,\theta^*,0)'$. Henceforth, we suppress the subscript $\alpha$ from $\psi_{\alpha}$. Using this definition of $\psi$, let $\vartheta_2 := (\psi',\pi')' \in \Theta_\psi \times \Theta_\pi$. By using reparameterization (\ref{repara}) and noting that $q_k = \mathbb{I}\{x_k=1\}$, we have $p_{\psi\pi} (y_k,x_k| \overline{\bf{y}}_{k-1},x_{k-1}) = g_{\psi}(y_k|\overline{\bf{y}}_{k-1}, x_k) q_{\pi}(x_{k-1},x_k)$ and 
\begin{equation} \label{repara_g}
g_{\psi}(y_k|\overline{\bf{y}}_{k-1},x_k) = f(y_k|\overline{\bf{y}}_{k-1};\gamma,\nu+(q_k-\alpha)\lambda ).
\end{equation}
Henceforth, let $f^*_{k}$, $\nabla f^*_{k}$, $g^*_{k}$, and $\nabla g^*_{k}$ denote $f(Y_k|\overline{\bf{Y}}_{k-1};\gamma^*,\theta^*)$, $\nabla f(Y_k|\overline{\bf{Y}}_{k-1};\gamma^*,\theta^*)$, $g_{\psi^*}(Y_k|\overline{\bf{Y}}_{k-1},X_k) $, and $\nabla g_{\psi^*}(Y_k|\overline{\bf{Y}}_{k-1},X_k)$, respectively, and similarly for $\log f^*_{k}$ ,$\nabla \log f^*_{k}$, $\log g^*_{k}$, and $\nabla \log g^*_{k}$. 
Expanding $g_{\psi}(Y_k|\overline{\bf{Y}}_{k-1},X_k)$ twice with respect to $\psi=(\gamma',\nu',\lambda')'$ and evaluating at $\psi^*$ gives
\begin{equation} \label{dg}
\begin{aligned}
\nabla_{\eta} g_k^* &= \nabla_{(\gamma',\theta')'} f_k^*, \quad \nabla_{\lambda} g_k^* = (q_k -\alpha)\nabla_{\theta} f_k^*, \\
\nabla_{\lambda\eta'} g_k^* &= (q_k -\alpha)\nabla_{\theta(\gamma',\theta')} f_k^*, \quad \nabla_{\lambda\lambda'} g_k^* = (q_k-\alpha)^2 \nabla_{\theta\theta'} f_k^*.
\end{aligned} 
\end{equation}
Recall $\varrho := \text{corr}_{\vartheta_2^*}(q_k,q_{k+1})$. Observe that $q_k$ satisfies
\begin{equation} \label{markov_moments}
\begin{aligned}
\mathbb{E}_{\vartheta_2^*}(q_k-\alpha)^2 &= \alpha(1-\alpha), \quad \mathbb{E}_{\vartheta_2^*}(q_k-\alpha)^3 = \alpha(1-\alpha)(1-2\alpha), \\
\mathbb{E}_{\vartheta_2^*}(q_k-\alpha)^4 &= \alpha(1-\alpha)(3\alpha^2-3\alpha+1), \quad \text{corr}_{\vartheta_2^*}(q_k,q_{k+\ell}) = \varrho^{|\ell|}, 
\end{aligned}
\end{equation}
where the first three results follow from the property of a Bernoulli random variable, and the last result follows from \citet[][p. 684]{hamilton94book}. Then, it follows from (\ref{indep}) and (\ref{markov_moments}) that
\begin{equation} \label{qk_moments}
\begin{aligned}
&\mathbb{E}_{\vartheta^*}[q_k -\alpha |\overline{{\bf Y}}_{-\infty}^n] = 0,\quad \mathbb{E}_{\vartheta^*}[(q_{t_1}-\alpha) (q_{t_2}-\alpha) | \overline{\bf Y}_{-\infty}^n ] = \alpha(1-\alpha)\varrho^{t_2-t_1}, \quad t_2 \geq t_1.
\end{aligned}
\end{equation} 
From Lemma \ref{louis}, $\log p_{\psi\pi}(y_k, x_k|\overline{{\bf y}}_{k-1},x_{k-1}) = \log g_{\psi}(y_k|\overline{\bf{y}}_{k-1},x_k) + \log q_\pi(x_{k-1},x_k)$, and the definition of $\overline p_{\psi \pi} ( Y_1^k| \overline{{\bf Y}}_{0} )$ in (\ref{p_bar}), we obtain
\[
\begin{aligned}
&\frac{\nabla_\psi \overline p_{\psi^* \pi} (Y_k| \overline{{\bf Y}}_{0}^{k-1})}{\overline p_{\psi^* \pi} (Y_k| \overline{{\bf Y}}_{0}^{k-1})} =\nabla_\psi \log \overline p_{\psi^* \pi} (Y_k| \overline{{\bf Y}}_{0}^{k-1}) = \sum_{t=1}^k \mathbb{E}_{\vartheta^*} \left[\nabla_\psi \log g^*_{t} \middle|\overline{{\bf Y}}_{0}^k \right] - \sum_{t=1}^{k-1} \mathbb{E}_{\vartheta^*} \left[\nabla_\psi \log g^*_{t} \middle|\overline{{\bf Y}}_{0}^{k-1} \right].
\end{aligned}
\]
Applying (\ref{dg}), (\ref{qk_moments}), and $g_k^* =f_k^*$ to the right-hand side gives
\begin{equation} \label{d1p}
\begin{aligned}
\frac{\nabla_\eta \overline p_{\psi^* \pi} (Y_k| \overline{{\bf Y}}_{0}^{k-1})}{\overline p_{\psi^* \pi} (Y_k| \overline{{\bf Y}}_{0}^{k-1})} =\nabla_{(\gamma',\theta')'} \log f_k^*, \quad
\frac{\nabla_\lambda \overline p_{\psi^* \pi} (Y_k| \overline{{\bf Y}}_{0}^{k-1})}{ \overline p_{\psi^* \pi} (Y_k| \overline{{\bf Y}}_{0}^{k-1})}=0.
\end{aligned}
\end{equation}
Similarly, it follows from Lemma \ref{louis}, (\ref{dg}), (\ref{qk_moments}), (\ref{d1p}), and $g_k^* =f_k^*$ that
\begin{align}
& \nabla_{\lambda\eta'} \overline p_{\psi^* \pi} (Y_k| \overline{{\bf Y}}_{0}^{k-1})/\overline p_{\psi^* \pi} (Y_k| \overline{{\bf Y}}_{0}^{k-1}) =0, \label{d2p0}\\
& \nabla_{\lambda\lambda'} \overline p_{\psi^* \pi} (Y_k| \overline{{\bf Y}}_{0}^{k-1})/\overline p_{\psi^* \pi} (Y_k| \overline{{\bf Y}}_{0}^{k-1}) \nonumber \\
& = \nabla_{\lambda\lambda'} \log \overline p_{\psi^* \pi} (Y_k| \overline{{\bf Y}}_{0}^{k-1}) \nonumber \\
& = \sum_{t=1}^k \mathbb{E}_{\vartheta^*} \left[\nabla_{\lambda\lambda'} \log g^*_{t} \middle|\overline{{\bf Y}}_{0}^k \right] - \sum_{t=1}^{k-1} \mathbb{E}_{\vartheta^*} \left[\nabla_{\lambda\lambda'} \log g^*_{t} \middle|\overline{{\bf Y}}_{0}^{k-1} \right] \nonumber \\
& \quad + \sum_{t_1=1}^k \sum_{t_2=1}^k \mathbb{E}_{\vartheta^*} \left[ \frac{\nabla_{\lambda} g^*_{t_1}}{g^*_{t_1}} \frac{\nabla_{\lambda'} g^*_{t_2}}{g^*_{t_2}} \middle| \overline{{\bf Y}}_{0}^k \right] - \sum_{t_1=1}^{k-1} \sum_{t_2=1}^{k-1} \mathbb{E}_{\vartheta^*} \left[ \frac{\nabla_{\lambda} g^*_{t_1}}{g^*_{t_1}} \frac{\nabla_{\lambda'} g^*_{t_2}}{g^*_{t_2}} \middle| \overline{{\bf Y}}_{0}^{k-1} \right] \nonumber \\
& = \alpha(1-\alpha) \left[ \frac{ \nabla_{\theta\theta'}f_k^*}{f_k^*} + \sum_{t=1}^{k-1} \varrho^{k-t} \left( \frac{\nabla_{\theta} f_{t}^*}{f_{t}^*} \frac{\nabla_{\theta'} f_{k}^*}{f_{k}^*} + \frac{\nabla_{\theta} f_{k}^*}{f_{k}^*} \frac{\nabla_{\theta'} f_{t}^*}{f_{t}^*} \right)\right].\label{d2p}
\end{align} 
Because the first-order derivative with respect to $\lambda$ is identically equal to zero in (\ref{d1p}), 
the unique elements of $\nabla_{\eta} \overline p_{\psi^* \pi} (Y_k| \overline{{\bf Y}}_{0}^{k-1}) /\overline p_{\psi^* \pi} (Y_k| \overline{{\bf Y}}_{0}^{k-1}) $ and $\nabla_{\lambda\lambda'} \overline p_{\psi^* \pi} (Y_k| \overline{{\bf Y}}_{0}^{k-1})/\overline p_{\psi^* \pi} (Y_k| \overline{{\bf Y}}_{0}^{k-1})$ constitute the generalized score $s_{\pi k}$ in Corollary \ref{cor_appn}. This score is approximated by a stationary martingale difference sequence, where the approximation error satisfies Assumption \ref{assn_a3}.

We collect some notations. Recall $\psi = (\eta',\lambda')'$ and $\eta = (\gamma',\nu')'$. For a $q \times 1$ vector $\lambda$ and a $q \times q$ matrix $s$, define the $q_\lambda \times 1$ vectors $v(\lambda)$ and $V(s)$ as
\begin{equation} \label{v_lambda}
\begin{aligned}
v(\lambda) &:= ( \lambda_1^2,\ldots,\lambda_q^2,\lambda_1\lambda_2,\ldots,\lambda_1\lambda_q,\lambda_2 \lambda_3, \ldots, \lambda_2 \lambda_q,\ldots,\lambda_{q-1}\lambda_q)',\\
V(s) &:= (s_{11}/2,\ldots,s_{qq}/2,s_{12},\ldots,s_{1q},s_{23},\ldots,s_{2q},\ldots,s_{q,q-1})'.
\end{aligned}
\end{equation}
Noting that $\alpha(1-\alpha)>0$ for $\alpha\in\Theta_{\alpha}$, define, with $t_{\lambda}(\lambda,\pi):=\alpha(1-\alpha)v(\lambda)$, 
\begin{equation}\label{score}
\begin{aligned}
t(\psi,\pi) &:= 
\begin{pmatrix}
\eta - \eta^* \\
t_{\lambda}(\lambda,\pi)
\end{pmatrix}, \ 
s_{ \varrho k} : = \begin{pmatrix}
s_{\eta k} \\
s_{\lambda \varrho k}
\end{pmatrix}, \ \text{where} \
s_{\eta k} : = \frac{\nabla_\eta \overline p_{\psi^* \pi} (Y_k| \overline{{\bf Y}}_{0}^{k-1})}{\overline p_{\psi^* \pi} (Y_k| \overline{{\bf Y}}_{0}^{k-1})} = \begin{pmatrix}
\nabla_{\gamma} f_k^* / f_k^* \\
\nabla_{\theta} f_k^* / f_k^*
\end{pmatrix},
\end{aligned}
\end{equation}
and $s_{\lambda \varrho k} := V(s_{\lambda\lambda \varrho k})$ with 
\begin{equation} \label{score_lambda}
s_{\lambda\lambda \varrho k} := \frac{\nabla_{\lambda\lambda'} \overline p_{\psi^* \pi} (Y_k| \overline{{\bf Y}}_{0}^{k-1})}{\alpha(1-\alpha) \overline p_{\psi^* \pi} (Y_k| \overline{{\bf Y}}_{0}^{k-1})} = \frac{\nabla_{\theta \theta'} f_k^*}{f_k^*} + \sum_{t= 1}^{k-1} \varrho^{k-t} \left( \frac{\nabla_{\theta} f_{t}^*}{f_{t}^*} \frac{\nabla_{\theta'} f_{k}^*}{f_{k}^*} + \frac{\nabla_{\theta} f_{k}^*}{f_{k}^*} \frac{\nabla_{\theta'} f_{t}^*}{f_{t}^*} \right).
\end{equation}
Here, $s_{\varrho k}$ in (\ref{score}) depends on $\varrho$ but not on $\alpha$ and corresponds to $s_{\pi k}$ in Corollary \ref{cor_appn}.
The following proposition shows that the log-likelihood function is approximated by a quadratic function of $\sqrt{n}t(\psi,\pi)$. Let $\mathcal{N}_{\varepsilon } := \{ \vartheta_2 \in \Theta_{2 }: |t(\psi,\pi)|< \varepsilon \}$. Let $A_{n \varepsilon }(\xi) := \{\vartheta \in \mathcal{N}_{\varepsilon}: \ell_n(\psi,\pi,\xi) - \ell_n(\psi^*,\pi,\xi) \geq 0\} $ and $A_{n\varepsilon c}(\xi) := A_{n\varepsilon }(\xi)\cup \mathcal{N}_{c/\sqrt{n}}$, where we suppress the subscript $2$ from $\xi_2$.
We use this definition of $A_{n\varepsilon c}(\xi)$ through Sections \ref{subsec:nonnormal}--\ref{subsec:homo_normal}. As shown in Sections \ref{subsec:hetero_normal} and \ref{subsec:homo_normal}, Assumption \ref{A-nonsing1} does not hold for regime switching models with a normal distribution. 
\begin{assumption} \label{A-nonsing1} 
$0< \inf_{\varrho\in \Theta_{\varrho}} \lambda_{\min}(\mathcal{I}_{\varrho}) \leq \sup_{\varrho\in\Theta_{\varrho}} \lambda_{\max}(\mathcal{I}_\varrho) < \infty$
 for $\mathcal{I}_{\varrho}= \lim_{k\rightarrow \infty} \mathbb{E}_{\vartheta^*}(s_{\varrho k}s_{\varrho k}')$, where $s_{\varrho k}$ is given in (\ref{score}).
\end{assumption} 
\begin{proposition} \label{P-quadratic} Suppose Assumptions \ref{assn_a1}, \ref{assn_a2}, \ref{assn_a4}, \ref{A-consist}, and \ref{A-nonsing1} hold. Then, under the null hypothesis of $M=1$, (a) $\sup_{\xi}\sup_{\vartheta \in A_{n \varepsilon }(\xi)} |t(\psi,\pi)| = O_{p \varepsilon }(n^{-1/2})$; and (b) for any $c>0$,
\begin{equation} \label{ln_appn}
\sup_{\xi \in \Xi}\sup_{\vartheta \in A_{n \varepsilon c}(\xi) } \left| \ell_n(\psi,\pi,\xi) - \ell_n(\psi^*,\pi,\xi) - \sqrt{n}t(\psi,\pi)' \nu_n (s_{\varrho k}) + n t(\psi,\pi)' \mathcal{I}_\varrho t(\psi,\pi)/2 \right| = o_{p \varepsilon} (1).
\end{equation}
\end{proposition}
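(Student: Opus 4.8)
The plan is to recognize Proposition \ref{P-quadratic} as a direct specialization of Corollary \ref{cor_appn} (in the case $\eta=0$) to the present two-regime model, with $t_\vartheta = t(\psi,\pi)$, $s_{\pi k}=s_{\varrho k}$, and $\mathcal{I}_\pi = \mathcal{I}_\varrho$. Once these objects are shown to satisfy Assumption \ref{assn_a3}, both (a) and (b) follow immediately from Corollary \ref{cor_appn}(b). The entire burden is therefore to establish the expansion (\ref{lkx0_expand}) and to verify conditions (a)--(g) of Assumption \ref{assn_a3} for this particular choice of $t_\vartheta$ and $s_{\pi k}$.

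First I would establish the expansion (\ref{lkx0_expand}). Writing the conditional density ratio $l_{\vartheta k x_0}=l_{k,0,x_0}(\vartheta)$ in the notation of Section \ref{sec:uniform_convergence} and Taylor-expanding in $\psi$ about $\psi^*$, I would exploit the reparameterization (\ref{repara}), under which $\lambda^*=0$, together with the derivative calculations (\ref{d1p})--(\ref{d2p0}). These show that both the first-order $\lambda$-derivative and the $\eta\lambda$ cross-derivative of the stationary density ratio vanish at $\psi^*$, so the leading term of the expansion is exactly $(\eta-\eta^*)'s_{\eta k}$ plus the quadratic-in-$\lambda$ contribution $V(s_{\lambda\lambda\varrho k})$, which together equal $t(\psi,\pi)'s_{\varrho k}$ by (\ref{score})--(\ref{score_lambda}). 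I would then collect the remaining stationary terms---the pure $\eta\eta$ second-order term, the higher-order cross terms, and the cubic and quartic contributions in $\lambda$---into $r_{\vartheta k}$, and absorb the initial-condition discrepancy $l_{k,0,x_0}(\vartheta)-\overline l_{k,0}(\vartheta)$, together with that of its relevant derivatives, into $u_{\vartheta k x_0}$.

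The verification then proceeds condition by condition. For (a), (b), and (g) I would use the remarks following Proposition \ref{lemma-omega}, which identify $\{\nabla^j l_{k,\infty}\}$ as a stationary, ergodic, square-integrable martingale difference sequence, while Proposition \ref{lemma-omega}(e) bounds the gap between the $m=0$ score $s_{\varrho k}$ and its limit by a summable sequence; the moment bound (a) and the stochastic-equicontinuity bound (g) follow from Proposition \ref{lemma-omega}(c),(d), and (b)---the uniform law of large numbers for $P_n(s_{\varrho k}s_{\varrho k}')$ with positive-definite limit $\mathcal{I}_\varrho$---combines Assumption \ref{A-nonsing1} with the uniform geometric decay $|\varrho|\le 1-2\epsilon<1$ of the series in (\ref{score_lambda}). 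For (c)--(f), I would use the homogeneity relations $|t_\vartheta|\asymp|\eta-\eta^*|+|\lambda|^2$ and $|\psi-\psi^*|\asymp|\eta-\eta^*|+|\lambda|$, which imply that every monomial appearing in $r_{\vartheta k}$ on $\mathcal{N}_\varepsilon$ (namely $|\eta-\eta^*|^2$, $|\eta-\eta^*||\lambda|^2$, $|\lambda|^3$, and $|\lambda|^4$) is bounded by a constant multiple of $|t_\vartheta||\psi-\psi^*|$; the required $L^2$ and envelope moments are supplied by Proposition \ref{lemma-omega}(c), while the $u$-conditions (e),(f) follow from the exponential bound in Proposition \ref{lemma-omega}(a).

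The main obstacle will be conditions (c) and (d): showing that $r_{\vartheta k}$ is \emph{uniformly} dominated by $|t_\vartheta||\psi-\psi^*|$ over $\mathcal{N}_\varepsilon$, and, more delicately, that the empirical-process fluctuation $\sup_{\vartheta\in\mathcal{N}_\varepsilon}|\nu_n(r_{\vartheta k})|/(|t_\vartheta||\psi-\psi^*|)=O_p(1)$. This requires a maximal inequality over the neighborhood combined with the integrable envelopes from Proposition \ref{lemma-omega}(c). Closely tied to this is the uniform-in-$\varrho$ control demanded by (b) and (g): since $\varrho$ enters $s_{\varrho k}$ only through the geometrically weighted sum in (\ref{score_lambda}), the bound $|\varrho|\le 1-2\epsilon$ renders that series and its $\varrho$-derivative uniformly convergent, equicontinuous, and $L^{q_\vartheta}$-bounded, which is precisely what sustains the uniformity in both the law of large numbers and the stochastic-equicontinuity statements.
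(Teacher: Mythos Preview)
Your proposal is correct and follows essentially the paper's approach: verify Assumption \ref{assn_a3} via a Taylor expansion of the density ratio combined with Proposition \ref{lemma-omega}, then invoke Corollary \ref{cor_appn}(b). One minor simplification the paper exploits is that a \emph{third}-order expansion (with the cubic term carried as the Lagrange remainder $\Lambda^3_{k,0}(\overline\psi,\pi)'(\Delta\psi)^{\otimes 3}$) already suffices in the non-normal case, so no quartic $\lambda$-terms need to appear in $r_{\vartheta k}$.
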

We proceed to derive the asymptotic distribution of the LRTS. With $s_{\varrho k}$ defined in (\ref{score}), define 
\begin{equation} \label{I_lambda}
\begin{aligned}
&\mathcal{I}_{\eta} := \mathbb{E}_{\vartheta^*}(s_{\eta k} s_{\eta k}'), \quad \mathcal{I}_{\lambda\varrho_1\varrho_2} := \lim_{k\rightarrow \infty}\mathbb{E}_{\vartheta^*}(s_{\lambda\varrho_1 k} s_{\lambda\varrho_2 k}'), \quad \mathcal{I}_{\lambda\eta \varrho } := \lim_{k\rightarrow \infty}\mathbb{E}_{\vartheta^*}(s_{\lambda\varrho k} s_{\eta k}'),\\
&\mathcal{I}_{\eta \lambda \varrho} := \mathcal{I}_{\lambda \eta \varrho}', \quad \mathcal{I}_{\lambda.\eta \varrho_1 \varrho_2}:=\mathcal{I}_{\lambda \varrho_1 \varrho_2}-\mathcal{I}_{\lambda\eta \varrho_1}\mathcal{I}_{\eta}^{-1}\mathcal{I}_{\eta\lambda \varrho_2}, \quad \mathcal{I}_{\lambda.\eta \varrho}:=\mathcal{I}_{\lambda.\eta \varrho\varrho},\quad\ Z_{\lambda \varrho}:=(\mathcal{I}_{\lambda.\eta \varrho})^{-1}G_{\lambda.\eta \varrho},　
\end{aligned}
\end{equation} 
where
$G_{\lambda. \eta \varrho}$ is a $q_\lambda$-vector mean zero Gaussian process indexed by $\varrho$ with $\text{cov}(G_{\lambda. \eta\varrho_1},G_{\lambda. \eta\varrho_2}) = \mathcal{I}_{\lambda.\eta \varrho_1 \varrho_2}$. Define the set of admissible values of $\sqrt{n}\alpha(1-\alpha)v(\lambda)$ when $n\rightarrow \infty$ by $v(\mathbb{R}^q):= \{ x \in \mathbb{R}^{q_\lambda}: x = v(\lambda) \text{ for some } \lambda \in \mathbb{R}^q\}$. Define $\tilde t_{\lambda \varrho}$ by
\begin{equation} \label{t-lambda}
r_{\lambda \varrho}(\tilde t_{\lambda \varrho}) = \inf_{t_\lambda \in v(\mathbb{R}^q)}r_{\lambda \varrho}(t_{\lambda }), \quad r_{\lambda \varrho}(t_{\lambda}) := (t_{\lambda} -Z_{\lambda \varrho})' \mathcal{I}_{\lambda.\eta \varrho} (t_{\lambda} -Z_{\lambda \varrho}).
\end{equation}
The following proposition establishes the asymptotic null distribution of the LRTS. 
\begin{proposition} \label{P-LR}
Suppose Assumptions \ref{assn_a1}, \ref{assn_a2}, \ref{assn_a4}, \ref{A-consist}, and \ref{A-nonsing1} hold. Then, under the null hypothesis of $M=1$, $LR_n \overset{d}{\rightarrow} \sup_{ \varrho \in\Theta_{\varrho}} \left(\tilde t_{\lambda \varrho}' \mathcal{I}_{\lambda.\eta \varrho} \tilde t_{\lambda \varrho} \right)$.
\end{proposition}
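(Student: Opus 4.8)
The plan is to write $LR_n$ as a difference of two profiled log-likelihood increments, substitute the quadratic expansion of Proposition~\ref{P-quadratic}, and then evaluate the resulting constrained quadratic maximization with the cone-projection argument of \citet{andrews99em,andrews01em}.

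First I would decompose. Setting $\psi=\psi^*$ (i.e.\ $\lambda=0$, $\eta=\eta^*$) makes all regime densities coincide, so $\ell_n(\psi^*,\pi,\xi_2)=\ell_{0,n}(\vartheta_1^*)$ for every $\pi$ and $\xi_2$. Hence
\[
LR_n = 2\big[\ell_n(\hat\vartheta_2,\xi_2)-\ell_n(\psi^*,\pi,\xi_2)\big] - 2\big[\ell_{0,n}(\hat\vartheta_1)-\ell_{0,n}(\vartheta_1^*)\big].
\]
By Proposition~\ref{P-consist}, $\hat\vartheta_2$ lies in $A_{n\varepsilon c}(\xi_2)$ with probability approaching one, so the first bracket equals $\sup_\pi\sup_\psi[\ell_n(\psi,\pi,\xi_2)-\ell_n(\psi^*,\pi,\xi_2)]$, attained where (\ref{ln_appn}) applies. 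Substituting (\ref{ln_appn}), this bracket becomes $\sup_\pi\sup_t\,[\,2\sqrt n\,t'\nu_n(s_{\varrho k})-n\,t'\mathcal{I}_\varrho t\,]+o_{p\varepsilon}(1)$ over admissible $t=(\,(\eta-\eta^*)',t_\lambda')'$ with $t_\lambda=\alpha(1-\alpha)v(\lambda)$. The key observation is that $v$ is homogeneous of degree two and $\alpha(1-\alpha)>0$, so $\{\sqrt n\,\alpha(1-\alpha)v(\lambda):\lambda\in\mathbb{R}^q\}$ fills the \emph{same} cone $v(\mathbb{R}^q)$ for every $\alpha\in\Theta_\alpha$; since $s_{\varrho k}$ and $\mathcal{I}_\varrho$ depend on $\pi$ only through $\varrho$, the maximization over $\alpha$ is vacuous in the limit and $\sup_\pi$ collapses to $\sup_{\varrho\in\Theta_\varrho}$, with $u:=\sqrt n(\eta-\eta^*)$ free in $\mathbb{R}^{\dim\eta}$ and $w:=\sqrt n\,t_\lambda$ ranging over $v(\mathbb{R}^q)$.

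Next I would profile. Partitioning $\mathcal{I}_\varrho$ conformably and concentrating out the unconstrained $u$ gives the Schur-complement form
\[
G_{n,\eta}'\mathcal{I}_\eta^{-1}G_{n,\eta}+2w'G_{n,\lambda.\eta}-w'\mathcal{I}_{\lambda.\eta\varrho}w,\qquad G_{n,\lambda.\eta}:=\nu_n(s_{\lambda\varrho k})-\mathcal{I}_{\lambda\eta\varrho}\mathcal{I}_\eta^{-1}\nu_n(s_{\eta k}),
\]
with $G_{n,\eta}:=\nu_n(s_{\eta k})$. Completing the square around $Z_{n,\varrho}:=\mathcal{I}_{\lambda.\eta\varrho}^{-1}G_{n,\lambda.\eta}$ and maximizing over $w\in v(\mathbb{R}^q)$ leaves $G_{n,\eta}'\mathcal{I}_\eta^{-1}G_{n,\eta}+Z_{n,\varrho}'\mathcal{I}_{\lambda.\eta\varrho}Z_{n,\varrho}-\inf_{w\in v(\mathbb{R}^q)}(w-Z_{n,\varrho})'\mathcal{I}_{\lambda.\eta\varrho}(w-Z_{n,\varrho})$, the infimum attained at the projection $\tilde t_{\lambda\varrho}$ of (\ref{t-lambda}). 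Because $v(\mathbb{R}^q)$ is a cone, the first-order condition along the ray through $\tilde t_{\lambda\varrho}$ gives the orthogonality $\tilde t_{\lambda\varrho}'\mathcal{I}_{\lambda.\eta\varrho}(Z_{n,\varrho}-\tilde t_{\lambda\varrho})=0$, whence $Z_{n,\varrho}'\mathcal{I}_{\lambda.\eta\varrho}Z_{n,\varrho}-(\tilde t_{\lambda\varrho}-Z_{n,\varrho})'\mathcal{I}_{\lambda.\eta\varrho}(\tilde t_{\lambda\varrho}-Z_{n,\varrho})=\tilde t_{\lambda\varrho}'\mathcal{I}_{\lambda.\eta\varrho}\tilde t_{\lambda\varrho}$. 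The second bracket of the decomposition is the regular one-regime LRTS, which expands to $G_{n,\eta}'\mathcal{I}_\eta^{-1}G_{n,\eta}+o_p(1)$ (the $\lambda=0$ specialization of (\ref{ln_appn})) and cancels the identical term above. Thus $LR_n=\sup_{\varrho\in\Theta_\varrho}\big(\tilde t_{\lambda\varrho}'\mathcal{I}_{\lambda.\eta\varrho}\tilde t_{\lambda\varrho}\big)+o_{p\varepsilon}(1)$.

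Finally, the martingale-difference approximation of $s_{\varrho k}$ from the end of Section~\ref{sec:uniform_convergence}, together with Assumption~\ref{assn_a3}(g), yields a functional CLT $\nu_n(s_{\varrho k})\Rightarrow G_\varrho$ as a process indexed by $\varrho$; hence $G_{n,\lambda.\eta}\Rightarrow G_{\lambda.\eta\varrho}$ and $Z_{n,\varrho}\Rightarrow Z_{\lambda\varrho}$, and the continuous mapping theorem delivers the stated limit since the cone projection and $\varrho\mapsto\tilde t_{\lambda\varrho}'\mathcal{I}_{\lambda.\eta\varrho}\tilde t_{\lambda\varrho}$ are continuous. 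The main obstacle I anticipate is the uniformity in $\varrho$: establishing that the expansion and the attained maximizer hold uniformly over $\Theta_\varrho$, that $\nu_n(s_{\varrho k})$ is stochastically equicontinuous so as to converge weakly as a process, and that the projection onto the (generally non-convex) cone $v(\mathbb{R}^q)$ is continuous and measurable in $\varrho$, so that $\sup_{\varrho}$ is a genuinely continuous functional.
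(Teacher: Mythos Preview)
Your proposal is correct and follows essentially the same route as the paper: decompose $LR_n$, invoke the quadratic expansion of Proposition~\ref{P-quadratic}, profile out $\eta$ via the Schur complement to isolate $B_{\varrho n}(t_\lambda)$, observe that $\alpha$ drops out because $\alpha(1-\alpha)v(\Theta_\lambda)$ is locally the cone $v(\mathbb{R}^q)$ for every $\alpha$, and then pass to the limit. The only notable packaging difference is that the paper handles your anticipated obstacles---uniformity of the expansion in $\pi$, stochastic equicontinuity of $\nu_n(s_{\varrho k})$, and the cone-projection step on the non-convex set $v(\mathbb{R}^q)$---by verifying Assumptions 2, 3, 4 and $5^*$ of \citet{andrews01em} and invoking his Theorem~1(c), rather than by appealing to a continuous-mapping argument directly; this sidesteps the issue you flag about continuity of the projection onto a non-convex cone.
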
 
In Proposition \ref{P-LR}, the LRTS and its asymptotic distribution depend on the choice of $\epsilon$ because $\Theta_\varrho=[-1+2\epsilon,1-2\epsilon]$. It is possible to develop a version of the EM test 
\citep[][]{chenli09as, chenlifu12jasa, kasaharashimotsu15jasa} in this context that does not impose an explicit restriction on the parameter space for $p_{11}$ and $p_{22}$; however, we leave such an extension to future research.

\begin{remark}
When applied to the Markov regime switching model, the tests of \citet{carrasco14em} use the residuals from projecting $\nabla_{\theta \theta'} f_k / f_k + 2 \sum_{t=1}^{k-1} \varrho^{k-t} (\nabla_{\theta} f_{t} / f_{t}) (\nabla_{\theta'} f_{k} / f_{k})$ on $\nabla_{\theta} f_k / f_k$, where both are evaluated at the one-regime MLE. Therefore, in the non-normal case, the LRT and tests of \citet{carrasco14em} are based on the same score function.
\end{remark}

\subsection{Heteroscedastic normal distribution}\label{subsec:hetero_normal}

Suppose that $Y_k\in \mathbb{R}$ in the $j$-th regime follows a normal distribution with regime-specific intercept $\mu_j$ and variance $\sigma_j^2$. We split $\theta_j$ into $\theta_j = (\zeta_j,\sigma_j^2)'= (\mu_j,\beta_j',\sigma_j^2)'$, and write the density of the $j$-th regime as
\begin{equation}\label{normal-density}
f(y_k|\overline{\bf{y}}_{k-1};\gamma,\theta_j) =f(y_k|\overline{\bf{y}}_{k-1};\gamma,\zeta_j,\sigma^2_j) = \frac{1}{\sigma_j}\phi\left( \frac{y_k - \mu_j - \varpi(\overline{\bf{y}}_{k-1};\gamma,\beta_j ) }{\sigma_j}\right),
\end{equation}
for some function $\varpi$. In many applications, $\varpi$ is a linear function of $\gamma$ and $\beta_j$, e.g., $\varpi(\overline{\bf{y}}_{k-1},w_k;\gamma,\beta_j)= (\overline{\bf{y}}_{k-1})'\beta_j + w_k'\gamma$. Consider the following reparameterization introduced in \citet{kasaharashimotsu15jasa} ($\theta$ in Kasahara and Shimotsu corresponds to $\zeta$ here):
\begin{equation}
\left(
\begin{array}{c}
\zeta_1\\
\zeta_2\\
\sigma_1^2\\
\sigma_2^2
\end{array}
\right) =\left(
\begin{array}{c}
\nu_{\zeta} + (1-\alpha)\lambda_{\zeta} \\
\nu_{\zeta} -\alpha\lambda_{\zeta}\\
\nu_\sigma + (1- \alpha)(2\lambda_\sigma+ C_1 \lambda_\mu^2)\\
\nu_\sigma - \alpha(2\lambda_\sigma+ C_2 \lambda_\mu^2)
\end{array}
\right), \label{repara2}
\end{equation}
where $\nu_{\zeta}=(\nu_\mu,\nu_{\beta}')'$, $\lambda_{\zeta}=(\lambda_\mu,\lambda_{\beta}')'$, $C_1 := -(1/3)(1 + \alpha)$, and $C_2 := (1/3)(2 - \alpha)$, so that $C_1=C_2-1$. Collect the reparameterized parameters, except for $\alpha$, into one vector $\psi_{\alpha}$. As in Section \ref{subsec:nonnormal}, we suppress the subscript $\alpha$ from $\psi_{\alpha}$. Let the reparameterized density be 
\begin{equation} \label{repara_g_hetero}
g_{\psi}(y_k|\overline{\bf{y}}_{k-1},x_k) = f\left(y_k|\overline{\bf{y}}_{k-1};\gamma,\nu_\zeta + (q_k-\alpha)\lambda_\zeta, \nu_\sigma + (q_k - \alpha)(2\lambda_\sigma + (C_2 -q_k) \lambda_\mu^2 )\right).
\end{equation}
Let $\psi := (\eta',\lambda')' \in \Theta_\psi = \Theta_\eta \times \Theta_\lambda$, where $\eta := (\gamma',\nu_\zeta',\nu_\sigma)'$ and $\lambda := (\lambda_\zeta',\lambda_\sigma)'$. Because the likelihood function of a normal mixture model is unbounded when $\sigma_j \rightarrow 0$ \citep{hartigan85book}, we impose $\sigma_j \geq \epsilon_\sigma$ for a small $\epsilon_\sigma >0$ in $\Theta_\psi$. We proceed to derive the derivatives of $g_{\psi}(Y_k|\overline{\bf{Y}}_{k-1},X_k)$ evaluated at $\psi^*$. $\nabla_\psi g_k^*$, $\nabla_{\lambda\eta'}g_k^*$, and $\nabla_{\lambda\lambda'}g_k^*$ are the same as those given in (\ref{dg}) except for $\nabla_{\lambda_\mu^2}g_k^*$ and that those with respect to $\lambda_\sigma^j$ are multiplied by $2^j$. The higher-order derivatives of $g_{\psi}(Y_k|\overline{\bf{Y}}_{k-1},X_k)$ with respect to $\lambda_\mu$ are derived by following \citet{kasaharashimotsu15jasa}. 
From Lemma \ref{lemma_normal_der} and the fact that normal density $f(\mu,\sigma^2)$ satisfies
\begin{equation} \label{norma_der}
\begin{aligned}
\nabla_{\mu^2} f(\mu,\sigma^2) &= 2\nabla_{\sigma^2} f(\mu,\sigma^2), \quad \nabla_{\mu^3} f(\mu,\sigma^2) = 2\nabla_{\mu\sigma^2} f(\mu,\sigma^2), \ \ \text{and}\\
\nabla_{\mu^4} f(\mu,\sigma^2) &= 2 \nabla_{\mu^2\sigma^2} f(\mu,\sigma^2)= 4 \nabla_{ \sigma^2 \sigma^2} f(\mu,\sigma^2),
\end{aligned}
\end{equation}
we have
\begin{equation} \label{d3g}
\nabla_{\lambda_\mu^i} g_k^* = d_{ik} \nabla_{\mu^i} f_k^*, \quad i = 1, \ldots, 4,
\end{equation}
where 
\begin{align*} 
 d_{0k} & :=1, \quad d_{1k} := q_k - \alpha, \quad d_{2k} := (q_k - \alpha)(C_2 - \alpha), \quad d_{3k} := 2 (q_k - \alpha)^2(1-\alpha-q_k), \\
d_{4k} & := -2(q_k - \alpha)^4 + 3(q_k-\alpha)^2(\alpha - C_2)^2. 
\end{align*} 
It follows from $\mathbb{E}_{\vartheta^*}[ q_k|\overline{{\bf Y}}_{-\infty}^n]=\alpha$, (\ref{markov_moments}), and elementary calculation that
\begin{equation} \label{d3g2}
\begin{aligned}
\mathbb{E}_{\vartheta^*}[ d_{ik}|\overline{{\bf Y}}_{-\infty}^n] &= 0, \quad \mathbb{E}_{\vartheta^*}[ \nabla_{\lambda_\mu^i} g_k^*|\overline{{\bf Y}}_{-\infty}^k] = 0, \quad i = 1,2,3, \\
\mathbb{E}_{\vartheta^*}[ d_{4k}|\overline{{\bf Y}}_{-\infty}^n] &= \alpha(1 - \alpha) b(\alpha), \\
\mathbb{E}_{\vartheta^*}[ \nabla_{\lambda_\mu^4} g_k^*|\overline{{\bf Y}}_{-\infty}^k] &= \alpha(1-\alpha)b(\alpha)\nabla_{\mu^4} f_k^* = \alpha(1-\alpha) b(\alpha) 4\nabla_{\sigma^2\sigma^2}f_k^* = b(\alpha) \mathbb{E}_{\vartheta^*}[ \nabla_{\lambda_\sigma^2} g_k^*|\overline{{\bf Y}}_{-\infty}^k],
\end{aligned}
\end{equation}
with $b(\alpha): = -(2/3) (\alpha^2 - \alpha + 1) < 0$. Hence, $\mathbb{E}_{\vartheta^*}[ \nabla_{\lambda_\sigma^2} g_k^*|\overline{{\bf Y}}_{-\infty}^k]$ and $\mathbb{E}_{\vartheta^*}[ \nabla_{\lambda_\mu^4} g_k^*|\overline{{\bf Y}}_{-\infty}^k]$ are linearly dependent.

We proceed to derive a representation of $\nabla^j \overline p_{\psi^* \pi} (Y_k| \overline{{\bf Y}}_{0}^{k-1})/\overline p_{\psi^* \pi} (Y_k| \overline{{\bf Y}}_{0}^{k-1})$ in terms of $\nabla^j f_k^*$. Repeating the calculation leading to (\ref{d1p})--(\ref{d2p}) and using (\ref{d3g2}) gives the following. First, (\ref{d1p}) and (\ref{d2p0}) still hold; second, the elements of $\nabla_{\lambda\lambda'} \overline p_{\psi^* \pi} (Y_k| \overline{{\bf Y}}_{0}^{k-1})/\overline p_{\psi^* \pi} (Y_k| \overline{{\bf Y}}_{0}^{k-1})$ except for the $(1,1)$-th element are given by (\ref{d2p}) after adjusting that the derivative with respect to $\lambda_{\sigma}$ must be multiplied by 2 (e.g., $\mathbb{E}_{\vartheta^*}[\nabla_{\lambda_\sigma} g_k^* |\overline{{\bf Y}}_{-\infty}^n] = 2 \nabla_{\sigma^2}f_k^*$ and $\mathbb{E}_{\vartheta^*}[\nabla_{\lambda_\sigma\lambda_\mu} g_k^* |\overline{{\bf Y}}_{-\infty}^n] = 2 \nabla_{\sigma^2\mu}f_k^*$); third,
\begin{equation} \label{d2p-lambda}
\frac{\nabla_{\lambda_\mu^2} \overline p_{\psi^* \pi} (Y_k| \overline{{\bf Y}}_{0}^{k-1})}{\overline p_{\psi^* \pi} (Y_k| \overline{{\bf Y}}_{0}^{k-1})} = \alpha(1-\alpha) \sum_{t=1}^{k-1} \varrho^{k-t} \left( 2 \frac{\nabla_{\mu} f_{t}^*}{f_{t}^*} \frac{\nabla_{\mu} f_{k}^*}{f_{k}^*}\right).
\end{equation} 
When $\varrho \neq 0$, $\nabla_{\lambda_\mu^2} \overline p_{\psi^* \pi} (Y_k| \overline{{\bf Y}}_{0}^{k-1})/\overline p_{\psi^* \pi} (Y_k| \overline{{\bf Y}}_{0}^{k-1})$ is a non-degenerate random variable as in the non-normal case. When $\varrho=0$, however, $\nabla_{\lambda_\mu^2} \overline p_{\psi^* \pi} (Y_k| \overline{{\bf Y}}_{0}^{k-1})/\overline p_{\psi^* \pi} (Y_k| \overline{{\bf Y}}_{0}^{k-1})$ becomes identically equal to 0, and indeed the first non-zero derivative with respect to $\lambda_\mu$ is the fourth derivative.

Because of this degeneracy, we derive the asymptotic distribution of the LRTS by expanding $\ell_n(\psi,\pi,\xi)-\ell_n(\psi^*,\pi,\xi)$ four times. It is not correct, however, to simply approximate $\ell_n(\psi,\pi,\xi)-\ell_n(\psi^*,\pi,\xi)$ by a quadratic function of $\lambda_\mu^2$ (and other terms) when $\varrho \neq 0$ and a quadratic function of $\lambda_\mu^4$ when $\varrho=0$. This results in discontinuity at $\varrho=0$ and fails to provide a valid uniform approximation. We establish a uniform approximation by expanding $\ell_n(\psi,\pi,\xi)$ four times but expressing $\ell_n(\psi,\pi,\xi)$ in terms of $\varrho\lambda_\mu^2$, $\lambda_\mu^4$, and other terms.
 
For $m \geq 0$, define $\zeta_{k,m}(\varrho):= \sum_{t=-m+1}^{k-1} \varrho^{k-t-1} 2 \nabla_{\mu} f_{t}^*\nabla_{\mu} f_{k}^*/ f_{t}^* f_{k}^*$. Then, we can write (\ref{d2p-lambda}) as
\begin{equation} \label{d2p2}
\frac{\nabla_{\lambda_\mu^2} \overline p_{\psi^* \pi} (Y_k| \overline{{\bf Y}}_{0}^{k-1})}{\alpha(1-\alpha) \overline p_{\psi^* \pi} (Y_k| \overline{{\bf Y}}_{0}^{k-1})} = \sum_{t=1}^{k-1} \varrho^{k-t}\left( 2 \frac{\nabla_{\mu} f_{t}^*}{f_{t}^*} \frac{\nabla_{\mu} f_{k}^*}{f_{k}^*}\right) = \varrho \zeta_{k,0}(\varrho).
\end{equation}
Note that $\zeta_{k,m}(\varrho)$ satisfies $\mathbb{E}_{\vartheta^*}[\zeta_{k,m}(\varrho)| \overline{{\bf Y}}_{-m}^{k-1}]=0$ and is non-degenerate even when $\varrho=0$. Define $v(\lambda_\beta)$ as $v(\lambda)$ in (\ref{v_lambda}) but replacing $\lambda$ with $\lambda_\beta$. Collect the relevant parameters as 
\begin{equation} \label{t-psi2}
t(\psi,\pi) :=
\begin{pmatrix}
\eta - \eta^* \\
t_{\lambda}(\lambda,\pi)
\end{pmatrix},
\end{equation}
where 
\begin{equation} \label{t_lambda_rho}
t_{\lambda}(\lambda,\pi)
:=\alpha(1-\alpha)
\begin{pmatrix}
\varrho \lambda_\mu^2 \\
 \lambda_\mu\lambda_\sigma\\
\lambda_\sigma^2 + b(\alpha)\lambda_\mu^4/12\\
\lambda_\beta \lambda_\mu\\
\lambda_\beta \lambda_\sigma\\
v(\lambda_\beta)
\end{pmatrix}, 
\end{equation}
with $b(\alpha)= -(2/3)(\alpha^2-\alpha+1)<0$. Recall $\theta_j = (\zeta_j',\sigma_j^2)'= (\mu_j,\beta_j',\sigma_j^2)'$. Similarly to (\ref{score_lambda}), define the elements of the generalized score by
\begin{equation} \label{score_lambda_beta}
\begin{pmatrix}
* & s_{\lambda_{\mu\beta} \varrho k} & s_{\lambda_{\mu\sigma} \varrho k} \\
s_{\lambda_{\beta\mu} \varrho k} & s_{\lambda_{\beta\beta} \varrho k} & s_{\lambda_{\beta\sigma} \varrho k} \\
s_{\lambda_{\sigma\mu} \varrho k} & s_{\lambda_{\beta\sigma} \varrho k}& s_{\lambda_{\sigma\sigma} \varrho k} 
\end{pmatrix}
= \frac{\nabla_{\theta \theta'} f_k^*}{f_k^*} + \sum_{t= 1 }^{k-1} \varrho^{k-t} \left( \frac{\nabla_{\theta} f_{t}^*}{f_{t}^*} \frac{\nabla_{\theta'} f_{k}^*}{f_{k}^*} + \frac{\nabla_{\theta} f_{k}^*}{f_{k}^*} \frac{\nabla_{\theta'} f_{t}^*}{f_{t}^*} \right).
\end{equation}
Define the generalized score as 
\begin{equation}\label{score_normal}
s_{\varrho k} : = 
\begin{pmatrix}
s_{\eta k} \\
s_{\lambda \varrho k}
\end{pmatrix},
\quad \text{where}\quad
s_{\eta k} : = \begin{pmatrix}
\nabla_{\gamma} f_k^* / f_k^* \\
\nabla_{\theta} f_k^* / f_k^* 
\end{pmatrix} \
\text{ and }\ 
s_{\lambda \varrho k}
:= 
\begin{pmatrix}
\zeta_{k,0}(\varrho)/2 \\
 2 s_{\lambda_{\mu\sigma} \varrho k} \\
 2 s_{\lambda_{\sigma\sigma} \varrho k}\\
s_{\lambda_{\beta \mu} \varrho k}\\
 2 s_{\lambda_{\beta \sigma} \varrho k}\\
 V(s_{\lambda_{\beta\beta} \varrho k}) 
\end{pmatrix}.
\end{equation} 
The following proposition establishes a uniform approximation of the log-likelihood ratio. 
\begin{assumption} \label{A-nonsing2}
(a) $0< \inf_{\varrho\in\Theta_{\varrho}} \lambda_{\min}(\mathcal{I}_\varrho) \leq \sup_{\varrho\in\Theta_{\varrho}}\lambda_{\max}(\mathcal{I}_\varrho) < \infty$ for $\mathcal{I}_{\varrho}= \lim_{k\rightarrow \infty } \mathbb{E}_{\vartheta^*}(s_{\varrho k}s_{\varrho k}')$, where $s_{\varrho k}$ is given in (\ref{score_normal}). (b) $\sigma_1^*,\sigma_2^* >\epsilon_\sigma$. 
\end{assumption} 

\begin{proposition} \label{P-quadratic-N1} Suppose Assumptions \ref{assn_a1}, \ref{assn_a2}, \ref{assn_a4}, \ref{A-consist}, and \ref{A-nonsing2} hold and the density of the $j$-th regime is given by (\ref{normal-density}). Then, under the null hypothesis of $M=1$, (a) $\sup_{\vartheta \in A_{n \varepsilon}(\xi)} |t(\psi,\pi)| = O_{p \varepsilon }(n^{-1/2})$; and (b) for any $c>0$,
\begin{equation} \label{ln_appn_N1}
\sup_{\xi \in \Xi} \sup_{\vartheta \in A_{n\varepsilon c}(\xi) } \left| \ell_n(\psi,\pi,\xi) - \ell_n(\psi^*,\pi,\xi) - \sqrt{n}t(\psi,\pi)' \nu_n (s_{\varrho k}) + n t(\psi,\pi)' \mathcal{I}_\varrho t(\psi,\pi)/2 \right| = o_{p \varepsilon }(1).
\end{equation}
\end{proposition}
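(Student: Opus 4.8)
The plan is to deduce both claims from Corollary \ref{cor_appn} by verifying that, in the present heteroscedastic normal setting, the density ratio $l_{\vartheta k x_0}$ admits the expansion (\ref{lkx0_expand}) of Assumption \ref{assn_a3} with $t_\vartheta$ equal to $t(\psi,\pi)$ in (\ref{t-psi2})--(\ref{t_lambda_rho}) and $s_{\pi k}$ equal to $s_{\varrho k}$ in (\ref{score_normal}). Once Assumption \ref{assn_a3}(a)--(g) is established with these identifications and $\mathcal{I}_\varrho=\lim_k \mathbb{E}_{\vartheta^*}(s_{\varrho k}s_{\varrho k}')$, part (a) is Corollary \ref{cor_appn}(b) applied to $A_{n\varepsilon}(\xi)$, and the quadratic approximation (\ref{ln_appn_N1}) is the display in Corollary \ref{cor_appn}(b) specialized to the present $t(\psi,\pi)$ and $s_{\varrho k}$.

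The first step is to produce the expansion (\ref{lkx0_expand}). I would Taylor expand $l_{k,0,x_0}(\vartheta)-1$ in $\psi$ about $\psi^*$ to fourth order, replacing $\nabla^j l_{k,0,x_0}(\psi^*)$ by the stationary object $\nabla^j\overline l_{k,0}(\psi^*)$ and collecting the resulting difference into the initial-condition term $u_{\vartheta k x_0}$ (controlled by Proposition \ref{lemma-omega}(a) with $m=0$). The derivatives $\nabla^j\overline p_{\psi^*\pi}/\overline p_{\psi^*\pi}$ are computed from the complete-data derivative formulas (\ref{dg}) and (\ref{d3g}), the conditional-moment identities (\ref{d3g2}), and the Louis missing-information principle (Lemma \ref{louis}). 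The salient features are that the first $\lambda$-derivative vanishes, that the $\lambda_\mu^2$ coefficient equals $\alpha(1-\alpha)\varrho\,\zeta_{k,0}(\varrho)$ by (\ref{d2p2}) and hence degenerates at $\varrho=0$, and that the normal-density identities (\ref{norma_der}) force the fourth $\lambda_\mu$-derivative to be proportional to $\nabla_{\sigma^2\sigma^2}f_k^*$; this last fact is exactly why the single coordinate $\lambda_\sigma^2+b(\alpha)\lambda_\mu^4/12$, paired with $2s_{\lambda_{\sigma\sigma}\varrho k}$ in (\ref{score_normal}), appears in $t_\lambda$.

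The crucial point, and the one requiring care, is uniformity over $\varrho\in\Theta_\varrho$ across the degeneracy at $\varrho=0$. Because the leading behaviour in the $\lambda_\mu$ direction is order $\varrho\lambda_\mu^2$ when $\varrho\neq0$ but order $\lambda_\mu^4$ when $\varrho=0$, one cannot approximate $\ell_n(\psi,\pi,\xi)-\ell_n(\psi^*,\pi,\xi)$ by a fixed polynomial in $\lambda$. The device is to let $t_\lambda(\lambda,\pi)$ in (\ref{t_lambda_rho}) carry both $\varrho\lambda_\mu^2$ and $\lambda_\sigma^2+b(\alpha)\lambda_\mu^4/12$ as separate coordinates, matched respectively to $\zeta_{k,0}(\varrho)/2$ and $2s_{\lambda_{\sigma\sigma}\varrho k}$, with $\zeta_{k,0}(\varrho)$ remaining nondegenerate at $\varrho=0$. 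The monomials in $\psi-\psi^*$ of total order beyond what $t(\psi,\pi)'s_{\varrho k}$ captures are absorbed into $r_{\vartheta k}$; by construction $r_{\vartheta k}=O(|t_\vartheta|\,|\psi-\psi^*|)$ uniformly in $\varrho$, the most delicate direction being $\eta=\eta^*$, $\lambda_\sigma=\lambda_\beta=0$, $\varrho=0$, where $|t_\vartheta|\asymp\lambda_\mu^4$ and $|\psi-\psi^*|\asymp|\lambda_\mu|$, so that matching the remainder to the fifth-order term is precisely what a fourth-order expansion delivers.

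It remains to verify Assumption \ref{assn_a3}(a)--(g). Conditions (a), (b), and (g) on $s_{\varrho k}$ follow from the $L^{q_\vartheta}$ convergence of $\nabla^j\overline l_{k,0}(\vartheta)$ to the stationary martingale difference $\nabla^j l_{k,\infty}(\vartheta)$ in Proposition \ref{lemma-omega}(d),(e), the martingale-difference and ergodicity properties noted at the end of Section \ref{sec:uniform_convergence}, and the nonsingularity Assumption \ref{A-nonsing2}(a). Conditions (c)--(d) on $r_{\vartheta k}$ follow from the moment bounds of Assumption \ref{assn_a4}, the $L^{r_j}$ control of $\nabla^j\overline\ell_{k,m}$ and $\nabla^j\ell_{k,\infty}$ in Proposition \ref{proposition-ell}, and the boundedness of Proposition \ref{lemma-omega}(c), with H\"older's inequality used to bound the products of derivatives appearing in the Taylor remainder; conditions (e)--(f) on $u_{\vartheta k x_0}$ follow from Proposition \ref{lemma-omega}(a). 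Given these, Corollary \ref{cor_appn}(b) yields (a) and (b). The main obstacle throughout is the $\varrho=0$ degeneracy: ensuring that the remainder bounds hold uniformly in $\varrho$ rather than merely pointwise, which is what makes the grouping of $\varrho\lambda_\mu^2$ and $\lambda_\mu^4$ into distinct coordinates of $t(\psi,\pi)$ indispensable.
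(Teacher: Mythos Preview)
Your overall strategy matches the paper's: expand $l_{\vartheta k x_0}-1$ in $\psi$ to the fourth-derivative term with a fifth-order remainder, identify $t(\psi,\pi)'s_{\varrho k}$, and verify Assumption \ref{assn_a3} via Proposition \ref{lemma-omega} and Assumption \ref{A-nonsing2}. The verification of (a), (b), (e), (f), (g) goes exactly as you say. The gap is in your claim that $r_{\vartheta k}=O(|t_\vartheta|\,|\psi-\psi^*|)$ holds ``by construction,'' with only the $\varrho=0$ direction checked.

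The problem is not only the fifth-order Taylor remainder. In the expansion, the coefficient of $\lambda_\mu^3$, and the coefficient of $\lambda_\mu^4$ left over after subtracting the portion already matched to the $\lambda_\sigma^2+b(\alpha)\lambda_\mu^4/12$ coordinate, are genuine third- and fourth-order terms that do \emph{not} vanish for $\varrho\neq0$: $\nabla_{\lambda_\mu^3}\overline p_{\psi^*\pi}/\overline p_{\psi^*\pi}$ and $\nabla_{\lambda_\mu^4}\overline p_{\psi^*\pi}/\overline p_{\psi^*\pi}-b(\alpha)\nabla_{\lambda_\sigma^2}\overline p_{\psi^*\pi}/\overline p_{\psi^*\pi}$ are both nonzero when $\varrho\neq0$ (Louis's formula produces cross-time products beyond the single-$t$ terms you cite from (\ref{d3g2})). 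If these coefficients were merely $O(1)$, then along a sequence with $\varrho\to0$ faster than $\lambda_\mu^2$ you would have $|t_\vartheta|\asymp\lambda_\mu^4$ while a remainder contribution of order $\lambda_\mu^3$ survives, so $r_{\vartheta k}=O(|t_\vartheta|\,|\psi-\psi^*|)=O(\lambda_\mu^5)$ would fail. The paper closes this with Lemma \ref{lemma_d34}, which shows via a mean-value step in $\varrho$ and the identities (\ref{d3p})--(\ref{d4p}) that both coefficients are $O(\varrho)$; only then are the resulting terms $O(\varrho\lambda_\mu^3)$ and $O(\varrho\lambda_\mu^4)$ controlled by $|t_\vartheta|\,|\psi-\psi^*|$, since $\varrho\lambda_\mu^2$ is itself a coordinate of $t_\lambda$. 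You also need the algebraic identities $\lambda_\sigma^3=\lambda_\sigma[\lambda_\sigma^2+b(\alpha)\lambda_\mu^4/12]-(b(\alpha)\lambda_\mu^3/12)\lambda_\mu\lambda_\sigma$ and $\lambda_\mu^5=(12\lambda_\mu/b(\alpha))[\lambda_\sigma^2+b(\alpha)\lambda_\mu^4/12]-(12\lambda_\sigma/b(\alpha))\lambda_\mu\lambda_\sigma$ to handle those monomials in $R_{3k\vartheta}$ and the Taylor remainder. Without these two ingredients the uniform-in-$\varrho$ remainder bound does not go through.
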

The asymptotic null distribution of the LRTS is characterized by the supremum of $ 2 t_{\lambda}' G_{\lambda.\eta\varrho}- t_{\lambda}' \mathcal{I}_{\lambda.\eta \varrho} t_{\lambda}$, where $G_{\lambda. \eta \varrho}$ and $\mathcal{I}_{\lambda.\eta \varrho}$ are defined analogously to those in (\ref{I_lambda}) but with $s_{\varrho k}$ defined in (\ref{score_normal}), and the supremum is taken with respect to $t_{\lambda}$ and $\varrho\in\Theta_{\varrho}$ under the constraint implied by the limit of the set of possible values of $\sqrt{n}t_{\lambda}(\lambda,\pi)$ as $n\rightarrow\infty$. This constraint is given by the union of $\Lambda_{\lambda}^1$ and $\Lambda_{\lambda \varrho}^2$, where $q_\beta := \dim(\beta)$, $q_{\lambda}:= 3 +2q_\beta+q_\beta(q_\beta+1)/2$, and
\begin{equation}
\begin{aligned}\label{Lambda-lambda}
&\Lambda_{\lambda }^1:=\{ t_{\lambda}=( t_{\varrho\mu^2}, t_{\mu\sigma },t_{\sigma^2 },t_{\beta\mu}',t_{\beta\sigma}',t_{v(\beta)}')' \in \mathbb{R}^{q_{\lambda} } : \\
& \qquad \qquad (t_{\varrho\mu^2},t_{\mu\sigma },t_{\sigma^2 },t_{\beta\mu}')'\in \mathbb{R}\times\mathbb{R}\times\mathbb{R}_{-}\times\mathbb{R}^{q_\beta}, t_{\beta\sigma}=0, t_{v(\beta)}=0\}, \quad \\
&\Lambda_{\lambda \varrho}^2:=\{ t_{\lambda }= (t_{\varrho\mu^2},t_{\mu\sigma },t_{\sigma^2},t_{\beta\mu}',t_{\beta\sigma}',t_{v(\beta)}')' \in \mathbb{R}^{q_{\lambda} }: t_{\varrho\mu^2} = \varrho \lambda_{\mu}^2, t_{\mu\sigma }= \lambda_\mu\lambda_\sigma, \\
& \qquad \qquad 
t_{\sigma^2}= \lambda_\sigma^2, t_{\beta\mu }= \lambda_{\beta} \lambda_\mu, t_{\beta\sigma }= \lambda_{\beta} \lambda_\sigma, t_{v(\beta)} =v_{\beta}(\lambda_{\beta})\ \text{for some }\lambda\in \mathbb{R}^{2+q_\beta} \}. 
\end{aligned}
\end{equation}
Note that $\Lambda_{\lambda\varrho}^2$ depends on $\varrho$, whereas $\Lambda_{\lambda}^1$ does not depend on $\varrho$. Heuristically, $\Lambda_{\lambda}^1$ and $\Lambda_{\lambda\varrho}^2$ correspond to the limits of the set of possible values of $\sqrt{n}t_{\lambda}(\lambda,\pi)$ when $\liminf_{n\to\infty}n^{1/8}|\lambda_{\mu}| > 0$ and $\lambda_{\mu}=o(n^{-1/8})$, respectively. When $\liminf_{n\to\infty}n^{1/8}|\lambda_{\mu}| > 0$, we have $(\hat\lambda_\sigma, \hat\lambda_\beta) =O_p(n^{-3/8})$ because $t_{\lambda}(\hat\lambda,\pi)=O_p(n^{-1/2})$. Further, the set of possible values of $\sqrt{n} \varrho\lambda_{\mu}^2$ converges to $\mathbb{R}$ because $\varrho$ can be arbitrarily small. Consequently, the limit of $\sqrt{n}t_{\lambda}(\lambda,\pi)$ is characterized by $\Lambda_{\lambda}^1$.

Define $Z_{\lambda \varrho}$ and $\mathcal{I}_{\lambda.\eta \varrho}$ as in (\ref{I_lambda}) but with $s_{\pi k}$ defined in (\ref{score_normal}). Let $Z_{\lambda 0}$ and $\mathcal{I}_{\lambda.\eta 0}$ denote $Z_{\lambda \varrho}$ and $\mathcal{I}_{\lambda.\eta \varrho}$ evaluated at $\varrho=0$. Define $\tilde{t}_{\lambda}^1$ and $\tilde{t}_{\lambda\varrho}^2$ by
\begin{equation} \label{t-lambda-N1}
\begin{aligned}
r_{\lambda}(\tilde{t}_{\lambda }^1) & = \inf_{t_{\lambda} \in \Lambda_{\lambda}^1}r_{\lambda}(t_{\lambda}), \quad r_{\lambda}(t_{\lambda}) := (t_{\lambda} -Z_{\lambda 0})' \mathcal{I}_{\lambda.\eta 0} (t_{\lambda} -Z_{\lambda 0}) \quad \\
r_{\lambda\varrho}(\tilde{t}_{\lambda\varrho}^2) & = \inf_{t_{\lambda} \in \Lambda_{\lambda\varrho}^2}r_{\lambda\varrho}(t_{\lambda}), \quad r_{\lambda\varrho}(t_{\lambda}) := 
(t_{\lambda} -Z_{\lambda \varrho})' \mathcal{I}_{\lambda.\eta \varrho} (t_{\lambda} -Z_{\lambda \varrho}). 
\end{aligned}
\end{equation}
The following proposition establishes the asymptotic null distribution of the LRTS. 
\begin{proposition} \label{P-LR-N1}
Suppose that the assumptions in Proposition \ref{P-quadratic-N1} hold. Then, under the null hypothesis of $M=1$, $LR_n \overset{d}{\rightarrow} \max\{ \mathbb{I}\{\varrho=0\} (\tilde t_{\lambda }^1)' \mathcal{I}_{\lambda.\eta 0} \tilde t_{\lambda }^1, \sup_{\varrho\in\Theta_{\varrho}} (\tilde t_{\lambda \varrho}^2)' \mathcal{I}_{\lambda.\eta \varrho} \tilde t_{\lambda \varrho}^2 \}$. 
\end{proposition}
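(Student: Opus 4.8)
The plan is to convert $LR_n$ into a maximization of the quadratic approximation supplied by Proposition \ref{P-quadratic-N1} and then read off the limit by the continuous mapping theorem. The first observation I would use is that setting $\lambda=0$ (i.e.\ $\theta_1=\theta_2$) collapses the two-regime density to the one-regime density for \emph{every} $\pi$ and every initial distribution, so that $\ell_{0,n}(\hat\vartheta_1)=\sup_\eta \ell_n((\eta,0),\pi,\xi_2)$ exactly, while $\ell_n(\psi^*,\pi,\xi_2)$ does not depend on $\pi$; write $\ell_n^0$ for this common value and $\Delta_n(\psi,\pi):=\ell_n(\psi,\pi,\xi_2)-\ell_n^0$. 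Then $LR_n=2[\sup_{\psi,\pi}\Delta_n(\psi,\pi)-\sup_\eta\Delta_n((\eta,0))]$. By Proposition \ref{P-consist} and Proposition \ref{P-quadratic-N1}(a), the relevant maximizers lie in $A_{n\varepsilon c}(\xi_2)$, so I substitute the uniform expansion of Proposition \ref{P-quadratic-N1}(b). Concentrating out the unconstrained block $\tilde t_\eta=\sqrt n(\eta-\eta^*)$ gives, for each $\pi$, a reduced quadratic in $\tilde t_\lambda:=\sqrt n\,t_\lambda(\lambda,\pi)$ with score $G_{n\lambda.\eta\varrho}:=\nu_n(s_{\lambda\varrho k})-\mathcal{I}_{\lambda\eta\varrho}\mathcal{I}_\eta^{-1}\nu_n(s_{\eta k})$ and information $\mathcal{I}_{\lambda.\eta\varrho}$; the common term $\tfrac12\nu_n(s_{\eta k})'\mathcal{I}_\eta^{-1}\nu_n(s_{\eta k})$ (which does not depend on $\varrho$) cancels between the two- and one-regime maxima, leaving
\[
LR_n = 2\sup_{\pi}\sup_{\tilde t_\lambda}\Big(\tilde t_\lambda'G_{n\lambda.\eta\varrho}-\tfrac12\tilde t_\lambda'\mathcal{I}_{\lambda.\eta\varrho}\tilde t_\lambda\Big)+o_p(1),
\]
where $\tilde t_\lambda$ ranges over the image of $\lambda\mapsto\sqrt n\,t_\lambda(\lambda,\pi)$.

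Next I would establish the joint weak convergence of the empirical processes. As shown at the end of Section \ref{sec:uniform_convergence}, $s_{\varrho k}$ is a measurable transformation of a stationary, ergodic, square-integrable martingale difference sequence, so a functional central limit theorem for martingale differences together with stochastic equicontinuity in $\varrho$ (smoothness of $\varrho\mapsto s_{\varrho k}$ with an integrable envelope, via Assumption \ref{assn_a4}) yields $\nu_n(s_{\varrho k})\Rightarrow G_\varrho$ in $\varrho\in\Theta_\varrho$ and hence $G_{n\lambda.\eta\varrho}\Rightarrow G_{\lambda.\eta\varrho}$ with covariance kernel $\mathcal{I}_{\lambda.\eta\varrho_1\varrho_2}$ from (\ref{I_lambda}). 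Proposition \ref{P-quadratic-N1}(a) gives $\tilde t_\lambda=O_{p\varepsilon}(1)$, so the inner supremum may be restricted to a large compact set, making the right-hand side a continuous functional of $(G_{n\lambda.\eta\cdot},\mathcal{I}_{\lambda.\eta\cdot})$ once the limiting constraint set is identified.

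The crux, and what I expect to be the main obstacle, is identifying the limit of the admissible set $\{\sqrt n\,t_\lambda(\lambda,\pi)\}$ and showing the supremum converges to the $\max$ of two cone-constrained suprema across the discontinuity at $\varrho=0$. The binding scalar constraints are $\sqrt n\,\varrho\lambda_\mu^2=O(1)$, $\sqrt n\,\lambda_\mu\lambda_\sigma=O(1)$, $\sqrt n(\lambda_\sigma^2+b(\alpha)\lambda_\mu^4/12)=O(1)$, and $\sqrt n\,v(\lambda_\beta)=O(1)$. I would separate two regimes by the order of $\lambda_\mu$. If $\liminf_n n^{1/8}|\lambda_\mu|>0$ then $\lambda_\mu\asymp n^{-1/8}$, which forces $\lambda_\sigma,\lambda_\beta=O(n^{-3/8})$ and $\varrho\to0$; consequently $\sqrt n\,\varrho\lambda_\mu^2$ and $\sqrt n\,\lambda_\mu\lambda_\sigma$ become free in $\mathbb{R}$, $\sqrt n\,\lambda_\beta\lambda_\mu$ free in $\mathbb{R}^{q_\beta}$, the components $\sqrt n\,\lambda_\beta\lambda_\sigma$ and $\sqrt n\,v(\lambda_\beta)$ vanish, and $\sqrt n(\lambda_\sigma^2+b(\alpha)\lambda_\mu^4/12)$ is asymptotically nonpositive because $b(\alpha)<0$ dominates the negligible $\lambda_\sigma^2$ term --- this is precisely $\Lambda_\lambda^1$ evaluated at $\varrho=0$. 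If instead $\lambda_\mu=o(n^{-1/8})$, then $\lambda_\mu^4=o(n^{-1/2})$ is negligible and $t_\lambda(\lambda,\pi)$ reduces to the smooth image, giving $\Lambda_{\lambda\varrho}^2$ for each fixed $\varrho\in\Theta_\varrho$. Since $\alpha(1-\alpha)>0$ and $b(\alpha)<0$ throughout $\Theta_\alpha$, the $\alpha$-dependent scalings are absorbed into the free $\lambda$, both cones are $\alpha$-free, and the supremum over $\alpha$ adds nothing. The delicate uniformity that lets $\lim_n$ pass through the supremum is exactly why the likelihood was expanded in $\varrho\lambda_\mu^2$ and $\lambda_\mu^4$ rather than in $\lambda_\mu^2$ alone (cf.\ (\ref{d2p2}) and (\ref{t_lambda_rho})): this renders $t_\lambda(\lambda,\pi)$ jointly continuous in $(\lambda,\varrho)$, and the constrained supremum converges to $\max$ of the two cone-constrained suprema, building on \citet{andrews99em,andrews01em}.

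Finally I would evaluate the two cone-constrained suprema. For any closed cone $\Lambda$ and $Z=\mathcal{I}^{-1}G$, stationarity of $c\mapsto(c\tilde t-Z)'\mathcal{I}(c\tilde t-Z)$ along the ray $c\ge0$ at the minimizer $\tilde t=\argmin_{t\in\Lambda}(t-Z)'\mathcal{I}(t-Z)$ gives $\tilde t'\mathcal{I}(\tilde t-Z)=0$, whence $\sup_{t\in\Lambda}(2t'G-t'\mathcal{I}t)=Z'\mathcal{I}Z-\inf_{t\in\Lambda}(t-Z)'\mathcal{I}(t-Z)=\tilde t'\mathcal{I}\tilde t$, valid for nonconvex cones as well. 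Applying this with $(\Lambda,G,\mathcal{I},Z)=(\Lambda_\lambda^1,G_{\lambda.\eta 0},\mathcal{I}_{\lambda.\eta 0},Z_{\lambda 0})$ and with $(\Lambda_{\lambda\varrho}^2,G_{\lambda.\eta\varrho},\mathcal{I}_{\lambda.\eta\varrho},Z_{\lambda\varrho})$ produces the projections $\tilde t_\lambda^1$ and $\tilde t_{\lambda\varrho}^2$ of (\ref{t-lambda-N1}). Combining with the representation of the first paragraph and invoking the continuous mapping theorem with the weak convergence of the second paragraph yields $LR_n\overset{d}{\rightarrow}\max\{(\tilde t_\lambda^1)'\mathcal{I}_{\lambda.\eta 0}\tilde t_\lambda^1,\ \sup_{\varrho\in\Theta_\varrho}(\tilde t_{\lambda\varrho}^2)'\mathcal{I}_{\lambda.\eta\varrho}\tilde t_{\lambda\varrho}^2\}$, the first term being the $\varrho=0$ fourth-order contribution, which is the claimed distribution.
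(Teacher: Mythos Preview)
Your proposal follows the same architecture as the paper's proof: reduce $LR_n$ to a profiled quadratic in $\sqrt n\,t_\lambda$ via Proposition~\ref{P-quadratic-N1}, concentrate out $\eta$ (the paper's $A_n$/$B_{\varrho n}$ split in (\ref{B_pi})--(\ref{LR_appn2})), invoke weak convergence of $\nu_n(s_{\varrho k})$ (the paper imports (\ref{weak_cgce}) from the proof of Proposition~\ref{P-quadratic}), identify the limiting admissible set as the union of the two cones $\Lambda_\lambda^1$ and $\Lambda_{\lambda\varrho}^2$, and finish with Andrews (2001). Your cone-projection identity in the final paragraph is correct and matches how the paper reads off $\tilde t_\lambda'\mathcal{I}_{\lambda.\eta\varrho}\tilde t_\lambda$.

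The one substantive gap is in how you make the two-regime split rigorous. Your dichotomy ``$\liminf_n n^{1/8}|\lambda_\mu|>0$'' versus ``$\lambda_\mu=o(n^{-1/8})$'' is not a partition of the parameter space for each fixed $n$, and sequences can oscillate between the two descriptions; as written, you cannot pass $\lim_n$ through the supremum by a sequence-by-sequence argument. The paper handles this by a concrete sample-splitting device: it partitions $\Theta_\lambda$ at the deterministic threshold $|\lambda_\mu|=n^{-1/8}(\log n)^{-1}$, computes separate maximizers $(\hat\psi^j,\hat\varrho^j)$ on each piece, and then uses the rate implications $\hat\varrho^1=O_p(n^{-1/4}(\log n)^2)$, $\hat\lambda_\sigma^1,\hat\lambda_\beta^1=O_p(n^{-3/8}\log n)$, and $(\hat\lambda_\mu^2)^4=o_p(n^{-1/2})$ to show that the two pieces are asymptotically equivalent to maximizing $B_{0n}$ and $B_{\varrho n}$ over the \emph{surrogate} sets $\tilde\Lambda_{\lambda\alpha}^1$ and $\tilde\Lambda_{\lambda\alpha\varrho}^2$, which are locally equal to the cones $\Lambda_\lambda^1$ and $\Lambda_{\lambda\varrho}^2$. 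The equality $LR_n=W_n(\alpha)+o_p(1)$ is then secured by a two-sided sandwich (the $\tilde t$'s beat the $\hat t$'s in $B_{\varrho n}$, while the $\hat\psi$ beats any feasible $\psi$ in $\ell_n$), after which Andrews (2001) Theorem~1(c) applies. To close your argument you would need either this explicit cutoff-plus-sandwich, or an equivalent device verifying Andrews' Assumption~$5^*$ uniformly; the ``continuous mapping'' appeal alone does not deliver it because the map from $(\lambda,\pi)$ to $\sqrt n\,t_\lambda(\lambda,\pi)$ and its image set vary with $n$ in a way that is not covered by a single Painlev\'e--Kuratowski limit.
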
 
\begin{remark}
\citet{quzhuo17wp} derive the asymptotic distribution of the LRTS under the restriction that $\varrho\geq \epsilon>0$. 
\end{remark}

\begin{remark}
It is possible to extend our analysis to the exponential-LR type tests studied by \citet{andrewsploberger94em} and \citet{carrasco14em}.
\end{remark}

\subsection{Homoscedastic normal distribution}\label{subsec:homo_normal}

Suppose that $Y_k\in \mathbb{R}$ in the $j$-th regime follows a normal distribution with the regime-specific intercept $\mu_j$ but with common variance $\sigma^2$. We split $\gamma$ and $\theta_j$ into $\gamma=(\tilde \gamma',\sigma^2)'$ and $\theta_j=(\mu_j,\beta_j')'$, and write the density of the $j$-th regime as
\begin{equation}\label{normal-density-homo}
f(y_k|\overline{\bf{y}}_{k-1};\gamma,\theta_j) =f(y_k|\overline{\bf{y}}_{k-1};\tilde \gamma,\theta_j,\sigma^2) = \frac{1}{\sigma}\phi\left( \frac{y_k - \mu_j- \varpi(\overline{\bf{y}}_{k-1};\tilde\gamma,\beta_j ) }{\sigma}\right),
\end{equation}
for some function $\varpi$. Consider the following reparameterization:
\begin{equation}
\left(
\begin{array}{c}
\theta_1\\
\theta_2\\
\sigma^2 
\end{array}
\right) =\left(
\begin{array}{c}
\nu_{\theta} + (1-\alpha)\lambda \\
\nu_{\theta} -\alpha\lambda \\ 
\nu_\sigma - \alpha(1-\alpha) \lambda_\mu^2
\end{array}
\right), \label{repara-homo}
\end{equation}
where $\nu_{\theta}=(\nu_\mu,\nu_\beta')'$ and $\lambda=(\lambda_\mu,\lambda_\beta')'$. Collect the reparameterized parameters, except for $\alpha$, into one vector $\psi_{\alpha}$. Suppressing $\alpha$ from $\psi_{\alpha}$, let the reparameterized density be
\begin{equation}\label{repara_g_homo}
g_{\psi}(y_k|\overline{\bf{y}}_{k-1},x_k) = f\left(y_k|\overline{\bf{y}}_{k-1};\tilde\gamma,\nu_\theta+(q_k-\alpha)\lambda, \nu_\sigma -\alpha(1-\alpha) \lambda_\mu^2 \right).
\end{equation}
Let $\eta = (\tilde\gamma',\nu_\theta',\nu_\sigma)'$; then, the first and second derivatives of $g_{\psi}(y_k|\overline{\bf{y}}_{k-1},x_k)$ with respect to $\eta$ and $\lambda$ are the same as those given in (\ref{dg}) except for $\nabla_{\lambda_\mu^2}g_{\psi}(y_k|\overline{\bf{y}}_{k-1},x_k)$. We derive the higher-order derivatives of $g_{\psi}(y_k|\overline{\bf{y}}_{k-1},x_k)$ with respect to $\lambda_\mu$. From Lemmas \ref{lemma_normal_der} and (\ref{norma_der}), we obtain
\begin{equation} \label{d3g-homo}
\begin{aligned}
\nabla_{\lambda \eta^i} g_k^* &= d_{1k} \nabla_{\theta \eta^i} f_k^* \quad \text{for } i=0,1,\ldots,\\
\nabla_{\lambda_\mu^i} g_k^* &= d_{ik} \nabla_{\mu^i} f_k^* \quad \text{for }i = 0,1, \ldots, 4,
\end{aligned}
\end{equation}
where $d_{0k} :=1$, $d_{1k} := q_k - \alpha$, $d_{2k} := (q_k-\alpha)^2-\alpha(1-\alpha)$, $d_{3k} := (q_k-\alpha)^3 - 3(q_k-\alpha)\alpha(1-\alpha)$, and
$d_{4k} := (q_k - \alpha)^4 -6 (q_k - \alpha)^2 \alpha(1-\alpha) + 3\alpha^2(1 - \alpha)^2$.
It follows from $\mathbb{E}_{\vartheta^*}[ q_k|\overline{{\bf Y}}_{-\infty}^n]=\alpha$, (\ref{markov_moments}), and elementary calculation that 
\begin{equation} \label{d3g2-homo}
\begin{aligned}
\mathbb{E}_{\vartheta^*}[ \nabla_{\lambda_\mu^i} g_k^*|\overline{{\bf Y}}_{0}^k] &= 0, \quad\mathbb{E}_{\vartheta^*}[ d_{ik}|\overline{{\bf Y}}_{0}^k] = 0,\quad i = 1,2, \\
 \mathbb{E}_{\vartheta^*}[ d_{3k}|\overline{{\bf Y}}_{0}^k] &= \alpha(1 - \alpha) (1-2\alpha),
\quad \mathbb{E}_{\vartheta^*}[ d_{4k}|\overline{{\bf Y}}_{0}^k] = \alpha(1 - \alpha) (1-6\alpha+6\alpha^2).
\end{aligned}
\end{equation}
Repeating the calculation leading to (\ref{d1p})--(\ref{d2p}) and using (\ref{d3g2-homo}) gives the following. First, (\ref{d1p}) and (\ref{d2p0}) still hold; second, the elements of $\nabla_{\lambda\lambda'} \overline p_{\psi^* \pi} (Y_k| \overline{{\bf Y}}_{0}^{k-1})/\overline p_{\psi^* \pi} (Y_k| \overline{{\bf Y}}_{0}^{k-1})$ are given by (\ref{d2p}) except for the $(1,1)$-th element; third, $\nabla_{\lambda_\mu^2} \overline p_{\psi^* \pi} (Y_k| \overline{{\bf Y}}_{0}^{k-1})/\overline p_{\psi^* \pi} (Y_k| \overline{{\bf Y}}_{0}^{k-1})$ is given by (\ref{d2p-lambda}). Further, Lemma \ref{lemma_d34_homo} in the appendix shows that when $\varrho=0$, $\nabla_{\lambda_\mu^3} \overline p_{\psi^* \pi} (Y_k| \overline{{\bf Y}}_{0}^{k-1})/\overline p_{\psi^* \pi} (Y_k| \overline{{\bf Y}}_{0}^{k-1}) = \alpha(1-\alpha)(1-2\alpha) \nabla_{\mu^3}f_k^*/f_k^*$ and $\nabla_{\lambda_\mu^4} \overline p_{\psi^* \pi} (Y_k| \overline{{\bf Y}}_{0}^{k-1})/\overline p_{\psi^* \pi} (Y_k| \overline{{\bf Y}}_{0}^{k-1}) = \alpha(1-\alpha)(1-6\alpha+6\alpha^2) \nabla_{\mu^4}f_k^*/f_k^*$. Because $\nabla_{\lambda_\mu^3} \overline p_{\psi^* \pi} (Y_k| \overline{{\bf Y}}_{0}^{k-1})/\overline p_{\psi^* \pi} (Y_k| \overline{{\bf Y}}_{0}^{k-1})=0$ when $\alpha=1/2$ and $\varrho=0$, we expand $\ell_n(\psi,\pi,\xi)$ four times and express it in terms of $\varrho\lambda_\mu^2$, $(1-2\alpha)\lambda_\mu^3$, $\lambda_\mu^4$, and other terms to establish a uniform approximation.

Collect the relevant parameters as
\begin{equation} \label{t-psi2-homo}
t(\psi,\pi) :=
\begin{pmatrix}
\eta - \eta^* \\
t_{\lambda}(\lambda,\pi)
\end{pmatrix} \quad \text{and}\quad
t_{\lambda}(\lambda,\pi)
:=\alpha(1-\alpha)
\begin{pmatrix}
 \varrho \lambda_\mu^2 \\
 (1-2\alpha) \lambda_\mu^3 \\
 (1-6\alpha+6\alpha^2)\lambda_\mu^4 \\
 \lambda_\beta \lambda_\mu\\
 v(\lambda_\beta)
\end{pmatrix}.
\end{equation}
Define the generalized score as
\begin{equation}\label{score_normal_homo}
s_{\varrho k} : = 
\begin{pmatrix}
s_{\eta k} \\
s_{\lambda \varrho k}
\end{pmatrix},
\quad \text{where}\quad
s_{\eta k} : = \begin{pmatrix}
\nabla_{\gamma} f_k^* / f_k^* \\
\nabla_{\theta} f_k^* / f_k^* 
\end{pmatrix} \
\text{ and }\ 
s_{\lambda \varrho k}
:= 
\begin{pmatrix}
\zeta_{k,0}(\varrho)/2 \\
s_{\lambda_{\mu}^3 k}/3!\\
s_{\lambda_{\mu}^4 k}/4!\\
s_{\lambda_{\beta \mu} \varrho k}\\
V(s_{\lambda_{\beta\beta} \varrho k})
\end{pmatrix},
\end{equation}
where $\zeta_{k,m}(\varrho)$ is defined as in (\ref{d2p2}), $s_{\lambda_{\mu}^i k}:=\nabla_{\mu^i}f_k^*/f_k^*$ for $i=3,4$, and
$s_{\lambda_{\beta \mu} \varrho k}$ and $s_{\lambda_{\beta\beta} \varrho k}$ are defined as in (\ref{score_lambda_beta}) but using the density (\ref{normal-density-homo}) in place of (\ref{normal-density}).
Define, with $q_\beta := \dim(\beta)$ and $q_{\lambda}:=3 + q_\beta+q_\beta(q_\beta+1)/2$,
\begin{equation}
\begin{aligned}\label{Lambda-lambda-homo}
&\Lambda_{\lambda }^1:=\{ t_{\lambda}= ( t_{\varrho\mu^2}, t_{\mu^3}, t_{\mu^4}, t_{\beta\mu}', t_{v(\beta)}')' \in \mathbb{R}^{q_{\lambda}} : (t_{\varrho\mu^2}, t_{\mu^3}, t_{\mu^4}, t_{\beta\mu}')'\in \mathbb{R}\times\mathbb{R}\times\mathbb{R}_{-}\times\mathbb{R}^{q_\beta}, t_{v(\beta)}=0\}, \quad \\
&\Lambda_{\lambda \varrho}^2:=\{ t_{\lambda}= ( t_{\varrho\mu^2}, t_{\mu^3}, t_{\mu^4}, t_{\beta\mu}', t_{v(\beta)}')' \in \mathbb{R}^{q_{\lambda}}: t_{\varrho\mu^2} = \varrho \lambda_{\mu}^2, t_{\mu^3}=t_{\mu^4}=0, t_{\beta\mu }= \lambda_{\beta} \lambda_\mu, \\
& \qquad \quad 
 t_{v(\beta)} =v_{\beta}(\lambda_{\beta})\ \text{for some }\lambda\in \mathbb{R}^{1+q_\beta} \}. 
\end{aligned}
\end{equation}

The following two propositions correspond to Propositions \ref{P-quadratic-N1} and \ref{P-LR-N1}, establishing a uniform approximation of the log-likelihood ratio and asymptotic distribution of the LRTS.
\begin{assumption} \label{A-nonsing2-homo}
$0< \inf_{\varrho \in \Theta_{\varrho}} \lambda_{\min}(\mathcal{I}_\varrho) \leq \sup_{\varrho \in \Theta_{\varrho}}\lambda_{\max}(\mathcal{I}_\varrho) < \infty$ for $\mathcal{I}_{\varrho}= \lim_{k\rightarrow \infty} \mathbb{E}_{\vartheta^*}(s_{\varrho k}s_{\varrho k}')$, where $s_{\varrho k}$ is given in (\ref{score_normal_homo}).
\end{assumption}
\begin{proposition} \label{P-quadratic-N1-homo}
Suppose Assumptions \ref{assn_a1}, \ref{assn_a2}, \ref{assn_a4}, \ref{A-consist}, and \ref{A-nonsing2-homo} hold and the density of the $j$-th regime is given by (\ref{normal-density-homo}). Then, statements (a) and (b) of Proposition \ref{P-quadratic-N1} hold.
\end{proposition}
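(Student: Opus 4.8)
The plan is to mirror the proof of Proposition~\ref{P-quadratic-N1}, replacing the heteroscedastic reparameterization~(\ref{repara2}) by the homoscedastic one~(\ref{repara-homo}) together with the derivative formulas~(\ref{d3g-homo})--(\ref{d3g2-homo}), and then to invoke Corollary~\ref{cor_appn} with $t(\psi,\pi)$ and $s_{\varrho k}$ as defined in~(\ref{t-psi2-homo}) and~(\ref{score_normal_homo}). Concretely, the entire statement reduces to showing that the density ratio $l_{k,0,x}(\vartheta)-1$ admits the expansion~(\ref{lkx0_expand}) with these choices and that the resulting triple $(s_{\varrho k},r_{\vartheta k},u_{\vartheta k x_0})$ satisfies Assumption~\ref{assn_a3}; parts (a) and (b) of the proposition then both follow from Corollary~\ref{cor_appn}(b), since $A_{n\varepsilon}(\xi)\subseteq A_{n\varepsilon}(\xi,\eta)$ for every $\eta>0$.

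First I would Taylor-expand $l_{k,0,x}(\vartheta)-1$ in $\psi$ about $\psi^*$, carrying the expansion to fourth order in $\lambda_\mu$ and to the lower orders in $(\eta-\eta^*,\lambda_\beta)$ dictated by the degeneracy. Using the complete-data derivatives~(\ref{d3g-homo}) and the conditional-moment identities~(\ref{d3g2-homo}) together with the Louis-principle representation of $\nabla^j\overline p_{\psi^*\pi}/\overline p_{\psi^*\pi}$ in terms of the $\nabla^j f_k^*$, the first- and second-order derivative scores along $\lambda_\mu$ have vanishing conditional mean, so the surviving curvature is carried by the cross-product term $\varrho\,\zeta_{k,0}(\varrho)$ (as in~(\ref{d2p2})), by the pure third derivative with coefficient $(1-2\alpha)$, and by the pure fourth derivative, whose coefficients I read off from Lemma~\ref{lemma_d34_homo}. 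Collecting these reproduces exactly the inner product $t(\psi,\pi)'s_{\varrho k}$ with $t$ and $s_{\varrho k}$ as claimed and leaves a remainder $r_{\vartheta k}$ consisting of the fifth- and higher-order Taylor terms in $\lambda_\mu$ and the cross terms with $\eta-\eta^*$ and $\lambda_\beta$, while $u_{\vartheta k x_0}$ absorbs the difference between $l_{k,0,x}$ and its stationary counterpart.

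Next I would verify Assumption~\ref{assn_a3}. Parts (a), (b), and (g) for $s_{\varrho k}$ follow from Propositions~\ref{proposition-ell} and~\ref{lemma-omega}: each component of $s_{\varrho k}$ is a finite combination of $\nabla^j\overline l_{k,0}$ with $j\le 4$, hence is approximated in $L^{q_\vartheta}$ by the stationary, ergodic, square-integrable martingale difference sequence $\nabla^j l_{k,\infty}$, and the moment bounds of Assumption~\ref{assn_a4} deliver the $(2+\delta)$-integrability of (a) and the uniform law of large numbers for $P_n(s_{\varrho k}s_{\varrho k}')$ in (b), with the eigenvalue bound supplied by Assumption~\ref{A-nonsing2-homo}; for (g) the uniform bound $\sup_{\varrho\in\Theta_\varrho}|\nu_n(s_{\varrho k})|=O_p(1)$ comes from the martingale CLT for the finite-dimensional laws together with stochastic equicontinuity in $\varrho$, which holds because $\varrho\mapsto\zeta_{k,0}(\varrho)$ is Lipschitz with an integrable modulus. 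The remainder bounds (c)--(f) and the bound on $u_{\vartheta k x_0}$ are obtained by dominating the higher-order Taylor terms with the envelopes $M_{j,k}$ and $K_{j,k}$ of Propositions~\ref{proposition-ell}(c) and~\ref{lemma-omega}(c), applying H\"older's inequality, and controlling the dependence on the initial condition through the exponential bounds of Proposition~\ref{lemma-omega}(a)(e).

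The hard part will be establishing (c) and (d) \emph{uniformly} over $\varrho\in\Theta_\varrho$ and $\alpha\in\Theta_\alpha$, that is, across the degenerate set $\{\varrho=0\}\cup\{\alpha=1/2\}$. Since the coefficient multiplying $\lambda_\mu^2$ in $t(\psi,\pi)$ degenerates as $\varrho\to0$ and the coefficient of $\lambda_\mu^3$ degenerates as $\alpha\to1/2$, one cannot bound $r_{\vartheta k}$ by a fixed power of $|\psi-\psi^*|$; rather, the fifth- and higher-order terms in $\lambda_\mu$ must be shown to be negligible relative to $|t_\vartheta||\psi-\psi^*|$ under the precise grouping forced by~(\ref{repara-homo}). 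The choice $\sigma^2=\nu_\sigma-\alpha(1-\alpha)\lambda_\mu^2$ is exactly what annihilates the conditional mean of the $\lambda_\mu^2$ score and makes this grouping clean; verifying that $r_{\vartheta k}/(|t_\vartheta||\psi-\psi^*|)$ then carries a square-integrable envelope uniformly in $(\varrho,\alpha)$ is the crux, and it is where the argument, though structurally identical to that of Proposition~\ref{P-quadratic-N1}, must be re-derived for the homoscedastic polynomial map $\lambda\mapsto t_\lambda(\lambda,\pi)$.
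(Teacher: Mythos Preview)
Your plan is essentially the paper's own proof: a fifth-order Taylor expansion of $l_{k,0,x}(\vartheta)-1$ in $\psi$, identification of $t(\psi,\pi)'s_{\varrho k}$ via (\ref{d3g-homo})--(\ref{d3g2-homo}) and Lemma~\ref{lemma_d34_homo}, control of $u_{\vartheta k x_0}$ by the exponential approximations of Proposition~\ref{lemma-omega}, and verification of Assumption~\ref{assn_a3} for the remainder. The one piece you flag as unresolved---uniformity of the remainder bound across the degenerate set in $\alpha$---is handled in the paper by the elementary observation that $\inf_{0\le\alpha\le1}\max\{|1-2\alpha|,\,|1-6\alpha+6\alpha^2|\}>0$; since the two coefficients on $\lambda_\mu^3$ and $\lambda_\mu^4$ in $t_\lambda(\lambda,\pi)$ never vanish simultaneously, one always has $|\lambda_\mu^5|\le \mathcal{C}\,|\lambda_\mu|\cdot|t_\lambda(\lambda,\pi)|$ uniformly in $\alpha$, so the fifth-order term $\Lambda^5_{k,0}(\overline\psi,\pi)'(\Delta\psi)^{\otimes 5}$ satisfies Assumptions~\ref{assn_a3}(c)(d) without any case analysis. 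The $\varrho$-uniformity is dealt with separately by Lemma~\ref{lemma_d34_homo}, which writes the discrepancy between $\nabla_{\lambda_\mu^j}\overline p_{\psi^*\pi}/\overline p_{\psi^*\pi}$ and its $\varrho=0$ value as $\varrho$ times a bounded derivative, so those terms are $O(|\varrho\lambda_\mu^2|\cdot|\lambda_\mu|^{j-2})=O(|t(\psi,\pi)|\cdot|\psi-\psi^*|)$.
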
 
\begin{proposition} \label{P-LR-N1-homo} Suppose that the assumptions in Proposition \ref{P-quadratic-N1-homo} hold. Then, under the null hypothesis of $M=1$, $LR_n \overset{d}{\rightarrow} \max\{ \mathbb{I}\{\varrho=0\} (\tilde t_{\lambda }^1)' \mathcal{I}_{\lambda.\eta 0} \tilde t_{\lambda }^1, \sup_{\varrho\in\Theta_\varrho} (\tilde t_{\lambda \varrho}^2)' \mathcal{I}_{\lambda.\eta \varrho} \tilde t_{\lambda \varrho}^2 \}$, where $\tilde{t}_{\lambda}^1$ and $\tilde{t}_{\lambda\varrho}^2$ are defined as in (\ref{t-lambda-N1}) but in terms of $(Z_{\lambda \varrho}, \mathcal{I}_{\lambda.\eta \varrho}, Z_{\lambda 0},\mathcal{I}_{\lambda.\eta 0})$ constructed with $s_{\varrho k}$ defined in (\ref{score_normal_homo}) and $\Lambda_{\lambda }^1$ and $\Lambda_{\lambda \varrho}^2$ defined in (\ref{Lambda-lambda-homo}).
\end{proposition}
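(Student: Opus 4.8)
The plan is to follow the proof of Proposition \ref{P-LR-N1} line for line, replacing the heteroscedastic objects by their homoscedastic counterparts: the reparameterization (\ref{repara-homo}), the generalized score $s_{\varrho k}$ in (\ref{score_normal_homo}), the localizing vector $t_\lambda(\lambda,\pi)$ in (\ref{t-psi2-homo}), and the cones $\Lambda_\lambda^1,\Lambda_{\lambda\varrho}^2$ in (\ref{Lambda-lambda-homo}). First I would record that the two-regime MLE localizes: by Proposition \ref{P-consist} together with part (a) of Proposition \ref{P-quadratic-N1-homo}, $|t(\hat\psi,\hat\pi)| = O_{p\varepsilon}(n^{-1/2})$, so the entire analysis can be confined to $A_{n\varepsilon c}(\xi)$. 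Writing $LR_n = 2[\ell_n(\hat\vartheta_2,\xi) - \ell_{0,n}(\hat\vartheta_1)]$ and using that $\ell_n(\psi^*,\pi,\xi)$ coincides with the one-regime log-likelihood at $\vartheta_1^*$ up to the $O(1)$ initial-condition term (Proposition \ref{uconlike}), I would insert $\ell_n(\psi^*,\pi,\xi)$ as a common reference and apply the uniform quadratic approximation (\ref{ln_appn_N1}) both to the unrestricted two-regime maximization and to the restricted ($\lambda=0$) one-regime maximization.

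Second, I would profile out the unconstrained block $t_\eta=\eta-\eta^*$ from the quadratic $\sqrt n\, t'\nu_n(s_{\varrho k}) - (n/2)\, t'\mathcal{I}_\varrho t$. Since $\eta^*$ is interior, maximizing over $t_\eta$ yields the concentrated form $2\sqrt n\, t_\lambda'\big(\nu_n(s_{\lambda\varrho k}) - \mathcal{I}_{\lambda\eta\varrho}\mathcal{I}_\eta^{-1}\nu_n(s_{\eta k})\big) - n\, t_\lambda'\mathcal{I}_{\lambda.\eta\varrho}t_\lambda$, with $\mathcal{I}_{\lambda.\eta\varrho}$ as in (\ref{I_lambda}); the one-regime maximization contributes exactly $\nu_n(s_{\eta k})'\mathcal{I}_\eta^{-1}\nu_n(s_{\eta k})$, which cancels the matching term from the two-regime side, leaving $LR_n = \sup_{\varrho}\sup_{t_\lambda}\big[2\sqrt n\, t_\lambda'\nu_n(s_{\lambda.\eta\varrho k}) - n\, t_\lambda'\mathcal{I}_{\lambda.\eta\varrho}t_\lambda\big] + o_{p\varepsilon}(1)$, the inner sup being over the feasible values $t_\lambda=t_\lambda(\lambda,\pi)$. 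Because $\{\nabla^j l_{k,\infty}(\vartheta)\}$ is a stationary, ergodic, square-integrable martingale difference sequence (Section \ref{sec:uniform_convergence}), a martingale functional CLT gives $\nu_n(s_{\varrho k})\Rightarrow$ a Gaussian process indexed by $\varrho$, so the residual score converges weakly to $G_{\lambda.\eta\varrho}$ with covariance $\mathcal{I}_{\lambda.\eta\varrho_1\varrho_2}$. Setting $u=\sqrt n\, t_\lambda$, completing the square, and invoking the cone-maximization identity $\sup_{u\in\Lambda}[2u'G-u'\mathcal{I}u]=\tilde u'\mathcal{I}\tilde u$ (valid for any cone $\Lambda$, where $\tilde u$ minimizes $r_{\lambda\varrho}(u)=(u-Z_{\lambda\varrho})'\mathcal{I}_{\lambda.\eta\varrho}(u-Z_{\lambda\varrho})$ with $Z_{\lambda\varrho}=\mathcal{I}_{\lambda.\eta\varrho}^{-1}G_{\lambda.\eta\varrho}$ as in (\ref{t-lambda-N1})) converts each constrained supremum into a projected quadratic form, after which a continuous-mapping/argmax argument building on \citet{andrews99em,andrews01em} passes to the limit.

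The hard part is the step that identifies the limiting feasible set of $u=\sqrt n\, t_\lambda(\lambda,\pi)$ as the union $\Lambda_\lambda^1\cup\Lambda_{\lambda\varrho}^2$ and thereby produces the $\max$. This requires a joint-scaling analysis of $(\varrho,\alpha,\lambda_\mu,\lambda_\beta)$ that is strictly more intricate than in Proposition \ref{P-LR-N1}, because the cubic term $(1-2\alpha)\lambda_\mu^3$ is now active rather than annihilated by the reparameterization. In the regime $\liminf_n n^{1/8}|\lambda_\mu|>0$ one has $\sqrt n\,\lambda_\mu^2\to\infty$ \emph{and} $\sqrt n\,\lambda_\mu^3\to\infty$, so finiteness of $t_{\varrho\mu^2}$ and $t_{\mu^3}$ forces \emph{both} $\varrho\to0$ and $\alpha\to1/2$; at $\alpha=1/2$ the coefficient $(1-6\alpha+6\alpha^2)\to-1/2<0$, which is precisely what yields the sign constraint $t_{\mu^4}\in\mathbb{R}_-$ and the evaluation at $\varrho=0$ (hence the indicator $\mathbb{I}\{\varrho=0\}$ and the use of $Z_{\lambda0},\mathcal{I}_{\lambda.\eta0}$) defining $\Lambda_\lambda^1$, while $\lambda_\beta=O(n^{-3/8})$ forces $t_{v(\beta)}\to0$. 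The complementary regime $\lambda_\mu=o(n^{-1/8})$ gives $\Lambda_{\lambda\varrho}^2$ with $t_{\mu^3}=t_{\mu^4}=0$. One must verify that every feasible sequence has all subsequential limits in $\Lambda_\lambda^1\cup\Lambda_{\lambda\varrho}^2$ and that each point of the union is approachable, so that the supremum converges to the maximum of the two projected forms; the uniform-in-$\varrho$ validity, including the degenerate point $\varrho=0$, is exactly what (\ref{ln_appn_N1}) supplies, and the fact (immediate from (\ref{score_normal_homo})) that $\mathcal{I}_\varrho$, $\mathcal{I}_{\lambda.\eta\varrho}$, and $Z_{\lambda\varrho}$ depend on $\varrho$ but not on $\alpha$ makes the $\varrho=0$ evaluation unambiguous.
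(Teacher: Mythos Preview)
Your overall architecture (profile out $\eta$, pass to the residual quadratic in $t_\lambda$, split into two regimes, apply \citet{andrews01em}) matches the paper. The concrete error is the splitting threshold: you carried over $n^{-1/8}$ from the heteroscedastic proof, but in the homoscedastic case the correct cut is $n^{-1/6}$. The reason is the cubic term. In (\ref{t-psi2-homo}) the localizing vector contains $\alpha(1-\alpha)(1-2\alpha)\lambda_\mu^3$, which has no analogue in (\ref{t_lambda_rho}) because the heteroscedastic reparameterization (\ref{repara2}) absorbs the third-order piece into $\sigma^2$. With your dichotomy, the ``complementary'' regime $\lambda_\mu=o(n^{-1/8})$ does \emph{not} force $\sqrt{n}\,t_{\mu^3}\to 0$: for instance $\lambda_\mu\asymp n^{-1/7}=o(n^{-1/8})$ gives $\sqrt{n}\,|\lambda_\mu|^3\asymp n^{1/14}\to\infty$, so your claim that this regime delivers $\Lambda_{\lambda\varrho}^2$ with $t_{\mu^3}=0$ fails. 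The paper instead takes $\Theta_\lambda^2=\{|\lambda_\mu|\le n^{-1/6}(\log n)^{-1}\}$, which kills both $\sqrt{n}\,\lambda_\mu^3$ and $\sqrt{n}\,\lambda_\mu^4$.

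A second, related discrepancy is your description of regime 1. With the correct threshold $|\lambda_\mu|\ge n^{-1/6}(\log n)^{-1}$ one only gets $\sqrt{n}\,|\lambda_\mu|^3\ge (\log n)^{-3}$, which does \emph{not} diverge, so $\alpha$ is not forced to $1/2$. The paper therefore maximizes jointly over $(\psi,\varrho,\alpha)$ (unlike the heteroscedastic proof, which conditions on $\alpha$) and lets $\alpha$ range freely in the construction of $\tilde\Lambda_\lambda^1$. The verification that $\tilde\Lambda_\lambda^1$ is locally equal to the cone $\Lambda_\lambda^1$ then uses both cases: when $|1-2\alpha|\ge\epsilon>0$ one has $t_{\mu^4}/t_{\mu^3}\to 0$ as $\lambda_\mu\to 0$, filling out the ray $t_{\mu^4}=0$; when $\alpha$ is near $1/2$ the coefficient $1-6\alpha+6\alpha^2<0$ gives $t_{\mu^4}\in\mathbb{R}_-$. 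Your argument, which pins $\alpha\to 1/2$, would miss the first case and would not establish that every point of $\Lambda_\lambda^1$ is approachable. Once you switch to the $n^{-1/6}$ cut and treat $\alpha$ as a free maximization variable in the regime-1 set, the rest of your outline goes through exactly as in the paper.
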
 

\section{Testing $H_0:M=M_0$ against $H_A:M=M_0+1$ for $M_0 \geq 2$}\label{sec-general}
In this section, we derive the asymptotic distribution of the LRTS for testing the null hypothesis of $M_0$ regimes against the alternative of $M_0+1$ regimes for general $M_0 \geq 2$. We suppress the covariate ${\bf W}_{a}^b$ unless confusion might arise.

Let $\vartheta_{M_0}^*=((\vartheta_{M_0,x}^*)',(\vartheta_{M_0,y}^*)')'$ denote the parameter of the $M_0$-regime model, where $\vartheta_{M_0,x}^*$ contains $p_{ij}^*= q_{\vartheta^*_{M_0,x}}(i,j)>0$ for $i= 1,\ldots,M_0$ and $j=1,\ldots,M_0-1$, and $\vartheta_{M_0,y}^* = ((\theta_1^*)',\ldots,(\theta_{M_0}^*)',(\gamma^*)')'$. We assume $\max_{i}\sum_{j=1}^{M_0-1}p_{ij}^*<1$ and $\theta_{1}^*<\ldots< \theta_{M_0}^*$ for identification. 
The true $M_0$-regime conditional density of ${\bf Y}_1^n$ given $\overline{\bf Y}_{0}$ and $x_0$ is 
\begin{equation} \label{true_model}
p_{\vartheta_{M_0}^*}({\bf Y}_1^n| \overline{\bf Y}_{0},x_0) = \sum_{{\bf x}_1^n\in \mathcal{X}_{M_0}^n}\prod_{k=1}^n p_{\vartheta_{M_0}^*}(Y_k,x_k|\overline{\bf Y}_{k-1},x_{k-1}),
\end{equation} 
where $p_{\vartheta_{M_0}^*} (y_k,x_k| \overline{\bf{y}}_{k-1},x_{k-1}) = g_{\vartheta_{M_0,y}^*}(y_k|\overline{\bf{y}}_{k-1}, x_k) q_{\vartheta_{M_0,x}^*}(x_{k-1},x_k)$ with $g_{\vartheta_{M_0,y}^*}(y_k|\overline{\bf{y}}_{k-1},x_k) = \sum_{j=1,\ldots,M_0} \mathbb{I}\{x_k=j\}f(y_k|\overline{\bf{y}}_{k-1};\gamma,\theta_j^*)$.
 
Let the conditional density of ${\bf Y}_{1}^n$ of an $(M_0+1)$-regime model be
\begin{equation} \label{fitted_model}
p_{\vartheta_{M_0+1}}({\bf Y}_1^n| \overline{\bf Y}_{0},x_0) := \sum_{{\bf x}_1^n\in \mathcal{X}_{M_0+1}^n}\prod_{k=1}^n p_{\vartheta_{M_0+1}}(Y_k,x_k|\overline{\bf Y}_{k-1},x_{k-1}),
\end{equation}
where $p_{\vartheta_{M_0+1}}(y_k,x_k|\overline{\bf y}_{k-1},x_{k-1})$ is defined similarly to $p_{\vartheta_{M_0}^*}(y_k,x_k|\overline{\bf y}_{k-1},x_{k-1})$ with $\vartheta_{M_0+1,x}:= \{p_{ij}\}_{i=1,\ldots,M_0+1,j=1,\ldots,M_0}$ and $\vartheta_{M_0+1,y} := (\theta_1',\ldots,\theta_{M_0+1}',\gamma')'$. We assume that $\min_{i,j}p_{ij} \geq \epsilon$ for some $\epsilon\in (0,1/2)$.

Write the null hypothesis as $H_0 = \cup_{m=1}^{M_0} H_{0m}$ with
\[
H_{0m} : \theta_1 < \cdots < \theta_{m} = \theta_{m + 1} < \cdots < \theta_{M_0 + 1}.
\] 
Define the set of values of $\vartheta_{M_0+1}$ that yields the true density (\ref{true_model}) under $\mathbb{P}_{\vartheta^*_{M_0}}$ as $\Upsilon^*:=\{\vartheta_{M_0 + 1} \in \Theta_{M_0+1,\epsilon}: p_{\vartheta_{M_0+1}}({\bf Y}_1^n| \overline{\bf Y}_{0}, x_0) = p_{\vartheta_{M_0}^*}({\bf Y}_1^n| \overline{\bf Y}_{0}, x_0)\ \mathbb{P}_{\vartheta^*_{M_0}}\text{-a.s.}\}$. Under $H_{0m}$, the $(M_0 + 1)$-regime model (\ref{fitted_model}) generates the true $M_0$-regime density (\ref{true_model}) if $\theta_m = \theta_{m + 1} = \theta_{m}^{*}$ and the transition matrix of $X_k$ reduces to that of the true $M_0$-regime model.

We reparameterize the transition probability of $X_k$ by writing $\vartheta_{M_0+1,x}$ as $\vartheta_{M_0+1,x} = (\vartheta_{xm}',\pi_{xm}')'$, where $\vartheta_{xm}$ is point identified under $H_{0m}$, while $\pi_{xm}$ is not point identified under $H_{0m}$. The transition probability of $X_k$ under $\vartheta_{M_0+1,x}$ equals the transition probability of $X_k$ under $\vartheta_{M_0,x}^*$ if and only if $\vartheta_{xm} = \vartheta_{xm}^*$. The detailed derivation including the definition of $\vartheta_{xm}^*$ is provided in Section \ref{subsec:p_m_repara} in the appendix. 
Define the subset of $\Upsilon^*$ that corresponds to $H_{0m}$ as 
\begin{align*} 
\Upsilon_{m}^*& := \left\{\vartheta_{M_0 + 1} \in \Theta_{M_0 + 1}: 
\theta_j=\theta_{j}^*\ \text{for}\ 1 \leq j < m;\ \theta_m = \theta_{m + 1} = \theta_{m}^{*}; \right.\\
& \qquad \left. \theta_{j} = \theta_{j - 1}^*\ \text{for}\ h+1 < j \leq M_0+1;\ \gamma=\gamma^*;\ \vartheta_{xm}=\vartheta_{xm}^*\right\};
\end{align*} 
then, $\Upsilon^*= \Upsilon_{1}^* \cup \cdots \cup \Upsilon_{M_0}^*$ holds. 

For $M=M_0,M_0+1$, let $\ell_n(\vartheta_M,\xi_M) := \log \left( \sum_{x_0=1}^M p_{\vartheta_M}({\bf Y}_1^n| \overline{\bf Y}_{0},x_0) \xi_M(x_0) \right)$ denote the $M$-regime log-likelihood for a given initial distribution $\xi_M(x_0) \in \Xi_M$. We treat $\xi_M(x_0)$ as fixed. Let $\hat\vartheta_{M_0}:= \arg\max_{\vartheta_{M_0} \in \Theta_{{M_0}}} \ell_n(\vartheta_{M_0},\xi_{M_0})$ and $\hat\vartheta_{M_0+1}:= \arg\max_{\vartheta_{M_0+1} \in \Theta_{M_0+1}} \ell_n(\vartheta_{M_0+1},\xi_{M_0+1})$. The following proposition shows that the MLE is consistent in the sense that the distance between $\hat{\vartheta}_{M_0+1}$ and $\Upsilon^*$ tends to 0 in probability. The proof of Proposition \ref{P-consist_M} is essentially the same as the proof of Proposition \ref{P-consist} and hence is omitted. 
\begin{assumption} \label{A-consist_M}
(a) $\Theta_{M_0}$ and $\Theta_{M_0+1}$ are compact, and $\vartheta_{M_0}^*$ is in the interior of $\Theta_{M_0}$. (b) For all $(x,x') \in \mathcal{X}$ and all $(\overline{{\bf y}},y',w)\in \mathcal{Y}^s\times \mathcal{Y}\times \mathcal{W}$, $f(y'|\overline{\bf{y}}_0,w;\gamma,\theta)$ is continuous in $(\gamma,\theta)$. (c) $\mathbb{E}_{\vartheta^*_{M_0}}[ \log ( p_{\vartheta_{M_0}}(Y_1 |\overline{\bf{Y}}_{-m}^0,{\bf W}_{-m}^1)] = \mathbb{E}_{\vartheta^*_{M_0}}[ \log p_{\vartheta_{M_0}^*}(Y_1 |\overline{\bf{Y}}_{-m}^0,{\bf W}_{-m}^1) ]$ for all $m \geq 0$ if and only if $\vartheta_{M_0} = \vartheta_{M_0}^*$. (d) $\mathbb{E}_{\vartheta^*_{M_0}}[ \log ( p_{\vartheta_{M_0+1}}(Y_1 |\overline{\bf{Y}}_{-m}^0,{\bf W}_{-m}^0) ] = \mathbb{E}_{\vartheta^*_{M_0}} [\log p_{\vartheta_{M_0}^*}(Y_1 |\overline{\bf{Y}}_{-m}^0,{\bf W}_{-m}^1) ]$ for all $m \geq 0$ if and only if $\vartheta_{M_0+1}\in \Upsilon^*$. \end{assumption}

\begin{proposition} \label{P-consist_M} 
Suppose Assumptions \ref{assn_a1}, \ref{assn_a2}, and \ref{A-consist_M} hold. Then, under the null hypothesis of $M=M_0$, $\hat\vartheta_{M_0} \overset{p}{\rightarrow} \vartheta_{M_0}^*$ and $\inf_{\vartheta_{M_0+1} \in \Upsilon^*} |\hat{\vartheta}_{M_0+1}-\vartheta_{M_0+1}|\overset{p}{\rightarrow} 0$.
\end{proposition}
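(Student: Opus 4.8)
The plan is to follow the route of the proof of Proposition \ref{P-consist}, treating both claims as instances of the standard consistency argument for M-estimators: first establish that the normalized log-likelihood converges uniformly over the compact parameter space to a deterministic contrast function, and then invoke the identification conditions in Assumption \ref{A-consist_M} to pin down the maximizer. The only structural difference between the two claims is that the limiting criterion for the $M_0$-regime model has the unique maximizer $\vartheta_{M_0}^*$, whereas the limiting criterion for the $(M_0+1)$-regime model attains its maximum on the whole set $\Upsilon^*$; the argument is otherwise identical, so I would develop it once and apply it twice.

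First I would reduce the two likelihoods to their stationary counterparts. By Proposition \ref{uconlike}, $\sup_{\vartheta}|\ell_n(\vartheta,x_0)-\ell_n(\vartheta)|$ is bounded by a deterministic constant, and the same bound controls the effect of replacing $x_0$ by an arbitrary initial distribution $\xi_M$; dividing by $n$ renders these differences asymptotically negligible, so it suffices to analyze $n^{-1}\ell_n(\vartheta)$. I would then approximate each summand $\log p_\vartheta(Y_k|\overline{\bf Y}_0^{k-1},{\bf W}_0^k)$ by its infinite-past version $\log\overline p_\vartheta(Y_k|\overline{\bf Y}_{-\infty}^{k-1},{\bf W}_{-\infty}^k)$, using the exponential forgetting of the prediction filter established in Lemma \ref{x_diff} and inequality (\ref{DMR-C1}); the domination needed here is supplied by Assumption \ref{assn_a2}(c), which gives $b_+<\infty$ and $\mathbb{E}_{\vartheta^*}|\log b_-|<\infty$. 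This is the same device used at higher order in Proposition \ref{proposition-ell}, specialized to the zeroth derivative.

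Next I would invoke a uniform ergodic theorem for the stationary, ergodic sequence $\{\log\overline p_\vartheta(Y_k|\overline{\bf Y}_{-\infty}^{k-1},{\bf W}_{-\infty}^k)\}$. Continuity of the summands in $\vartheta$ follows from Assumption \ref{A-consist_M}(b), compactness from Assumption \ref{A-consist_M}(a), and an integrable envelope from Assumption \ref{assn_a2}(b)--(c); together these yield
\[
\sup_{\vartheta\in\Theta_M}\left| n^{-1}\ell_n(\vartheta)-\ell^M(\vartheta)\right|\overset{p}{\rightarrow}0,\qquad \ell^M(\vartheta):=\mathbb{E}_{\vartheta^*}\left[\log\overline p_\vartheta(Y_1|\overline{\bf Y}_{-\infty}^0,{\bf W}_{-\infty}^1)\right],
\]
for $M=M_0,M_0+1$, where $\ell^M$ is continuous by dominated convergence.

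Finally I would close the argument using identification. The contrast $\ell^{M_0}(\vartheta_{M_0}^*)-\ell^{M_0}(\vartheta_{M_0})$ is a limit of Kullback--Leibler divergences, hence nonnegative, and by Assumption \ref{A-consist_M}(c) it vanishes if and only if $\vartheta_{M_0}=\vartheta_{M_0}^*$; continuity together with compactness then gives $\hat\vartheta_{M_0}\overset{p}{\rightarrow}\vartheta_{M_0}^*$. For the over-fitted model, Assumption \ref{A-consist_M}(d) identifies the argmax set of $\ell^{M_0+1}$ as exactly the closed set $\Upsilon^*$; hence for every $\delta>0$ the supremum of $\ell^{M_0+1}$ over $\{\vartheta:\inf_{\vartheta'\in\Upsilon^*}|\vartheta-\vartheta'|\geq\delta\}$ is strictly below $\max\ell^{M_0+1}$, and combining this separation with the uniform convergence above and the fact that $\hat\vartheta_{M_0+1}$ nearly attains the maximum forces $\inf_{\vartheta\in\Upsilon^*}|\hat\vartheta_{M_0+1}-\vartheta|\overset{p}{\rightarrow}0$. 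I expect the main obstacle to be the uniform convergence step: because the stationary conditional density $\overline p_\vartheta(Y_k|\cdot)$ is not available in closed form, one must control the forgetting of the filter and construct an integrable dominating envelope for $\log\overline p_\vartheta$ uniformly over the compact parameter space, which is precisely where the geometric ergodicity from Assumption \ref{assn_a2}(a) and the moment bounds from Assumption \ref{assn_a2}(c) are essential.
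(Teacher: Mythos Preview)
Your proposal is correct and follows essentially the same route as the paper, which states that the proof is ``essentially the same as the proof of Proposition \ref{P-consist}'' and omits the details; that proof in turn invokes DMR's Lemmas 3--4 and Proposition 2 to obtain exactly the filter-forgetting approximation, continuity of the limiting contrast, and uniform convergence you describe, then applies Theorem 2.1 of \citet{neweymcfadden94hdbk} with the adjustment for a set-valued argmax.
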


Let $LR_{M_0,n}:=2[\ell_n(\hat\vartheta_{M_0+1},\xi_{M_0+1})-\ell_n(\hat\vartheta_{M_0},\xi_{M_0})]$ denote the LRTS for testing $H_0:M=M_0$ against $H_A:M=M_0+1$.
We proceed to derive the asymptotic distribution of the LRTS by analyzing the behavior of the LRTS when $\vartheta_{M_0+1} \in \Upsilon^*_m$ for each $m$. Define $J_m :=\{m,m+1\}$. Observe that if ${\bf X}_1^k \in J_m^k$, then ${\bf X}_1^k$ follows a two-state Markov chain on $J_m$ whose transition probability is characterized by $\alpha_m := \mathbb{P}_{\vartheta_{M_0+1}}(X_k=m|X_k \in J_m)$ and $\varrho_m := \text{corr}_{\vartheta_{M_0+1}}(X_{k-1},X_{k}|(X_{k-1},X_{k}) \in J_m^2)$. Collect reparameterized $\pi_{xm}$ into $\pi_{xm} := (\varrho_m,\alpha_m, \phi_{m}')'$, where $\phi_m$ does not affect the transition probability of ${\bf X}_1^k$ when ${\bf X}_1^k \in J_m^{k}$. See Section \ref{subsec:p_m_repara} in the appendix for the detailed derivation.
 
Define $q_{kj} := \mathbb{I}\{X_k=j\}$; then, we can write $\alpha_m$ and $\varrho_m$ as $\alpha_m = \mathbb{E}_{\vartheta_{M_0+1}}(q_{km}|X_k \in J_m)$ and $\varrho_m =\text{corr}_{\vartheta_{M_0+1}}(q_{k-1,m},q_{km}|(X_{k-1},X_{k}) \in J_m^2)$. Because $\overline{{\bf Y}}_{-\infty}^\infty$ provides no information for distinguishing between $X_k=m$ and $X_k={m+1}$ if $\theta_m = \theta_{m+1}$, we can write $\alpha_m$ and $\varrho_m$ as
\begin{equation} \label{alpha_rho_h}
\alpha_m = \mathbb{E}_{\vartheta_{M_0+1}}(q_{km}|X_k \in J_m,\overline{{\bf Y}}_{-\infty}^\infty )\quad \text{and}\quad
\varrho_m = \text{corr}_{\vartheta_{M_0+1}}(q_{k-1,m},q_{km}|(X_{k-1},X_{k}) \in J_m^2,\overline{{\bf Y}}_{-\infty}^\infty). 
\end{equation}

\subsection{Non-normal distribution} \label{sec:M-nonnormal}

For non-normal component distributions, consider the following reparameterization similar to (\ref{repara}):
\begin{equation*}
\begin{pmatrix}
\theta_m\\
\theta_{m+1} \\
\end{pmatrix}
=
\begin{pmatrix}
\nu_m + (1-\alpha_m) \lambda_m\\
\nu_m- \alpha_m\lambda_m\\
\end{pmatrix}. 
\end{equation*} 
Collect the reparameterized identified parameters into one vector $\psi_m: = (\eta_m',\lambda_m')'$, where $\eta_m=(\gamma',\{\theta_j'\}_{j=1}^{m-1}, \nu_m',\{\theta_j'\}_{j=m+2}^{M_0+1},\vartheta_{xm}')'$, so that the reparameterized $(M_0+1)$-regime log-likelihood function is $\ell_n(\psi_m,\pi_{xm},\xi_{M_0+1})$. Let $\psi_m^{*}=(\eta_m^*,\lambda_m^*)=((\vartheta_{M_0}^*)',0')'$ denote the value of $\psi_{m}$ under $H_{0m}$. Define the reparameterized conditional density of $y_k$ as 
\begin{align*}
g_{\psi_m}(y_k|\overline{\bf{y}}_{k-1}, x_k) & : = \mathbb{I}\{x_k \in J_m\} f(y_k|\overline{\bf{y}}_{k-1};\gamma,\nu_m + (q_{km}-\alpha_m) \lambda_m) + \sum_{j \in \overline J_m} q_{kj} f(y_k|\overline{\bf{y}}_{k-1}; \gamma,\theta_j),
\end{align*}
where $\overline J_m := \{1,\ldots,M_0+1\}\setminus J_m$. Let $f_{mk}^{*}$ denote $f(Y_k|\overline{\bf{Y}}_{k-1};\gamma^*,\theta_m^*)$. It follows from (\ref{alpha_rho_h}) and the law of iterated expectations that
\begin{equation} \label{qk_moments_M}
\begin{aligned}
&\mathbb{E}_{\vartheta^*_{M_0}}\left[\frac{\mathbb{I}\{X_k \in J_m\}(q_{km} -\alpha_m)}{g_{\psi_m^*}(Y_k|\overline{\bf{Y}}_{k-1}, X_k)} \middle|\overline{{\bf Y}}_{-\infty}^n\right] \\ 
&=\mathbb{E}_{\vartheta^*_{M_0}}\left[\mathbb{E}_{\vartheta^*_{M_0}}\left[\frac{ q_{km} -\alpha_m }{f_{mk}^*}\middle|X_k \in J_m,\overline{{\bf Y}}_{-\infty}^n\right] \mathbb{I}\{X_k \in J_m\} \middle|\overline{{\bf Y}}_{-\infty}^n\right] = 0, \\
&\mathbb{E}_{\vartheta^*_{M_0}}\left[\frac{\mathbb{I}\{X_{t_1} \in J_m\}\mathbb{I}\{X_{t_2} \in J_m\}(q_{t_1h}-\alpha_m) (q_{t_2h}-\alpha_m)}{g_{\psi_m^*}(Y_{t_1}|\overline{\bf{Y}}_{{t_1}-1}, X_{t_1})g_{\psi_m^*}(Y_{t_2}|\overline{\bf{Y}}_{{t_2}-1}, X_{t_2})} \middle| \overline{\bf Y}_{-\infty}^n \right] \\
& = \mathbb{E}_{\vartheta^*_{M_0}}\left[\mathbb{E}_{\vartheta^*_{M_0}}\left[\frac{(q_{t_1h}-\alpha_m) (q_{t_2h}-\alpha_m)}{f_{m t_1}^* f_{m t_2}^*} \middle|{\bf X}_{t_1}^{t_2} \in J_m^{t_2-t_1+1},\overline{{\bf Y}}_{-\infty}^n\right] \mathbb{I}\{(X_{t_1},X_{t_2}) \in J_m^2\} \middle|\overline{{\bf Y}}_{-\infty}^n \right] \\
& = \frac{\alpha_m(1-\alpha_m)\varrho_m^{t_2-t_1}}{f_{m t_1}^* f_{m t_2}^*} \mathbb{P}_{\vartheta^*_{M_0}}((X_{t_1},X_{t_2}) \in J_m^2|\overline{{\bf Y}}_{-\infty}^n), \quad t_2 \geq t_1,
\end{aligned}
\end{equation} 
where the second equality holds because $g_{\psi_m^*}(Y_k|\overline{\bf{Y}}_{k-1}, X_k)=f_{mk}^*$ if $X_k \in J_m$, and the last equality holds because, conditional on $\{{\bf X}_{t_1}^{t_2} \in J_m^{t_2-t_1+1},\overline{\bf Y}_{-\infty}^n\} $, ${\bf X}_{t_1}^{t_2}$ is a two-state stationary Markov process with parameter $(\alpha_m,\varrho_m)$.

Let $g_{0k}^*$, $q_{0k}^*$, and $\overline p_{0k}^*$ denote $g_{\vartheta_{M_0,y}^*}(Y_k,X_k| \overline{{\bf Y}}_{k-1},X_{k-1})$, $q_{\vartheta_{M_0,x}^*}(X_{k-1},X_k)$, and $\overline p_{\vartheta_{M_0}^*}(Y_k| \overline{{\bf Y}}_{0}^{k-1})$. Let $\nabla g_{0k}^*$ denote the derivative of $g_{\vartheta_{M_0,y}}(Y_k,X_k| \overline{{\bf Y}}_{k-1},X_{k-1})$ evaluated at $\vartheta_{M_0,y}^*$, and define $\nabla q_{0k}^*$ and $\nabla \overline p_{0k}^*$ similarly. Repeating a derivation similar to (\ref{dg})--(\ref{d2p}) but using (\ref{qk_moments_M}) in place of (\ref{qk_moments}), we obtain 
\begin{equation} \label{d1p_M}
\begin{aligned}
& \nabla_{\eta_m} \overline p_{\psi^*_m \pi} (Y_k| \overline{{\bf Y}}_{0}^{k-1})/\overline p_{\psi^*_m \pi} (Y_k| \overline{{\bf Y}}_{0}^{k-1}) \\
& =
 \sum_{t=1}^k \mathbb{E}_{\vartheta^*} \left[\nabla_{\vartheta_{M_0}} \log (g_{0t}^*q_{0t}^*) \middle|\overline{{\bf Y}}_{0}^k \right] - \sum_{t=1}^{k-1} \mathbb{E}_{\vartheta^*} \left[\nabla_{\vartheta_{M_0}} \log (g_{0t}^* q_{0t}^*)\middle|\overline{{\bf Y}}_{0}^{k-1} \right] \\
& = \nabla_{\vartheta_{M_0}} \overline p_{\vartheta_{M_0}^*} (Y_k| \overline{{\bf Y}}_{0}^{k-1})/\overline p_{\vartheta_{M_0}^*} (Y_k| \overline{{\bf Y}}_{0}^{k-1}), 
\end{aligned}
\end{equation} 
\begin{align}
& \nabla_{\lambda_m} \overline p_{\psi^*_m \pi} (Y_k| \overline{{\bf Y}}_{0}^{k-1}) / \overline p_{\psi^*_m \pi} (Y_k| \overline{{\bf Y}}_{0}^{k-1}) =0, \quad \nabla_{{\lambda_m} \eta_m'} \overline p_{\psi^*_m \pi} (Y_k| \overline{{\bf Y}}_{0}^{k-1})/\overline p_{\psi^*_m \pi} (Y_k| \overline{{\bf Y}}_{0}^{k-1})=0, \label{d2p_M0} \\
& \frac{\nabla_{{\lambda_m} {\lambda_m} '} \overline p_{\psi^*_m \pi} (Y_k| \overline{{\bf Y}}_{0}^{k-1})}{\overline p_{\psi^*_m \pi} (Y_k| \overline{{\bf Y}}_{0}^{k-1})}
 = \alpha_m(1-\alpha_m) \frac{ \nabla_{\theta\theta'}f_{mk}^{*}}{f_{mk}^{*}} \mathbb{P}_{\vartheta^*_{M_0}}(X_k \in J_m|\overline{{\bf Y}}_{0}^k) \nonumber \\
& \quad + \alpha_m(1-\alpha_m) \sum_{t=1}^{k-1} \varrho_{m}^{k-t} \left( \frac{\nabla_{\theta} f_{mt}^{*}}{f_{mt}^{*}} \frac{\nabla_{\theta'} f_{mk}^{*}}{f_{mk}^{*}} + \frac{\nabla_{\theta} f_{mk}^{*}}{f_{mk}^{*}} \frac{\nabla_{\theta'} f_{mt}^{*}}{f_{mt}^{*}} \right) \mathbb{P}_{\vartheta^*_{M_0}}((X_t,X_k) \in J_m^2|\overline{{\bf Y}}_{0}^k).\label{d2p_M}
\end{align} 

Define $\tilde \varrho := (\varrho_1,\ldots,\varrho_{M_0})'$, define $t_{\lambda}(\lambda_m,\pi_m)$ as $t_{\lambda}(\lambda,\pi)$ in (\ref{score}) by replacing $(\lambda,\pi)$ with $(\lambda_m,\pi_m)$, and let
\begin{equation} \label{stilde}
t(\psi_m,\pi_m) := 
\begin{pmatrix}
\eta_m - \eta^* \\
 t_{\lambda}(\lambda_m,\pi_m) 
\end{pmatrix}, \ 
\tilde{s}_{\tilde\varrho k} :=
\begin{pmatrix}
\tilde {s}_{\eta k} \\
\tilde {s}_{\lambda \tilde\varrho k}
\end{pmatrix},\ \text{ where }\
\tilde s_{\eta k} : = 
\frac{\nabla_{\eta_m} \overline p_{\psi^*_m \pi} (Y_k| \overline{{\bf Y}}_{0}^{k-1}) }{\overline p_{\psi^*_m \pi} (Y_k| \overline{{\bf Y}}_{0}^{k-1})},
\ 
\tilde{s}_{\lambda \tilde\varrho k} :=
\begin{pmatrix}
{s}_{\lambda \varrho_1 k}^1 \\
\vdots \\
{s}_{\lambda\varrho_{M_0} k}^{M_0}
\end{pmatrix},
\end{equation}
and $s_{\lambda \varrho_m k}^m := V(s_{\lambda\lambda \varrho_m k}^m)$, where $s_{\lambda\lambda \varrho_m k}^m $ is defined similarly to (\ref{score_lambda}) as
\begin{equation} \label{score_lambda_M}
\begin{aligned}
s_{\lambda\lambda\varrho_m k}^m &: = \frac{\nabla_{\theta \theta'} f_{mk}^{*}}{f_{mk}^{*}} \mathbb{P}_{\vartheta^*_{M_0}} (X_k \in J_m|\overline{{\bf Y}}_{0}^k) \\
& + \sum_{t=1}^{k-1} \varrho_m ^{k-t} \left( \frac{\nabla_{\theta} f_{mt}^{*}}{f_{mt}^{*}} \frac{\nabla_{\theta'} f_{mk}^{*}}{f_{mk}^{*}} + \frac{\nabla_{\theta} f_{mk}^{*}}{f_{mk}^{*}} \frac{\nabla_{\theta'} f_{mt}^{*}}{f_{mt}^{*}} \right) \mathbb{P}_{\vartheta^*_{M_0}} ((X_t,X_k) \in J_m^2|\overline{{\bf Y}}_{0}^k).
\end{aligned}
\end{equation}
Similarly to (\ref{I_lambda}), define 
\begin{equation} \label{I_tilde_lambda}
\begin{aligned}
\tilde {\mathcal{I}}_{\eta} &:= \mathbb{E}_{\vartheta^*_{M_0}}(\tilde s_{\eta k} \tilde s_{\eta k}'), \quad \tilde{\mathcal{I}}_{\lambda\tilde \varrho_1\tilde \varrho_2} := \lim_{k\rightarrow\infty} \mathbb{E}_{\vartheta^*_{M_0}}(\tilde s_{\lambda\tilde \varrho_1 k} \tilde s_{\lambda\tilde \varrho_2 k}'), \quad
\tilde{\mathcal{I}}_{\lambda\eta \tilde \varrho } := \lim_{k\rightarrow\infty} \mathbb{E}_{\vartheta^*_{M_0}}(\tilde s_{\lambda \tilde \varrho k} \tilde s_{\eta k}'), \\ \tilde{\mathcal{I}}_{\eta \lambda \tilde \varrho} &:= \tilde{\mathcal{I}}_{\lambda \eta \tilde \varrho}', \quad \tilde{\mathcal{I}}_{\lambda.\eta \tilde\varrho_1\tilde\varrho_2}:= \tilde{\mathcal{I}}_{\lambda \tilde\varrho_1\tilde\varrho_2}-\tilde{\mathcal{I}}_{\lambda\eta \tilde\varrho_1}\tilde{\mathcal{I}}_{\eta}^{-1}\tilde{\mathcal{I}}_{\eta\lambda \tilde\varrho_2},\quad \tilde{\mathcal{I}}_{\lambda.\eta \varrho_m}^m := \mathbb{E}_{\vartheta^*_{M_0}}[G_{\lambda.\eta \varrho_m}^m (G_{\lambda.\eta \varrho_m}^m)'],\\
Z^m_{\lambda \varrho_m} &:= (\tilde{\mathcal{I}}_{\lambda.\eta \varrho_m}^m)^{-1}G_{\lambda.\eta \varrho_m}^m,
\end{aligned}
\end{equation}
where ${G}_{\lambda.\eta \tilde \varrho}=((G_{\lambda.\eta \varrho_1}^1)',\ldots,(G_{\lambda.\eta\varrho_{M_0}}^{M_0})')'$ is an $M_0 q_\lambda$-vector mean zero Gaussian process with \\$\text{cov}({G}_{\lambda.\eta \tilde\varrho_1},{G}_{\lambda.\eta \tilde \varrho_2}) = \tilde{\mathcal{I}}_{\lambda.\eta \tilde\varrho_1\tilde\varrho_2}$. Note that $G_{\lambda.\eta \tilde \varrho}$ corresponds to the residuals from projecting $\tilde s_{\lambda \tilde \varrho k}$ on $\tilde s_{\eta k}$. Define $\tilde t_{\lambda \varrho_m}^m$ by
\begin{equation*} 
g_{\lambda \varrho_m}^m(\tilde t_{\lambda \varrho_m}^m) = \inf_{t_\lambda \in v(\mathbb{R}^q)}g_{\lambda \varrho_m}^m(t_{\lambda }), \quad g_{\lambda \varrho_m}^m(t_{\lambda}) := (t_{\lambda} -Z_{\lambda \varrho_m}^m)' \tilde{\mathcal{I}}_{\lambda.\eta \varrho_m}^m (t_{\lambda} -Z_{\lambda \varrho_m}^m). 
\end{equation*}
The following proposition gives the asymptotic null distribution of the LRTS. Under the stated assumptions, the log-likelihood function permits a quadratic approximation in the neighborhood of $\Upsilon_{m}^*$ similar to the one in Proposition \ref{P-quadratic}. Define $A_{n\varepsilon c}^m(\xi) := \{\vartheta_{M_0+1} \in \Theta_{M_0+1} : \{ \ell_n(\psi_m,\pi_m,\xi) - \ell_n(\psi_m^*,\pi_m,\xi) \geq 0 \} \land |t(\psi_m,\pi_m)|< \varepsilon \} \cup \mathcal{N}_{c/\sqrt{n}}$. Under $H_0: M=M_0$, for any $c>0$, for $m=1,\ldots,M_0$, and uniformly in $\xi \in \Xi$ and $\vartheta_{M_0+1} \in A_{n\varepsilon c}^m(\xi)$,
\[
\ell_n(\psi_m,\pi_m,\xi) - \ell_n(\psi_m^*,\pi_m,\xi) \ - \sqrt{n}t(\psi_m,\pi_m)' \nu_n (s_{\varrho_m k}) + n t(\psi_m,\pi_m)' \mathcal{I}_{\varrho_m} t(\psi_m,\pi_m)/2 = o_{p \varepsilon} (1),
\]
where $s_{ \varrho_m k} := (\tilde{s}_{\eta k}',({s}_{\lambda \varrho_m k}^m)')'$ and $\mathcal{I}_{\varrho_m} = \lim_{k\rightarrow\infty} \mathbb{E}_{\vartheta^*_{M_0}}(s_{ \varrho_m k}s_{ \varrho_m k}')$. Consequently, the LRTS is asymptotically distributed as the maximum of the $M_0$ random variables, each of which represents the asymptotic distribution of the LRTS that tests $H_{0m}$. Denote the parameter space for $\varrho_m$ by $\Theta_{\varrho_m}$, and let $\tilde\Theta_{\varrho}:=\Theta_{\varrho_1}\times\ldots\times\Theta_{\varrho_{M_0}}$.
\begin{assumption} \label{A-nonsing_M}
$0< \inf_{\tilde \varrho \in \tilde\Theta_{\varrho}} \lambda_{\min}(\tilde{\mathcal{I}}_{\tilde\varrho}) \leq \sup_{\tilde \varrho \in \tilde\Theta_{\varrho}} \lambda_{\max}(\tilde{\mathcal{I}}_{\tilde\varrho}) < \infty$ for $\tilde{\mathcal{I}}_{\tilde\varrho}:= \lim_{k\rightarrow\infty}\mathbb{E}_{\vartheta^*_{M_0}}(\tilde{s}_{\tilde\varrho k}\tilde{s}_{\tilde\varrho k}')$, where $\tilde{s}_{\tilde\varrho k}$ is given in (\ref{stilde}).
\end{assumption}
\begin{proposition} \label{P-LR_M}
Suppose Assumptions \ref{assn_a1}, \ref{assn_a2}, \ref{assn_a4}, \ref{A-consist_M}, and \ref{A-nonsing_M} hold. Then, under $H_0: M=M_0$, $LR_{M_0,n}\overset{d}{\rightarrow} \max_{m=1,\ldots,M_0}\left\{ \sup_{\varrho_m \in \Theta_{\varrho}^m} \left( (\tilde t_{\lambda \varrho_m}^m)' \tilde{\mathcal{I}}_{\lambda.\eta \varrho_m}^m \tilde t_{\lambda \varrho_m}^m \right) \right\} $.
\end{proposition}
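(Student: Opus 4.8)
The plan is to localize the $(M_0+1)$-regime MLE around $\Upsilon^*=\Upsilon_1^*\cup\cdots\cup\Upsilon_{M_0}^*$, to analyze the log-likelihood separately in a neighborhood of each $\Upsilon_m^*$ by reusing the quadratic-expansion machinery behind Propositions \ref{P-quadratic}--\ref{P-LR}, and then to recombine the $M_0$ local pieces into a single maximum. First I would invoke Proposition \ref{P-consist_M} to get $\inf_{\vartheta_{M_0+1}\in\Upsilon^*}|\hat\vartheta_{M_0+1}-\vartheta_{M_0+1}|\overset{p}{\rightarrow}0$. Since $\theta_1^*<\cdots<\theta_{M_0}^*$ are distinct, a point in $\Upsilon_m^*\cap\Upsilon_{m'}^*$ with $m\neq m'$ would force two coincidences and collapse the model below $M_0$ regimes, a contradiction; hence the sets $\Upsilon_m^*$ are pairwise separated by a positive distance, and I may choose disjoint neighborhoods $\mathcal{N}_m\supset\Upsilon_m^*$ whose union contains a full neighborhood of $\Upsilon^*$. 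With probability tending to one the MLE lies in $\cup_m\mathcal{N}_m$, so
\[
\sup_{\vartheta_{M_0+1}\in\Theta_{M_0+1}}\ell_n(\vartheta_{M_0+1},\xi_{M_0+1}) = \max_{1\le m\le M_0}\ \sup_{\vartheta_{M_0+1}\in\mathcal{N}_m}\ell_n(\vartheta_{M_0+1},\xi_{M_0+1}),
\]
and therefore $LR_{M_0,n}=\max_m 2[\sup_{\mathcal{N}_m}\ell_n(\cdot,\xi_{M_0+1})-\ell_n(\hat\vartheta_{M_0},\xi_{M_0})]+o_p(1)$.

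Next, fix $m$ and work inside $\mathcal{N}_m$ using the reparameterization $(\psi_m,\pi_m)$ with $\psi_m^*=((\vartheta_{M_0}^*)',0')'$, where $\pi_m=(\varrho_m,\alpha_m,\phi_m')'$ and $\phi_m$ does not enter the local quadratic form. After verifying that the generalized score $s_{\varrho_m k}$ built from (\ref{stilde})--(\ref{score_lambda_M}) satisfies Assumption \ref{assn_a3} via Propositions \ref{proposition-ell} and \ref{lemma-omega}, the essential input is the uniform quadratic approximation stated just before Assumption \ref{A-nonsing_M}, namely
\[
\ell_n(\psi_m,\pi_m,\xi)-\ell_n(\psi_m^*,\pi_m,\xi) = \sqrt{n}\,t(\psi_m,\pi_m)'\nu_n(s_{\varrho_m k})-\tfrac{n}{2}t(\psi_m,\pi_m)'\mathcal{I}_{\varrho_m}t(\psi_m,\pi_m)+o_{p\varepsilon}(1),
\]
uniformly over $\xi$ and over $\vartheta_{M_0+1}\in A_{n\varepsilon c}^m(\xi)$. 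I then maximize the right-hand side. Because $\eta_m$ is identified with $\eta^*$ interior, I concentrate it out; the block structure of $\mathcal{I}_{\varrho_m}$ replaces the $\lambda$-information by the residual information $\tilde{\mathcal{I}}_{\lambda.\eta\varrho_m}^m$ and the score by the residual Gaussian limit $G_{\lambda.\eta\varrho_m}^m$, leaving an additive nuisance contribution that converges to $G_\eta'\tilde{\mathcal{I}}_\eta^{-1}G_\eta$. Since $\sqrt{n}\,t_\lambda(\lambda_m,\pi_m)=\sqrt{n}\,\alpha_m(1-\alpha_m)v(\lambda_m)$ fills out the cone $v(\mathbb{R}^q)$ (using $\alpha_m(1-\alpha_m)>0$), the concentrated supremum over this cone, together with the identity $\sup_{t\in C}(2t'G-t'\mathcal{I}t)=\tilde t'\mathcal{I}\tilde t$ valid for a closed cone $C$ (which follows because the cone projection $\tilde t$ of $\mathcal{I}^{-1}G$ satisfies $(G-\mathcal{I}\tilde t)'\tilde t=0$), yields after supremizing over $\varrho_m\in\Theta_{\varrho_m}$ that $2[\sup_{\mathcal{N}_m}\ell_n(\cdot,\xi_{M_0+1})-\ell_n(\psi_m^*,\pi_m,\xi_{M_0+1})]\overset{d}{\rightarrow}G_\eta'\tilde{\mathcal{I}}_\eta^{-1}G_\eta+\sup_{\varrho_m\in\Theta_{\varrho_m}}(\tilde t_{\lambda\varrho_m}^m)'\tilde{\mathcal{I}}_{\lambda.\eta\varrho_m}^m\tilde t_{\lambda\varrho_m}^m$.

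The final step is to cancel the common nuisance term. By (\ref{d1p_M}) the $\eta_m$-score $\tilde s_{\eta k}$ coincides with the $M_0$-regime score and $\tilde{\mathcal{I}}_\eta$ with the $M_0$-regime information, so standard regular-MLE asymptotics give $2[\ell_n(\hat\vartheta_{M_0},\xi_{M_0})-\ell_n(\psi_m^*,\pi_m,\xi_{M_0+1})]\overset{d}{\rightarrow}G_\eta'\tilde{\mathcal{I}}_\eta^{-1}G_\eta$, the \emph{same} quadratic form; here I use that at $\psi_m=\psi_m^*$ the $(M_0+1)$-regime density equals the true $M_0$-regime density (so the reference value is free of $\pi_m$) and that, by Corollary \ref{cor_appn}, the expansion and limiting law do not depend on the initial distributions, allowing the two reference points to be aligned. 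Subtracting removes $G_\eta'\tilde{\mathcal{I}}_\eta^{-1}G_\eta$ from each local piece, leaving $\sup_{\varrho_m}(\tilde t_{\lambda\varrho_m}^m)'\tilde{\mathcal{I}}_{\lambda.\eta\varrho_m}^m\tilde t_{\lambda\varrho_m}^m$ for index $m$. Carrying out the difference jointly across $m$—legitimate because Assumption \ref{A-nonsing_M} makes the full score process $\nu_n(\tilde s_{\tilde\varrho k})$ converge weakly to the jointly Gaussian $G_{\lambda.\eta\tilde\varrho}$ with covariance $\tilde{\mathcal{I}}_{\lambda.\eta\tilde\varrho_1\tilde\varrho_2}$—and applying the continuous mapping theorem to the finite maximum delivers $LR_{M_0,n}\overset{d}{\rightarrow}\max_m\sup_{\varrho_m\in\Theta_{\varrho_m}}(\tilde t_{\lambda\varrho_m}^m)'\tilde{\mathcal{I}}_{\lambda.\eta\varrho_m}^m\tilde t_{\lambda\varrho_m}^m$. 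I expect the main obstacle to be precisely this joint/uniform passage from the $M_0$ separate local expansions to one maximum: one must establish the quadratic approximation simultaneously and uniformly over $\xi$, over $\varrho_m\in\Theta_{\varrho_m}$, and over all $m$, show that the process $\nu_n(\tilde s_{\tilde\varrho k})$ indexed jointly by $(\varrho_1,\ldots,\varrho_{M_0})$ converges weakly under the degeneracy and temporal dependence, and verify that each map $\varrho_m\mapsto\sup(\cdot)$ is a continuous functional of the limit process so that the continuous mapping theorem applies; the attendant initial-distribution alignment via Corollary \ref{cor_appn} is the more delicate bookkeeping.
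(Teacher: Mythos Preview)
Your proposal is correct and follows essentially the same route as the paper: localize the $(M_0+1)$-regime MLE to disjoint neighborhoods of the $\Upsilon_m^*$ via Proposition~\ref{P-consist_M}, apply the quadratic expansion in each neighborhood, and then invoke the argument of Proposition~\ref{P-LR} (concentrating out $\eta$, projecting onto the cone $v(\mathbb{R}^q)$, and canceling the common $G_\eta'\tilde{\mathcal{I}}_\eta^{-1}G_\eta$ term against the $M_0$-regime likelihood) for each $m$, with the joint limit obtained from the weak convergence of the full score process $\nu_n(\tilde s_{\tilde\varrho k})$. Your write-up is in fact more explicit than the paper's own proof, which simply asserts that the expansion (\ref{ln_appn}) carries over and that ``the stated result follows from applying the proof of Propositions~\ref{P-LR}, \ref{P-LR-N1}, and \ref{P-LR-N1-homo} to $\ell_n(\hat\psi_m,\pi_m,\xi_{M_0+1})-\ell_n(\hat\vartheta_{M_0},\xi_{M_0})$ for each $m$ and combining the results to derive the joint asymptotic distribution.''
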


\subsection{Heteroscedastic normal distribution}\label{subsec:hetero_normal_M}

As in Section \ref{subsec:hetero_normal}, we assume that $Y_k \in \mathbb{R}$ in the $j$-th regime follows a normal distribution with the regime-specific intercept and variance of which density is given by (\ref{normal-density}). Consider the following reparameterization similar to (\ref{repara2}):
\begin{equation*}
\left(
\begin{array}{c}
\zeta_m\\
\zeta_{m+1}\\
\sigma_m^2\\
\sigma_{m+1}^2
\end{array}
\right) =\left(
\begin{array}{c}
\nu_{\zeta m} + (1-\alpha_m)\lambda_{\zeta m} \\
\nu_{\zeta m} -\alpha_m\lambda_{\zeta m}\\
\nu_{\sigma m} + (1- \alpha_m)(2\lambda_{\sigma m}+ C_1 \lambda_{\mu m}^2)\\
\nu_{\sigma m} - \alpha_m(2\lambda_{\sigma m}+ C_2 \lambda_{\mu m}^2)
\end{array}
\right), 
\end{equation*}
where $\nu_{\zeta m}=(\nu_\mu,\nu_{\beta}')'$, $\lambda_{\zeta m}=(\lambda_{\mu m},\lambda_{\beta m}')'$, $C_1 := -(1/3)(1 + \alpha_m)$, and $C_2 := (1/3)(2 - \alpha_m)$. As in Section \ref{sec:M-nonnormal}, we collect the reparameterized identified parameters into $\psi_m: = (\eta_m',\lambda_m')'$, where $\eta_m=(\gamma',\{\theta_j'\}_{j=1}^{m-1}, \nu_{\zeta m}', \nu_{\sigma m},\{\theta_j'\}_{j=m+2}^{M_0+1},\vartheta_{xm}')'$ and $\lambda_m:=(\lambda_{\zeta m}',\lambda_{\sigma m})'$. Similar to (\ref{repara_g_hetero}), define the reparameterized conditional density of $y_k$ as
\begin{equation*}
\begin{aligned}
& g_{\psi_m}(y_k|\overline{\bf{y}}_{k-1}, x_k) = \sum_{j \in \overline J_m} q_{kj} f(y_k|\overline{\bf{y}}_{k-1}; \gamma,\theta_j) \\
& \quad + \mathbb{I}\{x_k \in J_m\} f\left(y_k|\overline{\bf{y}}_{k-1};\gamma,\nu_{\zeta m}+(q_{km}-\alpha_m)\lambda_{\zeta m}, \nu_{\sigma m} + (q_{km}-\alpha_m)(2\lambda_{\sigma m} + (C_2-q_{km}) \lambda_{\mu m}^2 )\right).
\end{aligned} 
\end{equation*} 
Let $g_{mk}^*$, $f_{mk}^*$, $\nabla g_{mk}^*$, and $\nabla f_{mk}^*$ denote $g_{\psi_m^{*}}(Y_k|\overline{\bf{Y}}_{k-1}, X_k)$, $f(Y_k|\overline{\bf Y}_{k-1};\gamma^*,\theta_m^*)$, $\nabla g_{\psi_m^{*}}(Y_k|\overline{\bf{Y}}_{k-1}, X_k)$, and $\nabla f(Y_k|\overline{\bf Y}_{k-1};\gamma^*,\theta_m^*)$. From (\ref{d3g}) and a derivation similar to (\ref{qk_moments_M}), we obtain the following result that corresponds to (\ref{d3g2}) in testing homogeneity:
\begin{equation} \label{d3g2_M}
\begin{aligned}
\mathbb{E}_{\vartheta^*_{M_0}} \left[ \nabla_{\lambda_{\mu m}^i} g_{mk}^{*} / g_{mk}^* \middle|\overline{{\bf Y}}_{-\infty}^k \right] &= 0,\quad i = 1,2,3,\\
\mathbb{E}_{\vartheta^*_{M_0}} \left[ \nabla_{\lambda_{\mu m}^4} g_{mk}^{*} / g_{mk}^{*} \middle| \overline{{\bf Y}}_{-\infty}^k \right] &= \alpha_m(1 - \alpha_m) b(\alpha_m)( \nabla_{\mu^4}f_{mk}^* / f_{mk}^* )\mathbb{P}_{\vartheta^*_{M_0}}(X_k \in J_m|\overline{{\bf Y}}_{-\infty}^k) \\
&= b(\alpha_m) \mathbb{E}_{\vartheta^*_{M_0}}\left[ \nabla_{\lambda_{\sigma m}^2} g_{mk}^{*} / g_{mk}^{*} \middle|\overline{{\bf Y}}_{-\infty}^k\right].
\end{aligned}
\end{equation}
Repeating the calculation leading to (\ref{d1p_M})--(\ref{d2p_M}) and using (\ref{d3g2_M}) gives the following. First, (\ref{d1p_M}) and (\ref{d2p_M0}) still hold; second, the elements of $\nabla_{\lambda_m\lambda_m'} \overline p_{\psi^* \pi} (Y_k| \overline{{\bf Y}}_{0}^{k-1})/\overline p_{\psi^* \pi} (Y_k| \overline{{\bf Y}}_{0}^{k-1})$ except for the $(1,1)$-th element are given by (\ref{d2p_M}) while adjusting the derivative with respect to $\lambda_{\sigma m}$ by multiplying by 2; third,
\begin{equation*} 
\frac{\nabla_{\lambda_{\mu m}^2} \overline p_{\psi^*_m \pi}(Y_k| \overline{{\bf Y}}_{0}^{k-1})}{\overline p_{\psi^*_m \pi} (Y_k| \overline{{\bf Y}}_{0}^{k-1})} = \alpha_m(1-\alpha_m) \sum_{t=1}^{k-1} \varrho_m^{k-t} \left( 2 \frac{\nabla_{\mu} f_{mt}^*}{f_{mt}^*} \frac{\nabla_{\mu} f_{mk}^*}{f_{mk}^*}\right)\mathbb{P}_{\vartheta^*_{M_0}}((X_t,X_k) \in J_m^2|\overline{{\bf Y}}_{0}^k).
\end{equation*} 

For $m \geq 0$, define $\zeta_{k,m}^m(\varrho_m):= \sum_{t=-m+1}^{k-1} \varrho_m^{k-t-1} 2 (\nabla_{\mu} f_{mt}^* \nabla_{\mu} f_{mk}^* / f_{mt}^* f_{mk}^*) \mathbb{P}_{\vartheta^*_{M_0}}((X_t,X_k) \in J_m^2|\overline{{\bf Y}}_{0}^k)$. Similarly to (\ref{score_lambda_beta}), define the elements of the generalized score as 
\begin{equation} 
\begin{aligned}
&
\begin{pmatrix}
 * & s_{ \lambda_{\mu\beta}\varrho_m k}^m & s_{\lambda_{\mu\sigma} \varrho_m k}^m \\
s_{ \lambda_{\beta\mu}\varrho_m k}^m & s_{ \lambda_{\beta\beta}\varrho_m k}^m & s_{ \lambda_{\beta\sigma}\varrho_m k}^m \\
s_{ \lambda_{\sigma\mu}\varrho_m k}^m & s_{ \lambda_{\beta\sigma}\varrho_m k}^m & s_{ \lambda_{\sigma\sigma}\varrho_m k}^m 
\end{pmatrix} : = \frac{\nabla_{\theta \theta'} f_{mk}^{*}}{f_{mk}^{*}} \mathbb{P}_{\vartheta^*_{M_0}} (X_k \in J_m|\overline{{\bf Y}}_{0}^k) \\
& \qquad + \sum_{t=1}^{k-1} \varrho_m ^{k-t} \left( \frac{\nabla_{\theta} f_{mt}^{*}}{f_{mt}^{*}} \frac{\nabla_{\theta'} f_{mk}^{*}}{f_{mk}^{*}} + \frac{\nabla_{\theta} f_{mk}^{*}}{f_{mk}^{*}} \frac{\nabla_{\theta'} f_{mt}^{*}}{f_{mt}^{*}} \right) \mathbb{P}_{\vartheta^*_{M_0}} ((X_t,X_k) \in J_m^2|\overline{{\bf Y}}_{0}^k).
\end{aligned}
\end{equation}
Similarly to (\ref{score_normal}), define $\tilde{s}_{\tilde \varrho k}$ as in (\ref{stilde}) by redefining $s_{\lambda \varrho_m k}^m$ in (\ref{stilde}) as
\begin{equation} \label{stilde_normal}
s_{\lambda \varrho_m k}^m
:= \begin{pmatrix}
\zeta_{k,0}^m(\varrho_m)/2 &
2 s_{\lambda_{\mu\sigma} \varrho_m k}^m &
2 s_{\lambda_{\sigma\sigma} \varrho_m k}^m&
(s_{\lambda_{\beta \mu} \varrho_m k}^m)' &
2 (s_{\lambda_{\beta \sigma} \varrho_m k}^m)' &
V(s_{\lambda_{\beta\beta} \varrho _h k}^m)'
\end{pmatrix} '. 
\end{equation}
Define ${\mathcal{I}}_{\lambda.\eta \varrho_m}^m$ and $Z^m_{\lambda \varrho_m}$ as in (\ref{I_tilde_lambda}) with $s_{\lambda \varrho_m k}^m$ defined in (\ref{stilde_normal}). Let $Z_{\lambda 0}^m$ and $\mathcal{I}_{\lambda.\eta 0}^m$ denote $Z_{\lambda \varrho_m}^m$ and $\mathcal{I}_{\lambda.\eta \varrho_m}^m$ evaluated at $\varrho_m=0$. Define $\Lambda_\lambda^1$ as in (\ref{Lambda-lambda}), and define $\Lambda_{\lambda \varrho_m}^2$ as in (\ref{Lambda-lambda}) by replacing $\varrho$ with $\varrho_m$. Similar to (\ref{t-lambda-N1}), define $\tilde{t}_{\lambda}^{m1}$ and $\tilde{t}_{\lambda \varrho_m}^{m2}$ by $r_{\lambda}(\tilde{t}_{\lambda}^{m1}) = \inf_{t_{\lambda} \in \Lambda_{\lambda}^1}r_{\lambda}^m(t_{\lambda})$ and $r_{\lambda\varrho_m}(\tilde{t}_{\lambda \varrho_m}^{m2}) = \inf_{t_{\lambda} \in \Lambda_{\lambda\varrho_m}^2}r_{\lambda\varrho_m}^m(t_{\lambda})$, where $r_{\lambda}^m(t_{\lambda}) := (t_{\lambda} -Z_{\lambda 0}^m)' \mathcal{I}_{\lambda.\eta 0}^m (t_{\lambda} -Z_{\lambda 0}^m)$ and $r_{\lambda\varrho_m}^m(t_{\lambda}) := (t_{\lambda} -Z_{\lambda \varrho_m}^m)' \mathcal{I}_{\lambda.\eta \varrho_m}^m (t_{\lambda} -Z_{\lambda \varrho_m}^m)$.

The following proposition establishes the asymptotic null distribution of the LRTS. As in the non-normal case, the LRTS is asymptotically distributed as the maximum of the $M_0$ random variables. 
\begin{assumption} \label{A-nonsing_M_normal}
Assumption \ref{A-nonsing_M} holds when $\tilde{s}_{\tilde \varrho_m k}$ is given in (\ref{stilde_normal}).
\end{assumption}
\begin{proposition} \label{P-LR_M_normal}
Suppose Assumptions \ref{assn_a1}, \ref{assn_a2}, \ref{assn_a4}, \ref{A-consist_M}, and \ref{A-nonsing_M_normal} hold and the component density of the $j$-th regime is given by (\ref{normal-density}). Then, under $H_0:m=M_0$, 
$LR_{M_0,n} \overset{d}{\rightarrow} \max_{m=1,\ldots,M_0}\{ \max\{ \mathbb{I}\{\varrho_m=0\} (\tilde t_{\lambda }^{m1})' \mathcal{I}_{\lambda.\eta 0}^m \tilde t_{\lambda }^{m1}, \sup_{\varrho_m \in \Theta_{\varrho}^m} (\tilde t_{\lambda \varrho_m}^{m2})' \mathcal{I}_{\lambda.\eta \varrho_m}^m \tilde t_{\lambda \varrho_m}^{m2} \}\}$.
\end{proposition}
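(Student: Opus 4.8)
The plan is to combine the partition argument used for the general non-normal case in Proposition~\ref{P-LR_M} with the fourth-order degenerate expansion developed for the heteroscedastic normal case at $M_0=1$ in Propositions~\ref{P-quadratic-N1} and~\ref{P-LR-N1}. The overall structure---that the LRTS converges to the maximum over $m=1,\ldots,M_0$ of $M_0$ local statistics---comes from the first ingredient, while the form of each local statistic, including the $\mathbb{I}\{\varrho_m=0\}$ term and the union of cone constraints $\Lambda_\lambda^1 \cup \Lambda_{\lambda\varrho_m}^2$, comes from the second.

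First I would invoke Proposition~\ref{P-consist_M} so that, with probability tending to one, $\hat\vartheta_{M_0+1}$ lies in an $\varepsilon$-neighborhood of $\Upsilon^* = \Upsilon_1^* \cup \cdots \cup \Upsilon_{M_0}^*$, and hence in a neighborhood of exactly one $\Upsilon_m^*$. Because $\hat\vartheta_{M_0}\overset{p}{\to}\vartheta_{M_0}^*$ and $\ell_n(\hat\vartheta_{M_0},\xi_{M_0}) = \ell_n(\vartheta_{M_0}^*,\xi_{M_0})+o_p(1)$, we can write $LR_{M_0,n} = \max_{m} 2\sup_{\vartheta_{M_0+1}\in A_{n\varepsilon c}^m(\xi)}[\ell_n(\psi_m,\pi_m,\xi)-\ell_n(\psi_m^*,\pi_m,\xi)] + o_{p\varepsilon}(1)$, reducing the problem to analyzing the behavior of the likelihood ratio in the neighborhood of each $\Upsilon_m^*$ separately.

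Next, for each fixed $m$ I would establish a uniform quadratic approximation of $\ell_n(\psi_m,\pi_m,\xi)-\ell_n(\psi_m^*,\pi_m,\xi)$ in terms of $\sqrt{n}\,t(\psi_m,\pi_m)$ and $\nu_n(s_{\varrho_m k})$, the exact analog of Proposition~\ref{P-quadratic-N1}(b). This step reuses the derivative computations of Section~\ref{subsec:hetero_normal_M}---in particular the degeneracy~(\ref{d3g2_M}) and the representation~(\ref{stilde_normal}) of the generalized score with the conditional probability factors $\mathbb{P}_{\vartheta^*_{M_0}}((X_t,X_k)\in J_m^2|\overline{\bf Y}_0^k)$---together with Corollary~\ref{cor_appn}. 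Exactly as in Section~\ref{subsec:hetero_normal}, because $\nabla_{\lambda_{\mu m}^2}\overline p/\overline p$ vanishes when $\varrho_m=0$, the approximation must be carried to fourth order and expressed in $\varrho_m\lambda_{\mu m}^2$, $\lambda_{\mu m}^4$, and the remaining quadratic terms, so that the limit of the feasible set of $\sqrt{n}\,t_\lambda(\lambda_m,\pi_m)$ is the union $\Lambda_\lambda^1 \cup \Lambda_{\lambda\varrho_m}^2$ of~(\ref{Lambda-lambda}). Maximizing the resulting quadratic form over this union via \citet{andrews99em,andrews01em} and Assumption~\ref{A-nonsing_M_normal} yields, for each $m$, the local limit $\max\{\mathbb{I}\{\varrho_m=0\}(\tilde t_\lambda^{m1})'\mathcal{I}_{\lambda.\eta 0}^m\tilde t_\lambda^{m1},\ \sup_{\varrho_m\in\Theta_\varrho^m}(\tilde t_{\lambda\varrho_m}^{m2})'\mathcal{I}_{\lambda.\eta\varrho_m}^m\tilde t_{\lambda\varrho_m}^{m2}\}$.

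Finally I would pass to the joint limit. Assumption~\ref{A-nonsing_M_normal} guarantees that the stacked score $\tilde s_{\tilde\varrho k}$ of~(\ref{stilde}) with components~(\ref{stilde_normal}) has a nonsingular limiting information $\tilde{\mathcal{I}}_{\tilde\varrho}$, so the functional central limit theorem for the stationary ergodic martingale difference sequence delivers joint weak convergence of the stacked processes $\nu_n(\tilde s_{\tilde\varrho k})$ to the Gaussian process $G_{\lambda.\eta\tilde\varrho}$ with covariance $\tilde{\mathcal{I}}_{\lambda.\eta\tilde\varrho_1\tilde\varrho_2}$. Taking the maximum over $m$ of the $M_0$ local limits and applying the continuous mapping theorem then gives the stated distribution. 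The main obstacle is the per-$m$ uniform approximation: as in Section~\ref{subsec:hetero_normal}, one must avoid the discontinuity at $\varrho_m=0$ that would arise from naively treating $\lambda_{\mu m}^2$ as the relevant local parameter when $\varrho_m\neq 0$ and $\lambda_{\mu m}^4$ when $\varrho_m=0$; the reparameterization in terms of $\varrho_m\lambda_{\mu m}^2$ and $\lambda_{\mu m}^4$ is what makes the expansion valid uniformly in $\varrho_m\in\Theta_{\varrho_m}$, and verifying that the conditional probability factors $\mathbb{P}_{\vartheta^*_{M_0}}(X_k\in J_m|\overline{\bf Y}_0^k)$ do not upset this uniformity is the delicate point.
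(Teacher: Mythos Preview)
Your proposal is correct and follows essentially the same route as the paper's proof, which treats Propositions~\ref{P-LR_M}, \ref{P-LR_M_normal}, and \ref{P-LR_M_normal_homo} together: localize via Proposition~\ref{P-consist_M} so that the global likelihood maximum equals the maximum over the $M_0$ local maxima near each $\Upsilon_m^*$, invoke the analog of the quadratic expansion (\ref{ln_appn_N1}) for each $m$, apply the argument of Proposition~\ref{P-LR-N1} componentwise, and combine via the joint weak convergence of the stacked score $\tilde s_{\tilde\varrho k}$. Your write-up is in fact more explicit than the paper's terse proof about the fourth-order reparameterization in $\varrho_m\lambda_{\mu m}^2$ and $\lambda_{\mu m}^4$ and the role of the cone union $\Lambda_\lambda^1\cup\Lambda_{\lambda\varrho_m}^2$, but the logical skeleton is identical.
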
 

\subsection{Homoscedastic normal distribution}

As in Section \ref{subsec:homo_normal}, we assume that $Y_k\in \mathbb{R}$ in the $j$-th regime follows a normal distribution with the regime-specific intercept and common variance whose density is given by (\ref{normal-density-homo}).

The asymptotic distribution of the LRTS is derived by using a reparameterization
\begin{equation*}
\left(
\begin{array}{c}
\theta_m\\
\theta_{m+1}\\
\sigma^2
\end{array}
\right) =\left(
\begin{array}{c}
\nu_{\theta m} + (1-\alpha_m)\lambda_m \\
\nu_{\theta m} -\alpha_m\lambda_m \\ 
\nu_{\sigma m} - \alpha_m(1-\alpha_m) \lambda_{\mu m}^2
\end{array}
\right), 
\end{equation*}
similar to (\ref{repara-homo}) and following the derivation in Sections \ref{subsec:homo_normal} and \ref{subsec:hetero_normal_M}. For brevity, we omit the details of the derivation. Define $s_{\lambda\lambda \varrho_m k}^m$ as in (\ref{score_lambda_M}), and denote each element of $s_{\lambda\lambda \varrho_m k}^m$ as
\begin{equation*}
s_{\lambda\lambda \varrho_m k}^m =
\begin{pmatrix}
* & s_{\lambda_{\mu\beta} \varrho_m k}^m \\
s_{\lambda_{\beta\mu} \varrho_m k}^m & s_{\lambda_{\beta\beta} \varrho_m k}^m \\
\end{pmatrix}.
\end{equation*}
Similarly to (\ref{score_normal_homo}), define $\tilde{s}_{\tilde \varrho k}$ as in (\ref{stilde}) by redefining $s_{\lambda \varrho_m k}^m$ in (\ref{stilde}) as
\begin{equation}\label{stilde_normal_homo}
s_{\lambda \varrho_m k}^m
:= 
\begin{pmatrix}
\zeta_{k,0}^m(\varrho_m)/2 & 
s_{\lambda_{\mu}^3 k}^m/3! & 
s_{\lambda_{\mu}^4 k}^m/4! &
(s_{\lambda_{\beta \mu} \varrho k}^m)' & 
V(s_{\lambda_{\beta\beta} \varrho k}^m)'
\end{pmatrix} ',
\end{equation}
where $s_{\lambda_{\mu}^i k}^m:= \mathbb{P}_{\vartheta^*_{M_0} (X_k \in J_m|\overline{{\bf Y}}_{0}^k)} \nabla_{\mu^i}f(Y_k|\overline{\bf Y}_{k-1};\gamma^*,\theta_m^*)/f(Y_k|\overline{\bf Y}_{k-1};\gamma^*,\theta_m^*)$ for $i=3,4$.

The following proposition establishes the asymptotic null distribution of the LRTS.
\begin{assumption} \label{A-nonsing_M_normal_homo}
Assumption \ref{A-nonsing_M} holds when $\tilde{s}_{\tilde \varrho_m k}$ is given in (\ref{stilde_normal_homo}).
\end{assumption}
\begin{proposition} \label{P-LR_M_normal_homo}
Suppose Assumptions \ref{assn_a1}, \ref{assn_a2}, \ref{assn_a4}, \ref{A-consist_M}, and \ref{A-nonsing_M_normal_homo} hold and the component density of the $j$-th regime is given by (\ref{normal-density-homo}). Then, under $H_0:m=M_0$, $ LR_{M_0,n} \overset{d}{\rightarrow} \max_{m=1,\ldots,M_0}\{ \max\{ \mathbb{I}\{\varrho_m=0\} (\tilde t_{\lambda }^{m1})' \mathcal{I}_{\lambda.\eta 0}^m \tilde t_{\lambda }^{m1}, \sup_{\varrho_m \in \Theta_{\varrho_m\epsilon}} (\tilde t_{\lambda \varrho_m}^{m2})' \mathcal{I}_{\lambda.\eta \varrho_m}^m \tilde t_{\lambda \varrho_m}^{m2} \}\}$, where $\tilde{t}_{\lambda}^{m1}$ and $\tilde{t}_{\lambda \varrho_m}^{m2}$ are defined as in Proposition \ref{P-LR_M_normal} but in terms of $(Z_{\lambda \varrho_m}^m, \mathcal{I}_{\lambda.\eta \varrho_m}^m, Z_{\lambda 0}^m, \mathcal{I}_{\lambda.\eta 0}^m)$ constructed with $s_{\lambda \varrho_m k}^m$ given in (\ref{stilde_normal_homo}) and $\Lambda_\lambda^1$ and $\Lambda_{\lambda \varrho_m}^2$ defined as in (\ref{Lambda-lambda-homo}) but replacing $\varrho$ with $\varrho_m$. 
\end{proposition}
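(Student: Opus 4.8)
The plan is to follow the proof of Proposition \ref{P-LR_M_normal} for the heteroscedastic $M_0$-regime case, substituting at each regime $m$ the single-split analysis of the homoscedastic case from Proposition \ref{P-LR-N1-homo}. By Proposition \ref{P-consist_M}, under $H_0: M = M_0$ the estimator $\hat\vartheta_{M_0+1}$ concentrates on $\Upsilon^* = \cup_{m=1}^{M_0}\Upsilon_m^*$, so with probability approaching one it lies in a shrinking neighborhood of exactly one $\Upsilon_m^*$. Writing $LR_{M_0,n}$ as the maximum over $m$ of the local LRTS obtained by confining $\hat\vartheta_{M_0+1}$ to a neighborhood of $\Upsilon_m^*$, it suffices to identify the limit of each local LRTS and then maximize over $m$, precisely as in Proposition \ref{P-LR_M}.

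For fixed $m$, I would apply reparameterization (\ref{repara-homo}) to the split pair $(\theta_m,\theta_{m+1})$ and parameterize the two-state sub-chain on $J_m=\{m,m+1\}$ by $(\alpha_m,\varrho_m)$, so that the local parameter is $t(\psi_m,\pi_m)$ with $t_\lambda(\lambda_m,\pi_m)$ as in (\ref{t-psi2-homo}) and the generalized score is $s_{\lambda\varrho_m k}^m$ of (\ref{stilde_normal_homo}). Following the derivations of Sections \ref{subsec:homo_normal} and \ref{subsec:hetero_normal_M}, the homoscedastic identities (\ref{norma_der}) make the first, second, and---when $\alpha_m=1/2$---third derivatives of the density ratio in $\lambda_{\mu m}$ degenerate at $\varrho_m=0$, so the log-likelihood must be expanded to fourth order and reexpressed through $\varrho_m\lambda_{\mu m}^2$, $(1-2\alpha_m)\lambda_{\mu m}^3$, $\lambda_{\mu m}^4$, and the remaining quadratic terms; the weights $\mathbb{P}_{\vartheta^*_{M_0}}(X_k\in J_m\mid\cdot)$ enter the score exactly as in (\ref{d2p_M}) and (\ref{stilde_normal}) because only the sub-chain restricted to $J_m$ carries the local information. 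I would then check that $s_{\varrho_m k}=((\tilde s_{\eta k})',(s_{\lambda\varrho_m k}^m)')'$ and $t(\psi_m,\pi_m)$ satisfy Assumption \ref{assn_a3}---using Propositions \ref{proposition-ell} and \ref{lemma-omega} for the uniform martingale approximation of the derivatives and Assumption \ref{A-nonsing_M_normal_homo} for nonsingularity of $\mathcal{I}_{\varrho_m}$---and invoke Corollary \ref{cor_appn} to obtain the uniform quadratic approximation of $\ell_n(\psi_m,\pi_m,\xi)-\ell_n(\psi_m^*,\pi_m,\xi)$ over $A_{n\varepsilon c}^m(\xi)$, i.e.\ the analog of Proposition \ref{P-quadratic-N1-homo} for general $m$.

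Maximizing the quadratic form over $t_\lambda$ subject to the admissible limit of $\sqrt{n}\,t_\lambda(\lambda_m,\pi_m)$ then proceeds as in Proposition \ref{P-LR-N1-homo}: the limit set is the union of $\Lambda_\lambda^1$ and $\Lambda_{\lambda\varrho_m}^2$ from (\ref{Lambda-lambda-homo}) with $\varrho$ replaced by $\varrho_m$, where $\Lambda_\lambda^1$ describes the regime $\liminf_n n^{1/8}|\lambda_{\mu m}|>0$---in which $\varrho_m\lambda_{\mu m}^2$ spans $\mathbb{R}$ only as $\varrho_m\to 0$, producing the indicator $\mathbb{I}\{\varrho_m=0\}$---and $\Lambda_{\lambda\varrho_m}^2$ the regime $\lambda_{\mu m}=o(n^{-1/8})$ at fixed $\varrho_m$. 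Projecting out $\eta$ and applying the cone-maximization results of \citet{andrews99em,andrews01em} gives the local limit $\max\{\mathbb{I}\{\varrho_m=0\}(\tilde t_\lambda^{m1})'\mathcal{I}_{\lambda.\eta 0}^m\tilde t_\lambda^{m1},\ \sup_{\varrho_m}(\tilde t_{\lambda\varrho_m}^{m2})'\mathcal{I}_{\lambda.\eta\varrho_m}^m\tilde t_{\lambda\varrho_m}^{m2}\}$, and taking the maximum over $m$ yields the stated distribution.

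The main obstacle is the uniform (in $\varrho_m$ and $\alpha_m$) quadratic approximation across the degeneracy at $\varrho_m=0$ and $\alpha_m=1/2$. As stressed in Section \ref{subsec:homo_normal}, approximating by $\lambda_{\mu m}^2$ when $\varrho_m\neq 0$ and by $\lambda_{\mu m}^4$ when $\varrho_m=0$ is discontinuous at $\varrho_m=0$; it is the reparameterization through $\varrho_m\lambda_{\mu m}^2$, $(1-2\alpha_m)\lambda_{\mu m}^3$, and $\lambda_{\mu m}^4$ that restores uniformity, and the delicate step is verifying that the fourth-order remainder in the expansion of $l_{k,0,x}$---now carrying the extra $\mathbb{P}_{\vartheta^*_{M_0}}(X_k\in J_m\mid\cdot)$ factors---satisfies the $O_p$ bounds of Assumption \ref{assn_a3}(c)--(f) uniformly over $\mathcal{N}_\varepsilon$. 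A secondary difficulty is decoupling the $M_0$ local problems: one must show that on the neighborhood of $\Upsilon_m^*$ the parameters associated with splitting a different regime are pinned at their true values at rate $n^{-1/2}$ and contribute nothing to the local LRTS, so that the global maximum genuinely decomposes as the maximum over $m$ of the local limit distributions.
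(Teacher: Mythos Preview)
Your proposal is correct and follows essentially the same approach as the paper's own proof, which treats Propositions \ref{P-LR_M}, \ref{P-LR_M_normal}, and \ref{P-LR_M_normal_homo} together: localize to neighborhoods of each $\Upsilon_m^*$ via Proposition \ref{P-consist_M}, note that the log-likelihood difference admits the same quadratic expansion as in the single-split homoscedastic case (Proposition \ref{P-quadratic-N1-homo}), apply the proof of Proposition \ref{P-LR-N1-homo} to each local problem, and combine by taking the maximum over $m$ using the joint weak convergence of the $M_0$ score processes. The technical concerns you flag---uniformity of the fourth-order remainder with the extra $\mathbb{P}_{\vartheta_{M_0}^*}(X_k\in J_m\mid\cdot)$ weights, and the decoupling of the $M_0$ local problems---are precisely the points the paper absorbs into its appeal to the single-split analysis and the local-MLE construction.
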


\section{Asymptotic distribution under local alternatives} 
In this section, we derive the asymptotic distribution of our LRTS under local alternatives. While we focus on the case of testing $H_0: M=1$ against $H_A: M=2$, it is straightforward to extend the analysis to the case of testing $H_0: M=M_0$ against $H_A: M=M_0+1$ for $M_0\geq 2$.

Given $\pi\in \Theta_{\pi}$, we define a local parameter $h:=\sqrt{n}t(\psi,\pi)$, so that
\[
h =
\begin{pmatrix}
h_\eta\\
h_\lambda 
\end{pmatrix}=
\begin{pmatrix}
\sqrt{n} (\eta-\eta^*)\\
\sqrt{n} t_{\lambda}(\lambda,\pi)
\end{pmatrix},
\] 
where $t_{\lambda}(\lambda,\pi)$ differs across the different models and is given by (\ref{score_lambda}), (\ref{t_lambda_rho}), and (\ref{t-psi2-homo}). Given $h=(h_\eta',h_{\lambda}')'$ and $\pi\in\Theta_{\pi}$, we consider the sequence of contiguous local alternatives $\vartheta_n = (\psi_n',\pi_n')' = (\eta_n',\lambda_n',\pi_n')'\in\Theta_{\eta}\times\Theta_\lambda\times\Theta_{\pi}$ such that
\begin{equation}\label{local-alternative}
h_{\eta} = \sqrt{n}(\eta_n-\eta^*),\quad h_\lambda =\sqrt{n} t_{\lambda}(\lambda_{n},\pi_n)+o(1),\quad \text{and}\ \pi_n - \pi =o(1).
\end{equation} 

Let $\mathbb{P}_{\vartheta,x_0}^n$ be the probability measure on $\{Y_k\}_{k=1}^n$ under $\vartheta$ conditional on the value of $\overline{\bf Y}_0$, $X_0$, and ${\bf W}_1^n$. Then, the log-likelihood ratio is given by
\[
\log \frac{d\mathbb{P}_{\vartheta_n,x_0}^n}{d \mathbb{P}_{\vartheta^*,x_0}^n} = \ell_n(\psi_n,\pi_n,x_0)-\ell_n(\psi^*,\pi,x_0)= \log \left( \frac{\sum_{{\bf x}_1^{n}} \prod_{k=1}^n f_k(\eta_n, \lambda_n) q_{\pi_n}(x_{k-1},x_k) }{ \prod_{k=1}^n f_k(\eta^*,0) } \right),
\] 
where $f_k(\eta,\lambda)$ is defined by the right-hand side of (\ref{repara_g}), (\ref{repara_g_hetero}), and (\ref{repara_g_homo}) for the models of the non-normal distribution, heteroscedastic normal distribution, and homoscedastic normal distribution, respectively. The following result follows from Le Cam's first and third lemmas and facilitates the derivation of the asymptotic distribution of the LRTS under $\mathbb{P}_{\vartheta_n,x_0}^n$.

\begin{proposition} \label{P-LAN} Suppose that the assumptions of Propositions \ref{P-quadratic}, \ref{P-quadratic-N1}, and \ref{P-quadratic-N1-homo} hold for the models of the non-normal, heteroscedastic normal, and homoscedastic normal distributions, respectively. Then, uniformly in $x_0 \in \mathcal{X}$, (a) $\mathbb{P}_{\vartheta_n,x_0}^n$ is mutually contiguous with respect to $\mathbb{P}_{\vartheta^*,x_0}^n$, and (b) under $\mathbb{P}_{\vartheta_n,x_0}^n$, we have $\log (d\mathbb{P}_{\vartheta_n,x_0}^n / d \mathbb{P}_{\vartheta^*,x_0}^n) = h' \nu_n(s_{\varrho_n k}) - \frac{1}{2}h' \mathcal{I}_{\varrho} h + o_{p}(1)$ with $\nu_n(s_{\varrho_n k}) \overset{d}{\rightarrow} N(\mathcal{I}_{ \varrho}h, \mathcal{I}_{ \varrho})$. 
\end{proposition}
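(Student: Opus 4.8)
The plan is to derive both claims from the quadratic expansion of Propositions \ref{P-quadratic}, \ref{P-quadratic-N1}, and \ref{P-quadratic-N1-homo}, combined with a martingale central limit theorem for the generalized score and Le Cam's first and third lemmas. First I would check that the contiguous sequence $\vartheta_n$ lies in the shrinking neighborhood to which the quadratic expansion applies. By the definition in (\ref{local-alternative}) we have $\sqrt{n}\,t(\psi_n,\pi_n)=h+o(1)$, so $|t(\psi_n,\pi_n)|=O(n^{-1/2})$ and $\vartheta_n\in\mathcal{N}_{c/\sqrt{n}}$ for $c$ large enough. Evaluating the relevant quadratic expansion at $\vartheta_n$ yields, under $\mathbb{P}_{\vartheta^*,x_0}^n$ and uniformly in $x_0$,
\[
\log\frac{d\mathbb{P}_{\vartheta_n,x_0}^n}{d\mathbb{P}_{\vartheta^*,x_0}^n}
= \sqrt{n}\,t(\psi_n,\pi_n)'\nu_n(s_{\varrho_n k})
- \frac{n}{2}\,t(\psi_n,\pi_n)'\mathcal{I}_{\varrho_n}t(\psi_n,\pi_n) + o_p(1).
\]
Substituting $\sqrt{n}\,t(\psi_n,\pi_n)=h+o(1)$, using $\nu_n(s_{\varrho_n k})=O_p(1)$ and the continuity of $\varrho\mapsto\mathcal{I}_\varrho$ (so $\mathcal{I}_{\varrho_n}\to\mathcal{I}_\varrho$ since $\varrho_n\to\varrho$), this collapses to $h'\nu_n(s_{\varrho_n k})-\tfrac12 h'\mathcal{I}_\varrho h+o_p(1)$.

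Second, I would establish $\nu_n(s_{\varrho_n k})\overset{d}{\rightarrow}N(0,\mathcal{I}_\varrho)$ under $\mathbb{P}_{\vartheta^*}$. By Section \ref{sec:uniform_convergence}, the generalized score is approximated, uniformly in $\varrho$, by a stationary, ergodic, square-integrable martingale difference sequence, so by the martingale CLT the empirical process $\varrho\mapsto\nu_n(s_{\varrho k})$ converges weakly to a mean-zero Gaussian process whose marginal covariance at $\varrho$ is $\mathcal{I}_\varrho$; evaluating along $\varrho_n\to\varrho$ and invoking continuity of the limit sample paths gives $\nu_n(s_{\varrho_n k})\overset{d}{\rightarrow}N(0,\mathcal{I}_\varrho)$. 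Consequently the log-likelihood ratio converges under the null to $N(-\tfrac12 h'\mathcal{I}_\varrho h,\,h'\mathcal{I}_\varrho h)$, whose mean equals minus one-half its variance; Le Cam's first lemma then delivers the mutual contiguity of $\mathbb{P}_{\vartheta_n,x_0}^n$ and $\mathbb{P}_{\vartheta^*,x_0}^n$, which is part (a).

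Finally, for part (b) I would compute the joint null limit of the pair $\bigl(\nu_n(s_{\varrho_n k}),\,\log(d\mathbb{P}_{\vartheta_n,x_0}^n/d\mathbb{P}_{\vartheta^*,x_0}^n)\bigr)$, which is jointly Gaussian with cross-covariance $\mathrm{Cov}(\nu_n(s_{\varrho_n k}),\,h'\nu_n(s_{\varrho_n k}))\to\mathcal{I}_\varrho h$. Le Cam's third lemma then shifts the limit law of the score, giving $\nu_n(s_{\varrho_n k})\overset{d}{\rightarrow}N(\mathcal{I}_\varrho h,\mathcal{I}_\varrho)$ under $\mathbb{P}_{\vartheta_n,x_0}^n$; moreover the expansion displayed above persists under $\mathbb{P}_{\vartheta_n,x_0}^n$ because its $o_p(1)$ remainder, being negligible under $\mathbb{P}_{\vartheta^*,x_0}^n$, remains negligible under the contiguous $\mathbb{P}_{\vartheta_n,x_0}^n$ established in (a). I expect the main obstacle to be the second step: controlling $\nu_n(s_{\varrho_n k})$ along the moving index $\varrho_n$ and keeping every estimate uniform in the initial state $x_0$. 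This requires the stochastic equicontinuity of the score empirical process in $\varrho$ (Assumption \ref{assn_a3}(g)) together with the uniform martingale approximation of Section \ref{sec:uniform_convergence}; once these are in place, the two Le Cam lemmas apply mechanically.
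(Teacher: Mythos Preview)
Your proposal is correct and follows essentially the same route as the paper: verify $\vartheta_n\in\mathcal{N}_{c/\sqrt{n}}$, invoke the quadratic expansion (the paper cites Proposition~\ref{Ln_thm1} directly, noting that the assumptions of Propositions~\ref{P-quadratic}, \ref{P-quadratic-N1}, \ref{P-quadratic-N1-homo} guarantee Assumption~\ref{assn_a3}), obtain the log-likelihood ratio as $h'\nu_n(s_{\varrho_n k})-\tfrac12 h'\mathcal{I}_\varrho h+o_p(1)$, and then apply Le~Cam's first and third lemmas. Your added detail on stochastic equicontinuity to handle the moving index $\varrho_n$ is exactly what underlies the paper's brief assertion $\nu_n(s_{\varrho_n k})\Rightarrow G_\varrho$.
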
 

\subsection{Non-normal distribution}
 
For the non-normal distribution, the sequence of contiguous local alternatives is given by $\lambda_n = \bar \lambda/n^{1/4}$ because then $h_\lambda = \sqrt{n}\alpha(1-\alpha)v(\lambda_n)=\alpha(1-\alpha)v(\bar\lambda)$ holds. The following proposition derives the asymptotic distribution of the LRTS for the non-normal distribution under $H_{1 n}: (\pi_n,\eta_n,\lambda_n) = (\bar\pi,\eta^*,\bar \lambda/n^{1/4})$. 
 
\begin{proposition} \label{P-LAN2} Suppose that the assumptions of Proposition \ref{P-LR} hold. 
For $\bar \pi\in\Theta_{\pi}$ and $\bar \lambda\neq 0$, define $h_{\lambda}:= \bar\alpha(1-\bar\alpha) v(\bar \lambda)$. Then, under $H_{1 n}: (\pi_n,\eta_n,\lambda_n) = (\bar\pi,\eta^*,\bar \lambda/n^{1/4})$, we have $LR_n \overset{d}{\rightarrow} \sup_{\varrho \in \Theta_{\varrho}} (\tilde t_{\lambda\varrho h})' \mathcal{I}_{\lambda.\eta\varrho} \tilde t_{\lambda\varrho h}$, where $\tilde t_{\lambda \varrho h}$ is defined as in (\ref{t-lambda}) but replacing $Z_{\lambda\varrho}$ in (\ref{t-lambda}) with $(\mathcal{I}_{\lambda.\eta\varrho})^{-1} G_{\lambda.\eta\varrho}+ h_{\lambda}$.
\end{proposition}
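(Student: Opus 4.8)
The plan is to transfer the entire null-hypothesis machinery of Propositions \ref{P-quadratic} and \ref{P-LR} to the contiguous sequence $\mathbb{P}_{\vartheta_n,x_0}^n$, and then to track the one quantity whose law actually changes, namely a deterministic recentering of the limiting score process. First I would record the local parameter. Because every entry of $v(\cdot)$ in (\ref{v_lambda}) is a degree-two monomial, $v(\bar\lambda/n^{1/4}) = n^{-1/2}v(\bar\lambda)$, so $\sqrt n\, t_\lambda(\lambda_n,\pi_n) = \sqrt n\,\bar\alpha(1-\bar\alpha)v(\bar\lambda/n^{1/4}) = \bar\alpha(1-\bar\alpha)v(\bar\lambda) = h_\lambda$, while $h_\eta = \sqrt n(\eta_n-\eta^*)=0$ since $\eta_n=\eta^*$. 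Thus $h=(0',h_\lambda')'$ and $\pi_n=\bar\pi$ is fixed, which is precisely the configuration covered by Proposition \ref{P-LAN}.

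Next I would invoke the contiguity of Proposition \ref{P-LAN}(a): $\mathbb{P}_{\vartheta_n,x_0}^n$ and $\mathbb{P}_{\vartheta^*,x_0}^n$ are mutually contiguous, uniformly in $x_0$. Consequently the consistency of $\hat\vartheta_2$ toward $\Gamma^*$ (Proposition \ref{P-consist}) and the uniform quadratic approximation (\ref{ln_appn}) of Proposition \ref{P-quadratic}, both being $o_{p\varepsilon}(1)$ and $o_p(1)$ statements under $\mathbb{P}_{\vartheta^*}$, remain valid under $\mathbb{P}_{\vartheta_n,x_0}^n$. Hence, with $\mathbb{P}_{\vartheta_n}$-probability tending to one, the MLE lies in $A_{n\varepsilon c}(\xi)$ and $\ell_n(\psi,\pi,\xi)-\ell_n(\psi^*,\pi,\xi)=\sqrt n\, t(\psi,\pi)'\nu_n(s_{\varrho k}) - \tfrac{n}{2}t(\psi,\pi)'\mathcal{I}_\varrho t(\psi,\pi)+o_{p\varepsilon}(1)$ uniformly. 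Therefore $LR_n$ is, up to $o_{p\varepsilon}(1)$, the maximum over $\varrho$ of the same concentrated quadratic form that appears in the proof of Proposition \ref{P-LR}, and the only object whose limiting law differs from the null is $\nu_n(s_{\varrho k})$.

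I would then determine that limiting law. Under $\mathbb{P}_{\vartheta^*}$ the process $\{\nu_n(s_{\varrho k})\}_{\varrho\in\Theta_\varrho}$ converges weakly to a mean-zero Gaussian process, and by Proposition \ref{P-LAN}(b) the log-likelihood ratio is asymptotically linear in $\nu_n(s_{\varrho_n k})$. Le Cam's third lemma then shifts the weak limit of $\nu_n(s_{\varrho k})$ under $\mathbb{P}_{\vartheta_n}$ by its asymptotic covariance with the linear part $h'\nu_n(s_{\varrho_n k})$ of the log-likelihood ratio, leaving the covariances unchanged. Residualizing the $\lambda$-block on the $\eta$-block exactly as in (\ref{I_lambda}) recenters $G_{\lambda.\eta\varrho}$ by $\mathcal{I}_{\lambda.\eta\varrho}h_\lambda$, so that $Z_{\lambda\varrho}=(\mathcal{I}_{\lambda.\eta\varrho})^{-1}G_{\lambda.\eta\varrho}$ converges in distribution to $(\mathcal{I}_{\lambda.\eta\varrho})^{-1}G_{\lambda.\eta\varrho}+h_\lambda$. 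Feeding this shifted process into the cone-projection step of Proposition \ref{P-LR}—the admissible set for $\sqrt n\, t_\lambda$ still converges to the cone $v(\mathbb{R}^q)$ of (\ref{t-lambda}), on which the Pythagorean identity for $\mathcal{I}_{\lambda.\eta\varrho}$-projection turns the maximized quadratic form into $(\tilde t_{\lambda\varrho h})'\mathcal{I}_{\lambda.\eta\varrho}\tilde t_{\lambda\varrho h}$—and applying the continuous mapping theorem to the supremum over $\varrho\in\Theta_\varrho$ delivers the stated limit.

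I expect the main obstacle to be the third step, where the recentering must be pinned down. The delicate point is to verify that the asymptotic covariance of $\nu_n(s_{\varrho k})$ with the linear part of the log-likelihood ratio, after projection onto $s_{\eta k}$, produces exactly the constant shift $h_\lambda$ of $Z_{\lambda\varrho}$ demanded by the statement, and that the convergence is \emph{joint and uniform} over $\varrho\in\Theta_\varrho$ and $\xi\in\Xi$ rather than merely pointwise in $\varrho$. Establishing the uniform weak convergence of the drifted score process, together with the continuity of the map $t_\lambda\mapsto$ (cone projection) and of the supremum functional, is what upgrades the fixed-$\varrho$ limit to the supremum form and justifies the final continuous-mapping argument; the preceding contiguity and quadratic-expansion steps are then essentially bookkeeping on top of Propositions \ref{P-quadratic}, \ref{P-LR}, and \ref{P-LAN}.
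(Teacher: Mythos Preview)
Your proposal is correct and follows essentially the same route as the paper: verify $h_\eta=0$ and $h_\lambda=\sqrt n\,t_\lambda(\lambda_n,\pi_n)$, use Proposition~\ref{P-LAN} and contiguity to transfer the quadratic expansion of Proposition~\ref{P-quadratic} to $\mathbb{P}_{\vartheta_n,x_0}^n$, then apply Le Cam's third lemma to obtain $G_{\lambda.\eta\varrho n}\Rightarrow G_{\lambda.\eta\varrho}+\mathcal{I}_{\lambda.\eta\varrho}h_\lambda$ and rerun the proof of Proposition~\ref{P-LR} with this shifted limit. Your computation of the shift in $Z_{\lambda\varrho}$ is exactly the one the paper records.
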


\subsection{Heteroscedastic normal distribution}

For the model with the heteroscedastic normal distribution, the sequences of contiguous local alternatives characterized by (\ref{local-alternative}) include the local alternatives of order $n^{-1/8}$. 
\begin{proposition} \label{P-LAN3} Suppose that the assumptions of Proposition \ref{P-LR-N1} hold for model (\ref{normal-density}). For $\bar \varrho \in (-1,1)$, $\bar \alpha \in (0,1)$, and $\bar \lambda := (\bar \lambda_\mu,\bar \lambda_\sigma,\bar \lambda_\beta')'\neq (0,0,0)'$, let
\begin{align*}
& H_{1 n}^a: (\varrho_n,\alpha_n,\eta_n,\lambda_{\mu n}, \lambda_{\sigma n},\lambda_{\beta n}) = (\bar \varrho/n^{1/4}, \bar\alpha,\eta^*, \bar \lambda_\mu/ n^{1/8}, \bar \lambda_{\sigma}/n^{3/8},\bar \lambda_{\beta}/n^{3/8}), \\
& H_{1 n}^b: (\varrho_n,\alpha_n,\eta_n,\lambda_{\mu n}, \lambda_{\sigma n},\lambda_{\beta n}) = (\bar \varrho, \bar\alpha,\eta^*, \bar \lambda_\mu/ n^{1/4}, \bar \lambda_{\sigma}/n^{1/4},\bar \lambda_{\beta}/n^{1/4}),
\end{align*}
and define
\begin{align*}
h_{\lambda}^a: &=
\bar\alpha(1-\bar\alpha) \times (\bar \varrho \bar \lambda_\mu^2,
 \bar \lambda_\mu \bar \lambda_\sigma,
 b(\bar\alpha) \bar \lambda_\mu^4/12,
\bar \lambda_{\beta}' \bar \lambda_\mu, 0, 0)', \\
h_{\lambda}^b: &= 
 \bar\alpha(1-\bar\alpha) \times (\bar \varrho\bar \lambda_\mu^2,
 \bar \lambda_\mu \bar \lambda_\sigma,
 \bar \lambda_\sigma^2,
 \bar \lambda_{\beta}' \bar \lambda_\mu,
\bar \lambda_\beta' \bar \lambda_{\sigma},
 v( \bar \lambda_\beta)')'.
\end{align*}
Then, for $j \in \{a,b\}$, under $H_{1 n}^j$, we have $LR_n \overset{d}{\rightarrow} \max\{ \mathbb{I}\{\varrho=0\} (\tilde t_{\lambda h}^{1j})' \mathcal{I}_{\lambda.\eta 0} \tilde t_{\lambda h}^{1j}, \sup_{\varrho \in \Theta_{\varrho}} (\tilde t_{\lambda \varrho h}^{2j})' \mathcal{I}_{\lambda.\eta \varrho} \tilde t_{\lambda \varrho h}^{2j} \}$, where $\tilde t_{\lambda h}^{1j}$ and $\tilde t_{\lambda \varrho h}^{2j}$ are defined as in (\ref{t-lambda-N1}) but replacing $Z_{\lambda\varrho}$ with $(\mathcal{I}_{\lambda.\eta\varrho})^{-1} G_{\lambda.\eta\varrho}+ h_{\lambda}^j$.
\end{proposition}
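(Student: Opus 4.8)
The plan is to carry the null-hypothesis argument of Proposition \ref{P-LR-N1} over to the contiguous sequences $\mathbb{P}_{\vartheta_n,x_0}^n$ by means of Le Cam's first and third lemmas, which are already assembled in Proposition \ref{P-LAN}. The key observation is that the uniform quadratic approximation of Proposition \ref{P-quadratic-N1}(b) holds up to an $o_{p\varepsilon}(1)$ remainder on the sets $A_{n\varepsilon c}(\xi)$ that contain every consistent MLE; because Proposition \ref{P-LAN}(a) gives mutual contiguity of $\mathbb{P}_{\vartheta_n,x_0}^n$ and $\mathbb{P}_{\vartheta^*,x_0}^n$, that remainder stays $o_{p\varepsilon}(1)$ under $\mathbb{P}_{\vartheta_n,x_0}^n$. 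Hence under each local sequence the centered log-likelihood is, uniformly in the localized parameter, the same quadratic form $\sqrt{n}t(\psi,\pi)'\nu_n(s_{\varrho k})-\tfrac12 n\,t(\psi,\pi)'\mathcal{I}_\varrho t(\psi,\pi)$ whose cone-constrained maximization produced the null law; only the limiting law of the score process $\nu_n(s_{\varrho k})$ changes.

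First I would check the bookkeeping of the local parameter by substituting $(\varrho_n,\alpha_n,\lambda_{\mu n},\lambda_{\sigma n},\lambda_{\beta n})$ into $t_\lambda(\lambda,\pi)$ of (\ref{t_lambda_rho}) and verifying $\sqrt{n}\,t_\lambda(\lambda_n,\pi_n)\to h_\lambda^j$ entrywise (the common prefactor $\alpha_n(1-\alpha_n)\to\bar\alpha(1-\bar\alpha)$ being carried along). In case $a$ the slow $n^{-1/8}$ rate on $\lambda_\mu$ against the $n^{-3/8}$ rates on $(\lambda_\sigma,\lambda_\beta)$ makes $\sqrt{n}\lambda_{\sigma n}^2$, $\sqrt{n}\lambda_{\beta n}\lambda_{\sigma n}$ and $\sqrt{n}v(\lambda_{\beta n})$ vanish while $\sqrt{n}\,b(\alpha_n)\lambda_{\mu n}^4/12\to b(\bar\alpha)\bar\lambda_\mu^4/12$ and $\sqrt{n}\,\varrho_n\lambda_{\mu n}^2\to\bar\varrho\bar\lambda_\mu^2$ survive, isolating the degenerate fourth-order direction; in case $b$ all rates are $n^{-1/4}$ and every entry survives. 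This fixes $h=(0',(h_\lambda^j)')'$, so $h_\eta=0$. Next I would establish, under $\mathbb{P}_{\vartheta^*}$, the joint weak convergence of the residual score process $\nu_n(\tilde s_{\lambda\varrho k})$ (the residual from projecting $s_{\lambda\varrho k}$ on $s_{\eta k}$, indexed by $\varrho\in\Theta_\varrho$) together with the log-likelihood ratio $h'\nu_n(s_{\varrho_n k})-\tfrac12 h'\mathcal{I}_\varrho h$ of Proposition \ref{P-LAN}(b), with $\varrho_n\to\varrho^{(j)}$ ($\varrho^{(a)}=0$, $\varrho^{(b)}=\bar\varrho$). Le Cam's third lemma then shows that under $\mathbb{P}_{\vartheta_n}$ the process retains its null covariance $\mathcal{I}_{\lambda.\eta\varrho_1\varrho_2}$ but gains the deterministic drift $\varrho\mapsto\mathcal{I}_{\lambda.\eta\varrho\varrho^{(j)}}h_\lambda^j$, which at $\varrho=\varrho^{(j)}$ equals $\mathcal{I}_{\lambda.\eta\varrho^{(j)}}h_\lambda^j$, i.e.\ translates the center $Z_{\lambda\varrho}=(\mathcal{I}_{\lambda.\eta\varrho})^{-1}G_{\lambda.\eta\varrho}$ by exactly $h_\lambda^j$. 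Feeding this shifted Gaussian limit into the two-branch maximization from the proof of Proposition \ref{P-LR-N1}, namely the $\mathbb{I}\{\varrho=0\}$ branch over $\Lambda_\lambda^1$ and the $\sup_{\varrho\in\Theta_\varrho}$ branch over $\Lambda_{\lambda\varrho}^2$ of (\ref{Lambda-lambda}), and applying the continuous mapping theorem yields the stated limit.

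The main obstacle is the drift computation and its reconciliation with the compact center translation asserted in the statement. Since Proposition \ref{P-LAN}(b) expands the log-ratio in the score at the moving point $\varrho_n\to\varrho^{(j)}$, Le Cam's third lemma delivers the cross-covariance noncentrality $\mathcal{I}_{\lambda.\eta\varrho\varrho^{(j)}}h_\lambda^j$, which coincides with the claimed $\mathcal{I}_{\lambda.\eta\varrho}h_\lambda^j$ only at $\varrho=\varrho^{(j)}$; the argument must therefore show that, under $H_{1n}^j$, the operative contribution to the supremum is governed by the behavior at $\varrho=\varrho^{(j)}$, so that replacing $Z_{\lambda\varrho}$ by $Z_{\lambda\varrho}+h_\lambda^j$ is the correct centering. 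A secondary difficulty is specific to case $a$: one must confirm that the contiguity of Proposition \ref{P-LAN}(a) and the uniform expansion of Proposition \ref{P-quadratic-N1} remain valid along the non-standard $n^{-1/8}$ sequence, and that the degenerate direction is carried by the $t_{\varrho\mu^2}$ and $t_{\sigma^2}=b(\bar\alpha)\bar\lambda_\mu^4/12\in\mathbb{R}_-$ coordinates of $\Lambda_\lambda^1$, where the sign $b(\bar\alpha)<0$ and the half-line constraint must survive the shift by $h_\lambda^a$.
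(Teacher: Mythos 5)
Your proposal takes essentially the same route as the paper's proof, which is a two-step reduction: verify exactly your bookkeeping step, namely $h_\eta^j = 0$ and $h_\lambda^j = \sqrt{n}\,t_\lambda(\lambda_n,\pi_n) + o(1)$ under $H_{1n}^j$ (your entrywise rate computation in cases $a$ and $b$ is correct and is the substance of this step), then invoke Proposition \ref{P-LAN} and repeat the argument of Proposition \ref{P-LAN2}, i.e., rerun the two-branch cone maximization from the proof of Proposition \ref{P-LR-N1} with the residual score process shifted.

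Two remarks on the difficulties you flag. Your ``secondary difficulty'' (whether contiguity and the uniform expansion survive the nonstandard $n^{-1/8}$ path in case $a$) is automatic in the paper's framework: the neighborhood $\mathcal{N}_{c/\sqrt{n}}$ is defined through the reparameterized coordinate $|t(\psi,\pi)|$, not through the raw parameters, and since $\sqrt{n}\,t(\psi_n,\pi_n)\to h$ you have $\vartheta_n\in\mathcal{N}_{c/\sqrt{n}}$ for any $c>|h|$, so Propositions \ref{Ln_thm1} and \ref{P-LAN} apply verbatim; the slow rate of $\lambda_{\mu n}$ is exactly what the fourth-order reparameterization absorbs. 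Your ``main obstacle'' is more interesting: your pointwise third-lemma computation is the literal one --- at a fixed index $\varrho$ the drift of $G_{\lambda.\eta\varrho n}$ under $\mathbb{P}_{\vartheta_n,x_0}^n$ is the cross term $\mathcal{I}_{\lambda.\eta\varrho\varrho^{(j)}}h_\lambda^j$ (with $\varrho^{(a)}=0$ and $\varrho^{(b)}=\bar\varrho$ in your notation), which coincides with $\mathcal{I}_{\lambda.\eta\varrho}h_\lambda^j$ at $\varrho=\varrho^{(j)}$ --- but the paper does not carry out the localization argument you call for. The proof of Proposition \ref{P-LAN2}, which the proof of Proposition \ref{P-LAN3} invokes by reference, simply asserts the uniformly shifted process limit $G_{\lambda.\eta\varrho n}\Rightarrow G_{\lambda.\eta\varrho}+\mathcal{I}_{\lambda.\eta\varrho}h_\lambda$ at every $\varrho$ and reruns the null argument. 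So the step you isolate as needing work is passed over in silence in the source rather than resolved there; on this point your write-up is more careful than the paper, and nothing in your outline needs to change to match the published argument.
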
 

In the local alternative $H_{1 n}^a$, $\varrho_n$ converges to $0$, and $\lambda_{\mu n}$ converges to 0 at a slower rate than $n^{-1/4}$. Our test has non-trivial power against these local alternatives in the neighborhood of $\varrho=0$. By contrast, the test of \citet{carrasco14em} does not have power against the local alternatives in the neighborhood of $\varrho=0$, as discussed in Section 5 of \citet{carrasco14em}. The test proposed by \citet{quzhuo17wp} assumes that $\varrho$ is bounded away from zero and hence their test rules out $H_{1n}^a$.

\subsection{Homoscedastic normal distribution} 
 
The local alternatives for the model with the homoscedastic distribution also include those of order $n^{-1/8}$ in the neighborhood of $\varrho=0$.

\begin{proposition} \label{P-LAN4} Suppose that the assumptions of Proposition \ref{P-quadratic-N1-homo} hold for model (\ref{normal-density-homo}). For $\bar \varrho \in (-1,1)$, $\bar \alpha \in (0,1)$, $\Delta_\alpha \neq 0$, and $\bar \lambda := (\bar \lambda_\mu,\bar \lambda_\beta')'\neq (0,0)'$, let
\begin{align*}
& H_{1 n}^a: (\varrho_n,\alpha_n,\eta_n,\lambda_{\mu n}, \lambda_{\beta n}) = (\bar \varrho/n^{1/4}, 1/2+\Delta_{\alpha}/n^{1/8}, \eta^*, \bar \lambda_\mu/ n^{1/8}, \bar \lambda_{\beta}/n^{3/8}), \\
& H_{1 n}^b: (\varrho_n,\alpha_n,\eta_n,\lambda_{\mu n}, \lambda_{\beta n}) = (\bar \varrho, \bar\alpha,\eta^*, \bar \lambda_\mu/ n^{1/4}, \bar \lambda_{\beta}/n^{1/4}),
\end{align*}
and define $h_{\lambda}^a := (1/4) \times (\bar\varrho \bar\lambda_{\mu}^2,\Delta_{\alpha} \bar\lambda_{\mu}^3,-\bar\lambda_\mu^4/2, \bar\lambda_\beta'\bar\lambda_\mu, 0)'$ and $h_{\lambda}^b := \bar\alpha(1-\bar\alpha) \times (\bar \varrho\bar \lambda_\mu^2, 0, 0, \bar \lambda_{\beta}' \bar \lambda_\mu, v( \bar \lambda_\beta)')'$. For $j=\{a,b\}$, define $\tilde t_{\lambda h}^{1j}$ and $\tilde t_{\lambda \varrho h}^{2j}$ as in (\ref{t-lambda-N1}) but replacing $Z_{\lambda\varrho}$ with $(\mathcal{I}_{\lambda.\eta\varrho})^{-1} G_{\lambda.\eta\varrho}+ h_{\lambda}^j$, where $\mathcal{I}_{\lambda.\eta\varrho}$ and $G_{\lambda.\eta\varrho}$ are constructed with $s_{\varrho k}$ defined in (\ref{score_normal_homo}), and $\Lambda_{\lambda }^1$ and $\Lambda_{\lambda \varrho}^2$ are defined in (\ref{Lambda-lambda-homo}). Then, under $H_{1 n}^j$, we have $LR_n \overset{d}{\rightarrow} \max\{ \mathbb{I}\{\varrho=0\} (\tilde t_{\lambda h}^{1j})' \mathcal{I}_{\lambda.\eta 0} \tilde t_{\lambda h}^{1j}, \sup_{\varrho \in \Theta_{\varrho}} (\tilde t_{\lambda \varrho h}^{2j})' \mathcal{I}_{\lambda.\eta \varrho} \tilde t_{\lambda \varrho h}^{2j} \}$.
\end{proposition}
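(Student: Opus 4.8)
The plan is to follow the template already used for Propositions \ref{P-LAN2} and \ref{P-LAN3}, treating the homoscedastic null-limit analysis of Proposition \ref{P-LR-N1-homo} as a fixed skeleton and inserting the drift produced by the contiguous change of measure. The first step is purely algebraic: I would verify that each of $H_{1n}^a$ and $H_{1n}^b$ is a special case of the contiguous sequence (\ref{local-alternative}) with the stated $h_\lambda^j$. Substituting the prescribed rates into $t_\lambda(\lambda,\pi)$ of (\ref{t-psi2-homo}) and multiplying by $\sqrt{n}$, one checks block by block that $\sqrt{n}t_\lambda(\lambda_n,\pi_n)\to h_\lambda^j$. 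The point worth stressing is \emph{why} the rates in $H_{1n}^a$ take their particular form: since $\nabla_{\lambda_\mu^3}\overline p_{\psi^*\pi}/\overline p_{\psi^*\pi}$ vanishes at $\alpha=1/2,\varrho=0$, keeping the cubic block $\alpha(1-\alpha)(1-2\alpha)\lambda_\mu^3$ at order $O(n^{-1/2})$ forces $1-2\alpha_n\asymp n^{-1/8}$, i.e.\ $\alpha_n=1/2+\Delta_\alpha/n^{1/8}$, while $\lambda_{\mu n}\asymp n^{-1/8}$ keeps the quartic block $\alpha(1-\alpha)(1-6\alpha+6\alpha^2)\lambda_\mu^4$ (whose coefficient $-1/2$ is nonzero at $\alpha=1/2$) at the same order, and the $\varrho\lambda_\mu^2$ block then requires $\varrho_n\asymp n^{-1/4}$. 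The remaining $\lambda_\beta$-blocks vanish in the limit because $\lambda_{\beta n}=o(n^{-1/4})$.

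With the sequences identified, I would invoke Proposition \ref{P-LAN}, whose hypotheses are exactly the assumptions of Proposition \ref{P-quadratic-N1-homo} in force here. This delivers, uniformly in $x_0\in\mathcal{X}$, mutual contiguity of $\mathbb{P}_{\vartheta_n,x_0}^n$ and $\mathbb{P}_{\vartheta^*,x_0}^n$ together with $\nu_n(s_{\varrho k})\overset{d}{\to}N(\mathcal{I}_\varrho h,\mathcal{I}_\varrho)$ under $\mathbb{P}_{\vartheta_n,x_0}^n$. Contiguity is what makes the remainder cheap: the uniform quadratic expansion of Proposition \ref{P-quadratic-N1-homo} is an $o_{p\varepsilon}(1)$ statement under $\mathbb{P}_{\vartheta^*}$, and an $o_{p\varepsilon}(1)$ quantity under the null stays $o_{p\varepsilon}(1)$ under any contiguous sequence. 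Hence the entire derivation of the LRTS in Proposition \ref{P-LR-N1-homo}---concentrating out $\eta$, reducing to the quadratic form $2t_\lambda'G_{\lambda.\eta\varrho}-t_\lambda'\mathcal{I}_{\lambda.\eta\varrho}t_\lambda$, minimizing over the cones $\Lambda_\lambda^1$ and $\Lambda_{\lambda\varrho}^2$ of (\ref{Lambda-lambda-homo}), and taking the supremum over $\varrho\in\Theta_\varrho$ together with the $\mathbb{I}\{\varrho=0\}$ branch---applies verbatim under $\mathbb{P}_{\vartheta_n,x_0}^n$.

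The only substantive change is the law of the driving Gaussian object. Under $\mathbb{P}_{\vartheta_n,x_0}^n$ the score $\nu_n(s_{\varrho k})$ acquires mean $\mathcal{I}_\varrho h$; writing $s_{\varrho k}=(s_{\eta k}',s_{\lambda\varrho k}')'$ and applying the same $\eta$-projection that defines $\mathcal{I}_{\lambda.\eta\varrho}$ and $G_{\lambda.\eta\varrho}$, a partitioned (Schur-complement) computation shows that the drift on the residual $\lambda$-component is $(\mathcal{I}_{\lambda\varrho}-\mathcal{I}_{\lambda\eta\varrho}\mathcal{I}_\eta^{-1}\mathcal{I}_{\eta\lambda\varrho})h_\lambda=\mathcal{I}_{\lambda.\eta\varrho}h_\lambda$, with the $h_\eta$ contribution annihilated by the projection. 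Consequently the null-limit centering $Z_{\lambda\varrho}=\mathcal{I}_{\lambda.\eta\varrho}^{-1}G_{\lambda.\eta\varrho}$ is replaced by $\mathcal{I}_{\lambda.\eta\varrho}^{-1}G_{\lambda.\eta\varrho}+h_\lambda^j$ in the definitions (\ref{t-lambda-N1}) of $\tilde t_{\lambda h}^{1j}$ and $\tilde t_{\lambda\varrho h}^{2j}$, and the stated limit follows from the continuous mapping theorem applied to the map sending the shifted Gaussian process to the supremum of the minimized quadratic forms. (Under $H_{1n}^a$ the mass is captured through the $\mathbb{I}\{\varrho=0\}$ branch since $\varrho_n\to0$, whereas under $H_{1n}^b$ it is the $\sup_{\varrho\in\Theta_\varrho}$ branch.)

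I expect the delicate step to be the projection/drift bookkeeping of the last paragraph: one must confirm that the mean shift $\mathcal{I}_\varrho h$ leaves precisely $h_\lambda$ acting on the $\lambda$-block after $\eta$ is concentrated out, and that this holds jointly and uniformly in $\varrho$, so that process-level (not merely finite-dimensional) weak convergence of the shifted $G_{\lambda.\eta\varrho}$ is preserved and remains compatible with the boundary indicator $\mathbb{I}\{\varrho=0\}$ arising from the union $\Lambda_\lambda^1\cup\Lambda_{\lambda\varrho}^2$. Establishing this uniform, process-level form of Le Cam's third lemma is the crux; by comparison the algebra verifying $h_\lambda^a,h_\lambda^b$ and the reuse of Proposition \ref{P-LR-N1-homo} are routine.
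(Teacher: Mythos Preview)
Your proposal is correct and follows essentially the same approach as the paper: verify that each $H_{1n}^j$ satisfies the contiguous-sequence condition (\ref{local-alternative}) with the stated $h_\lambda^j$, invoke Proposition \ref{P-LAN} for contiguity and the mean shift, transfer the $o_{p\varepsilon}(1)$ quadratic expansion of Proposition \ref{P-quadratic-N1-homo} to $\mathbb{P}_{\vartheta_n,x_0}^n$, and rerun the proof of Proposition \ref{P-LR-N1-homo} with $G_{\lambda.\eta\varrho}$ replaced by $G_{\lambda.\eta\varrho}+\mathcal{I}_{\lambda.\eta\varrho}h_\lambda^j$. The paper's own proof is extremely terse---it simply refers back to the proof of Proposition \ref{P-LAN2} (which in turn invokes Proposition \ref{P-LR})---so your additional commentary on why the $n^{-1/8}$ rates arise and the Schur-complement bookkeeping is helpful elaboration rather than a different route.
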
 

\section{Parametric bootstrap}
We consider the following parametric bootstrap to obtain the bootstrap critical value $c_{\alpha,B}$ and bootstrap $p$-value of our LRTS for testing $H_0: M=M_0$ against $H_A: M=M_0+1$.
\begin{enumerate}
\item Using the observed data, compute $\hat \vartheta_{M_0}$, $\hat\vartheta_{M_0+1}$, and $LR_{M_0,n}$.
\item Given $\hat \vartheta_{M_0}$ and $\xi_{M_0}$, generate $B$ independent samples $\{Y_1^b,\ldots,Y_n^b\}_{b=1}^B$ under $H_0$ with $\vartheta_{M_0}=\hat\vartheta_{M_0}$ conditional on the observed value of $\overline{\bf Y}_{0}$ and ${\bf W}_{1}^n$.
\item For each simulated sample $\{Y_k^b\}_{k=1}^n$ with $(\overline{\bf Y}_0,{\bf W}_1^n)$, estimate $\hat \vartheta_{M_0}^b$ and $\hat\vartheta_{M_0+1}^b$ as in Step 1, and let $LR_{M_0,n}^b := 2[\ell_n(\hat\vartheta_{M_0+1}^b ,\xi_{M_0+1})-\ell_n(\hat\vartheta_{M_0}^b ,\xi_{M_0})]$ for $b=1,\ldots,B$.
\item Let $c_{\alpha,B}$ be the $(1-\alpha)$ quantile of $\{LR_{M_0,n}^b\}_{b=1}^B$, and define the bootstrap $p$-value as $B^{-1}\sum_{b=1}^B \mathbb{I}\{ LR_{M_0,n}^b > LR_{M_0, n}\}$.
\end{enumerate}

The following proposition shows the consistency of the bootstrap critical values $c_{\alpha,B}$ for testing $H_0: M_0=1$. We omit the result for testing $H_0: M_0\geq 2$; it is straightforward to extend the analysis to the case for $M_0\geq 2$ with more tedious notations. 
 
\begin{proposition} \label{P-bootstrap} 
Suppose that the assumptions of Propositions \ref{P-quadratic}, \ref{P-quadratic-N1}, and \ref{P-quadratic-N1-homo} hold for the models of the non-normal, heteroscedastic normal, and homoscedastic normal distributions, respectively. Then, the bootstrap critical values $c_{\alpha,B}$ converge to the asymptotic critical values in probability as $n$ and $B$ go to infinity under $H_0$ and under the local alternatives described in Propositions \ref{P-LAN2}, \ref{P-LAN3}, and \ref{P-LAN4}.
\end{proposition}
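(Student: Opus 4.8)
The plan is to show that, conditional on the observed data, the bootstrap statistic $LR_{M_0,n}^b$ converges weakly in probability to the same random variable that governs the null asymptotic distribution of $LR_n$ in Propositions \ref{P-LR}, \ref{P-LR-N1}, and \ref{P-LR-N1-homo}, and then to upgrade this to convergence of the bootstrap quantile $c_{\alpha,B}$ to the asymptotic critical value. The organizing observation is that the bootstrap samples are generated from the one-regime model with parameter $\hat\vartheta_{1}$, and that $\hat\vartheta_1 \overset{p}{\to}\vartheta_1^*$ both under $H_0$ (by Proposition \ref{P-consist}) and under the contiguous local alternatives of Propositions \ref{P-LAN2}--\ref{P-LAN4}. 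Contiguity (Proposition \ref{P-LAN}(a)) ensures that convergence in $\mathbb{P}_{\vartheta^*}$-probability carries over to $\mathbb{P}_{\vartheta_n}$-probability, so that the bootstrap data-generating parameter lies in an arbitrarily small neighborhood $\mathcal{N}^*$ of $\vartheta_1^*$ with probability approaching one under both the null and the local alternatives. This is precisely why $c_{\alpha,B}$ converges to the \emph{null} critical value even under local alternatives: the bootstrap always mimics the null distribution.

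First I would replicate, for the bootstrap data, the quadratic approximation machinery of Sections \ref{sec:expansion}--\ref{subsec:homo_normal}. Because the regularity conditions (Assumptions \ref{assn_a1}, \ref{assn_a2}, and \ref{assn_a4}) hold uniformly on $\mathcal{N}^*$ and $\hat\vartheta_1$ eventually lies in $\mathcal{N}^*$, Propositions \ref{P-quadratic}, \ref{P-quadratic-N1}, and \ref{P-quadratic-N1-homo} apply to the bootstrap log-likelihood ratio with $\vartheta^*$ replaced by $\hat\vartheta_1$, yielding the same quadratic form in $t(\psi,\pi)$ with bootstrap score $\nu_n^b(s_{\varrho k}^b)$ and bootstrap information $\mathcal{I}_\varrho^b$. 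The next step is to establish, conditional on the data, the weak convergence $\nu_n^b(s_{\varrho k}^b) \Rightarrow G_\varrho$ as a process indexed by $\varrho\in\Theta_\varrho$, where $G_\varrho$ is the mean-zero Gaussian process with covariance $\mathcal{I}_{\varrho}$ evaluated at $\vartheta_1^*$. This rests on a conditional, triangular-array martingale functional central limit theorem for the stationary, ergodic, square-integrable martingale difference sequence constructed in Section \ref{sec:uniform_convergence}, together with the uniform law of large numbers $\sup_{\varrho}|P_n^b(s_{\varrho k}^b s_{\varrho k}^{b\prime}) - \mathcal{I}_\varrho^{*}| = o_p(1)$ and the continuity of $\mathcal{I}_\varrho$ in the data-generating parameter, which give $\mathcal{I}_\varrho^b \overset{p}{\to} \mathcal{I}_\varrho^{*}$ uniformly.

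With these pieces in hand I would invoke the continuous mapping theorem. The null limits in Propositions \ref{P-LR}, \ref{P-LR-N1}, and \ref{P-LR-N1-homo} are continuous functionals of the pair $(G_{\lambda.\eta\varrho},\mathcal{I}_{\lambda.\eta\varrho})$: a supremum over the compact set $\Theta_\varrho$ (and, in the normal cases, a maximum with the $\varrho=0$ term) of a quadratic form minimized over the fixed cones $v(\mathbb{R}^q)$, $\Lambda_\lambda^1$, and $\Lambda_{\lambda\varrho}^2$. Since projection onto a closed cone and the associated minimized quadratic form are continuous in the Gaussian input and in the positive-definite information matrix---using the eigenvalue bounds of Assumptions \ref{A-nonsing1}, \ref{A-nonsing2}, and \ref{A-nonsing2-homo}---the bootstrap statistic converges conditionally in distribution to exactly the null limit random variable. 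For the quantiles, the null limiting distribution has a continuous and strictly increasing c.d.f.\ at its $(1-\alpha)$ quantile (away from any atom at the origin created by the cone constraints), so conditional weak convergence upgrades to convergence of the corresponding quantile; letting $B\to\infty$, a Glivenko--Cantelli argument shows that the empirical $(1-\alpha)$ quantile $c_{\alpha,B}$ of the $B$ replications converges in probability to the quantile of the bootstrap distribution, hence to the asymptotic critical value, under both $H_0$ and the local alternatives.

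The hard part is the conditional functional central limit theorem for the bootstrap score process. The difficulty is threefold: the bootstrap data-generating parameter $\hat\vartheta_1$ is itself a random sequence rather than a fixed point, so a triangular-array version is needed; the data are temporally dependent through the latent Markov chain, so the martingale-difference approximation of Propositions \ref{proposition-ell}--\ref{lemma-omega} must be re-derived for the bootstrap measure with uniform control of the approximation error; and the convergence must hold uniformly in the nuisance parameter $\varrho\in\Theta_\varrho$, which requires verifying stochastic equicontinuity (tightness) of $\{\nu_n^b(s_{\varrho k}^b)\}$ conditionally on the data. I would handle this by combining the uniform-in-$\vartheta$ bounds already established in Section \ref{sec:uniform_convergence} with a conditional tightness criterion, exploiting that those bounds are locally uniform around $\vartheta_1^*$ and therefore survive the substitution of $\hat\vartheta_1$.
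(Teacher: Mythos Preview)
Your strategy is sound in principle but takes a much harder route than the paper, and the part you flag as ``the hard part''---a conditional, triangular-array functional martingale CLT for the bootstrap score with a random data-generating parameter $\hat\vartheta_1$---is exactly what the paper's argument is designed to avoid. The paper does not prove a conditional weak convergence statement at all. Instead, it reduces the random bootstrap parameter to a deterministic local sequence via the Almost Sure Representation Theorem: since $\sqrt{n}(\hat\eta_n-\eta^*)$ converges in distribution (under $H_0$ and, by contiguity, under the local alternatives), one can pass to distributionally equivalent versions $\tilde\eta_n$ with $\sqrt{n}(\tilde\eta_n-\eta^*)\to\tilde h_\eta$ almost surely. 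This puts the bootstrap data-generating parameter into a fixed class $\mathbf{C}_\eta$ of deterministic sequences, and the already-established LAN expansion (Proposition \ref{P-LAN}) plus Le Cam's third lemma then deliver the limit distribution of the LRTS under any such sequence without any new CLT.

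The second ingredient your outline misses is the algebraic reason the bootstrap reproduces the \emph{null} limit even though $\hat\eta_n\neq\eta^*$. This is Lemma \ref{lemma_btsp}: under a local alternative that moves only $\eta$ (i.e.\ $h=(h_\eta',0')'$), Le Cam's third lemma shifts $G_{\varrho n}$ to $G_{\varrho n}+\mathcal{I}_\varrho h$, but the profiled score $G_{\lambda.\eta\varrho n}=G_{\lambda\varrho n}-\mathcal{I}_{\lambda\eta\varrho}\mathcal{I}_\eta^{-1}G_{\eta n}$ is invariant to this shift because the $\eta$-contribution cancels exactly. Hence the LRTS limit is identical to the null limit for every $h_\eta$. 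Your continuous-mapping argument would eventually recover this, but only after establishing the conditional functional CLT; the paper's route gets it for free from the partialling-out structure already built into the quadratic approximation. If you want to follow your plan, you would need to supply the conditional CLT and tightness from scratch, whereas the paper's approach reuses Propositions \ref{P-LAN} and \ref{P-LR} with essentially no new analytic work.
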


\section{Simulations and empirical application}

\subsection{Simulations}

We consider the following two models: 
\begin{align} 
\text{Model 1}:&\quad Y_{k} = \mu_{X_k} + \beta Y_{k-1} +\varepsilon_{k},\quad \varepsilon_k\sim N(0,\sigma^2),\label{model1}\\
\text{Model 2}:&\quad Y_{k} = \mu_{X_k} + \beta Y_{k-1} +\varepsilon_{k},\quad \varepsilon_k\sim N(0,\sigma_{X_k}^2),\label{model2}
\end{align}
where $X_k\in\{1,\ldots,M\}$ with $p_{ij} = p(X_k=i|X_{k-1}=j)$. Model 1 in (\ref{model1}) is similar to the model used in \citet{chowhite07em}. This model has a switching intercept but the variance parameter $\sigma^2$ does not switch across regimes. In Model 2 in (\ref{model2}), both the intercept and the variance parameters switch across regimes.

We compare the size and power property of our bootstrap LRT and that of the QLR test of \citet{chowhite07em} with $\Theta_\mu = [-2,2]$ and the supTS test of \citet{carrasco14em} with $\rho \in [-0.9. 0.9]$. The critical values are computed by bootstrap with $B=199$ bootstrap samples. Note that this comparison favors the LRT over the supTS test because the supTS test is designed to detect general parameter variation including Markov chain.

In Model 2, we set the lower bound of $\sigma$ as $\epsilon_\sigma = 0.01\hat \sigma$, where $\hat \sigma$ is the estimate of $\sigma$ from one-regime model. We also find that adding a penalty term to the log-likelihood function improves the finite sample property of the LRT. The penalty term prevents $\sigma_j$ from taking an extremely small value and takes the form 
$-a_n \sum_{j = 1}^{M} \{\hat\sigma^2/\sigma_j^2+\log(\sigma_j^2/\hat\sigma^2) - 1\}$. 
We set $a_n = 20n^{-1/2}$ and compute the test statistic using the log-likelihood function without the penalty term. Because the penalty term and its derivatives are $o_p(1)$ from the compactness of $\Theta_M$, adding this penalty term does not affect the consistency of the MLE or the asymptotic distribution of the LRTS.

We first examine the rejection frequency of $H_0:M=1$ against $H_A:M=2$ when the data are generated by $H_0: M=1$ with $(\beta,\mu,\sigma)=(0.5,0,1)$. The first panel in Table \ref{table:bootstrap-size} reports the rejection frequency of the bootstrap tests at the nominal 10\%, 5\%, and 1\% levels over $3000$ replications with $n=200$ and $500$. Overall, all the tests have good sizes. 
 
Table \ref{table:bootstrap-power-model1} reports the power of the three tests for testing the null hypothesis of $M=1$ at the nominal level of 5\%. We generate $3000$ data sets for $n= 500$ under the alternative hypothesis of $M=2$ by setting $\mu_1=0.2, 0.6$, and $1.0$ and $\mu_2=-\mu_1$, while $(p_{11},p_{22})=(0.25,0.25)$, $(0.50,0.50)$, $(0.70,0.70)$, and $(0.90,0.90)$. We set $\sigma=1$ for Model 1 and $(\sigma_1,\sigma_2)=(1.1,0.9)$ for Model 2. For Model 1, the power of all the tests increases as $\mu_1$ increases except for the supTS test with $(p_{11},p_{22})=(0.9,0.9)$. As $(p_{11},p_{22})$ moves away from $(0.5,0.5)$, the power of the LRT increases, whereas the QLRT has decreasing power. The LRT performs better than the supTS and QLR tests except for the case with $(p_{11},p_{22})=(0.25,0.25)$, where the supTS test performs very well, and the case with $(p_{11},p_{22})=(0.5,0.5)$, where the QLRT outperforms the LRT, because the true model is a finite mixture in this case. The last three columns of Table \ref{table:bootstrap-power-model1} report the power of the LRT and supTS test to detect alternative models with switching variances (i.e., Model 2 with $M=2$). We did not examine the power of the QLRT because this test assumes non-switching variance. The LRT has stronger power than the supTS test in most cases.
 
The second panel in Table \ref{table:bootstrap-size} reports the rejection frequency of the LRT for testing $H_0:M=2$ against $H_A:M=3$ when the data are generated under the null hypothesis, showing its good size property, while neither the QLRT nor the supTS test is applicable for testing $H_0:M=2$ against $H_A:M=3$.

Table \ref{table:bootstrap-power-model2} reports the power of our LRT for testing the null hypothesis of $M=2$ at the nominal level of 5\%. We generate $3000$ data sets for $n= 500$ under the alternative hypothesis of $M=3$ across different values of $(\mu_1,\mu_2,\mu_3)$ and $(p_{11},p_{22},p_{33})$ with $p_{ij} = (1-p_{ii})/2$ for $j\neq i$, where we set $(\beta,\sigma)=(0.5,1.0)$ for Model 1 and $(\beta,\sigma_1,\sigma_2) = (0.5,0.9,1.2)$ for Model 2. Similar to the case of $H_0:M=1$, the power of the LRT for testing $H_0: M=2$ against $H_A: M=3$ increases when the alternative is further away from $H_0$ or when latent regimes become more persistent.

\subsection{Empirical example} 
Using U.S. GDP per capita quarterly growth rate data from 1960Q1 to 2014Q4, we estimate the regime switching models with common variance (i.e., Model 1 in (\ref{model1})) and with switching variances (i.e., Model 2 in (\ref{model2})) for $M=1$, $2$, $3$, and $4$ and sequentially test the null hypothesis of $M=M_0$ against the alternative hypothesis $M=M_0+1$ for $M_0=1$, $2$, $3$, and $4$.\footnote{For both models, we restrict the parameter values for the transition probabilities by setting $\epsilon=0.05$ to prevent the issue of unbounded likelihood.} 
We also report the Akaike information criteria (AIC) and Bayesian information criteria (BIC) for reference, although, to our best knowledge,
the consistency of the AIC and BIC for selecting the number of regimes has not been established in the literature.

Table \ref{table:select} reports the selected number of regimes by the AIC, BIC, and LRT. For Model (\ref{model1}) with common variance, our LRT selects $M=4$, while the AIC and BIC select $M=3$ and $M=1$, respectively. For Model (\ref{model2}) with switching variance, both the LRT and the AIC select $M=3$, while the BIC selects $M=2$.

Panel A of Table \ref{table:parameter} and Figure \ref{fig:common} report the parameter estimates and posterior probabilities of being in each regime for the model with common variance for $M=2$, $3$, and $4$. Across the different specifications in $M$, the estimated values of $\mu_1$, $\mu_2, \ldots, \mu_M$ are well separated in the common variance model, indicating that each regime represents a booming or recession period with different degrees. In Figure \ref{fig:common}, when the number of regimes is specified as $M=2$, the posterior probability of the ``recession'' regime (Regime 1) against that of the ``booming'' regime (Regime 2) sharply rises during the collapse of Lehman Brothers in 2008 and then declines after 2009. When the number of regimes is specified as $M=3$, in addition to the ``recession'' and ``booming'' regimes corresponding to Regimes 1 and 2, respectively, the regime with a rapid change in the growth rates from low to high is captured by Regime 3; for the model with $M=3$ in Figure \ref{fig:common}, the posterior probability of Regime 3 rises in late 2009 when the U.S. economy started to recover from the Lehman collapse. When the number of regimes is specified as $M=4$, Regime 1 now captures a rapid change in the growth rates from high to low, where the posterior probability of Regime 1 becomes high when the growth rate of the U.S. economy rapidly declined in the middle of the Lehman collapse. The LRT selects the model with four regimes, which capture the rapid changes in the growth rates of U.S. GDP per capita during the Lehman collapse period.
 
Panel B of Table \ref{table:parameter} and Figure \ref{fig:switching} report the parameter estimates and posterior probabilities of being in each regime for the model with switching variance, respectively. When the number of regimes is specified as $M=2$, the variance parameter estimates are very different between the two regimes, while the intercept estimates are similar, indicating that Regime 1 is the ``low volatility'' regime, while Regime 2 is the ``high volatility'' regime. When the number of regimes is specified as $M=3$, different regimes capture different states of the U.S. economy in terms of both growth rates and volatilities.\footnote{We may test the null hypothesis of $\sigma_1=\sigma_2=\sigma_3$ in the model with switching variance given $M=3$ by the standard LRT with the critical value obtained from the chi-square distribution with two degrees of freedom. With $LRTS = 2\times (-297.01+307.39)=20.76$, the null hypothesis of $\sigma_1=\sigma_2=\sigma_3$ is rejected at the 1\% significance level, suggesting that the model with switching variance is more appropriate than the model with common variance.} Regime 1 is characterized by the negative value of the intercept with high volatility, capturing a recession period. Regime 2 is characterized by the positive value of the intercept with low volatility, capturing a booming/stable economy. Regime 3 is characterized by a high value of the intercept and high variance, capturing both a rapid recovery in the growth rates and high volatility in the aftermath of the Lehman collapse in 2009. The LRT selects the model with three regimes when the model is specified with switching variance.

\section{Appendix}

Henceforth, for notational brevity, we suppress ${\bf W}_{a}^{b}$ from the conditioning variables and conditional densities when doing so does not cause confusion.

\subsection{Proof of propositions and corollaries}
 
\begin{proof}[Proof of Proposition \ref{uconlike}]
The proof is essentially identical to the proof of Lemma 2 in DMR. Therefore, the details are omitted. The only difference from DMR is (i) we do not impose Assumption (A2) of DMR, but this does not affect the proof because Assumption (A2) is not used in the proof of Lemma 2 in DMR, and (ii) we have ${\bf W}_1^n$, but our Lemma \ref{x_diff}(a) extends Corollary 1 of DMR to accommodate the $W_k$'s. Consequently, the argument of the proof of DMR goes through.
\end{proof}
 
\begin{proof}[Proof of Proposition \ref{Ln_thm1}] 

Define $h_{\vartheta k x_0} := \sqrt{l_{\vartheta k x_0}}-1$. By using the Taylor expansion of $2 \log(1+x) = 2x - x^2(1+o(1))$ for small $x$, we have, uniformly in $x_0 \in \mathcal{X}$ and $\vartheta \in \mathcal{N}_{c/\sqrt{n}}$,
\begin{equation}\label{ell_appn}
\ell_n(\psi,\pi,x_0) - \ell_n(\psi^*, \pi, x_0) = 2 \sum_{k=1}^n \log(1+h_{\vartheta k x_0}) = n P_n(2h_{\vartheta k x_0} - [1+o_p(1)] h_{\vartheta k x_0}^2).
\end{equation}
The stated result holds if we show that
\begin{align} 
&\sup_{x_0 \in \mathcal{X}} \sup_{\vartheta \in \mathcal{N}_{c/\sqrt{n}}} \left| nP_n(h_{\vartheta k x_0}^2) - n t_\vartheta' \mathcal{I}_\pi t_\vartheta/4 \right| = o_p(1)\quad\text{and}\label{hk_appn}\\
&\sup_{x_0 \in \mathcal{X}}\sup_{\vartheta \in \mathcal{N}_{c/\sqrt{n}}} | nP_n(h_{\vartheta k x_0}) - \sqrt{n} t_\vartheta' \nu_n(s_{\pi k})/2 + nt_\vartheta \mathcal{I}_\pi t_\vartheta'/8 | = o_p(1),\label{hk_appn2}
\end{align}
because then the right-hand side of (\ref{ell_appn}) is equal to $\sqrt{n} t_\vartheta' \nu_n (s_{\pi k}) - t_\vartheta \mathcal{I}_\pi t_\vartheta'/2 + o_p(1)$ uniformly in $x_0 \in \mathcal{X}$ and $\vartheta \in \mathcal{N}_{c/\sqrt{n}}$.

We first show (\ref{hk_appn}). Let $m_{\vartheta k} := t_\vartheta' s_{\pi k} + r_{\vartheta k}$, so that $l_{\vartheta k x_0} -1 = m_{\vartheta k} + u_{\vartheta k x_0}$. Observe that
\begin{equation} \label{B2}
\max_{1\leq k \leq n} \sup_{\vartheta \in \mathcal{N}_{c/\sqrt{n}}} |m_{\vartheta k}| = \max_{1\leq k \leq n} \sup_{\vartheta \in \mathcal{N}_{c/\sqrt{n}}} |t_\vartheta' s_{\pi k} + r_{\vartheta k}| = o_p(1), 
\end{equation}
from Assumptions \ref{assn_a3}(a) and (c) and Lemma \ref{max_bound}. Write $4P_n ( h_{\vartheta k x_0}^2)$ as
\begin{equation}\label{B0}
4P_n ( h_{\vartheta k x_0}^2) = P_n \left(\frac{4(l_{\vartheta k x_0}-1)^2}{(\sqrt{l_{\vartheta k x_0}} + 1)^2}\right) = P_n(l_{\vartheta k x_0}-1)^2 - P_n \left( (l_{\vartheta k x_0}-1)^3 \frac{(\sqrt{l_{\vartheta k x_0}}+3)}{(\sqrt{l_{\vartheta k x_0}} + 1)^3}\right).
\end{equation}
It follows from Assumptions \ref{assn_a3}(a)(b)(c)(e)(f) and $(E|XY|)^2 \leq E|X|^2E|Y|^2$ that, uniformly in $\vartheta \in \mathcal{N}_\varepsilon$, 
\begin{equation} \label{lk_lln}
P_n(l_{\vartheta k x_0}-1)^2 = t_\vartheta'P_n(s_{\pi k}s_{\pi k}')t_\vartheta + 2 t_\vartheta' P_n[s_{\pi k} (r_{\vartheta k} + u_{\vartheta k x_0})] + P_n(r_{\vartheta k} + u_{\vartheta k x_0})^2 = t_\vartheta'P_n(s_{\pi k}s_{\pi k}')t_\vartheta + \zeta_{\vartheta n x_0}, 
\end{equation} 
where $\zeta_{\vartheta n x_0}$ satisfies $\sup_{x_0 \in \mathcal{X}}|\zeta_{\vartheta n x_0}|=O_p(|t_\vartheta|^2|\psi-\psi^*|)+O_p(n^{-1}|t_\vartheta||\psi-\psi^*|) + O_p(n^{-1}|\psi-\psi^*|^2)$. Then, (\ref{hk_appn}) holds because $\sup_{\pi \in \Theta_\pi}| P_n(s_{\pi k}s_{\pi k}') - \mathcal{I}_\pi| = o_p(1)$ and the second term on the right-hand side of (\ref{B0}) is bounded by, from (\ref{B2}), $P_n( m_{\vartheta k}^2) = t_\vartheta'\mathcal{I}_\pi t_\vartheta + o_p(|t_\vartheta|^2)$, and Assumption \ref{assn_a3}(e),
\begin{align*}
& \mathcal{C} \sup_{x_0 \in \mathcal{X}}\sup_{\vartheta \in \mathcal{N}_{c/\sqrt{n}}} P_n \left[ |m_{\vartheta k}|^3 + 3 m_{\vartheta k}^2 |u_{\vartheta k x_0}| + 3|m_{\vartheta k}| u_{\vartheta k x_0}^2 \right] + \mathcal{C} \sup_{x_0 \in \mathcal{X}}\sup_{\vartheta \in \mathcal{N}_{c/\sqrt{n}}} P_n (|u_{\vartheta k x_0}|^3) \\
& \leq o_p(1) \sup_{x_0 \in \mathcal{X}}\sup_{\vartheta \in \mathcal{N}_{c/\sqrt{n}}} P_n \left[ m_{\vartheta k}^2+u_{\vartheta k x_0}^2 \right] + \mathcal{C} \sup_{x_0 \in \mathcal{X}}\sup_{\vartheta \in \mathcal{N}_{c/\sqrt{n}}} P_n (|u_{\vartheta k x_0}|^3) = o_p(n^{-1}).
\end{align*}

We proceed to show (\ref{hk_appn2}). Consider the following expansion of $h_{\vartheta k x_0}$:
\begin{equation} \label{hk_1}
h_{\vartheta k x_0} = (l_{\vartheta k x_0}-1)/2 - h_{\vartheta k x_0}^2/2 = (t_\vartheta' s_{\pi k} + r_{\vartheta k}+ u_{\vartheta k x_0})/2 - h_{\vartheta k x_0}^2/2.
\end{equation}
Then, (\ref{hk_appn2}) follows from (\ref{hk_appn}), (\ref{hk_1}), and Assumptions \ref{assn_a3}(d) and (e), and the stated result follows.
\end{proof}

\begin{proof}[Proof of Proposition \ref{Ln_thm2}] 
For part (a), it follows from $\log(1+x) \leq x$ and $h_{\vartheta k x_0} = (l_{\vartheta k x_0}-1)/2 - h_{\vartheta k x_0}^2/2$ (see (\ref{hk_1})) that
\begin{equation} \label{Ln_ineq}
\ell_n(\psi,\pi, x_0)-\ell_n(\psi^*,\pi, x_0) = 2 \sum_{k=1}^n \log(1+h_{\vartheta k x_0}) \leq 2 n P_n(h_{\vartheta k x_0}) = \sqrt{n} \nu_n(l_{\vartheta k x_0}-1) - n P_n(h_{\vartheta k x_0}^2).
\end{equation} 
Observe that $h_{\vartheta k x_0}^2 =(l_{\vartheta k x_0}-1)^2/(\sqrt{l_{\vartheta k x_0}} + 1)^2 \geq \mathbb{I}\{l_{\vartheta k x_0} \leq \kappa \} (l_{\vartheta k x_0}-1)^2/ (\sqrt{\kappa}+1)^2$ for any $\kappa>0$. Therefore, \begin{equation}\label{pn_lower}
P_n (h_{\vartheta k x_0}^2)\geq (\sqrt{\kappa}+1)^{-2}P_n \left( \mathbb{I}\{l_{\vartheta k x_0} \leq \kappa\} (l_{\vartheta k x_0}-1)^2\right). 
\end{equation} 
Substituting (\ref{lk_lln}) into the right-hand side of (\ref{pn_lower}) gives
\begin{equation} \label{pn_lower_2}
P_n (h_{\vartheta k x_0}^2) \geq (\sqrt{\kappa}+1)^{-2} t_\vartheta' \left[ P_n(s_{\pi k}s_{\pi k}') - P_n(\mathbb{I}\{l_{\vartheta k x_0} > \kappa\}s_{\pi k}s_{\pi k}') \right] t_\vartheta + \zeta_{\vartheta n x_0}.
\end{equation}
From H\"older's inequality, we have $P_n(\mathbb{I}\{l_{\vartheta k x_0} > \kappa\}|s_{\pi k}|^2 ) \leq [P_n(\mathbb{I}\{l_{\vartheta k x_0} > \kappa\} )]^{\delta/(2+\delta)} [ P_n(|s_{\pi k}|^{2+\delta}) ]^{2/(2+\delta)}$. The right-hand side is no larger than $\kappa^{-\delta/(2+\delta)} O_p(1)$ uniformly in $x_0 \in \mathcal{X}$ and $\vartheta \in \mathcal{N}_{\varepsilon }$ because (i) it follows from $\kappa \mathbb{I}\{l_{\vartheta k x_0} > \kappa\} \leq l_{\vartheta k x_0}$ that $P_n(\mathbb{I}\{l_{\vartheta k x_0} > \kappa\} ) \leq \kappa^{-1} P_n(l_{\vartheta k x_0})$ and $\sup_{x_0 \in \mathcal{X}} \sup_{\vartheta \in \mathcal{N}_{\varepsilon }}|P_n(l_{\vartheta k x_0})-1| = o_p(1)$ from Assumptions \ref{assn_a3}(d)--(g), and (ii) $P_n(\sup_{\pi \in \Theta_\pi}|s_{\pi k}|^{2+\delta})=O_p(1)$ from Assumption \ref{assn_a3}(a). Consequently, $\mathbb{P}( \sup_{x_0 \in \mathcal{X}} \sup_{\vartheta \in \mathcal{N}_{\varepsilon}}P_n(\mathbb{I}\{l_{\vartheta k x_0} > \kappa\}|s_{\pi k}|^2 ) \geq \lambda_{\min}/4 ) \to 0$ as $\kappa \to \infty$, and hence we can write (\ref{pn_lower_2}) as $P_n (h_{\vartheta k x_0}^2) \geq \tau (1+o_p(1)) t_\vartheta' \mathcal{I}_{\pi} t_\vartheta + O_p(|t_\vartheta|^2|\psi-\psi^*|) + O_p(n^{-1})$ for $\tau:=(\sqrt{\kappa}+1)^{-2}/2>0$ by taking $\kappa$ sufficiently large. Because $\sqrt{n} \nu_n(l_{\vartheta k x_0}-1) = \sqrt{n} t_\vartheta'\nu_n (s_{\pi k}) + O_p(1)$ from Assumptions \ref{assn_a3}(d) and (e), it follows from (\ref{Ln_ineq}) that, uniformly in $x_0 \in \mathcal{X}$ and $\vartheta \in \mathcal{N}_{\varepsilon}$,
\begin{equation}\label{rk_lower2}
-\eta \leq \ell_n(\psi,\pi, x_0)-\ell_n(\psi^*,\pi, x_0) \leq \sqrt{n} t_\vartheta' \nu_n (s_{\pi k}) - \tau (1+o_p(1)) n t_\vartheta' \mathcal{I}_\pi t_\vartheta + O_p(n |t_\vartheta|^2 |\psi-\psi^*| ) + O_p(1).
\end{equation}
Let $T_{n}:= \mathcal{I}_{\pi}^{1/2}\sqrt{n} t_\vartheta$. From (\ref{rk_lower2}), Assumptions \ref{assn_a3}(b) and (g), and the fact $\psi-\psi^* \to 0$ if $t_\vartheta \to 0$, we obtain the following result: for any $\delta>0$, there exist $\varepsilon>0$ and $M,n_0<\infty$ such that
\begin{equation}
\mathbb{P}\left(\inf_{x_0 \in \mathcal{X}} \inf_{\vartheta \in \mathcal{N}_{\varepsilon}} \left( |T_n| M - (\tau/2) |T_n|^2 + M \right) \geq 0\right) \geq 1-\delta,\ \text{ for all }\ n > n_0.
\end{equation}
Rearranging the terms inside $\mathbb{P}(\cdot)$ gives $\sup_{x_0 \in \mathcal{X}} \sup_{\vartheta \in \mathcal{N}_{\varepsilon}} (|T_{n}|-(M/\tau))^2 \leq 2M/\tau+(M/\tau)^2$. Taking its square root gives $\mathbb{P}(\sup_{x_0 \in \mathcal{X}} \sup_{\vartheta \in \mathcal{N}_{\varepsilon}}|T_{n}| \leq M_1) \geq 1-\delta$ for a constant $M_1$, and part (a) follows. Part (b) follows from part (a) and Proposition \ref{Ln_thm1}.
\end{proof}

\begin{proof}[Proof of Corollary \ref{cor_appn}]
Because the logarithm is monotone, we have $\inf_{x_0 \in \mathcal{X}} \ell_n(\psi,\pi,x_0) \leq \ell_n(\psi,\pi,\xi) \leq \sup_{x_0 \in \mathcal{X}} \ell_n(\psi,\pi,x_0)$. Part (a) then follows from Proposition \ref{Ln_thm1}. For part (b), note that we have $\vartheta \in A_{n\varepsilon }(\xi,\eta)$ only if $\vartheta \in A_{n\varepsilon }(x_0,\eta)$ for some $x_0$. Consequently, part (b) follows from Proposition \ref{Ln_thm2}.
\end{proof}

\begin{proof}[Proof of Proposition \ref{proposition-ell}]
First, observe that parts (a) and (b) hold when the right-hand side is replaced with $K_j(k+m)^7\rho^{\lfloor(k+m-1)/24\rfloor}$ and $K_j(k+m)^7\rho^{\lfloor (k+m-1)/1340 \rfloor}$ by using Lemmas \ref{ell_lambda} and \ref{lemma-bound-1} and noting that $q_1=6q_0, q_2=5q_0, q_3=4q_0,\ldots, q_6=q_0$. For example, when $j=2$, we can bound $\sup_{x\in\mathcal{X}}\sup_{\vartheta\in \mathcal{N}^*}|\nabla^2\ell_{k,m,x}(\vartheta) - \nabla^j2\overline\ell_{k,m}(\vartheta)|$ from $\nabla^2 \ell_{k,m,x}(\vartheta)=\Delta^2_{1,k,m,x}(\vartheta)+\Delta^{1,1}_{2,k,m,x}(\vartheta)$, $\sup_{x\in \mathcal{X}} \sup_{\vartheta\in \mathcal{N}^*} |\Delta^{\mathcal{I}(j)}_{j,k,m,x}(\vartheta)-\overline\Delta^{\mathcal{I}(j)}_{j,k,m}(\vartheta)|\leq {K}_{\mathcal{I}(j)}(k+m)^{ 7 } \rho^{ \lfloor (k+m-1)/24 \rfloor}$, ${K}_{\mathcal{I}(j)} \in L^{r_{\mathcal{I}(j)}}(\mathbb{P}_{\vartheta^*})$, $r_{(2)} = q_{2} = 5q_0$, and $r_{(1,1)} = q_{1}/2 = 3q_0$. Second, letting $\rho_* = \rho^{1/1340}\mathbb{I}\{\rho>0\}$ and redefining $K_j$ gives parts (a) and (b). Parts (c) and (d) follow from Lemmas \ref{ell_lambda} and \ref{lemma-bound-1}. 
\end{proof}

\begin{proof}[Proof of Proposition \ref{lemma-omega}]
First, we prove part (a). The proof of part (b) is essentially the same as that of part (a) and hence omitted. Observe that
\begin{equation*} 
\begin{aligned}
\nabla l^j_{k,m,x}(\vartheta)- \nabla \overline l^j_{k,m}(\vartheta)& = \Psi^j_{k,m,x}(\vartheta)\left(\frac{ p_{\vartheta}(Y_k|\overline{\bf{Y}}_{-m}^{k-1},X_{-m}=x)}{ p_{\vartheta^*}(Y_k|\overline{\bf{Y}}_{-m}^{k-1},X_{-m}=x)}-\frac{ \overline p_{\vartheta}(Y_k|\overline{\bf{Y}}_{-m}^{k-1})}
{\overline p_{\vartheta^*}(Y_k|\overline{\bf{Y}}_{-m}^{k-1})}\right) \\
&\quad +\frac{\overline p_{\vartheta}(Y_k|\overline{\bf{Y}}_{-m}^{k-1})}
{\overline p_{\vartheta^*}(Y_k|\overline{\bf{Y}}_{-m}^{k-1})}\left(\Psi^j_{k,m,x}(\vartheta)- \overline{\Psi}^j_{k,m}(\vartheta)
\right),
\end{aligned}
\end{equation*}
where
\begin{equation*} 
\Psi^j_{k,m,x}(\vartheta) := \frac{\nabla^j p_{\vartheta}(Y_k|\overline{\bf{Y}}_{-m}^{k-1},X_{-m}=x)}{ p_{\vartheta}(Y_k|\overline{\bf{Y}}_{-m}^{k-1},X_{-m}=x)}, \quad \overline\Psi^j_{k,m}(\vartheta) := \frac{\nabla^j \overline p_{\vartheta}(Y_k|\overline{\bf{Y}}_{-m}^{k-1})}{\overline p_{\vartheta}(Y_k|\overline{\bf{Y}}_{-m}^{k-1})}.
\end{equation*}
In view of Lemma \ref{lemma-ratio} and H\"older's inequality, part (a) holds if, for $j=1,\ldots,6$, there exist random variables $(\{A_{j,k}\}_{k=1}^n, B_j )\in L^{q_0}(\mathbb{P}_{\vartheta^*})$ and $\rho_* \in (0,1)$ such that, for all $1 \leq k \leq n$ and $m \geq 0$, 
\begin{equation} \label{Psi_bound}
(A) \quad \sup_{m \geq 0}\sup_{x\in\mathcal{X}}\sup_{\vartheta\in \mathcal{N}^*} | \Psi^j_{k,m,x}(\vartheta)| \leq A_{j,k}, \quad (B) \quad \sup_{x\in\mathcal{X}}\sup_{\vartheta\in \mathcal{N}^*} |\Psi^j_{k,m,x}(\vartheta) - \overline\Psi^j_{k,m}(\vartheta)| \leq B_j (k+m)^{ 7 }\rho_*^{k+m-1}.
\end{equation}
We show (A) and (B). From (\ref{der_formula}) we have, suppressing $(\vartheta)$ and superscript $1$ from $\nabla^1 \ell_{k,m,x}$,
\begin{equation*} 
\begin{aligned}
\Psi^1_{k,m,x}&= \nabla\ell_{k,m,x},\quad \Psi^2_{k,m,x}= \nabla^2\ell_{k,m,x} +(\nabla\ell_{k,m,x} )^2,\\
\Psi^3_{k,m,x}&=\nabla^3\ell_{k,m,x} +3\nabla^2\ell_{k,m,x}
\nabla \ell_{k,m,x}
+(\nabla \ell_{k,m,x})^3,\\
\Psi^4_{k,m,x}&=\nabla^4\ell_{k,m,x}+4 \nabla^3 \ell_{k,m,x} \nabla \ell_{k,m,x}+3(\nabla^2\ell_{k,m,x})^2 +6 \nabla^2 \ell_{k,m,x} (\nabla\ell_{k,m,x})^2+ (\nabla\ell_{k,m,x})^4,\\
\Psi^5_{k,m,x} &= \nabla^5 \ell_{k,m,x}+5 \nabla^4 \ell_{k,m,x} \nabla \ell_{k,m,x}+10\nabla^3 \ell_{k,m,x} \nabla^2\ell_{k,m,x}+10\nabla^3 \ell_{k,m,x}(\nabla\ell_{k,m,x})^2 \\
& \quad +15(\nabla^2 \ell_{k,m,x})^2 \nabla \ell_{k,m,x} +10\nabla^2\ell_{k,m,x}(\nabla \ell_{k,m,x})^3+ (\nabla\ell_{k,m,x})^5,\\
\Psi^6_{k,m,x} & = \nabla^6 \ell_{k,m,x} + 6 \nabla^5 \ell_{k,m,x} \nabla \ell_{k,m,x} + 15 \nabla^4 \ell_{k,m,x} \nabla^2 \ell_{k,m,x} + 15\nabla^4 \ell_{k,m,x} (\nabla \ell_{k,m,x})^2 \\
& \quad +10( \nabla^3 \ell_{k,m,x} )^2 +60 \nabla^3 \ell_{k,m,x} \nabla^2 \ell_{k,m,x} \nabla \ell_{k,m,x} +20\nabla^3\ell_{k,m,x} (\nabla \ell_{k,m,x})^3 \\
& \quad +15( \nabla^2 \ell_{k,m,x} )^3 +45(\nabla^2 \ell_{k,m,x})^2( \nabla \ell_{k,m,x})^2 +15\nabla^2 \ell_{k,m,x}(\nabla \ell_{k,m,x})^4 + (\nabla\ell_{k,m,x})^6,
\end{aligned}
\end{equation*}
and $\overline\Psi^j_{k,m}$ is written analogously with $\nabla^j \overline\ell_{k,m}$ replacing $\nabla^j \ell_{k,m,x}$. Therefore, (A) of (\ref{Psi_bound}) follows from Proposition \ref{proposition-ell}(c) and H\"older's inequality. (B) of (\ref{Psi_bound}) follows from Proposition \ref{proposition-ell}(a)(c), the relation $ab-cd = a(b-c)-c(a-d)$, $a^n - b^n = (a-b)\sum_{i=0}^{n-1} (a^{n-1-i}b^i)$, and H\"older's inequality.

For part (c), the bound of $\nabla l^j_{k,m,x}(\vartheta)$ follows from writing $\nabla l^j_{k,m,x}(\vartheta) = [\overline p_{\vartheta}(Y_k|\overline{\bf{Y}}_{-m}^{k-1},X_{-m}=x)/ \overline p_{\vartheta^*}(Y_k|\overline{\bf{Y}}_{-m}^{k-1},X_{-m}=x) ] \Psi^j_{k,m,x}(\vartheta)$ and using (\ref{Psi_bound}) and Lemma \ref{lemma-ratio}. $\overline\nabla^j l_{k,m}(\vartheta)$ is bounded by a similar argument. Part (d) follows from parts (a)--(c), the completeness of $L^{q}(\mathbb{P}_{\vartheta^*})$, Markov's inequality, and the Borel--Cantelli Lemma. Part (e) follows from combining parts (a) and (b) and letting $m' \to \infty$ in part (b).
\end{proof}

\begin{proof}[Proof of Proposition \ref{P-consist}]
The consistency of $\hat \vartheta_1$ follows from Theorem 2.1 of \citet{neweymcfadden94hdbk} because (i) $\vartheta_1^*$ uniquely maximizes $\mathbb{E}_{\vartheta_1^*} \log f(Y_1|\overline{\bf Y}_0,W_1;\gamma,\theta)$ from Assumption \ref{A-consist}(c), and (ii) $\sup_{\vartheta_1 \in \Theta_1}|n^{-1}\ell_{0,n}(\vartheta_1) - \mathbb{E}_{\vartheta_1^*} \log f(Y_1|\overline{\bf Y}_0,W_1;\gamma,\theta)| \overset{p}{\rightarrow} 0$ and $\mathbb{E}_{\vartheta_1^*} \log f(Y_1|\overline{\bf Y}_0,W_1;\gamma,\theta)$ is continuous because $(Y_k,W_k)$ is strictly stationary and ergodic from Assumption \ref{assn_a1}(e) and $\mathbb{E}_{\vartheta_1^*} \sup_{\vartheta_1 \in \Theta_1}|\log f(Y_1|\overline{\bf Y}_0,W_1;\gamma,\theta)| <\infty$ from Assumption \ref{assn_a2}(c).
 
We proceed to prove the consistency of $\hat\vartheta_2$. Define, similarly to pp.\ 2265--6 in DMR, $\Delta_{k,m,x}(\vartheta_2):= \log p_{\vartheta_{2}}(Y_k |\overline{\bf{Y}}_{-m}^{k-1},{\bf W}_{-m}^k,X_{-m}=x)$, $\Delta_{k,m}(\vartheta_2):= \log p_{\vartheta_{2}}(Y_k |\overline{\bf{Y}}_{-m}^{k-1},{\bf W}_{-m}^k)$, $\Delta_{k,\infty}(\vartheta_2):=\lim_{m\to \infty} \Delta_{k,m}(\vartheta_2)$, and $\ell(\vartheta_2):=\mathbb{E}_{\vartheta_1^*}[\Delta_{0,\infty}(\vartheta)]$. Observe that Lemmas 3 and 4 as well as Proposition 2 of DMR hold for our $\{\Delta_{k,m,x}(\vartheta_2),\Delta_{k,m}(\vartheta_2),\Delta_{k,\infty}(\vartheta_2), \ell_n(\vartheta_2,x_0),\ell(\vartheta_2)\}$ under our assumptions because (i) our Assumption \ref{assn_a1}(e) can replace their Assumption (A2) in the proof of their Lemmas 3 and 4 and Proposition 2, and (ii) our Lemma \ref{x_diff}(a) extends Corollary 1 of DMR to accommodate the $W_k$'s. It follows that (i) $\ell(\vartheta_2)$ is maximized if and only if $\vartheta_2 \in \Gamma^*$ from Assumption \ref{A-consist}(d) because $\mathbb{E}_{\vartheta^*_{1}}[ \log p_{\vartheta_{2}}(Y_1 |\overline{\bf{Y}}_{-m}^0,{\bf W}_{-m}^1)]$ converges to $\ell(\vartheta_2)$ uniformly in $\vartheta_2$ as $m \to \infty$ from Lemma 3 of DMR and the dominated convergence theorem, (ii) $\ell(\vartheta_2)$ is continuous from Lemma 4 of DMR, and (iii) $\sup_{\xi_2}\sup_{\vartheta_2 \in \Theta_2} |n^{-1} \ell_n(\vartheta_2,\xi_2) - \ell(\vartheta_2)| \overset{p}{\rightarrow} 0$ holds from Proposition 2 of DMR and $\ell_n(\vartheta_2,\xi_2) \in [\min_{x_0} \ell_n(\vartheta_2,x_0), \max_{x_0} \ell_n(\vartheta_2,x_0)]$. Consequently, $\inf_{\vartheta_2\in\Gamma^*}|\hat\vartheta_2-\vartheta_2|\overset{p}{\rightarrow} 0$ follows from Theorem 2.1 of \citet{neweymcfadden94hdbk} with an adjustment for the fact that the maximizer of $\ell(\vartheta_2)$ is a set, not a singleton.
\end{proof} 
 
\begin{proof}[Proof of Proposition \ref{P-quadratic}]

We prove the stated result by applying Corollary \ref{cor_appn} to $l_{\vartheta k x_0} -1$ with $l_{\vartheta k x_0}$ defined in (\ref{density_ratio}). Because the first and second derivatives of $l_{\vartheta k x_0} -1$ play the role of the score, we expand $l_{\vartheta k x_0} -1$ with respect to $\psi$ up to the third order. Let $q = \dim(\psi)$. For a $k\times 1$ vector $a$, define $a^{\otimes p} := a \otimes a\otimes \cdots \otimes a$ ($p$ times) and $\nabla_{a^{\otimes p}} :=\nabla_{a} \otimes \nabla_{a}\otimes \cdots \otimes \nabla_{a}$ ($p$ times). Recall that the $(p+1)$-th order Taylor expansion of $f(x)$ with $x \in \mathbb{R}^q$ around $x=x^*$ is given by
\begin{align*}
f(x) & = f(x^*) + \sum_{j=1}^p \frac{1}{j!} \nabla_{(x^{\otimes j})'} f(x^*) (x-x^*)^{\otimes j} + \frac{1}{(p+1)!} \nabla_{(x^{\otimes (p+1)})'} f(\overline{x}) (x-x^*)^{\otimes (p+1)},
\end{align*}
where $\overline{x}$ lies between $x$ and $x^*$, and $\overline{x}$ may differ from element to element of $\nabla_{x^{\otimes (p+1)}} f(\overline{x})$.

Choose $\epsilon>0$ sufficiently small so that $\mathcal{N}_{\epsilon}$ is a subset of $\mathcal{N}^*$ in Assumption \ref{assn_a4}. For $m \geq 0$ and $j =1,2,\ldots$, let 
\[
\Lambda^j _{k,m,x_{-m}}(\psi,\pi):= \frac{\nabla_{\psi^{\otimes j}} p_{\psi \pi} (Y_k| \overline{{\bf Y}}_{-m}^{k-1}, x_{-m})}{ j! p_{\psi^* \pi} (Y_k| \overline{{\bf Y}}_{-m}^{k-1}, x_{-m})}, \quad \Lambda^j_{k,m}(\psi, \pi):= \frac{\nabla_{\psi^{\otimes j}} p_{\psi \pi} (Y_k| \overline{{\bf Y}}_{-m}^{k-1})}{ j! p_{\psi^* \pi} (Y_k| \overline{{\bf Y}}_{-m}^{k-1})},
\]
and $\Delta \psi:= \psi -\psi^*$. With this notation, expanding $l_{\vartheta k x_0} -1$ three times around $\psi^*$ while fixing $\pi$ gives, with $\overline \psi \in [\psi, \psi^*]$, 
\begin{align} 
l_{\vartheta k x_0} -1 &= \Lambda^1_{k,0,x_0}(\psi^*,\pi)' \Delta \psi + \Lambda^2_{k,0,x_0}(\psi^*,\pi)' (\Delta \psi)^{\otimes 2} + \Lambda^3_{k,0,x_0}(\overline \psi,\pi)' (\Delta \psi)^{\otimes 3} \nonumber \\
 & = \Lambda^1_{k,0}(\psi^*,\pi)' \Delta \psi + \Lambda^2_{k,0}(\psi^*,\pi)' (\Delta \psi)^{\otimes 2} + \Lambda^3_{k,0}(\overline \psi,\pi)' (\Delta \psi)^{\otimes 3} + u_{kx_0}(\psi,\pi), \label{lk_expansion}
\end{align}
where $\overline \psi$ may differ from element to element of $\Lambda^3_{k,0,x_0}(\overline \psi,\pi)$, and $u_{kx_0}(\psi,\pi) := \sum_{j=1}^2 [\Lambda^j_{k,0,x_0}(\psi^*,\pi)- \Lambda^j_{k,0}(\psi^*,\pi)]' (\Delta \psi)^{\otimes j} + [\Lambda^3_{k,0,x_0}(\overline \psi,\pi) -\Lambda^3_{k,0}(\overline \psi,\pi)]' (\Delta \psi)^{\otimes 3}$.

Noting that $\nabla_\lambda \overline p_{\psi^* \pi} (Y_k| \overline{{\bf Y}}_{0}^{k-1})=0$ and $\nabla_{\lambda\eta'} \overline p_{\psi^* \pi} (Y_k| \overline{{\bf Y}}_{0}^{k-1})=0$ from (\ref{d1p}), we may rewrite (\ref{lk_expansion}) as
\begin{equation} \label{lk_2}
l_{k\vartheta x_0} -1 = t(\psi,\pi)' s_{\varrho k} + r_{k,0}(\psi,\pi) + u_{kx_0}(\psi,\pi), 
\end{equation}
where $s_{\varrho k}$ is defined in (\ref{score}), $r_{k,0}(\psi,\pi) := \widetilde \Lambda_{k,0}(\pi)' (\Delta \eta)^{\otimes 2} + \Lambda^3_{k,0}(\overline \psi,\pi)' (\Delta \psi)^{\otimes 3}$,
where $\widetilde \Lambda_{k,0}(\pi)$ denotes the part of $\Lambda^2_{k,0}(\psi^*,\pi)$ corresponding to $(\Delta \eta)^{\otimes 2}$.

For $m \geq 0$, define $v_{k, m}(\vartheta):= (\Lambda^1_{k, m}( \psi,\pi)', \Lambda^2_{k, m}( \psi,\pi)',\Lambda^3_{k, m}( \psi,\pi)')'$, and define $v_{k,\infty}(\vartheta):= \lim_{m\rightarrow \infty} v_{k,m}(\vartheta)$. In order to apply Corollary \ref{cor_appn} to $l_{\vartheta k x_0} -1$, we first show
\begin{align} 
&\sup_{\vartheta \in \mathcal{N}_\epsilon}\left|P_n [v_{k,0}(\vartheta)v_{k,0}(\vartheta)'] - \mathbb{E}_{\vartheta^*}[v_{k,\infty}(\vartheta)v_{k,\infty}(\vartheta)']\right| = o_p(1), \label{uniform_lln} \\
&\nu_n(v_{k,0}(\vartheta)) \Rightarrow W(\vartheta), \label{weak_cgce}
\end{align}
where $W(\vartheta)$ is a mean-zero continuous Gaussian process with $\mathbb{E}_{\vartheta^*}[W(\vartheta_1)W(\vartheta_2)'] = \mathbb{E}_{\vartheta^*}[v_{k,\infty}(\vartheta_1)v_{k,\infty}(\vartheta_2)']$. (\ref{uniform_lln}) holds because $\sup_{\vartheta \in \mathcal{N}_\epsilon}P_n[v_{k,0}(\vartheta)v_{k,0}(\vartheta)' - v_{k,\infty}(\vartheta)v_{k,\infty}(\vartheta)'] = o_p(1)$ from Proposition \ref{lemma-omega}, and $v_{k,\infty}(\vartheta)v_{k,\infty}(\vartheta)'$ satisfies a uniform law of large numbers (Lemma 2.4 and footnote 18 of \citet{neweymcfadden94hdbk}) because $v_{k,\infty}(\vartheta)$ is continuous in $\vartheta$ from the continuity of $\nabla^j l_{k,m,x}(\vartheta)$ and Proposition \ref{lemma-omega}, and $\mathbb{E}_{\vartheta^*}\sup_{\vartheta \in \mathcal{N}_{\epsilon}}|v_{k,\infty}(\vartheta)|^{2} < \infty$ from Proposition \ref{lemma-omega}. (\ref{weak_cgce}) holds because $\sup_{\vartheta \in \mathcal{N}_\epsilon}\nu_n(v_{k,0}(\vartheta) - v_{k,\infty}(\vartheta)) = o_p(1)$ from Proposition \ref{lemma-omega} and $\nu_n(v_{k,\infty}(\vartheta)) \Rightarrow W(\vartheta)$ from Theorem 10.2 of \citet{pollard90book} because (i) the space of $\vartheta$ is totally bounded, (ii) the finite dimensional distributions of $\nu_n(v_{k,\infty}(\cdot))$ converge to those of $W(\cdot)$ from a martingale CLT because $v_{k,\infty}(\vartheta)$ is a stationary $L^2(\mathbb{P}_{\vartheta^*})$ martingale difference sequence for all $\vartheta\in\mathcal{N}_{\epsilon}$ from Proposition \ref{lemma-omega}, and (iii) $\{\nu_n(v_{k,\infty}(\cdot)): n\geq 1\}$ is stochastically equicontinuous from Theorem 2 of \citet{hansen96et} because $v_{k,\infty}(\vartheta)$ is Lipschitz continuous in $\vartheta$ and both $v_{k,\infty}(\vartheta)$ and the Lipschitz coefficient are in $L^q(\mathbb{P}_{\vartheta^*})$ with $q > \text{dim}(\vartheta)$ from Proposition \ref{lemma-omega}.

We proceed to show that the terms on the right-hand side of (\ref{lk_2}) satisfy Assumptions \ref{assn_a3}(a)--(g). Observe that $t(\psi,\pi) = 0$ if and only if $\psi=\psi^*$. First, $s_{\varrho k}$ satisfies Assumptions \ref{assn_a3}(a), (b), and (g) by Proposition \ref{lemma-omega}, (\ref{uniform_lln}), (\ref{weak_cgce}), and Assumption \ref{A-nonsing1}. Second, $r_{k,0}(\psi,\pi)$ satisfies Assumptions \ref{assn_a3}(c) and (d) from Proposition \ref{lemma-omega} and (\ref{weak_cgce}). Third, $u_{k x_0}(\psi,\pi)$ satisfies Assumptions \ref{assn_a3}(e) and (f) from Proposition \ref{lemma-omega}(c). Therefore, the stated result follows from Corollary \ref{cor_appn}(b). 
\end{proof}

\begin{proof}[Proof of Proposition \ref{P-LR}] 
The proof is similar to that of Proposition 3 of \citet{kasaharashimotsu15jasa}. Let $t_{\eta}: = \eta - \eta^*$ and $t_{\lambda } := \alpha(1-\alpha)v(\lambda)$, so that $t(\psi,\pi) = (t_{\eta}',t_{\lambda}')'$. Let $\hat \psi_\pi := \arg\max_{\psi \in \Theta_\psi} \ell_n(\psi,\pi,\xi)$ denote the MLE of $\psi$, and split $t(\hat\psi_\pi,\pi)$ as $t(\hat\psi_\pi,\pi)= (\hat t_{\eta}',\hat t_{\lambda}')'$, where we suppress the dependence of $\hat t_{\eta}$ and $\hat t_{\lambda}$ on $\pi$. Define $G_{\varrho n} := \nu_n (s_{\varrho k})$. Let
\[
G_{\varrho n} = \begin{bmatrix}
G_{\eta n} \\
G_{\lambda \varrho n}
\end{bmatrix}, \quad
\begin{aligned}
G_{\lambda. \eta \varrho n} &:= G_{\lambda \varrho n} - \mathcal{I}_{\lambda \eta\varrho}\mathcal{I}_{\eta }^{-1}G_{\eta n}, \quad Z_{\lambda. \eta \varrho n} := \mathcal{I}_{\lambda.\eta \varrho}^{-1}G_{\lambda.\eta \varrho n},\\
t_{\eta.\lambda \varrho} &:= t_{\eta} + \mathcal{I}_{\eta }^{-1}\mathcal{I}_{\eta\lambda \varrho} t_{\lambda}.
\end{aligned}
\]
Then, we can write (\ref{ln_appn}) as
\begin{equation} \label{LR_appn}
\sup_{\xi \in \Xi}\sup_{\vartheta \in A_{n\varepsilon c}(\xi) } \left| 2 \left[\ell_n(\psi,\pi,\xi) - \ell_n(\psi^*,\pi,\xi) \right] - A_n(\sqrt{n} t_{\eta.\lambda \varrho}) - B_{\varrho n}(\sqrt{n} t_{\lambda}) \right| =o_p(1),
\end{equation}
where 
\begin{equation} \label{B_pi}
\begin{aligned}
A_n(t_{\eta.\lambda \varrho}) & = 2t_{\eta.\lambda \varrho}'G_{\eta n} - t_{\eta.\lambda \varrho}'\mathcal{I}_{\eta}t_{\eta.\lambda \varrho}, \\
B_{\varrho n}(t_{\lambda}) &= 2t_{\lambda}' G_{\lambda.\eta \varrho n} - t_{\lambda}' \mathcal{I}_{\lambda.\eta \varrho} t_{\lambda} = Z_{\lambda \varrho n}' \mathcal{I}_{\lambda.\eta \varrho} Z_{\lambda \varrho n}- (t_{\lambda} - Z_{\lambda \varrho n})'\mathcal{I}_{\lambda.\eta \varrho}(t_{\lambda} - Z_{\lambda \varrho n}).
\end{aligned}
\end{equation}
Observe that $2[\ell_{0n}(\hat\vartheta_0) - \ell_{0n}(\vartheta_0^*) ] = \max_{t_\eta}[2 \sqrt{n} t_{\eta}' G_{\eta n} - n t_{\eta}' \mathcal{I}_\eta t_{\eta}] + o_p(1) = \max_{t_{\eta.\lambda \varrho}} A_n(\sqrt{n} t_{\eta.\lambda \varrho}) + o_p(1)$ from applying Corollary \ref{cor_appn} to $\ell_{0n}(\vartheta_0)$ and noting that the set of possible values of both $\sqrt{n} t_\eta$ and $\sqrt{n}{t}_{\eta.\lambda \varrho}$ approaches $\mathbb{R}^{\dim(\eta)}$. In conjunction with (\ref{LR_appn}), we obtain, uniformly in $\pi\in \Theta_{\pi}$,
\begin{equation} \label{LR_appn2}
2[\ell_n( \hat \psi_\pi , \pi,\xi) - \ell_{0n}(\hat\vartheta_0)] = B_{\varrho n}(\sqrt{n} \hat{t}_{\lambda}) + o_p(1). 
\end{equation}
Define $\tilde t_\lambda$ by $B_{\varrho n}(\sqrt{n} \tilde{t}_{\lambda}) = \max_{t_\lambda \in \alpha(1-\alpha)v(\Theta_\lambda)} B_{\varrho n}(\sqrt{n} t_{\lambda})$. Then, we have
\[
2[\ell_n(\hat \psi_\pi, \pi,\xi) - \ell_{0n}(\hat\vartheta_0)] = B_{\varrho n}(\sqrt{n} \tilde{t}_{\lambda}) + o_p(1),
\] 
uniformly in $\pi \in \Theta_{\pi}$ because (i) $B_{\varrho n}(\sqrt{n} \tilde{t}_{\lambda}) \geq 2[\ell_n(\hat \psi_\pi, \pi,\xi) - \ell_{0n}(\hat\vartheta_0)] + o_p(1)$ from the definition of $\tilde{t}_{\lambda}$ and (\ref{LR_appn2}), and (ii) $2[\ell_n(\hat \psi_\pi, \pi,\xi) - \ell_{0n}(\hat\vartheta_0)] \geq B_{\varrho n}(\sqrt{n} \tilde{t}_{\lambda}) + o_p(1)$ from the definition of $\hat \psi$, (\ref{LR_appn}), and $\tilde{t}_{\lambda}=O_p(n^{-1/2})$.

Finally, the asymptotic distribution of $\sup_{\varrho}B_{\varrho n}(\sqrt{n} \tilde{t}_{\lambda})$ follows from applying Theorem 1(c) of \citet{andrews01em} to $B_{\varrho n}(\sqrt{n} \tilde{t}_{\lambda})$. First, Assumption 2 of \citet{andrews01em} holds trivially for $B_{\varrho n}(\sqrt{n} \tilde{t}_{\lambda})$. Second, Assumption 3 of \citet{andrews01em} is satisfied by (\ref{weak_cgce}) and Assumption \ref{A-nonsing1}. Assumption 4 of \citet{andrews01em} is satisfied by Proposition \ref{P-quadratic}. Assumption $5^*$ of \citet{andrews01em} holds with $B_T=n^{1/2}$ because $\alpha(1-\alpha)v(\Theta_\lambda)$ is locally equal to the cone $v(\mathbb{R}^q)$ given that $\alpha(1-\alpha)>0$ for all $\alpha\in\Theta_{\alpha}$. Therefore, $\sup_{\varrho\in \Theta_{\varrho}}B_{\varrho n}(\sqrt{n}\tilde{t}_{\lambda}) \overset{d}{\rightarrow} \sup_{\varrho\in \Theta_{\varrho}}(\tilde{t}_{\lambda \varrho}'{\mathcal{I}}_{\lambda.\eta \varrho}\tilde{t}_{\lambda \varrho})$ follows from Theorem 1(c) of \citet{andrews01em}.
\end{proof}

\begin{proof}[Proof of Proposition \ref{P-quadratic-N1}]
The proof is similar to that of Proposition \ref{P-quadratic}. Define $\Lambda^j_{k,m,x_{-m}}(\psi,\pi)$ and $\Lambda^j_{k,m}(\psi, \pi)$ as in the proof of Proposition \ref{P-quadratic}. Expanding $l_{k\vartheta x_0} -1$ five times around $\psi^*$ similarly to (\ref{lk_expansion}) while fixing $\pi$ gives, with $\overline \psi \in [\psi, \psi^*]$, 
\begin{align} 
l_{k\vartheta x_0} -1 & = \sum_{j=1}^4 \Lambda^j_{k,0}(\psi^*,\pi)' (\Delta \psi)^{\otimes j} + \Lambda^5_{k,0}(\overline \psi,\pi)' (\Delta \psi)^{\otimes 5} + u_{kx_0}(\psi,\pi), \label{lk_expansion_normal}
\end{align}
where $u_{kx_0}(\psi,\pi) := \sum_{j=1}^4 [\Lambda^j_{k,0,x_0}(\psi^*,\pi)- \Lambda^j_{k,0}(\psi^*,\pi)]'(\Delta \psi)^{\otimes j}+ [\Lambda^5_{k,0,x_0}(\overline \psi,\pi) - \Lambda^5_{k,0}(\overline \psi,\pi)]' (\Delta \psi)^{\otimes 5}$.

Define $\overline p_{\psi \pi k,0} := \overline p_{\psi \pi} (Y_k| \overline{{\bf Y}}_{0}^{k-1})$. Observe that $s_{\varrho k}$ defined in (\ref{score_normal}) satisfies
\begin{equation*} 
s_{\varrho k}
:=
\begin{pmatrix}
\nabla_{\eta} \overline p_{\psi^* \pi k, 0} / \overline p_{\psi^* \pi k, 0} \\
\zeta_{k, 0}(\varrho)/2 \\
\nabla_{\lambda_\mu\lambda_\sigma} \overline p_{\psi^* \pi k, 0} / \alpha(1-\alpha)\overline p_{\psi^* \pi k, 0} \\
\nabla_{\lambda_\sigma^2} \overline p_{\psi^* \pi k, 0} / 2\alpha(1-\alpha) \overline p_{\psi^* \pi k, 0} \\
\nabla_{\lambda_\beta\lambda_\mu} \overline p_{\psi^* \pi k, 0} / \alpha(1-\alpha)\overline p_{\psi^* \pi k, 0} \\
\nabla_{\lambda_\beta\lambda_\sigma} \overline p_{\psi^* \pi k, 0} / \alpha(1-\alpha)\overline p_{\psi^* \pi k, 0} \\
V(\nabla_{\lambda_\beta \lambda_\beta} \overline p_{\psi^* \pi k, 0})/ \alpha(1-\alpha) \overline p_{\psi^* \pi k, 0} 
\end{pmatrix}.
\end{equation*}
Noting that $\nabla_\lambda \overline p_{\psi^* \pi} (Y_k| \overline{{\bf Y}}_{0}^{k-1})=0$ and $\nabla_{\lambda\eta'} \overline p_{\psi^* \pi} (Y_k| \overline{{\bf Y}}_{0}^{k-1})=0$ from (\ref{d1p}) and (\ref{d2p0}), we may rewrite (\ref{lk_expansion_normal}) as, with $t(\psi,\pi)$ and $s_{\varrho k}$ defined in (\ref{t-psi2}) and (\ref{score_normal}),
\begin{equation} \label{lk_2_normal}
l_{\vartheta k x_0} -1 = t(\psi,\pi)' s_{\varrho k} + r_{k,0}(\pi) + u_{kx_0}(\psi,\pi),
\end{equation} 
where $r_{k,0}(\pi):= \widetilde \Lambda_{k,0}(\pi) '\tau(\psi) + \Lambda^5_{k,0}(\overline \psi,\pi) ' (\Delta \psi)^{\otimes 5 } + \lambda_\mu^4[\nabla_{\lambda_\mu^4} \overline p_{\psi^* \pi k,0}- b(\alpha)\nabla_{\lambda_\sigma^2} \overline p_{\psi^* \pi k,0}]/4! \overline p_{\psi^* \pi k,0}$, $\tau(\psi)$ is the vector that collects the elements of $\{(\Delta \psi)^{\otimes j} \}_{j=2}^4$ that are not in $t(\psi,\pi)$, and $\widetilde \Lambda_{k,0}(\pi)$ denotes the vector of the corresponding elements of $\{\Lambda^j_{k,0}(\psi^*,\pi)\}_{j=2}^4$. 
 
The stated result follows from Corollary \ref{cor_appn} if the terms on the right-hand side of (\ref{lk_2_normal}) satisfy Assumption \ref{assn_a3}. Similarly to the proof of Proposition \ref{P-LR}, define $v_{k, m}(\vartheta):=(\zeta_{k,m}(\varrho), \Lambda^1_{k, m}( \psi,\pi)', \ldots, \Lambda^5_{k, m}( \psi,\pi)' )'$. 
Note that $\zeta_{k,m}(\varrho)$ satisfies Proposition \ref{lemma-omega} because the mean value theorem and $\nabla_{\lambda_\mu^2} \overline p_{\psi^* 0 \alpha} (Y_k| \overline{{\bf Y}}_{-m}^{k-1})=0$ gives $\zeta_{k,m}(\varrho) = [\nabla_{\lambda_\mu^2}\overline p_{\psi^* \varrho \alpha} (Y_k| \overline{{\bf Y}}_{0}^{k-1}) - \nabla_{\lambda_\mu^2}\overline p_{\psi^* 0 \alpha} (Y_k| \overline{{\bf Y}}_{0}^{k-1})]/[\varrho \alpha(1-\alpha) \overline p_{\psi^* \varrho \alpha} (Y_k| \overline{{\bf Y}}_{0}^{k-1})] =\nabla_\varrho \nabla_{\lambda_\mu^2}\overline p_{\psi^* \alpha \bar \varrho} (Y_k| \overline{{\bf Y}}_{-m}^{k-1}) /[\alpha(1-\alpha) \overline p_{\psi^* \bar\varrho \alpha} (Y_k| \overline{{\bf Y}}_{-m}^{k-1})]$ for $\bar \varrho \in [0,\varrho]$.
Therefore, $v_{k,\infty}(\vartheta):=\lim_{m\rightarrow \infty} v_{k,m}(\vartheta)$ is well-defined, and $v_{k,0}(\vartheta)$ and $v_{k,\infty}(\vartheta)$ satisfy (\ref{uniform_lln})--(\ref{weak_cgce}) from repeating the argument in the proof of Proposition \ref{P-LR}.

We proceed to show that the terms on the right-hand side of (\ref{lk_2_normal}) satisfy Assumption \ref{assn_a3}. Observe that $t(\psi,\pi) = 0$ if and only if $\psi=\psi^*$. $s_{\varrho k}$ and $u_{k x_0}(\psi,\pi)$ satisfy Assumption \ref{assn_a3} by noting that $s_{\varrho k}$ is a linear function of $v_{k,0}(\vartheta)$ and using the argument in the proof of Proposition \ref{P-quadratic} by replacing Assumption \ref{A-nonsing1} with Assumption \ref{A-nonsing2}. We show that each component of $r_{k,0}(\pi)$ satisfies Assumptions \ref{assn_a3}(c) and (d). First, $ \Lambda^5_{k,0}(\overline \psi,\pi)' (\Delta \psi)^{\otimes 5}$ satisfies Assumptions \ref{assn_a3}(c) and (d) from Proposition \ref{lemma-omega}, (\ref{weak_cgce}) and $\lambda_\mu^5 = (12\lambda_\mu/b(\alpha)) [\lambda_\sigma^2 + b(\alpha)\lambda_\mu^4/12] - 12(\lambda_\sigma/b(\alpha))\lambda_\mu \lambda_\sigma = O(|\psi| |t(\psi,\pi)|)$. Second, $\lambda_\mu^4[\nabla_{\lambda_\mu^4} \overline p_{\psi^* \pi k,0}- b(\alpha)\nabla_{\lambda_\sigma^2} \overline p_{\psi^* \pi k,0}]/ \overline p_{\psi^* \pi k,0}$ satisfies Assumptions \ref{assn_a3}(c) and (d) from Lemma \ref{lemma_d34}(b). Third, for $\widetilde \Delta_{k,0}(\pi)'\tau(\psi)$, observe that $\nabla_{\lambda \eta^j}\overline p_{\psi^* \pi k,0}=0$ for any $j \geq 1$ in view of (\ref{repara_g_hetero})--(\ref{d3g2}). Therefore, $\widetilde \Delta_{k,0}(\pi)'\tau(\psi)$ is written as, with $\Delta \eta:= \eta- \eta^*$,
\begin{equation} \label{d4r}
\widetilde \Delta_{k,0}(\pi)'\tau(\psi) = \nabla_{(\eta^{\otimes 2})'} \overline p_{\psi^* \pi k,0} (\Delta \eta)^{\otimes 2} / 2!\overline p_{\psi^* \pi k,0} + R_{3k\vartheta} + R_{4k\vartheta}, 
\end{equation}
where $R_{3k\vartheta}:= \nabla_{(\psi^{\otimes 3})'} \overline p_{\psi^* \pi k,0} (\Delta \psi)^{\otimes 3} /3!\overline p_{\psi^* \pi k,0}$ and 
\begin{equation}\label{r_jk}
R_{4k\vartheta}:= [\nabla_{(\psi^{\otimes 4})'} \overline p_{\psi^* \pi k,0} (\Delta \psi)^{\otimes 4} - \nabla_{\lambda_\mu^4} \overline p_{\psi^* \pi k,0} \lambda_\mu^4]/4!\overline p_{\psi^* \pi k,0}.
\end{equation}
The first term in (\ref{d4r}) clearly satisfies Assumptions \ref{assn_a3}(c) and (d). The terms in $R_{3k\vartheta}$ belong to one of the following three groups: (i) the term associated with $\lambda_\sigma^3$, (ii) the term associated with $\lambda_\mu^3$, or (iii) the other terms. These terms satisfy Assumptions \ref{assn_a3}(c) and (d) because the term (i) is bounded by $|\psi||t(\psi,\pi)|$ because $\lambda_\sigma^3 = \lambda_\sigma [\lambda_\sigma^2+b(\alpha)\lambda_\mu^4/12] - (\lambda_\mu^3b(\alpha))\lambda_\mu \lambda_\sigma/12$, the term (ii) is bounded by $\varrho \lambda_\mu^3$ from Lemma \ref{lemma_d34}(a), and the terms in (iii) are bounded by $|\psi||t(\psi,\pi)|$ because they either contain $\Delta \eta$ or a term of the form $\lambda_\mu^i\lambda_\sigma^j\lambda_\beta^k$ with $i+j+k=3$ and $i,j \neq 3$. Similarly, the terms in $R_{4k\vartheta}$ satisfy Assumptions \ref{assn_a3}(c) and (d) because they either contain $\Delta \eta$ or a term of the form $\lambda_\mu^i\lambda_\sigma^j\lambda_\beta^k$ with $i+j+k=4$ and $i \neq 4$. This proves that $r_{k,0}(\pi)$ satisfies Assumptions \ref{assn_a3}(c) and (d), and the stated result is proven. 
\end{proof}

\begin{proof}[Proof of Proposition \ref{P-LR-N1}]

The proof is similar to the proof of Proposition 3(c) of \citet{kasaharashimotsu15jasa}. Let $(\hat{\psi}_\alpha, \hat \varrho_\alpha) : = {\arg\max}_{(\psi, \varrho) \in \Theta_{\psi}\times\Theta_{\varrho}} \ell_n(\psi, \varrho,\alpha,\xi)$ denote the MLE of $(\psi,\varrho)$ for a given $\alpha$. Consider the sets $\Theta_{\lambda}^1:=\{\lambda \in \Theta_{\lambda}:|\lambda_\mu| \geq n^{-1/8}(\log n)^{-1} \}$ and $\Theta_{\lambda}^2:=\{\lambda \in \Theta_{\lambda}:|\lambda_\mu| \leq n^{-1/8}(\log n)^{-1}\}$, so that $\Theta_{\lambda} = \Theta_{\lambda}^1 \cup \Theta_{\lambda}^2$. For $j=1,2$, define $(\hat{\psi}^j_\alpha, \hat \varrho^j_\alpha) : = {\arg\max}_{(\psi,\varrho) \in \Theta_{\psi}\times \Theta_{\varrho}, \lambda \in \Theta_{\lambda}^j} \ell_n(\psi, \varrho,\alpha,\xi)$. Then, uniformly in $\alpha$,
\[
\ell_n(\hat \psi_\alpha, \hat \varrho_\alpha, \alpha,\xi) = \max\left\{\ell_n(\hat\psi_\alpha^1, \hat\varrho_\alpha^1,\alpha,\xi),\ell_n(\hat\psi_\alpha^2,\hat\varrho_\alpha^2,\alpha,\xi)\right\}.
\]
Henceforth, we suppress the dependence of $\hat \psi_\alpha$, $\hat \varrho_\alpha$, etc. on $\alpha$.

Define $B_{\varrho n}(t_{\lambda}(\lambda, \varrho,\alpha ))$ as in (\ref{B_pi}) in the proof of Proposition \ref{P-LR} but using $t(\psi,\pi)$ and $s_{\varrho k}$ defined in (\ref{t-psi2}) and (\ref{score_normal}) and replacing $t_{\lambda}$ in (\ref{B_pi}) with $t_{\lambda}(\lambda, \varrho,\alpha )$. Observe that the proof of Proposition \ref{P-LR} goes through up to (\ref{LR_appn2}) with the current notation and that $G_{\varrho n}$ and $\mathcal{I}_\varrho$ are continuous in $\varrho$. Further, $\hat{\varrho}^1 = O_p(n^{-1/4}(\log n)^2)$ because $\hat \varrho^1 (\hat \lambda_\mu^1)^2 = O_p(n^{-1/2})$ from Proposition \ref{P-quadratic-N1}(a) and $|\hat{\lambda}_\mu^1| \geq n^{-1/8}(\log n)^{-1}$. Consequently, $B_{\hat \varrho^1 n}(\sqrt{n}t_{\lambda}(\hat\lambda^1, \hat\varrho^1,\alpha)) = B_{0 n}(\sqrt{n}t_{\lambda}(\hat\lambda^1, \hat\varrho^1,\alpha)) + o_p(1)$, and, uniformly in $\alpha$,
\begin{equation} \label{ln_max}
2[\ell_n(\hat \psi,\hat \varrho,\alpha,\xi) - \ell_{0n}(\hat\vartheta_0)] = \max \{ B_ { 0 n}(\sqrt{n} t_{\lambda}(\hat\lambda^1, \hat\varrho^1,\alpha)), B_{ \hat \varrho^2 n}(\sqrt{n} t_{\lambda}(\hat\lambda^2, \hat\varrho^2,\alpha))\} + o_p(1).
\end{equation}

We proceed to construct parameter spaces $\tilde\Lambda_{\lambda\alpha}^1$ and $\tilde\Lambda_{\lambda \alpha\varrho}^2$ that are locally equal to the cones $\Lambda_\lambda^1$ and $\Lambda_{\lambda\varrho}^2$ defined in (\ref{Lambda-lambda}). Define $c(\alpha) := \alpha(1-\alpha)$, and denote the elements of $t_{\lambda}(\hat\lambda^j,\hat\varrho^j,\alpha)$ corresponding to (\ref{t_lambda_rho}) by
\begin{equation*} 
t_{\lambda}(\hat\lambda^j,\hat\varrho^j,\alpha)
=
\begin{pmatrix}
\hat t_{\varrho\mu^2}^j\\
\hat t_{\mu\sigma}^j \\
\hat t_{\sigma^2}^j\\
\hat t_{\beta \mu}^j\\
\hat t_{\beta \sigma}^j\\
\hat t_{v(\beta)}^j\\
\end{pmatrix}
:= 
 c(\alpha) 
\begin{pmatrix}
 \hat\varrho^j (\hat \lambda_\mu^j)^2 \\
 \hat\lambda_\mu \hat \lambda_\sigma\\
 (\hat\lambda_\sigma^j)^2 + b(\alpha)(\hat \lambda_\mu^j)^4/12 \\
 \hat \lambda_\beta^j \hat \lambda_\mu^j\\
 \hat \lambda_\beta^j \hat \lambda_\sigma^j\\
 v(\hat \lambda_\beta^j)
\end{pmatrix}. 
\end{equation*}
Note that $\hat{\lambda}_{\sigma}^1 = O_p(n^{-3/8} \log n)$ and $\hat{\lambda}_\beta^1 = O_p(n^{-3/8} \log n)$ because $(\hat{t} _{\mu \sigma}^1, \hat{t} _{\beta \mu}^1) = O_p(n^{-1/2})$ from Proposition \ref{P-quadratic-N1}(a) and $|\hat{\lambda}_\mu^1| \geq n^{-1/8}(\log n)^{-1}$. Furthermore, $\hat t_{\sigma^2}^2 = c(\alpha) (\hat \lambda_\sigma^2)^2 + o_p(n^{-1/2})$ because $|\hat\lambda_\mu^2|\leq n^{-1/8} (\log n)^{-1}$. Consequently,
\begin{equation} \label{t_hat_1}
\begin{aligned}
\hat{t}_{\beta \sigma}^1 &= o_p(n^{-1/2}), \quad \hat{t}_{v(\beta)}^1 = o_p(n^{-1/2}), \quad \hat t_{\sigma^2}^1 = c(\alpha) b(\alpha)(\hat{\lambda}_\mu^1)^4/12 + o_p(n^{-1/2}), \\
\hat t_{\sigma^2}^2 &= c(\alpha) (\hat \lambda_\sigma^2)^2 + o_p(n^{-1/2}).
\end{aligned}
\end{equation}
In view of this, let $t_{\lambda}(\lambda, \varrho,\alpha):=( t_{\varrho\mu^2}, t_{\mu\sigma },t_{\sigma^2 },t_{\beta\mu}',t_{\beta\sigma}',t_{v(\beta)}')' \in \mathbb{R}^{q_{\lambda}}$, and consider the following sets:
\begin{equation*}
\begin{aligned}
&\tilde\Lambda_{\lambda \alpha}^1:=\{ t_{\lambda}(\lambda, \varrho, \alpha) : t_{\varrho\mu^2} = c(\alpha) \varrho \lambda_\mu^2, t_{\mu\sigma }= c(\alpha)\lambda_\mu \lambda_\sigma, t_{\sigma^2} = c(\alpha) b(\alpha)\lambda_\mu^4/12, \\
& \qquad \qquad t_{\beta\mu} = c(\alpha)\lambda_{\beta}\lambda_{\mu}, t_{\beta \sigma}=0, t_{v(\beta)} = 0 \ \text{for some } (\lambda,\varrho) \in \Theta_\lambda\times\Theta_{\varrho}\}, \\
&\tilde\Lambda_{\lambda\alpha\varrho}^2:=\{ t_{\lambda}(\lambda, \varrho, \alpha) : t_{\varrho\mu^2} = c(\alpha)\varrho \lambda_{\mu}^2, t_{\mu\sigma } = c(\alpha)\lambda_\mu\lambda_\sigma, t_{\sigma^2}= c(\alpha)\lambda_\sigma^2, \\
& \qquad \qquad t_{\beta\mu }= c(\alpha)\lambda_{\beta} \lambda_\mu, t_{\beta\sigma }= c(\alpha)\lambda_{\beta} \lambda_\sigma, t_{v(\beta)} =c(\alpha)v(\lambda_{\beta})\ \text{for some }\lambda\in \Theta_\lambda \}. 
\end{aligned}
\end{equation*}
$\tilde\Lambda_{\lambda \alpha}^1$ is indexed by $\alpha$ but does not depend on $\varrho$ because $B_ {0n}(\cdot)$ in (\ref{ln_max}) does not depend on $\varrho$, whereas $\tilde\Lambda_{\lambda \alpha\varrho}^2$ is indexed by both $\alpha$ and $\varrho$ because $B_ {\hat \varrho^2 n}(\cdot)$ in (\ref{ln_max}) depends on $\hat \varrho^2$. Define $(\tilde \lambda_{\alpha}^1, \tilde \varrho_{\alpha}^1)$ and $\tilde \lambda_{\alpha\varrho}^2$ by $B_{0 n}(\sqrt{n} t_{\lambda}(\tilde\lambda_{\alpha}^1,\tilde\varrho_{\alpha}^1,\alpha)) = {\max}_{t_{\lambda}(\lambda,\varrho,\alpha) \in \tilde\Lambda_{\lambda \alpha}^1} B_{0 n}(\sqrt{n} t_{\lambda}(\lambda,\varrho,\alpha))$ and $B_{\varrho n}(\sqrt{n} t_{\lambda}(\tilde\lambda_{\alpha\varrho}^2,\varrho,\alpha)) = {\max}_{t_{\lambda}(\lambda,\varrho,\alpha) \in \tilde \Lambda_{\lambda \alpha\varrho}^2} B_{\varrho n}(\sqrt{n} t_{\lambda}(\lambda,\varrho,\alpha))$.

Define $W_n(\alpha):=\max\{B_{0 n}(\sqrt{n} t_{\lambda}(\tilde\lambda_{\alpha}^1,\tilde\varrho_{\alpha}^1,\alpha)), \sup_{\varrho \in \Theta_{\varrho}}B_{\varrho n}(\sqrt{n} t_{\lambda}(\tilde\lambda^2_{\alpha\varrho},\varrho,\alpha)) \}$, then we have 
\begin{equation} \label{LR_W}
2[\ell_n(\hat \psi, \hat \varrho, \alpha,\xi) - \ell_{0n}(\hat\vartheta_0)] = W_n(\alpha) + o_p(1), 
\end{equation}
 uniformly in $\alpha \in \Theta_\alpha$ because (i) $W_n(\alpha) \geq 2[\ell_n(\hat \psi, \hat \varrho, \alpha,\xi) - \ell_{0n}(\hat\vartheta_0)] + o_p(1)$ in view of the definition of $(\tilde \lambda_{\alpha}^1,\tilde \varrho_{\alpha}^1,\tilde \lambda_{\alpha\varrho}^2)$, (\ref{ln_max}), and (\ref{t_hat_1}), and (ii) $2[\ell_n(\hat \psi,\hat \varrho, \alpha, \xi) - \ell_{0n}(\hat\vartheta_0)] \geq \\ \max\{ 2[\max_\eta\ell_n(\eta,\tilde \lambda^1_\alpha, \tilde \varrho^1_\alpha,\alpha,\xi),\sup_{\varrho \in \Theta_\varrho} \max_\eta \ell_n(\eta,\tilde \lambda^2_{\alpha\varrho}, \varrho,\alpha,\xi) \} - 2\ell_{0n}(\hat\vartheta_0)+ o_p(1) = W_n(\alpha)+ o_p(1)$
from the definition of $(\hat \psi,\hat\varrho)$.

The asymptotic distribution of the LRTS follows from applying Theorem 1(c) of \citet{andrews01em} to $(B_{0 n}(\sqrt{n} t_{\lambda}(\tilde\lambda_{\alpha}^1,\tilde\varrho_{\alpha}^1,\alpha)), B_{\varrho n}(\sqrt{n} t_{\lambda}(\tilde\lambda^2_{\alpha\varrho},\varrho,\alpha)))$. First, Assumption 2 of \citet{andrews01em} holds trivially for $B_{\varrho n}(\sqrt{n} t(\lambda,\varrho,\alpha))$. Second, Assumption 3 of \citet{andrews01em} is satisfied by (\ref{weak_cgce}) and Assumption \ref{A-nonsing2}. Assumption 4 of \citet{andrews01em} is satisfied by Proposition \ref{P-quadratic-N1}. Assumption $5^*$ of \citet{andrews01em} holds with $B_T=n^{1/2}$ because $\tilde\Lambda_{\lambda \alpha}^1$ is locally (in a neighborhood of $\varrho=0, \lambda=0$) equal to the cone $\Lambda_{\lambda}^1$ and $\tilde\Lambda_{\lambda \alpha\varrho}^2$ is locally equal to the cone $\Lambda_{\lambda \varrho}^2$ uniformly in $\varrho\in\Theta_{\varrho\epsilon}$. Consequently, $W_n(\alpha) \overset{d}{\rightarrow} \max\{ \mathbb{I}\{\varrho=0\} (\tilde t_{\lambda }^1)' \mathcal{I}_{\lambda.\eta 0} \tilde t_{\lambda }^1, \sup_{\varrho\in\Theta_{\varrho}} (\tilde t_{\lambda \varrho}^2)' \mathcal{I}_{\lambda.\eta \varrho} \tilde t_{\lambda \varrho}^2 \}$ uniformly in $\alpha$ from Theorem 1(c) of \citet{andrews01em}, and the stated result follows from (\ref{LR_W}).
\end{proof}

\begin{proof}[Proof of Proposition \ref{P-quadratic-N1-homo}]
The proof is similar to that of Proposition \ref{P-quadratic-N1}. Expanding $l_{k\vartheta x_0} -1$ five times around $\psi^*$ and proceeding as in the proof of Proposition \ref{P-quadratic-N1} gives
\begin{equation} \label{lk_2_normal_homo}
l_{\vartheta k x_0} -1 = t(\psi,\pi)' s_{\varrho k} + r_{k,0}(\pi) + u_{kx_0}(\psi,\pi), 
\end{equation} 
where $t(\psi,\pi)$ is defined in (\ref{t-psi2-homo}), $s_{\varrho k}$ is defined in (\ref{score_normal_homo}) and satisfies 
\begin{equation*} 
 s_{\varrho k} 
:=
\begin{pmatrix}
\nabla_{\eta} \overline p_{\psi^* \pi k,0} / \overline p_{\psi^* \pi k,0} \\
\zeta_{k,0}(\varrho)/2 \\
\nabla_{\mu^3} f_k^* / 3! f_k^* \\
\nabla_{\mu^4} f_k^* / 4! f_k^* \\
\nabla_{\lambda_\beta\lambda_\mu} \overline p_{\psi^* \pi k,0} / \alpha(1-\alpha)\overline p_{\psi^* \pi k,0} \\
\widetilde{\nabla}_{v(\lambda_\beta)} \overline p_{\psi^* \pi k,0} / \alpha(1-\alpha) \overline p_{\psi^* \pi k,0} 
\end{pmatrix},
\end{equation*}
and
\begin{align*}
r_{k,0}(\pi):&= \widetilde \Lambda_{k,0}(\pi) '\tau(\psi) + \Lambda^5_{k,0}(\overline \psi,\pi) ' (\Delta \psi)^{\otimes 5} \\
&\quad + \lambda_\mu^3[\nabla_{\lambda_\mu^3} \overline p_{\psi^* \pi k,0}/\overline p_{\psi^* \pi k,0} - \alpha(1-\alpha)(1-2\alpha)\nabla_{\mu^3} f_k^*/f_k^*]/3!\\
& \quad+ \lambda_\mu^4[\nabla_{\lambda_\mu^4} \overline p_{\psi^* \pi k,0}/\overline p_{\psi^* \pi k,0} - \alpha(1-\alpha)(1-6\alpha+6 \alpha^2)\nabla_{\mu^4} f_k^*/f_k^*]/4!,
\end{align*}
where $u_{kx_0}(\psi,\pi)$, $\overline p_{\psi \pi k,m}$, and the terms in the definition of $r_{k,0}(\pi)$ are defined similarly to those in the proof of Proposition \ref{P-quadratic-N1}.

The stated result is proven if the terms on the right-hand side of (\ref{lk_2_normal_homo}) satisfy Assumption \ref{assn_a3}. $t(\psi,\pi) = 0$ if and only if $\psi=\psi^*$. $s_{\varrho k}$ and $u_{kx_0}(\psi,\pi)$ satisfy Assumption \ref{assn_a3} by the same argument as the proof of Proposition \ref{P-quadratic-N1}. For $r_{k,0}(\pi)$, first, $\Lambda^5_{k,0}(\overline \psi,\pi)' (\Delta \psi)^{\otimes 5}$ satisfies Assumptions \ref{assn_a3}(c) and (d) from a similar argument to the proof of Proposition \ref{P-quadratic-N1}; $\lambda_\mu^5$ is dominated by $\lambda_\mu^3$ or $\lambda_\mu^4$ because $\inf_{0\leq \alpha \leq 1}\max\{|1-2\alpha|,|1-6\alpha + 6\alpha^2|\}>0$. Second, similar to (\ref{d4r}) in the proof of Proposition \ref{P-quadratic-N1}, write $\widetilde \Lambda_{k,0}(\pi) '\tau(\psi)=\nabla_{(\eta^{\otimes 2})'} \overline p_{\psi^* \pi k,0} (\Delta \eta)^{\otimes 2} / 2!\overline p_{\psi^* \pi k,0}+ \tilde R_{3k\vartheta} + R_{4k\vartheta}$, where $\tilde R_{3k\vartheta}:= [\nabla_{(\psi^{\otimes 3})'} \overline p_{\psi^* \pi k,0} (\Delta \psi)^{\otimes 3}- \nabla_{\lambda_\mu^3} \overline p_{\psi^* \pi k,0} \lambda_\mu^3]/3!\overline p_{\psi^* \pi k,0}$, and $R_{4k\vartheta}$ is defined as $R_{4k\vartheta}$ in (\ref{r_jk}). The term $\nabla_{(\eta^{\otimes 2})'} \overline p_{\psi^* \pi k,0} (\Delta \eta)^{\otimes 2} / 2!\overline p_{\psi^* \pi k,0}$ clearly satisfies Assumptions \ref{assn_a3}(c) and (d). The terms in $\tilde R_{3k\vartheta}$ satisfy Assumptions \ref{assn_a3}(c) and (d) because they contain either $\Delta \eta$ or $\lambda_\mu^2\lambda_\beta$ or $\lambda_\mu\lambda_\beta^2$ or $\lambda_\beta^3$. The terms in $R_{4k\vartheta}$ satisfy Assumptions \ref{assn_a3}(c) and (d) because they either contain $\Delta \eta$ or a term of the form $\lambda_\mu^i\lambda_\beta^{4-i}$ with $ 1 \leq i \leq 3$. The last two terms in $r_{k,0}(\pi)$ satisfy Assumptions \ref{assn_a3}(c) and (d) from Lemma \ref{lemma_d34_homo}. Therefore, $r_{k,0}(\pi)$ satisfies Assumptions \ref{assn_a3}(c) and (d), and the stated result is proven. 
\end{proof}

\begin{proof}[Proof of Proposition \ref{P-LR-N1-homo}] 
The proof is similar to the proof of Proposition \ref{P-LR-N1}. Let $(\hat{\psi}, \hat \varrho, \hat \alpha) : = {\arg\max}_{(\psi, \varrho,\alpha) \in \Theta_{\psi}\times\Theta_{\varrho}\times\Theta_{\alpha}} \ell_n(\psi,\varrho,\alpha,\xi)$ denote the MLE of $(\psi,\varrho,\alpha)$. Consider the sets $\Theta_{\lambda}^1:=\{\lambda \in \Theta_{\lambda}:|\lambda_\mu| \geq n^{-1/6}(\log n)^{-1} \}$ and $\Theta_{\lambda}^2:=\{\lambda \in \Theta_{\lambda}:|\lambda_\mu| \leq n^{-1/6}(\log n)^{-1} \}$, so that $\Theta_{\lambda} = \Theta_{\lambda}^1 \cup \Theta_{\lambda}^2$. For $j=1,2$, define $(\hat{\psi}^j, \hat \varrho^j,\hat\alpha^j) : = {\arg\max}_{(\psi,\varrho,\alpha) \in \Theta_{\psi}\times \Theta_{\varrho }\times \Theta_{\alpha }, \lambda \in \Theta_{\lambda}^j} \ell_n(\psi,\varrho,\alpha,\xi)$, so that $\ell_n(\hat \psi, \hat \varrho,\hat\alpha,\xi) = \max_{j \in \{1,2\}} \ell_n(\hat\psi^j, \hat\varrho^j,\hat\alpha^j,\xi)$.

Define $B_{\varrho n}(t_\lambda(\lambda,\varrho,\alpha))$ as in (\ref{B_pi}) in the proof of Proposition \ref{P-LR} but using $t(\psi,\pi)$ and $s_{\varrho k}$ defined in (\ref{t-psi2-homo}) and (\ref{score_normal_homo}) and replacing $t_{\lambda}$ in (\ref{B_pi}) with $t_\lambda(\lambda,\varrho,\alpha)$. Observe that $\hat{\varrho}^1 = O_p(n^{-1/6}(\log n)^2)$ because $\hat \varrho^1 (\hat \lambda_\mu^1)^2 = O_p(n^{-1/2})$ from Proposition \ref{P-quadratic-N1-homo}(a) and $|\hat{\lambda}_\mu^1| \geq n^{-1/6}(\log n)^{-1}$. Using the argument of the proof of Proposition \ref{P-LR-N1} leading to (\ref{ln_max}), we obtain 
\begin{equation*}
2[\ell_n(\hat \psi, \hat \varrho, \hat \alpha,\xi) - \ell_{0n}(\hat\vartheta_0)]
 = \max \{ B_ { 0 n}(\sqrt{n} t_\lambda(\hat\lambda^1,\hat\varrho^1,\hat \alpha^1)), B_{ \hat \varrho^2 n}(\sqrt{n} t_\lambda(\hat\lambda^2, \hat\varrho^2,\hat \alpha^2))\} + o_p(1).
\end{equation*}

We proceed to construct parameter spaces that are locally equal to the cones $\Lambda_\lambda^1$ and $\Lambda_{\lambda\varrho}^2$ defined in (\ref{Lambda-lambda-homo}). Define $c(\alpha) := \alpha(1-\alpha)$, and denote the elements of $t_\lambda(\hat\lambda^j,\hat\varrho^j,\hat\alpha^j)$ corresponding to (\ref{t-psi2-homo}) by
\begin{equation*}
t_{\lambda}(\hat\lambda^j,\hat\varrho^j,\hat\alpha^j)
=
\begin{pmatrix}
\hat t_{\varrho\mu^2}^j\\
\hat t_{\mu^3}^j \\
\hat t_{\mu^4}^j \\
\hat t_{\beta \mu}^j\\
\hat t_{v(\beta)}^j\\
\end{pmatrix}
:= 
c(\hat\alpha^j)
\begin{pmatrix}
 \hat\varrho^j (\hat \lambda_\mu^j)^2 \\
 (1-2\hat\alpha^j) (\hat\lambda_\mu^j)^3\\
 (1-6\hat\alpha^j+6(\hat\alpha^j)^2) (\hat\lambda_\mu^j)^4\\
 \hat \lambda_\beta^j \hat \lambda_\mu^j\\
 v(\hat \lambda_\beta^j)
\end{pmatrix}. 
\end{equation*}
Note that $\hat{\lambda}_\beta^1 = O_p(n^{-1/3} \log n)$ because $\hat{t} _{\beta \mu}^1 = O_p(n^{-1/2})$ from Proposition \ref{P-quadratic-N1-homo}(a) and $|\hat{\lambda}_\mu^1| \geq n^{-1/6}(\log n)^{-1}$. Furthermore, $|\hat\lambda_\mu^2| \leq n^{-1/6}(\log n)^{-1}$. Therefore,
\begin{equation*}
\hat{t}_{v(\beta)}^1 = o_p(n^{-1/2}), \quad \hat t_{\mu^3}^{2} = o_p(n^{-1/2}), \quad \hat t_{\mu^4}^{2}=o_p(n^{-1/2}).
\end{equation*}
In view of this, let $t_\lambda(\lambda,\varrho,\alpha):=( t_{\varrho\mu^2}, t_{\mu^3},t_{\mu^4},t_{\beta\mu}',t_{v(\beta)}')' \in \mathbb{R}^{q_{\lambda}}$, and consider the following sets:
\begin{align*}
&\tilde\Lambda_{\lambda }^{1}:=\{ t_\lambda(\lambda,\varrho,\alpha) : t_{\varrho\mu^2} = c(\alpha) \varrho \lambda_\mu^2, t_{\mu^3}= c(\alpha)(1-2\alpha)\lambda_\mu^3, t_{\mu^4} = c(\alpha)(1-6\alpha+6\alpha^2) \lambda_\mu^4, \\
& \qquad \qquad t_{\beta\mu} = c(\alpha)\lambda_{\beta}\lambda_{\mu}, t_{v(\beta)} = 0 \ \text{for some }(\lambda,\varrho,\alpha) \in \Theta_\lambda\times \Theta_{\varrho}\times \Theta_{\alpha}\}, \\
&\tilde\Lambda_{\lambda \alpha\varrho}^2:=\{ t_\lambda(\lambda,\varrho,\alpha) : t_{\varrho\mu^2} = c(\alpha)\varrho \lambda_{\mu}^2, t_{\mu^3} = t_{\mu^4}= 0, \\
& \qquad \qquad t_{\beta\mu }= c(\alpha)\lambda_{\beta} \lambda_\mu, t_{v(\beta)} =c(\alpha)v(\lambda_{\beta})\ \text{for some }\lambda\in \Theta_\lambda \}. 
\end{align*}
Define $(\tilde \lambda^1, \tilde \varrho^1, \tilde \alpha^1)$ and $\tilde \lambda_{\alpha\varrho}^2$ by $B_{0 n}(\sqrt{n} t_\lambda(\tilde\lambda^1,\tilde \varrho^1, \tilde \alpha^1)) = {\max}_{t_\lambda(\lambda,\varrho,\alpha) \in \tilde\Lambda_{\lambda}^1} B_{0 n}(\sqrt{n} t_\lambda(\lambda,\varrho,\alpha))$ and \\ $B_{\varrho n}(\sqrt{n} t_\lambda(\tilde\lambda_{\alpha\varrho}^2,\varrho,\alpha)) = {\max}_{t_\lambda(\lambda,\varrho,\alpha) \in \tilde \Lambda_{\lambda\alpha\varrho}^2} B_{\varrho n}(\sqrt{n} t_\lambda(\lambda,\varrho,\alpha))$. $\tilde\Lambda_{\lambda}^{1}$ is locally (in the neighborhood of $\varrho=0$, $\lambda=0$) equal to the cone $\Lambda_{\lambda}^{1}$ because, when $|1-2\alpha|\geq \epsilon>0$ for some positive constant $\epsilon$, we have $t_{\mu^4}/t_{\mu^3} \to 0$ as $\lambda_\mu \to 0$, and when $\alpha$ is in the neighborhood of $1/2$, we have $1-6\alpha+6\alpha^2 <0$. $\tilde\Lambda_{\lambda \alpha \varrho}^2$ is locally equal to the cone $\Lambda_{\lambda\varrho}^2$ uniformly in $\varrho \in \Theta_{\varrho}$.

Define $W_n:=\max\{B_{0 n}(\sqrt{n} t_\lambda(\tilde\lambda^1,\tilde\varrho^1,\tilde\alpha^1)), \sup_{(\alpha,\varrho) \in\Theta_{\alpha} \times \Theta_{\varrho} } B_{\varrho n}(\sqrt{n} t_\lambda(\tilde\lambda^2_{\alpha\varrho},\varrho,\alpha)) \}$. Proceeding as in the proof of Proposition \ref{P-LR-N1} gives $2[\ell_n(\hat \psi, \hat \varrho,\hat \alpha,\xi) - \ell_{0n}(\hat\vartheta_0)] = W_n + o_p(1)$, and the asymptotic distribution of the LRTS follows from applying Theorem 1(c) of \citet{andrews01em} to $(B_{0 n}(\sqrt{n} t_\lambda(\tilde\lambda^1, \tilde\varrho^1,\tilde\alpha^1 )), B_{\varrho n}(\sqrt{n} t_\lambda(\tilde\lambda^2_{\alpha\varrho},\varrho,\alpha))$. 
\end{proof}

\begin{proof}[Proof of Propositions \ref{P-LR_M}, \ref{P-LR_M_normal}, and \ref{P-LR_M_normal_homo}] 
Let $\mathcal{N}_m^*$ denote an arbitrarily small neighborhood of $\Upsilon_m^*$, and let $\hat{\psi}_m$ denote a local MLE that maximizes $\ell_n(\psi_m,\pi_m,\xi_{M_0+1})$ subject to $\psi_m \in \mathcal{N}_m^*$. Proposition \ref{P-consist_M} and $\Upsilon^*=\cup_{m=1}^{M_0}\Upsilon_m^*$ imply that $\ell_n(\hat\vartheta_{M_0+1},\xi_{M_0+1}) = \max_{m=1,\ldots,M_0} \ell_n(\hat\psi_m,\pi_m,\xi_{M_0+1})$ with probability approaching one. Because $\psi_\ell^{*}\notin \mathcal{N}_m^*$ for any $\ell \neq m$, it follows from Proposition \ref{P-consist_M} that $\hat{\psi}_m-\psi_m^{*}=o_p(1)$.

Next, $\ell_n(\psi_{m},\pi_m,\xi_{M_0+1}) - \ell_n(\psi_m^{*},\pi_m,\xi_{M_0+1})$ admits the same expansion as $\ell_n(\psi,\pi,\xi) - \ell_n(\psi^{*},\pi,\xi)$ in (\ref{ln_appn}) or (\ref{ln_appn_N1}). 
Therefore, the stated result follows from applying the proof of Propositions \ref{P-LR}, \ref{P-LR-N1}, and \ref{P-LR-N1-homo} to $\ell_n(\hat \psi_m, \pi_m,\xi_{M_0+1}) - \ell_{n}(\hat\vartheta_{M_0},\xi_{M_0})$ for each $m$ and combining the results to derive the joint asymptotic distribution of $\{\ell_n(\hat \psi_m, \pi_m,\xi_{M_0+1}) - \ell_{n}(\hat\vartheta_{M_0},\xi_{M_0})\}_{m=1}^{M_0}$.
\end{proof}

\begin{proof}[Proof of Proposition \ref{P-LAN}] 
Observe that Proposition \ref{Ln_thm1} holds under $\mathbb{P}_{\vartheta^*,x_0}^n$ under the assumptions of Propositions \ref{P-quadratic}, \ref{P-quadratic-N1}, and \ref{P-quadratic-N1-homo}. Because $\vartheta_n=(\eta_n',\lambda_n',\pi_n')'\in \mathcal{N}_{c/\sqrt{n}}$ by choosing $c> |h|$, it follows from Proposition \ref{Ln_thm1} that
\begin{equation}\label{expansion}
\sup_{x_0 \in \mathcal{X}} \left|\log \frac{d\mathbb{P}_{\vartheta_n,x_0}^n}{d \mathbb{P}_{\vartheta^*,x_0}^n} - h' \nu_n(s_{\varrho_n k}) + \frac{1}{2}h' \mathcal{I}_{\varrho_n} h \right|=o_{\mathbb{P}_{\vartheta^*,x_0}^n}(1),
\end{equation}
where $s_{\varrho k}$ is given by (\ref{score}), (\ref{score_normal}), and (\ref{score_normal_homo}) for the models of the non-normal distribution, heteroscedastic normal distribution, and homoscedastic normal distribution, respectively. Furthermore, $\nu_n(s_{\varrho_n k}) \Rightarrow G_\varrho$ under $\mathbb{P}_{\vartheta^*,x_0}^n$, where $G_\varrho$ is a mean zero Gaussian process with $cov(G_{\varrho_1},G_{\varrho_2})=\mathcal{I}_{\varrho_1\varrho_2}:= \lim_{k\rightarrow \infty} \mathbb{E}_{\vartheta^*} (s_{\varrho_1 k}s_{\varrho_2 k}')$. Therefore, $d\mathbb{P}_{\vartheta_n,x_0}^n / d \mathbb{P}_{\vartheta^*,x_0}^n$ converges in distribution under $\mathbb{P}_{\vartheta^*,x_0}^n$ to $\exp\left( N( \mu,\sigma^2) \right)$ with $\mu=-(1/2) h' \mathcal{I}_{ \varrho} h$ and $\sigma^2= h' \mathcal{I}_{ \varrho} h$, so that $E(\exp\left( N( \mu,\sigma^2) \right))=1$. Consequently, part (a) follows from Le Cam's first lemma (see, e.g., Corollary 12.3.1 of \citet{lehmannromano05book}). Part (b) follows from Le Cam's third lemma (see, e.g., Corollary 12.3.2 of \citet{lehmannromano05book}) because part (a) and (\ref{expansion}) imply that
\[
\begin{pmatrix}
\nu_n(s_{\varrho_n k})\\
\log\frac{d\mathbb{P}_{\vartheta_n,x_0}^n}{d \mathbb{P}_{\vartheta^*,x_0}^n} 
\end{pmatrix}
 \overset{d}{\rightarrow} 
N\left(
\begin{pmatrix}
0\\
-\frac{1}{2} h' \mathcal{I}_{\varrho} h
\end{pmatrix}, 
\begin{pmatrix}
\mathcal{I}_{\varrho}&\mathcal{I}_{\varrho}h\\
h'\mathcal{I}_{\varrho}&h'\mathcal{I}_{\varrho}h
\end{pmatrix}
\right)\quad\text{under $\mathbb{P}_{\vartheta^*,x_0}^n$}.
\] 
\end{proof} 

\begin{proof}[Proof of Proposition \ref{P-LAN2}] 
The proof follows the argument in the proof of Proposition \ref{P-LR}. Observe that $h_\eta=0$ and $h_{\lambda} =\sqrt{n}t_{\lambda}(\lambda_n,\pi_n)$ hold under $H_{1n}$. Therefore, Proposition \ref{P-LAN} holds under $\mathbb{P}_{\vartheta_n,x_0}^n$ implied by $H_{1n}$, and, in conjunction with Theorem 12.3.2(a) of \citet{lehmannromano05book}, Propositions \ref{lemma-omega} and \ref{P-quadratic} hold under $\mathbb{P}_{\vartheta_n,x_0}^n$. Consequently, the proof of Proposition \ref{P-LR} goes through if we replace $G_{\lambda.\eta\varrho n}\Rightarrow G_{\lambda.\eta\varrho}$ with $G_{\lambda.\eta\varrho n} \Rightarrow G_{\lambda.\eta\varrho} + ( \mathcal{I}_{\lambda \varrho \varrho} - \mathcal{I}_{\lambda\eta\varrho} \mathcal{I}_{\eta}^{-1} \mathcal{I}_{\eta\lambda\varrho} ) h_\lambda = G_{\lambda.\eta\varrho} + \mathcal{I}_{\lambda.\eta\varrho} h_{\lambda}$, and the stated result follows. \end{proof}

\begin{proof}[Proof of Propositions \ref{P-LAN3} and \ref{P-LAN4}] 
The proof is similar to the proof of Proposition \ref{P-LAN2}. Observe that, for $j \in \{a,b\}$, $h_\eta^j=0$ and $h_{\lambda}^j =\sqrt{n}t_{\lambda}(\lambda_n,\pi_n)+o(1)$ hold under $H_{1 n}^j$. Therefore, Proposition \ref{P-LAN} holds under $\mathbb{P}_{\vartheta_n,x_0}^n$ implied by $H_{1n}^j$, and the stated result follows from repeating the argument of proof of Proposition \ref{P-LAN2}. 
\end{proof}

\begin{proof}[Proof of Proposition \ref{P-bootstrap}] 
We only provide the proof for the models of the non-normal distribution with $M_0=1$ because the proof for the other models is similar. The proof follows the argument in the proof of Theorem 15.4.2 in \citet{lehmannromano05book}. Define $\bf{C}_\eta$ as the set of sequences $\{\eta_n\}$ satisfying $\sqrt{n}(\eta_n - \eta^*) \to h_\eta$ for some finite $h_\eta$. 
Denote the MLE of the one-regime model parameter by $\hat \eta_n$. For the MLE under $H_0$, $\sqrt{n}(\hat \eta_n - \eta^*)$ converges in distribution to a $\mathbb{P}_{\vartheta^*}$-a.s. finite random variable by the standard argument. Then, by the Almost Sure Representation Theorem (e.g., Theorem 11.2.19 of \citet{lehmannromano05book}), there exist random variables $\tilde \eta_n$ and $\tilde h_\eta$ defined on a common probability space such that $\hat \eta_n$ and $\tilde \eta_n$ have the same distribution and $\sqrt{n}(\tilde \eta_n - \eta^*)\rightarrow \tilde h_\eta$ almost surely. Therefore, $\{ \tilde \eta_n \}\in \bf{C}_\eta$ with probability one, and the stated result under $H_0$ follows from Lemma \ref{lemma_btsp} because $\hat \eta_n$ and $\tilde \eta_n$ have the same distribution.

For the MLE under $H_{1n}$, note that the proof of Proposition \ref{P-LAN2} goes through when $h_\eta$ is finite even if $h_\eta \neq 0$. Therefore, $\sqrt{n}(\hat \eta_n - \eta^*)$ converges in distribution to a $\mathbb{P}_{\vartheta_n}$-a.s. finite random variable under $H_{1n}$. Hence, the stated result follows from Lemma \ref{lemma_btsp} and repeating the argument in the case of $H_0$.
\end{proof}

\subsection{Auxiliary results}

\subsubsection{Missing information principle}

The following lemma extends equations (3.1) and (3.2) in Louis (1982), expressing the higher-order derivatives of the log-likelihood function in terms of the conditional expectation of the derivatives of the complete data log-likelihood function. For notational brevity, assume $\vartheta$ is scalar. Adaptations to vector-valued $\vartheta$ are straightforward but need more tedious notation. Let $\nabla^j \ell(Y):= \nabla_\vartheta^j \log P(Y;\vartheta)$ and $\nabla^j \ell(Y,X):= \nabla_\vartheta^j \log P(Y,X;\vartheta)$. For random variables $V_1,\ldots,V_q$ and $Y$, define the central conditional moment of $(V_1^{r_1}\cdots V_q^{r_q})$ as $\mathbb{E}^c [V_1^{r_1} \cdots V_q^{r_q} |Y ] := \mathbb{E} [ (V_1-\mathbb{E}[V_1|Y])^{r_1} \cdots (V_q-\mathbb{E}[V_q|Y])^{r_q} |Y] $.
\begin{lemma} \label{louis}
For any random variables $X$ and $Y$ with density $P(Y,X;\theta)$ and $P(Y;\theta)$,
\begin{align*}
&\nabla \ell(Y) = \mathbb{E}\left[ \nabla \ell(Y,X) \middle| Y \right], \quad \nabla^2 \ell(Y) = \mathbb{E}\left[ \nabla^2 \ell(Y,X) \middle| Y \right] + \mathbb{E}^c\left[ (\nabla \ell(Y,X))^2\middle| Y \right], \\
&\nabla^3 \ell(Y) = \mathbb{E}\left[ \nabla^3 \ell(Y,X) \middle| Y \right]+ 3 \mathbb{E}^c\left[ \nabla^2 \ell(Y,X) \nabla \ell(Y,X)\middle| Y \right] + \mathbb{E}^c\left[(\nabla \ell(Y,X))^3\middle| Y \right], \\
&\nabla^4 \ell(Y)
= \mathbb{E}\left[ \nabla^4 \ell(Y,X)\middle| Y \right] +4 \mathbb{E}^c \left[\nabla^3 \ell(Y,X) \nabla \ell(Y,X)\middle| Y \right] + 3 \mathbb{E}^c\left[ (\nabla^2 \ell(Y,X))^2 \middle| Y \right] \\
& \quad + 6 \mathbb{E}^c\left[\nabla^2 \ell(Y,X)(\nabla \ell(Y,X))^2\middle| Y \right] +\mathbb{E}^c\left[(\nabla \ell(Y,X))^4 \middle| Y \right] -3 \left\{ \mathbb{E}^c\left[ (\nabla \ell(Y,X))^2 \middle| Y \right]\right\}^2, \\
&\nabla^5 \ell(Y) = \mathbb{E}\left[ \nabla^5 \ell(Y,X)\middle| Y\right] +5 \mathbb{E}^c\left[ \nabla^4 \ell(Y,X) \nabla \ell(Y,X) \middle| Y\right] +10 \mathbb{E}^c\left[ \nabla^3 \ell(Y,X) \nabla^2 \ell(Y,X) \middle| Y\right] \\
&\quad+10\mathbb{E}^c\left[ \nabla^3 \ell(Y,X) (\nabla \ell(Y,X))^2 \middle| Y\right]+15\mathbb{E}^c\left[ (\nabla^2 \ell(Y,X))^2 \nabla \ell(Y,X) \middle| Y\right] \\
&\quad +10\mathbb{E}^c\left[ \nabla^2 \ell(Y,X) (\ell(Y,X))^3 \middle| Y\right] -30 \mathbb{E}^c\left[ \nabla^2 \ell(Y,X) \nabla \ell(Y,X) \middle| Y\right] \mathbb{E}^c\left[( \nabla \ell(Y,X))^2 \middle| Y\right] \\
&\quad+\mathbb{E}^c\left[ (\nabla \ell(Y,X))^5 \middle| Y\right] - 10\mathbb{E}^c\left[ (\nabla \ell(Y,X))^3 \middle| Y\right] \mathbb{E}^c\left[(\nabla \ell(Y,X))^2 \middle| Y\right], \\
&\nabla^6 \ell(Y) = \mathbb{E}\left[ \nabla^6 \ell(Y,X)\middle|Y \right] \\
& \quad +6\mathbb{E}^c\left[\nabla^5 \ell(Y,X)\nabla \ell(Y,X) \middle| Y\right] +15 \mathbb{E}^c\left[\nabla^4 \ell(Y,X)\nabla^2 \ell(Y,X) \middle| Y\right] \\
& \quad+15 \mathbb{E}^c\left[\nabla^4 \ell(Y,X) (\nabla \ell(Y,X))^2\middle| Y\right] + 60\mathbb{E}^c\left[\nabla^3 \ell(Y,X)\nabla^2 \ell(Y,X)\nabla \ell(Y,X) \middle| Y\right] \\
& \quad+10\mathbb{E}^c\left[(\nabla^3 \ell(Y,X))^2\middle| Y\right]+ 15 \mathbb{E}^c\left[ (\nabla^2 \ell(Y,X))^3 \middle| Y\right] \\
& \quad+20\mathbb{E}^c\left[\nabla^3 \ell(Y,X)(\nabla \ell(Y,X))^3\middle| Y\right] - 60 \mathbb{E}^c\left[\nabla^3 \ell(Y,X) \nabla \ell(Y,X)\middle| Y\right]\mathbb{E}\left[(\nabla \ell(Y,X))^2\middle| Y\right]\\
& \quad+45 \mathbb{E}^c\left[ (\nabla^2 \ell(Y,X))^2 (\nabla \ell(Y,X))^2 \middle| Y\right] -90 \left\{\mathbb{E}^c\left[ \nabla^2 \ell(Y,X) \nabla \ell(Y,X) \middle| Y\right]\right\}^2\\
& \quad-45 \mathbb{E}^c\left[ (\nabla^2 \ell(Y,X))^2 \middle| Y\right] \mathbb{E}^c\left[ (\nabla \ell(Y,X))^2 \middle| Y\right] \\
& \quad +15 \mathbb{E}^c\left[ \nabla^2 \ell(Y,X) (\nabla \ell(Y,X))^4\middle| Y\right] -90 \mathbb{E}^c\left[ \nabla^2 \ell(Y,X) (\nabla \ell(Y,X))^2 \middle| Y\right] \mathbb{E}^c\left[ (\nabla \ell(Y,X))^2 \middle| Y\right]\label{5-1}\\
& \quad -60 \mathbb{E}^c\left[ \nabla^2 \ell(Y,X) \nabla \ell(Y,X) \middle| Y\right] \mathbb{E}^c\left[ (\nabla \ell(Y,X))^3 \middle| Y\right] \\
& \quad +\mathbb{E}^c\left[ (\nabla \ell(Y,X))^6 \middle| Y\right] -15 \mathbb{E}^c\left[ (\nabla \ell(Y,X))^4 \middle| Y\right] \mathbb{E}^c\left[ (\nabla \ell(Y,X))^2 \middle| Y\right] \\
& \quad - 10\left\{\mathbb{E}^c\left[ (\nabla \ell(Y,X))^3 \middle| Y\right]\right\}^2 + 30\left\{\mathbb{E}^c\left[ (\nabla \ell(Y,X))^2 \middle| Y\right]\right\}^3,
\end{align*}
provided that the conditional expectation on the right-hand side exists. When $P(Y;\theta)$ on the left-hand side is replaced with $P(Y|Z;\theta)$, the stated result holds with $P(Y,X;\theta)$ and $\mathbb{E}[\cdot |Y]$ on the right-hand side replaced with $P(Y,X|Z;\theta)$ and $\mathbb{E}[\cdot |Y,Z]$. 
\end{lemma}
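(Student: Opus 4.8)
The plan is to obtain all six identities at once from a single generating-function identity, reading them off as the order-by-order expansion of a \emph{conditional cumulant generating function}. Since $\vartheta$ is scalar, I introduce a scalar perturbation $s$ and work with the complete-data log-likelihood ratio $\Delta(s) := \log P(Y,X;\vartheta+s) - \log P(Y,X;\vartheta)$, so that $\Delta(s) = \sum_{i\geq 1} (s^i/i!)\,\nabla^i \ell(Y,X)$.

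First I would establish the master identity. Using $P(Y;\vartheta) = \int P(Y,X;\vartheta)\,\mu(dX)$ together with $P(X|Y;\vartheta) = P(Y,X;\vartheta)/P(Y;\vartheta)$, I get
\[
\frac{P(Y;\vartheta+s)}{P(Y;\vartheta)} = \int \frac{P(Y,X;\vartheta+s)}{P(Y,X;\vartheta)}\,P(X|Y;\vartheta)\,\mu(dX) = \mathbb{E}\!\left[e^{\Delta(s)}\,\middle|\,Y\right],
\]
and hence $\ell(Y;\vartheta+s) - \ell(Y;\vartheta) = \log \mathbb{E}[e^{\Delta(s)}|Y]$. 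The coefficient of $s^j/j!$ on the left is exactly $\nabla^j \ell(Y)$, while the right-hand side is the conditional cumulant generating function of the formal random variable $\Delta(s)$ given $Y$.

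Second, I would expand the right-hand side in powers of $s$ and identify $\nabla^j \ell(Y)$ with $j!$ times the coefficient of $s^j$, i.e.\ with the degree-$j$ conditional cumulant of $\Delta(s)$. Because $\Delta(s)$ is itself a power series in $s$, this cumulant collects, over all compositions $i_1 + \cdots + i_r = j$, products $\nabla^{i_1}\ell(Y,X)\cdots\nabla^{i_r}\ell(Y,X)$ weighted by the appropriate multinomial and cumulant coefficients. Writing cumulants in terms of central moments — the mean for $\kappa_1$, then $\kappa_2 = \mu_2^c$, $\kappa_3 = \mu_3^c$, $\kappa_4 = \mu_4^c - 3(\mu_2^c)^2$, and so on — produces precisely the central conditional moments $\mathbb{E}^c[\cdot\,|Y]$ in the statement; the subtracted terms such as $-3\{\mathbb{E}^c[(\nabla\ell(Y,X))^2|Y]\}^2$ are exactly the cumulant corrections. (As a cross-check at each order one may instead iterate the operator identity $\nabla\,\mathbb{E}[g|Y] = \mathbb{E}[\nabla g|Y] + \mathbb{E}^c[g\,\nabla\ell(Y,X)|Y]$, which follows from $\nabla \log P(X|Y) = \nabla\ell(Y,X) - \nabla\ell(Y)$ and reproduces the same formulas recursively.) The conditional version of the lemma follows verbatim by carrying $Z$ through every step, replacing $\mathbb{E}[\cdot|Y]$ by $\mathbb{E}[\cdot|Y,Z]$ and $P(Y,X;\vartheta)$ by $P(Y,X|Z;\vartheta)$.

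The main obstacle is the combinatorial bookkeeping at orders five and six: the coefficient of $s^j$ in $\log\mathbb{E}[e^{\Delta(s)}|Y]$ mixes all partitions of $j$, and each resulting product of derivatives must be matched term by term against the claimed expression, with the correct numerical coefficients and the correct pattern of centering. This step is purely mechanical but delicate, and is where any error would be most likely to arise; everything else is a direct consequence of the master identity and the standard moment-to-cumulant relations.
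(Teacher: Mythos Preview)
Your approach is correct and genuinely different from the paper's. The paper proceeds by direct computation: it expresses each $\nabla^j \log P(Y)$ as a polynomial in the ratios $\nabla^i P(Y)/P(Y)$ via the explicit identities in (\ref{der_formula}), invokes the key relation $\nabla^i P(Y)/P(Y) = \mathbb{E}[\nabla^i P(Y,X)/P(Y,X)\,|\,Y]$, rewrites each $\nabla^i P(Y,X)/P(Y,X)$ back in terms of the $\nabla^i \log P(Y,X)$ (again using (\ref{der_formula})), and finally collects terms order by order. Your route replaces this with the single generating-function identity $\ell(Y;\vartheta+s)-\ell(Y;\vartheta)=\log\mathbb{E}[e^{\Delta(s)}\,|\,Y]$ and reads off $\nabla^j\ell(Y)$ as a sum of joint conditional cumulants of the $\nabla^i\ell(Y,X)$; the centered-moment corrections (e.g.\ the $-3\{\mathbb{E}^c[(\nabla\ell)^2|Y]\}^2$ at order four, or the $-10\mathbb{E}^c[(\nabla\ell)^3|Y]\mathbb{E}^c[(\nabla\ell)^2|Y]$ at order five) are then automatic from the standard cumulant-to-central-moment dictionary. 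Your alternative operator recursion $\nabla\,\mathbb{E}[g|Y]=\mathbb{E}[\nabla g|Y]+\mathbb{E}^c[g\,\nabla\ell(Y,X)|Y]$ is also valid and gives the same formulas iteratively. What your approach buys is a systematic bookkeeping that scales cleanly to any order and makes the structure of the subtracted terms transparent; what the paper's approach buys is that it is entirely self-contained (no cumulant machinery needed) and, as a byproduct, produces the explicit polynomial identities (\ref{der_formula}), which the paper reuses elsewhere (notably in the proof of Proposition~\ref{lemma-omega}).
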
 
\begin{proof}[Proof of Lemma \ref{louis}]
The stated result follows from a direct calculation and relations such as\\ $\nabla_\vartheta^j P(Y;\vartheta)/P(Y;\vartheta) = \mathbb{E}[ \nabla_\vartheta^j P(Y,X;\vartheta)/P(Y,X;\vartheta)|Y]$ and
\begin{equation} \label{der_formula}
\begin{aligned}
\nabla \log f &= \nabla f/f, \quad \nabla^2 \log f = \nabla^2 f/f - (\nabla \log f )^2,\\
\nabla^3 \log f &= \nabla^3 f/f -3 \nabla^2 f \nabla f/f^2 + 2 (\nabla f/f)^3, \\
\nabla^4 \log f &= \nabla^4 f/f -4 \nabla^3 f \nabla f/f^2 - 3 (\nabla^2 f/f)^2 + 12 \nabla^2 f (\nabla f)^2/ f^3 -6 (\nabla f/f)^4, \\
\nabla^5 \log f &= \nabla^5 f/f -5 \nabla^4 f \nabla f/f^2 - 10 \nabla^3 f \nabla^2 f/f^2 + 20 \nabla^3 f (\nabla f)^2/f^3 \\
& \quad + 30 (\nabla^2 f)^2 \nabla f /f^3 - 60 \nabla^2 f(\nabla f)^3 /f^4 + 24 (\nabla f/f)^5, \\
\nabla^6 \log f & = \nabla ^6 f/f -6 \nabla^5 f \nabla f/f^2 - 15 \nabla ^4 f \nabla^2 f/f^2 + 30 \nabla^4 f (\nabla f)^2 / f^3 -10 (\nabla^3 f)^2/f^2 \\
& \quad +120 \nabla^3 f \nabla^2 f \nabla f/f^3 -120 \nabla^3 f (\nabla f)^3/f^4 + 30 (\nabla^2 f)^3/f^3 \\
& \quad - 270 (\nabla^2 f)^2(\nabla f)^2 /f^4 +360 \nabla^2 f (\nabla f)^4 /f^5 - 120 (\nabla f)^6/f^6, \\
\nabla^3 f/f &= \nabla^3 \log f + 3 \nabla^2 \log f \nabla \log f +\left(\nabla \log f\right)^3,\\
\nabla^4 f/f &= \nabla^4 \log f + 4 \nabla^3 \log f \nabla \log f + 3(\nabla^2 \log f )^2 +6 \nabla^2 \log f (\nabla\log f)^2+ (\nabla\log f)^4,\\
\nabla^5 f/f & = \nabla^5 \log f +5 \nabla^4 \log f \nabla \log f +10 \nabla^3 \log f \nabla^2 \log f +10 \nabla^3 \log f (\nabla \log f)^2 \\
& \quad +15 (\nabla^2 \log f)^2 \nabla \log f +10 \nabla^2 \log f (\nabla \log f)^3 + (\nabla\log f)^5, \\
\nabla^6 f/f & = \nabla^6 \log f + 6 \nabla^5 \log f \nabla \log f + 15 \nabla^4 \log f \nabla^2 \log f + 15\nabla^4 \log f (\nabla \log f)^2 \\
& \qquad +10( \nabla^3 \log f )^2 +60 \nabla^3 \log f \nabla^2 \log f \nabla \log f +20\nabla^3\log f (\nabla \log f)^3 \\
& \qquad +15( \nabla^2 \log f )^3 +45(\nabla^2 \log f)^2( \nabla \log f)^2 +15\nabla^2 \log f(\nabla \log f)^4 + (\nabla\log f)^6.
\end{aligned}
\end{equation}
For example, $\nabla^3 \ell(Y)$ is derived by writing $\nabla^3 \ell(Y)$ as, with suppressing $\vartheta$, 
\begin{align*}
& \nabla^3 \ell(Y) \\
& = \frac{\nabla^3 P(Y)}{P(Y)} - 3 \frac{\nabla^2 P(Y)}{P(Y)}\frac{\nabla P(Y)}{P(Y)} + 2\left(\frac{\nabla P(Y)}{P(Y)}\right)^3 \\
& = \mathbb{E}\left[ \frac{\nabla^3 P(Y,X)}{P(Y,X)} \middle| Y\right] -3 \mathbb{E}\left[ \frac{\nabla^2 P(Y,X)}{P(Y,X)} \middle| Y\right] \mathbb{E}\left[ \frac{\nabla P(Y,X)}{P(Y,X)} \middle| Y\right] + 2 \left\{ \mathbb{E}\left[ \frac{\nabla P(Y,X)}{P(Y,X)} \middle| Y\right]\right\}^3\\
& = \mathbb{E}\left[ \nabla^3 \ell(Y,X) + 3 \nabla^2 \ell(Y,X) \nabla \ell(Y,X) + (\nabla \ell(Y,X))^3 \middle| Y\right]\\
&\quad -3 \mathbb{E}\left[ \nabla^2 \ell(Y,X) + (\nabla \ell(Y,X))^2 \middle| Y\right] \mathbb{E}\left[ \nabla \ell(Y,X) \middle| Y\right]+ 2 \left\{ \mathbb{E}\left[ \nabla \ell(Y,X) \middle| Y\right]\right\}^3,
\end{align*}
and collecting terms. $\nabla^4 \ell(Y)$, $\nabla^5 \ell(Y)$, and $\nabla^6 \ell(Y)$ are derived similarly.
\end{proof}

\subsubsection{Auxiliary lemmas}

We first collect the notations. Define $\overline{\bf Z}_{k-1}^k:=(X_{k-1},\overline{\bf Y}_{k-1},W_k,X_k,Y_k)$ and denote the derivative of the complete data log-density by
\begin{equation} \label{phi_i}
\phi^i(\vartheta,\overline{\bf Z}_{k-1}^k) :=\nabla^i \log p_{\vartheta}(Y_k,X_k|\overline {\bf Y}_{k-1},X_{k-1},W_k), \quad i\geq 1.
\end{equation}
We use short-handed notation $\phi^i_{\vartheta k}:=\phi^i(\vartheta,\overline{\bf Z}_{k-1}^k)$. We also suppress the superscript $1$ from $\phi^1_{\vartheta k}$, so that $\phi_{\vartheta k}=\phi^1_{\vartheta k}$. 
For random variables $V_1,\ldots,V_q$ and a conditioning set $\mathcal{F}$, define the central conditional moment of $(V_1,\ldots,V_q)$ as
\begin{align*}
\mathbb{E}_\vartheta^c \left[V_1,\ldots,V_q \middle|\mathcal{F} \right] & := \mathbb{E}_{\vartheta} \left[ \left(V_1-\mathbb{E}_{\vartheta}[V_1|\mathcal{F}]\right) \cdots \left(V_q-\mathbb{E}_{\vartheta}[V_q|\mathcal{F}]\right) \middle| \mathcal{F}\right].
\end{align*}
For example, $\mathbb{E}_{\vartheta}^c \left[\phi_{\vartheta k_1}\phi_{\vartheta k_2} \middle|\mathcal{F} \right] := \mathbb{E}_{\vartheta} \left[ \left( \phi_{\vartheta k_1}-\mathbb{E}_{\vartheta}[\phi_{\vartheta k_1}|\mathcal{F}]\right) \left( \phi_{\vartheta k_2}-\mathbb{E}_{\vartheta}[\phi_{\vartheta k_2}|\mathcal{F}]\right) \middle| \mathcal{F}\right]$.
 
Let $\mathcal{I}(j) = (i_1,\ldots,i_j)$ denote a sequence of positive integers with $j$ elements, let $\sigma(\mathcal{I}(j))$ denote the set of all the unique permutations of $(i_1,\ldots,i_j)$, and let $|\sigma(\mathcal{I}(j))|$ denote its cardinality. For example, if $\mathcal{I}(3) = (2,1,1)$, then $\sigma(\mathcal{I}(3)) = \{(2,1,1), (1,2,1), (1,1,2)\}$ and $|\sigma(\mathcal{I}(3))|=3$; if $\mathcal{I}(3) = (1,1,1)$, then $\sigma(\mathcal{I}(3)) = (1,1,1)$ and $|\mathcal{I}(3)|=1$. Let $\mathcal{T}(j) = (t_1,\ldots,t_j)$ for $j=1,\ldots,6$. For a conditioning set $\mathcal{F}$, define the symmetrized central conditional moments as 
\begin{equation} \label{Phi_defn}
\begin{aligned}
\Phi^{\mathcal{I}(1)}_{\vartheta \mathcal{T}(1)}[\mathcal{F}]& :=
\mathbb{E}_{\vartheta}\left[ \phi^{i_1}_{\vartheta t_1} \middle| \mathcal{F}\right] ,
\quad
\Phi^{\mathcal{I}(2)}_{\vartheta \mathcal{T}(2)}[\mathcal{F}] :=
\frac{1}{|\sigma(\mathcal{I}(2))|} \sum_{(\ell_1,\ell_2) \in \sigma(\mathcal{I}(2))} \mathbb{E}_{\vartheta}^c\left[ \phi^{\ell_1}_{\vartheta t_1} \phi^{\ell_2}_{\vartheta t_2} \middle| \mathcal{F}\right],
\\
\Phi^{\mathcal{I}(3)}_{\vartheta \mathcal{T}(3)}[\mathcal{F}] & := 
\frac{1}{|\sigma(\mathcal{I}(3))|} \sum_{(\ell_1,\ell_2,\ell_3) \in \sigma(\mathcal{I}(3))} \mathbb{E}_{\vartheta}^c\left[ \phi^{\ell_1}_{\vartheta t_1} \phi^{\ell_2}_{\vartheta t_2} \phi^{\ell_3}_{\vartheta t_3} \middle| \mathcal{F}\right], \\
\Phi^{\mathcal{I}(4)}_{\vartheta \mathcal{T}(4)}[\mathcal{F}] & := \frac{1}{|\sigma(\mathcal{I}(4))|} \sum_{(\ell_1,\ldots,\ell_4) \in \sigma(\mathcal{I}(4))} \tilde \Phi_{\vartheta \mathcal{T}(4)}^{\ell_1 \ell_2 \ell_3 \ell_4}, 
\end{aligned}
\end{equation}
where $\tilde \Phi_{\vartheta \mathcal{T}(4)}^{\ell_1 \ell_2 \ell_3 \ell_4} :=\mathbb{E}_{\vartheta}^c[ \phi^{\ell_1}_{\vartheta t_1} \phi^{\ell_2}_{\vartheta t_2} \phi^{\ell_3}_{\vartheta t_3} \phi^{\ell_4}_{\vartheta t_4} | \mathcal{F}] - \mathbb{E}_{\vartheta}^c[\phi^{\ell_1}_{\vartheta t_1} \phi^{\ell_2}_{\vartheta t_2}|\mathcal{F}] \mathbb{E}_{\vartheta}^c[\phi^{\ell_3}_{\vartheta t_3} \phi^{\ell_4}_{\vartheta t_4}|\mathcal{F}] \\- \mathbb{E}_{\vartheta}^c[\phi^{\ell_1}_{\vartheta t_1} \phi^{\ell_3}_{\vartheta t_3}|\mathcal{F}] \mathbb{E}_{\vartheta}^c[\phi^{\ell_2}_{\vartheta t_2} \phi^{\ell_4}_{\vartheta t_4}|\mathcal{F}] - \mathbb{E}_{\vartheta}^c[\phi^{\ell_1}_{\vartheta t_1} \phi^{\ell_4}_{\vartheta t_4}|\mathcal{F}]\mathbb{E}_{\vartheta}^c[\phi^{\ell_2}_{\vartheta t_2} \phi^{\ell_3}_{\vartheta t_3}|\mathcal{F}] $, and
\begin{equation} \label{Phi_defn_2}
\begin{aligned} 
\Phi^{\mathcal{I}(5)}_{\vartheta \mathcal{T}(5)}[\mathcal{F}] & := \frac{1}{|\sigma(\mathcal{I}(5))|} \sum_{(\ell_1,\ldots,\ell_5) \in \sigma(\mathcal{I}(5))} \left( \mathbb{E}_\vartheta^c\left[ \phi^{\ell_1}_{\theta t_1} \phi^{\ell_2}_{\theta t_2} \phi^{\ell_3}_{\theta t_3} \phi^{\ell_4}_{\theta t_4} \phi^{\ell_5}_{\theta t_5} \middle| \mathcal{F}\right] \right. \\
& \left. \qquad - \sum_{(\{a,b,c\},\{d,e\}) \in \sigma_5} \mathbb{E}_\vartheta^c\left[\phi^{\ell_{a}}_{\theta t_{a}} \phi^{\ell_b}_{\theta t_b}\phi^{\ell_c}_{\theta t_c} \middle|\mathcal{F}\right] \mathbb{E}_\vartheta^c\left[ \phi^{\ell_d}_{\theta t_d} \phi^{\ell_e}_{\theta t_e} \middle| \mathcal{F}\right] \right), \\
\Phi^{\mathcal{I}(6)}_{\vartheta \mathcal{T}(6)}[\mathcal{F}]
 &:= \mathbb{E}_\vartheta^c\left[ \phi_{\theta t_1} \phi_{\theta t_2} \phi_{\theta t_3} \phi_{\theta t_4} \phi_{\theta t_5} \phi_{\theta t_6} \middle| \mathcal{F}\right] - \sum_{(\{a,b,c,d\},\{e,f\}) \in \sigma_{61}} \mathbb{E}_\vartheta^c\left[ \phi_{\theta t_a} \phi_{\theta t_b} \phi_{\theta t_c} \phi_{\theta t_d} \middle| \mathcal{F}\right] \mathbb{E}_\vartheta^c\left[ \phi_{\theta t_e} \phi_{\theta t_f} \middle| \mathcal{F}\right] \\
&\qquad - \sum_{(\{a,b,c\},\{d,e,f\}) \in \sigma_{62}} \mathbb{E}_\vartheta^c\left[ \phi_{\theta t_a} \phi_{\theta t_b} \phi_{\theta t_c} \middle| \mathcal{F}\right] \mathbb{E}_\vartheta^c\left[ \phi_{\theta t_d} \phi_{\theta t_e} \phi_{\theta t_f} \middle| \mathcal{F}\right] \\
&\qquad + 2 \sum_{(\{a,b\},\{c,d\},\{e,f\}) \in \sigma_{63}} \mathbb{E}_\vartheta^c\left[ \phi_{\theta t_a} \phi_{\theta t_b} \middle| \mathcal{F}\right] \mathbb{E}_\vartheta^c\left[ \phi_{\theta t_c} \phi_{\theta t_d} \middle| \mathcal{F}\right] \mathbb{E}_\vartheta^c\left[ \phi_{\theta t_e} \phi_{\theta t_f} \middle| \mathcal{F}\right],
\end{aligned}
\end{equation}
where
\begin{equation}\label{partitions}
\begin{aligned}
\sigma_5 &:= \text{the set of } {\textstyle{5 \choose 3} =10} \text{ partitions of } \{1,2,3,4,5\} \text{ of the form } \{a,b,c\},\{d,e\},\\
\sigma_{61} &:= \text{the set of } {\textstyle{6 \choose 4} = 15} \text{ partitions of } \{1,2,3,4,5,6\} \text{ of the form } \{a,b,c,d\},\{e,f\}, \\
\sigma_{62} &:= \text{the set of } {\textstyle{6 \choose 3}/2 = 10} \text{ partitions of } \{1,2,3,4,5,6\} \text{ of the form } \{a,b,c\},\{d,e,f\}, \\
\sigma_{63} &:= \text{the set of } {\textstyle{6 \choose 2}\textstyle{4 \choose 2}/6 = 15} \text{ partitions of } \{1,2,3,4,5,6\} \text{ of the form } \{a,b\},\{c,d\},\{e,f\}.
\end{aligned}
\end{equation} 
Note that these moments are symmetric with respect to $(t_1,\ldots,t_j)$. For $j=1,2,\ldots,6$, $k \geq 1$, $m \geq 0$, and $x \in \mathcal{X}$, define the difference between the sums of the $\Phi^{\mathcal{I}(j)}_{\vartheta \mathcal{T}(j)}$'s over different time indices and conditioning sets as 
\begin{equation}\label{delta-tau}
\begin{aligned}
\Delta^{\mathcal{I}(j)}_{j,k,m,x}(\vartheta) &:= \sum_{\mathcal{T}(j)\in \{-m+1,\ldots,k\}^j} {\Phi}^{\mathcal{I}(j)}_{\vartheta \mathcal{T}(j)}\left[\overline{\bf Y}_{-m}^k,{\bf W}_{-m}^{k},X_{-m}=x\right] \\
& \quad - \sum_{\mathcal{T}(j)\in \{-m+1,\ldots,k-1\}^j} {\Phi}^{\mathcal{I}(j)}_{\vartheta \mathcal{T}(j)}\left[\overline{\bf Y}_{-m}^{k-1},{\bf W}_{-m}^{k-1},X_{-m}=x\right], 
\end{aligned}
\end{equation} 
where $\sum_{\mathcal{T}(j)\in \{-m+1,\ldots,k\}^j}$ denotes $\sum_{t_1=-m+1}^k\sum_{t_2=-m+1}^k \cdots \sum_{t_j=-m+1}^k$, and $\sum_{\mathcal{T}(j) \in \{-m+1,\ldots,k-1\}^j}$ is defined similarly. Define $\overline{\Delta}^{\mathcal{I}(j)}_{j,k,m}(\theta)$ analogously to $\Delta^{\mathcal{I}(j)}_{j,k,m,x}(\vartheta)$ by dropping $X_{-m}=x$ from the conditioning variable.

Henceforth, we suppress the conditioning variable ${\bf W}_{-m}^n$ from the conditioning sets and conditional densities unless confusion might arise. The following lemma expresses the derivatives of the log-densities, $ \nabla^j \ell_{k,m,x}(\vartheta)$'s, in terms of the $\Delta^{\mathcal{I}(j)}_{j,k,m,x}(\vartheta)$'s. The first two equations are also given in DMR (p. 2272 and pp. 2276--7).
\begin{lemma} \label{ell_lambda}
For all $1 \leq k \leq n$, $m \geq 0$, and $x \in \mathcal{X}$,
\begin{align*}
&\nabla^1 \ell_{k,m,x}(\vartheta)=\Delta^1_{1,k,m,x}(\vartheta),\quad 
\nabla^2 \ell_{k,m,x}(\vartheta)=\Delta^2_{1,k,m,x}(\vartheta)+\Delta^{1,1}_{2,k,m,x}(\vartheta),\\ 
&\nabla^3 \ell_{k,m,x}(\vartheta)= \Delta^{3}_{1,k,m,x}(\vartheta)+3\Delta^{2,1}_{2,k,m,x}(\vartheta)+\Delta^{1,1,1}_{3,k,m,x}(\vartheta),\\
&\nabla^4 \ell_{k,m,x}(\vartheta)= \Delta^4_{1,k,m,x}(\vartheta) + 4 \Delta^{3,1}_{2,k,m,x}(\vartheta) + 3 \Delta^{2,2}_{2,k,m,x}(\vartheta) + 6\Delta^{2,1,1}_{3,k,m,x}(\vartheta) + \Delta^{1,1,1,1}_{4,k,m,x}(\vartheta),\\
&\nabla^5 \ell_{k,m,x}(\vartheta)= \Delta^5_{1,k,m,x}(\vartheta)+ 5\Delta^{4,1}_{2,k,m,x}(\vartheta) + 10 \Delta^{3,2}_{2,k,m,x}(\vartheta) +10\Delta^{3,1,1}_{3,k,m,x}(\vartheta)+15 \Delta^{2,2,1}_{3,k,m,x}(\vartheta)\\
& \quad +10\Delta^{2,1,1,1}_{4,k,m,x}(\vartheta) +\Delta^{1,1,1,1,1}_{5,k,m,x}(\vartheta), \\
&\nabla^6 \ell_{k,m,x}(\vartheta)= \Delta^6_{1,k,m,x}(\vartheta) + 6\Delta^{5,1}_{2,k,m,x}(\vartheta) + 15 \Delta^{4,2}_{2,k,m,x}(\vartheta) + 10 \Delta^{3,3}_{2,k,m,x}(\vartheta)+ 15\Delta^{4,1,1}_{3,k,m,x}(\vartheta) \\
& \quad + 60\Delta^{3,2,1}_{3,k,m,x}(\vartheta) + 15\Delta^{2,2,2}_{3,k,m,x}(\vartheta) + 20 \Delta^{3,1,1,1}_{4,k,m,x}(\vartheta) + 45 \Delta^{2,2,1,1}_{4,k,m,x}(\vartheta) + 15 \Delta^{2,1,1,1,1}_{5,k,m,x}(\vartheta) +\Delta^{1,1,1,1,1}_{6,k,m,x}(\vartheta). 
\end{align*} 
Further, the above holds when $\nabla^j \ell_{k,m,x}(\vartheta)$ and $\Delta^{\mathcal{I}(j)}_{j,k,m,x}(\vartheta)$ are replaced with $\nabla^j \overline \ell_{k,m}(\vartheta)$ and $\overline\Delta^{\mathcal{I}(j)}_{j,k,m}(\vartheta)$.
\end{lemma}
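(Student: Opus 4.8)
The plan is to combine the missing information principle of Lemma \ref{louis} with the Markov factorization of the complete-data density, and then to pass from a block joint density to the one-step conditional density by differencing, exactly as DMR do for $j=1,2$. Write $L_{k,m,x}(\vartheta) := \log p_{\vartheta}({\bf Y}_{-m+1}^k \mid \overline{\bf Y}_{-m},{\bf W}_{-m}^k,X_{-m}=x)$ for the joint log-density of the block of observations. Since $\ell_{k,m,x}(\vartheta)$ is the logarithm of the ratio of consecutive block densities,
\begin{equation*}
\nabla^j \ell_{k,m,x}(\vartheta) = \nabla^j L_{k,m,x}(\vartheta) - \nabla^j L_{k-1,m,x}(\vartheta).
\end{equation*}
Under Assumption \ref{assn_a1} the complete-data log-density of the block factorizes as $\log p_{\vartheta}({\bf Y}_{-m+1}^k,{\bf X}_{-m+1}^k \mid \overline{\bf Y}_{-m},{\bf W}_{-m}^k,X_{-m}=x) = \sum_{t=-m+1}^k \log p_{\vartheta}(Y_t,X_t\mid\overline{\bf Y}_{t-1},X_{t-1},W_t)$, whose $i$-th derivative is $\sum_{t=-m+1}^k \phi^i_{\vartheta t}$ in the notation of (\ref{phi_i}). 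First I would apply the conditional version of Lemma \ref{louis} with $Y={\bf Y}_{-m+1}^k$, missing data $X={\bf X}_{-m+1}^k$, and conditioning $Z=(\overline{\bf Y}_{-m},{\bf W}_{-m}^k,X_{-m}=x)$, identifying $\nabla^i\ell(Y,X)$ with $\sum_{t=-m+1}^k\phi^i_{\vartheta t}$.

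Next I would substitute these sums into the right-hand sides of Lemma \ref{louis} and expand each (central) conditional moment of a sum into a sum over time indices of (central) conditional moments of products of the $\phi^i_{\vartheta t}$. The claim is that each term on the right-hand side of Lemma \ref{louis} then becomes precisely one of the sums $\sum_{\mathcal{T}(j)}\Phi^{\mathcal{I}(j)}_{\vartheta\mathcal{T}(j)}$ of (\ref{delta-tau}), with the same numerical coefficient. Two observations make this match work: (i) the symmetry of the summation over $\mathcal{T}(j)\in\{-m+1,\ldots,k\}^j$ renders the averaging over permutations $\sigma(\mathcal{I}(j))$ in (\ref{Phi_defn}) harmless, since each permutation contributes the same value after relabeling the dummy indices; and (ii) the subtracted products of lower-order central moments in (\ref{Phi_defn})--(\ref{Phi_defn_2}) reproduce exactly the cumulant-correction terms of Lemma \ref{louis}. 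For instance, the combination $\mathbb{E}^c[(\nabla\ell(Y,X))^4\mid Y]-3\{\mathbb{E}^c[(\nabla\ell(Y,X))^2\mid Y]\}^2$ appearing in $\nabla^4\ell(Y)$ becomes $\sum_{\mathcal{T}(4)}\tilde\Phi^{1,1,1,1}_{\vartheta\mathcal{T}(4)}$ once one notes that summing $\mathbb{E}^c_\vartheta[\phi_{\vartheta t_1}\phi_{\vartheta t_2}\mid\cdot]\,\mathbb{E}^c_\vartheta[\phi_{\vartheta t_3}\phi_{\vartheta t_4}\mid\cdot]$ over all $(t_1,\ldots,t_4)$ gives the same value for each of the three pairings subtracted in the definition of $\tilde\Phi^{1,1,1,1}_{\vartheta\mathcal{T}(4)}$.

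Having matched each term, the differencing $L_{k,m,x}-L_{k-1,m,x}$ turns every sum $\sum_{\mathcal{T}(j)\in\{-m+1,\ldots,k\}^j}\Phi^{\mathcal{I}(j)}_{\vartheta\mathcal{T}(j)}$ into the difference $\Delta^{\mathcal{I}(j)}_{j,k,m,x}(\vartheta)$ of (\ref{delta-tau}), and the coefficients of Lemma \ref{louis} carry over verbatim to yield the six displayed identities. The statement for $\nabla^j\overline\ell_{k,m}(\vartheta)$ follows from the identical argument applied to $\overline L_{k,m}(\vartheta):=\log\overline p_{\vartheta}({\bf Y}_{-m+1}^k\mid\overline{\bf Y}_{-m},{\bf W}_{-m}^k)$, using the version of Lemma \ref{louis} in which $X_{-m}$ is integrated against its true stationary conditional distribution rather than fixed at $x$, which simply removes $X_{-m}=x$ from every conditioning set and produces $\overline\Delta^{\mathcal{I}(j)}_{j,k,m}(\vartheta)$.

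The main obstacle is the bookkeeping at orders $j=5$ and $j=6$: one must verify that the partition sets $\sigma_5$, $\sigma_{61}$, $\sigma_{62}$, $\sigma_{63}$ in (\ref{partitions}), together with the multinomial coefficients generated by expanding $(\sum_t\phi^i_{\vartheta t})^r$, reproduce exactly the coefficients $5,10,10,15,10,\ldots$ and the higher cumulant corrections of Lemma \ref{louis}. This is purely combinatorial, but it requires care because several structurally distinct products of central moments enter at these orders, and one must confirm both that the correct partitions appear and that, under the symmetric summation over $\mathcal{T}(j)$, each equivalence class of partitions collapses to a single representative with the multiplicity predicted by Lemma \ref{louis}.
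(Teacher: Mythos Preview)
Your proposal is correct and follows essentially the same approach as the paper: write $\nabla^j\ell_{k,m,x}(\vartheta)$ as the difference of block log-densities, apply Lemma~\ref{louis} with the complete-data score $\sum_t\phi^i_{\vartheta t}$, expand the resulting (central) conditional moments over time indices, and observe that the symmetrization in $\Phi^{\mathcal{I}(j)}_{\vartheta\mathcal{T}(j)}$ together with the subtracted lower-order products matches the cumulant-correction terms of Lemma~\ref{louis}. Your illustrative example for $j=4$ (the term $\mathbb{E}^c[(\nabla\ell)^4]-3\{\mathbb{E}^c[(\nabla\ell)^2]\}^2$ collapsing to $\sum_{\mathcal{T}(4)}\Phi^{1,1,1,1}_{\vartheta\mathcal{T}(4)}$) is exactly the one the paper highlights, and your assessment that the $j=5,6$ cases are purely combinatorial bookkeeping mirrors the paper's treatment.
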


\begin{proof}[Proof of Lemma \ref{ell_lambda}]
The stated result follows from writing $\nabla^j \ell_{k,m,x}(\vartheta) = \\ \nabla^j \log p_{\vartheta}({\bf Y}_{-m+1}^{k}|\overline{\bf{Y}}_{-m},X_{-m}=x) - \nabla^j \log p_{\vartheta}({\bf{Y}}_{-m+1}^{k-1}|\overline{\bf{Y}}_{-m},X_{-m}=x)$, applying Lemma \ref{louis} to the right-hand side, and noting that $\nabla^j \log p_{\vartheta}({\bf Y}_{-m+1}^k,{\bf X}_{-m+1}^k|\overline{\bf{Y}}_{-m},X_{-m}) = \sum_{t=-m+1}^k \phi^j(\vartheta,\overline{\bf Z}_{t-1}^t)$ (see (\ref{cond_density}) and (\ref{phi_i})). The result for $\nabla^j \ell_{k,m,x}(\vartheta)$ with $j=1,2$ is also given in DMR (p. 2272 and pp. 2276--7). For $j=3$, the term $\Delta^{2,1}_{2,k,m,x}(\vartheta)$ follows from $\sum_{t_1=-m+1}^k \sum_{t_2=-m+1}^k \mathbb{E}_\vartheta^c[ \phi^{2}_{\vartheta t_1} \phi^{1}_{\vartheta t_2} |\overline{\bf Y}_{-m}^k,X_{-m}=x] = \sum_{t_1=-m+1}^k \sum_{t_2=-m+1}^k\Phi^{2,1}_{\vartheta t_1t_2}[\overline{\bf Y}_{-m}^k,X_{-m}=x]$. For $j=4$, note that when we apply Lemma \ref{louis} to $\nabla^4 \log p_{\vartheta}({\bf Y}_{-m+1}^{k}|\overline{\bf{Y}}_{-m},X_{-m}=x)$, the last two terms on the right-hand side of Lemma \ref{louis} can be written as $\sum_{\mathcal{T}(4)\in\{-m+1,\ldots,k\}^4 } {\Phi}^{1,1,1,1}_{\vartheta \mathcal{T}(4)}[\overline{\bf Y}_{-m}^k,X_{-m}=x]$. The result for $j=5$ follows from a similar argument. For $j=6$, note that when we apply Lemma \ref{louis} to $\nabla^6 \log p_{\vartheta}({\bf Y}_{-m+1}^{k}|\overline{\bf{Y}}_{-m},X_{-m}=x)$, the last four terms on the right-hand side of Lemma \ref{louis} can be written as $\sum_{\mathcal{T}(6)\in\{-m+1,\ldots,k\}^6} {\Phi}^{\mathcal{I}(6)}_{\vartheta \mathcal{T}(6)}[\overline{\bf Y}_{-m}^k,X_{-m}=x]$.
\end{proof}

The following lemma provides bounds on $\Phi^{\mathcal{I}(j)}_{\vartheta \mathcal{T}(j)}[\mathcal{F}]$ defined in (\ref{Phi_defn}) and (\ref{Phi_defn_2}) and is used in the proof of Lemma \ref{lemma-bound-1}. For $j =2,\ldots,6$, define $\|\phi^i_t\|_{\infty}:=\sup_{\vartheta\in \mathcal{N}^*} \sup_{x,x'}|\phi^i(\vartheta,Y_t,x,\overline{\bf{Y}}_{t-1},x')|$ and $\|\phi^{\mathcal{I}(j)}_{\mathcal{T}(j)}\|_{\infty}:=\sum_{(\ell_1,\ldots,\ell_j)\in\sigma(\mathcal{I}(j)) }\|\phi_{t_1}^{\ell_1}\|_{\infty} \cdots \|\phi_{t_j}^{\ell_j}\|_{\infty}$.
\begin{lemma}\label{lemma_ijl}
Under Assumptions \ref{assn_a1}, \ref{assn_a2}, and \ref{assn_a4}, there exists a finite non-stochastic constant $C$ that does not depend on $\rho$ such that, for all $m'\geq m\geq 0$, all $-m<t_1\leq t_2 \leq \cdots \leq t_j\leq n$, all $\vartheta\in \mathcal{N}^*$ and all $x\in \mathcal{X}$, and $j=2,\ldots,6$,
\begin{equation*}
\begin{aligned} 
(a) & \quad |\Phi^{\mathcal{I}(j)}_{\vartheta \mathcal{T}(j)}[\overline{\bf{Y}}_{-m}^n]| \leq C \rho^{(t_2-t_1-1)_+\vee(t_3-t_2-1)_+\vee \cdots \vee(t_j-t_{j-1}-1)_+} \|\phi^{\mathcal{I}(j)}_{\mathcal{T}(j)}\|_{\infty},\\
(b) & \quad |\Phi^{\mathcal{I}(j)}_{\vartheta \mathcal{T}(j)}[\overline{\bf{Y}}_{-m}^n,X_{-m}=x]|\leq C \rho^{(t_2-t_1-1)_+\vee(t_3-t_2-1)_+\vee \cdots \vee(t_j-t_{j-1}-1)_+}\|\phi^{\mathcal{I}(j)}_{\mathcal{T}(j)}\|_{\infty},\\
(c) & \quad |\Phi^{\mathcal{I}(j)}_{\vartheta \mathcal{T}(j)}[\overline{\bf{Y}}_{-m}^n,X_{-m}=x]
- \Phi^{\mathcal{I}(j)}_{\vartheta \mathcal{T}(j)}[\overline{\bf{Y}}_{-m}^n]|\leq C\rho^{(m+t_1-1)_+}\|\phi^{\mathcal{I}(j)}_{\mathcal{T}(j)}\|_{\infty},\\
(d) & \quad |\Phi^{\mathcal{I}(j)}_{\vartheta \mathcal{T}(j)}[\overline{\bf{Y}}_{-m}^n,X_{-m}=x]
- \Phi^{\mathcal{I}(j)}_{\vartheta \mathcal{T}(j)}[\overline{\bf{Y}}_{-m'}^n,X_{-m'}=x]|\leq C\rho^{(m+t_1-1)_+}\|\phi^{\mathcal{I}(j)}_{\mathcal{T}(j)}\|_{\infty},\\
(e) & \quad |\Phi^{\mathcal{I}(j)}_{\vartheta \mathcal{T}(j)}[\overline{\bf{Y}}_{-m}^n] - \Phi^{\mathcal{I}(j)}_{\vartheta \mathcal{T}(j)}[\overline{\bf{Y}}_{-m}^{n -1}]| \leq C\rho^{(n -1 -t_j)_+}\|\phi^{\mathcal{I}(j)}_{\mathcal{T}(j)}\|_{\infty},\\
(f) & \quad |\Phi^{\mathcal{I}(j)}_{\vartheta \mathcal{T}(j)}[\overline{\bf{Y}}_{-m}^n,X_{-m}=x] - \Phi^{\mathcal{I}(j)}_{\vartheta \mathcal{T}(j)}[\overline{\bf{Y}}_{-m}^{n -1},X_{-m}=x]| \leq C\rho^{(n -1 -t_j)_+}\|\phi^{\mathcal{I}(j)}_{\mathcal{T}(j)}\|_{\infty}.
\end{aligned}
\end{equation*}
\end{lemma}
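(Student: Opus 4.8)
The plan is to obtain all six bounds from a single mechanism: conditionally on the observed sequence, the hidden chain $\{X_t\}$ inherits a geometric forgetting property from the transition bounds in Assumption \ref{assn_a2}(a), and the symmetrized central moments $\Phi^{\mathcal{I}(j)}_{\vartheta\mathcal{T}(j)}[\mathcal{F}]$ in \eqref{Phi_defn}--\eqref{Phi_defn_2} are cumulant-type quantities engineered precisely so that their leading factorized pieces cancel, leaving a remainder governed by the largest temporal gap. I would first record two preliminary facts. \textbf{Boundedness:} because $\mathcal{X}$ is finite and $\vartheta\in\mathcal{N}^*$, each $\phi^i(\vartheta,\overline{\bf Z}^t_{t-1})$ defined in \eqref{phi_i} is, given $(\overline{\bf Y}_{t-1},W_t)$, a function of the finitely many values of the hidden pair $(X_{t-1},X_t)$; hence $\|\phi^i_t\|_\infty<\infty$ (this is where Assumption \ref{assn_a4} enters), and every conditional (central) moment is a finite weighted average over state configurations, the weights being the smoothing law of the states. \textbf{Conditional forgetting:} Lemma \ref{x_diff}(a), which extends Corollary 1 of DMR to accommodate the $W_k$'s, shows that the smoothing law has Dobrushin coefficient bounded by $\rho=1-\sigma_-/\sigma_+$, so hidden states separated by a gap $g$ are approximately conditionally independent with error $O(\rho^g)$, uniformly in $\vartheta\in\mathcal{N}^*$ and in the conditioning observations, and whether or not the initial state is fixed. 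This yields the forward, backward, and two-sided versions of forgetting used below.

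For parts (a) and (b) I would order the indices as $t_1\le\cdots\le t_j$, locate the largest gap $(t_{p+1}-t_p-1)_+$, and split the indices into a left cluster $\{t_1,\dots,t_p\}$ and a right cluster $\{t_{p+1},\dots,t_j\}$. Writing $V_\ell:=\phi^{i_\ell}_{\vartheta t_\ell}-\mathbb{E}_\vartheta[\phi^{i_\ell}_{\vartheta t_\ell}\mid\mathcal{F}]$, the two-sided forgetting property factorizes $\mathbb{E}_\vartheta[\prod_\ell V_\ell\mid\mathcal{F}]$ across the gap up to an error of order $\rho^{(t_{p+1}-t_p-1)_+}$ times a product of $\|\phi^{\mathcal{I}(j)}_{\mathcal{T}(j)}\|_\infty$. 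The point is then that the non-decaying factorized term is exactly the partition product subtracted in the definition of $\Phi^{\mathcal{I}(j)}_{\vartheta\mathcal{T}(j)}$ (for $j\le3$ one cluster is a singleton, whose centered conditional mean vanishes, so the leading term is already $0$; for $j=4,5,6$ the subtracted products in \eqref{Phi_defn}--\eqref{Phi_defn_2} remove the cluster-respecting factorizations, while every cluster-crossing factor is itself $O(\rho^{\text{gap}})$). Combining, $|\Phi^{\mathcal{I}(j)}_{\vartheta\mathcal{T}(j)}[\mathcal{F}]|\le C\rho^{(t_2-t_1-1)_+\vee\cdots\vee(t_j-t_{j-1}-1)_+}\|\phi^{\mathcal{I}(j)}_{\mathcal{T}(j)}\|_\infty$. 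Since fixing $X_{-m}=x$ only reweights the same finite smoothing law, the bound holds identically in both cases (a) and (b).

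For parts (c)--(f) I would invoke the same forgetting property in the appropriate direction rather than across an internal gap. In (c) and (d) the two conditioning sets differ only in whether (or at which earlier time) the initial state is fixed; since all the indices satisfy $t_\ell\ge t_1$, the smoothing law of the relevant states depends on the initial condition only through a factor that forgets geometrically in $m+t_1$. Propagating this perturbation through the bounded products and through the (finitely many) cumulant subtractions gives $O(\rho^{(m+t_1-1)_+})\|\phi^{\mathcal{I}(j)}_{\mathcal{T}(j)}\|_\infty$. Symmetrically, in (e) and (f) the conditioning sets differ by the single terminal observation at time $n$; forward forgetting implies that appending $(\overline{\bf Y}_n,W_n)$ perturbs the smoothing law of the states at times $\le t_j$ only by $O(\rho^{(n-1-t_j)_+})$, which again propagates to the stated bound.

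I expect the main obstacle to be the combinatorial verification in parts (a)/(b) for $j=5$ and $j=6$: one must check, for every location of the largest gap and every induced split of $\{1,\dots,j\}$, that the partition sums $\sigma_5$ and $\sigma_{61},\sigma_{62},\sigma_{63}$ in \eqref{partitions} subtract off \emph{exactly} the cluster-respecting factorized terms and nothing else, so that the surviving terms are all cluster-crossing and hence uniformly $O(\rho^{\text{max gap}})$. This is precisely the cumulant identity that makes the bound scale with the single largest gap rather than with the sum of gaps; I would organize it by showing that $\Phi^{\mathcal{I}(j)}_{\vartheta\mathcal{T}(j)}$ equals the joint cumulant of the $V_\ell$ plus lower-order remainders, and then use that any joint cumulant of a family that splits into two conditionally independent blocks vanishes, leaving only the geometrically small coupling across the gap. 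The remaining algebra (the exact combinatorial coefficients) is routine once this structural statement is in place.
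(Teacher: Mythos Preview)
Your proposal is correct and tracks the paper's proof closely. Parts (c)--(f) are handled exactly as the paper does, via Lemma~\ref{x_diff}(a),(b) and the total-variation/bounded-integrand identity (with the trivial bound $|\Phi^{\mathcal{I}(j)}_{\vartheta\mathcal{T}(j)}[\cdot]|\le 2^j\|\phi^{\mathcal{I}(j)}_{\mathcal{T}(j)}\|_\infty$ covering the edge case $t_j=n$ in (e)/(f)).

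For parts (a)/(b) the two arguments differ only in organization. You fix the \emph{largest} gap once, factorize the full central moment across it using Lemma~\ref{x_diff}(c), and then invoke the cumulant identity to cancel the cluster-respecting piece, checking that every other subtracted partition product in \eqref{Phi_defn}--\eqref{Phi_defn_2} has a gap-crossing factor. The paper instead proves, for \emph{each} gap position $p$ separately, that $|\Phi^{\mathcal{I}(j)}_{\vartheta\mathcal{T}(j)}[\mathcal{F}]|\le \mathcal{C}\rho^{(t_{p+1}-t_p-1)_+}\|\phi^{\mathcal{I}(j)}_{\mathcal{T}(j)}\|_\infty$ via two elementary inequalities (their (\ref{cov1}) gives the first and last gaps, their (\ref{cov_k}) the middle ones), then takes the minimum over $p$. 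Your cumulant framing is cleaner conceptually---$\Phi^{\mathcal{I}(j)}$ for $j=4,5,6$ \emph{is} the symmetrized joint conditional cumulant, so the ``plus lower-order remainders'' caveat is unnecessary---but the actual combinatorial verification for $j=5,6$ that you flag as the main obstacle is exactly what the paper also carries out by hand (their $A_1,A_2$ and $B_1,B_2,B_3$ decompositions for $j=6$). Either route yields the same bound with the same amount of work.
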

\begin{proof}[Proof of Lemma \ref{lemma_ijl}]

Recall $\sup_{\vartheta \in \mathcal{N}^*}\sup_{x,x'}|\phi^i(\vartheta,Y_t,x,\overline{\bf{Y}}_{t-1},x')- \mathbb{E}_\vartheta[\phi^i(\vartheta,Y_t,x,\overline{\bf{Y}}_{t-1},x')|\mathcal{F}]| \\\leq 2 \sup_{\vartheta \in \mathcal{N}^*}\sup_{x,x'}|\phi^i(\vartheta,Y_t,x,\overline{\bf{Y}}_{t-1},x')|$ for the conditioning sets $\mathcal{F}$ that appear in the lemma. Define $\tilde \phi^i_{\vartheta t}:=\phi^i(\vartheta,\overline{\bf Z}_{t-1}^t)- \mathbb{E}_\vartheta[\phi^i(\vartheta,\overline{\bf Z}_{t-1}^t)| \overline{\bf Y}_{-m}^{n}]$, so that $\mathbb{E}_\vartheta^c[\phi^{\ell_1}_{\vartheta t_1} \cdots \phi^{\ell_j}_{\vartheta t_j}|\overline{\bf{Y}}_{-m}^n] = \mathbb{E}_\vartheta[ \tilde \phi^{\ell_1}_{\vartheta t_1} \cdots \tilde \phi^{\ell_j}_{\vartheta t_j}| \overline{\bf Y}_{-m}^n]$. Henceforth, we suppress the subscript $\vartheta$ from $\phi^{i}_{\vartheta t}$ and $\tilde \phi^{i}_{\vartheta t}$.
 
Recall that $\phi^i(\vartheta,\overline{\bf Z}_{t-1}^t)$ depends on $X_t$ and $X_{t-1}$. Parts (c) and (d) follow from Lemma \ref{x_diff}(a) and the fact that, for any two probability measures $\mu_1$ and $\mu_2$, $\sup_{f(x):\max_x|f(x)| \leq 1}|\int f(x) d\mu_1(x)- \int f(x) d\mu_2(x)| = 2 \|\mu_1-\mu_2\|_{TV}$ (see, e.g., \citet[][Proposition 4.5]{levin09book}). Similarly, parts (e) and (f) for $t_j \leq n-1$ follow from Lemma \ref{x_diff}(b), and parts (e) and (f) for $t_j=n$ follow from $|\Phi^{\mathcal{I}(j)}_{\vartheta \mathcal{T}(j)}[\cdot]| \leq 2^j \|\phi^{\mathcal{I}(j)}_{\mathcal{T}(j)}\|_{\infty}$.
 
We proceed to show parts (a) and (b). The results for $j=2$ and $j=3$ follow from Lemma \ref{x_diff}(c) and 
\begin{equation} \label{prod_moments}
\begin{aligned}
E(X_{t_1} - EX_{t_1})\cdots(X_{t_j} - EX_{t_j}) & = \text{cov}[X_{t_1},(X_{t_2} - EX_{t_2})\cdots(X_{t_j} - EX_{t_j})]\\
& = \text{cov}[(X_{t_1} - EX_{t_1})\cdots(X_{t_{j-1}} - EX_{t_{j-1}}),X_{t_j}].
\end{aligned}
\end{equation}
Before proving the results for $j\geq 4$, we collect some results. For a conditioning set $\mathcal{F}= \overline{\bf{Y}}_{-m}^n$ or $\{\overline{\bf{Y}}_{-m}^n,X_m=x \}$, Lemmas \ref{x_diff}(c) and (\ref{prod_moments}) imply that
\begin{align}
&\qquad |\mathbb{E}_\vartheta^c[\phi^{\ell_1}_{ t_1} \cdots \phi^{\ell_j}_{ t_j}|\mathcal{F}]| \leq \mathcal{C} \rho^{(t_2-t_1-1)_+\vee(t_j-t_{j-1}-1)_+} \|\phi^{\mathcal{I}(j)}_{\mathcal{T}(j)}\|_{\infty}, \label{cov1} \\
& |\mathbb{E}_\vartheta^c[\phi^{\ell_1}_{ t_1} \cdots \phi^{\ell_j}_{ t_j}|\mathcal{F}] - \mathbb{E}_\vartheta^c[\phi^{\ell_1}_{ t_1}\cdots\phi^{\ell_k}_{t_k}|\mathcal{F}]\mathbb{E}_\vartheta^c[\phi^{\ell_{k+1}}_{ t_{k+1}}\cdots \phi^{\ell_j}_{ t_j}|\mathcal{F}]| \nonumber\\
&\quad = | \text{cov}_\vartheta[\tilde \phi^{\ell_1}_{ t_1}\cdots \tilde \phi^{\ell_k}_{t_k}, \tilde \phi^{\ell_{k+1}}_{ t_{k+1}}\cdots \tilde \phi^{\ell_j}_{ t_j} |\mathcal{F}] | \leq \mathcal{C} \rho^{(t_{k+1}-t_{k}-1)_+}\|\phi^{\mathcal{I}(j)}_{\mathcal{T}(j)}\|_{\infty}\quad \text{for any $2 \leq k \leq j-2$}. \label{cov_k} 
\end{align}

Parts (a) and (b) hold for $j=4$ because $\Phi^{\mathcal{I}(4)}_{\vartheta \mathcal{T}(4)}[\mathcal{F}]\leq \mathcal{C} \rho^{(t_{2}-t_{1}-1)_+\vee (t_4 - t_3-1)_+}\|\phi^{\mathcal{I}(4)}_{\mathcal{T}(4)}\|_{\infty}$ from (\ref{cov1}) and we have $\Phi^{\mathcal{I}(4)}_{\vartheta \mathcal{T}(4)}[\mathcal{F}] \leq \mathcal{C} \rho^{(t_{3}-t_{2}-1)_+}\|\phi^{\mathcal{I}(4)}_{\mathcal{T}(4)}\|_{\infty}$ from writing $\tilde \Phi_{\vartheta \mathcal{T}(4)}^{\ell_1 \ell_2 \ell_3 \ell_4}$ defined in (\ref{Phi_defn}) as $\tilde \Phi_{\vartheta \mathcal{T}(4)}^{\ell_1 \ell_2 \ell_3 \ell_4} = \text{cov}_{\vartheta}[ \tilde\phi^{\ell_1}_{ t_1} \tilde\phi^{\ell_2}_{ t_2}, \tilde \phi^{\ell_3}_{ t_3} \tilde \phi^{\ell_4}_{ t_4} | \mathcal{F}] - \mathbb{E}_{\vartheta}^c[\phi^{\ell_1}_{ t_1} \phi^{\ell_3}_{ t_3}|\mathcal{F}] \mathbb{E}_{\vartheta}^c[\phi^{\ell_2}_{ t_2} \phi^{\ell_4}_{ t_4}|\mathcal{F}] - \mathbb{E}_{\vartheta}^c[\phi^{\ell_1}_{ t_1} \phi^{\ell_4}_{ t_4}|\mathcal{F}]\mathbb{E}_{\vartheta}^c[\phi^{\ell_2}_{ t_2} \phi^{\ell_3}_{ t_3}|\mathcal{F}]$ and applying (\ref{cov_k}). Parts (a)--(b) for $j=5$ follow from a similar argument.

For $j=6$, first, $\Phi^{\mathcal{I}(6)}_{\vartheta \mathcal{T}(6)}[\mathcal{F}]$ is bounded by $\mathcal{C} \rho^{(t_2-t_1-1)_+\vee(t_6-t_{5}-1)_+}\|\phi^{\mathcal{I}(6)}_{\mathcal{T}(6)}\|_{\infty}$ from (\ref{cov1}). Second, write $\Phi^{\mathcal{I}(6)}_{\vartheta \mathcal{T}(6)}[\mathcal{F}] = A_1 + A_2$, where $A_1 = \mathbb{E}_\vartheta^c\left[ \phi_{ t_1} \phi_{ t_2} \phi_{ t_3} \phi_{ t_4} \phi_{ t_5} \phi_{ t_6} \middle| \mathcal{F}\right] - \mathbb{E}_\vartheta^c\left[ \phi_{ t_1} \phi_{ t_2} \phi_{ t_3}\middle| \mathcal{F}\right] \mathbb{E}_\vartheta^c\left[ \phi_{ t_4} \phi_{ t_5} \phi_{ t_6} \middle| \mathcal{F}\right]$ and $A_2$ denotes all the terms on the right-hand side of $\Phi^{\mathcal{I}(6)}_{\vartheta \mathcal{T}(6)}[\mathcal{F}]$ in (\ref{Phi_defn_2}) except for $A_1$. $A_1$ is bounded by $\mathcal{C}\rho^{(t_4-t_3-1)_+}\|\phi^{\mathcal{I}(6)}_{\mathcal{T}(6)}\|_{\infty}$ from (\ref{cov_k}), and $A_2$ is bounded by $\mathcal{C}\rho^{(t_4-t_3-1)_+}\|\phi^{\mathcal{I}(6)}_{\mathcal{T}(6)}\|_{\infty}$ from (\ref{cov1}). Therefore, $\Phi^{\mathcal{I}(6)}_{\vartheta \mathcal{T}(6)}[\mathcal{F}]$ is bounded by $\mathcal{C} \rho^{(t_4-t_3-1)_+}\|\phi^{\mathcal{I}(6)}_{\mathcal{T}(6)}\|_{\infty}$. Third, write $\Phi^{\mathcal{I}(6)}_{\vartheta \mathcal{T}(6)}[\mathcal{F}] = B_1 + B_2+B_3$, where $B_1 = \mathbb{E}_\vartheta^c [ \phi_{ t_1} \phi_{ t_2} \phi_{ t_3} \phi_{ t_4} \phi_{ t_5} \phi_{ t_6} | \mathcal{F}] - \mathbb{E}_\vartheta^c[ \phi_{ t_3} \phi_{ t_4} \phi_{ t_5}\phi_{ t_6}| \mathcal{F}] \mathbb{E}_\vartheta^c[ \phi_{ t_1} \phi_{ t_2} | \mathcal{F}]$, $B_2 = - \sum_{(\{1,2,c,d\},\{e,f\}) \in X_{61}} \mathbb{E}_\vartheta^c[ \phi_{1} \phi_{ 2} \phi_{ t_c} \phi_{ t_d} | \mathcal{F}] \mathbb{E}_\vartheta^c[ \phi_{ t_e} \phi_{ t_f} | \mathcal{F}] + 2 \sum_{(\{a,b\},\{c,d\},\{e,f\}) \in X_{63}} \mathbb{E}_\vartheta^c[ \phi_{ t_a} \phi_{ t_b} | \mathcal{F}] \mathbb{E}_\vartheta^c[ \phi_{ t_c} \phi_{ t_d} | \mathcal{F}] \mathbb{E}_\vartheta^c[ \phi_{ t_e} \phi_{ t_f} | \mathcal{F}]$, where $X_{61}$ is the set of ${4 \choose 2} = 6$ partitions of $\{1,2,3,4,5,6\}$ of the form of $\{1,2,c,d\},\{e,f\}$ and \\ $X_{63}:= \{(\{1,2\},\{3,4\},\{5,6\}), (\{1,2\},\{3,5\},\{4,6\}), (\{1,2\},\{3,6\},\{4,5\}) \}$, and $B_3$ denotes all the terms on the right-hand side of $\Phi^{\mathcal{I}(6)}_{\vartheta \mathcal{T}(6)}[\mathcal{F}]$ except for $B_1+B_2$. $B_1$ is bounded by $\mathcal{C} \rho^{(t_3-t_2-1)_+}\|\phi^{\mathcal{I}(6)}_{\mathcal{T}(6)}\|_{\infty}$ from (\ref{cov_k}). We can write $B_2$ as \\ $\sum_{(\{1,2,c,d\},\{e,f\}) \in X_{61}} \{ - \mathbb{E}_\vartheta^c[ \phi_{t_1} \phi_{ t_2} \phi_{ t_c} \phi_{ t_d} | \mathcal{F}] \mathbb{E}_\vartheta^c[ \phi_{ t_e} \phi_{ t_f} | \mathcal{F}] + \mathbb{E}_\vartheta^c[ \phi_{ t_1} \phi_{ t_2} | \mathcal{F}] \mathbb{E}_\vartheta^c[ \phi_{ t_c} \phi_{ t_d} | \mathcal{F}] \mathbb{E}_\vartheta^c[ \phi_{ t_e} \phi_{ t_f} | \mathcal{F}] \} = - \sum_{(\{1,2,c,d\},\{e,f\}) \in X_{61}} \mathbb{E}_\vartheta^c[ \phi_{ t_e} \phi_{ t_f} | \mathcal{F}] \text{cov}_{\vartheta}[\tilde \phi_{\theta t_1} \tilde \phi_{\theta t_2},\tilde \phi_{\theta t_c} \tilde \phi_{\theta t_d}|\mathcal{F}]$, then this is bounded by \\ $\mathcal{C}\rho^{(t_3-t_2-1)_+}\|\phi^{\mathcal{I}(6)}_{\mathcal{T}(6)}\|_{\infty}$ from (\ref{cov_k}). Finally, $B_3$ is bounded by $\mathcal{C} \rho^{(t_3-t_2-1)_+}\|\phi^{\mathcal{I}(6)}_{\mathcal{T}(6)}\|_{\infty}$ from (\ref{cov1}). Therefore, $\Phi^{\mathcal{I}(6)}_{\vartheta \mathcal{T}(6)}[\mathcal{F}]$ is bounded by $\mathcal{C} \rho^{(t_3-t_2-1)_+}\|\phi^{\mathcal{I}(6)}_{\mathcal{T}(6)}\|_{\infty}$. From a similar argument, $\Phi^{\mathcal{I}(6)}_{\vartheta \mathcal{T}(6)}[\mathcal{F}]$ is also bounded by $\mathcal{C}\rho^{(t_5-t_4-1)_+}\|\phi^{\mathcal{I}(6)}_{\mathcal{T}(6)}\|_{\infty}$, and parts (a) and (b) follow. 
\end{proof}

We next present the result that bound the difference between $\Delta^{\mathcal{I}(j)}_{j,k,m,x}(\vartheta)$ and $\overline{\Delta}^{\mathcal{I}(j)}_{j,k,m}(\theta)$ that appear on the right-hand side in Lemma \ref{ell_lambda}. This lemma extends Lemmas 13 and 17 of DMR. Let $r_{\mathcal{I}(1)} = q_{i_1}$; $r_{\mathcal{I}(2)} = q_{i_1}/2$ if $i_1=i_2$ and $(q_{i_1} \wedge q_{i_2})/2$ if $i_1 \neq i_2$; $r_{\mathcal{I}(3)} = q_{i_1}/3$ if $i_1=i_2=i_3$, $(q_{i_1}/2 \wedge q_{i_2}/4)$ if $i_1 \neq i_2=i_3$, $(q_{i_1}\wedge q_{i_2}\wedge q_{i_3})/3$ if $i_1$, $i_2$, $i_3$ are distinct; $r_{\mathcal{I}(4)} = q_{i_1}/4$ if $i_1=i_2=i_3=i_4$, $(q_{i_1} \wedge q_{i_3})/4$ if $i_1 \neq i_2=i_3=i_4$ or $i_1 = i_2 \neq i_3=i_4$; $r_{\mathcal{I}(5)} = q_{i_1}/5$ if $i_1=i_2=i_3=i_4=i_5$; $(q_{i_1}/3\wedge q_{i_2}/6)$ if $i_1 \neq i_2=i_3=i_4=i_5$; $r_{\mathcal{I}(6)} = q_1/6$. Part (d) of this lemma establishes the uniform convergence of $\{\Delta^{\mathcal{I}(j)}_{j,k,m,x}(\vartheta)\}_{m\geq 0}$ to a random variable that does not depend on $x$.
\begin{lemma} \label{lemma-bound-1} 
Under Assumptions \ref{assn_a1}, \ref{assn_a2}, and \ref{assn_a4}, for $j=1,\ldots,6$, there exist random variables ${K}_{\mathcal{I}(j)}, \{M_{\mathcal{I}(j), k}\}_{k=1}^n \in L^{r_{\mathcal{I}(j)}}(\mathbb{P}_{\vartheta^*})$ such that, for all $1 \leq k \leq n$ and $m' \geq m \geq 0$, 
\begin{align*} 
\text{(a)}\qquad & \sup_{x\in \mathcal{X}} \sup_{\vartheta\in \mathcal{N}^*} |\Delta^{\mathcal{I}(j)}_{j,k,m,x}(\vartheta)-\overline\Delta^{\mathcal{I}(j)}_{j,k,m}(\vartheta)|\leq {K}_{\mathcal{I}(j)}(k+m)^{ 7 } \rho^{ \lfloor (k+m-1)/24 \rfloor } \quad \text{$\mathbb{P}_{\vartheta^*}$-a.s.},\\
\text{(b)}\qquad &\sup_{x\in \mathcal{X}} \sup_{\vartheta\in \mathcal{N}^*} |\Delta^{\mathcal{I}(j)}_{j,k,m,x}(\vartheta)-\Delta^{\mathcal{I}(j)}_{j,k,m',x}(\vartheta)|\leq {K}_{\mathcal{I}(j)}(k+m)^{ 7 } \rho^{ \lfloor (k+m-1)/ 1340 \rfloor }\quad \text{$\mathbb{P}_{\vartheta^*}$-a.s.},
\end{align*}
(c) $\sup_{m\geq 0} \sup_{x\in \mathcal{X}} \sup_{\vartheta\in \mathcal{N}^*} |\Delta^{\mathcal{I}(j)}_{j,k,m,x}(\vartheta)| + \sup_{m\geq 0} \sup_{\vartheta\in \mathcal{N}^*} |\overline\Delta^{\mathcal{I}(j)}_{j,k,m}(\vartheta)| \leq M_{\mathcal{I}(j), k}$ $\mathbb{P}_{\vartheta^*}$-a.s., (d) Uniformly in $\vartheta\in \mathcal{N}^*$ and $x \in \mathcal{X}$, $\Delta^{\mathcal{I}(j)}_{j,k,m,x}(\vartheta)$ and $\overline\Delta^{\mathcal{I}(j)}_{j,k,m}(\vartheta)$ converge $\mathbb{P}_{\vartheta^*}$-a.s.\ and in $L^{r_{\mathcal{I}(j)}}(\mathbb{P}_{\vartheta^*})$ to $\Delta^{\mathcal{I}(j)}_{j,k,\infty}(\vartheta) \in L^{r_{\mathcal{I}(j)}}(\mathbb{P}_{\vartheta^*})$ as $m\rightarrow \infty$. 
\end{lemma}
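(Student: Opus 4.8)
The plan is to follow the strategy of Lemmas 13 and 17 of DMR, reducing the multi-index sums that define $\Delta^{\mathcal{I}(j)}_{j,k,m,x}(\vartheta)$ in (\ref{delta-tau}) to the single-term estimates already established in Lemma \ref{lemma_ijl}. First I would record the structural observation that both $\Delta^{\mathcal{I}(j)}_{j,k,m,x}(\vartheta)$ and $\overline\Delta^{\mathcal{I}(j)}_{j,k,m}(\vartheta)$ are assembled from the symmetrized conditional moments $\Phi^{\mathcal{I}(j)}_{\vartheta\mathcal{T}(j)}[\cdot]$ of (\ref{Phi_defn})--(\ref{Phi_defn_2}), and that the conditioning sets entering (\ref{delta-tau}) differ only through (i) whether one conditions on $X_{-m}=x$, on $X_{-m'}=x$, or on neither, and (ii) whether the terminal observation $Y_k$ (hence an index value $t_i=k$) is present. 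The crucial feature is that Lemma \ref{lemma_ijl} furnishes exponential decay in \emph{three} distinct gaps: the internal gaps $t_{i+1}-t_i$ (parts (a)--(b)), the initial gap $m+t_1$ coming from a change in the $X_{-m}$-conditioning (parts (c)--(d)), and the terminal gap $k-t_j$ coming from a change in the $\overline{\bf Y}$-conditioning (parts (e)--(f)).

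Second, I would decompose each $\Delta^{\mathcal{I}(j)}_{j,k,m,x}(\vartheta)$ by separating the configurations $\mathcal{T}(j)$ in which some index equals $k$, which occur only in the first sum of (\ref{delta-tau}), from those with all indices $\le k-1$, which occur in both sums so that only the difference in the terminal $\overline{\bf Y}$-conditioning survives. For the former I invoke the internal-gap bound of Lemma \ref{lemma_ijl}(a)--(b) with $t_j=k$; for the latter, the terminal-gap bound of Lemma \ref{lemma_ijl}(e)--(f) supplies the factor $\rho^{(k-1-t_j)_+}$. For part (a) I then subtract the $\overline\Delta$-version termwise and apply Lemma \ref{lemma_ijl}(c), which attaches to each summand the factor $\rho^{(m+t_1-1)_+}$; part (b) is identical with Lemma \ref{lemma_ijl}(d) replacing (c), the comparison now being between starting depths $m$ and $m'$.

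Third comes the combination of decays. Splitting $\rho=\rho^{1/2}\cdot\rho^{1/2}$ (a finer split in part (b)) lets me merge the initial- or terminal-gap decay with the internal-gap decay into a single geometric factor for each configuration. Since $j$ ordered indices partition the interval $\{-m,\ldots,k\}$ of length $k+m$ into at most $j+1\le 7$ gaps, at least one gap is $\ge\lfloor(k+m-1)/(j+1)\rfloor$ by pigeonhole, so after interpolation every configuration carries the decay $\rho^{\lfloor(k+m-1)/24\rfloor}$ of (a) (respectively $\rho^{\lfloor(k+m-1)/1340\rfloor}$ in (b), the larger constant absorbing the extra splittings needed to compare $m$ with $m'$). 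Summing the geometric tails over the polynomially many (bounded by $(k+m)^7$) configurations sharing a given maximal gap produces the prefactor $(k+m)^7$. The random norm factors are then controlled by Hölder's inequality: each $\|\phi^{\mathcal{I}(j)}_{\mathcal{T}(j)}\|_\infty$ is a product of the $\|\phi^{\ell_i}_{t_i}\|_\infty$, and Assumption \ref{assn_a4}(b) places $\|\phi^{\ell}_{t}\|_\infty$ in $L^{2q_\ell}(\mathbb{P}_{\vartheta^*})$ with the graded exponents $q_1=6q_0,\ldots,q_6=q_0$; applying Hölder to the product dictated by the repeated-index pattern $\mathcal{I}(j)$ shows the resulting sums lie in $L^{r_{\mathcal{I}(j)}}(\mathbb{P}_{\vartheta^*})$ with $r_{\mathcal{I}(j)}$ exactly the reciprocal-sum exponent listed before the lemma. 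This defines $K_{\mathcal{I}(j)}$ and $M_{\mathcal{I}(j),k}$ as the $L^{r_{\mathcal{I}(j)}}$-sums of the geometrically weighted norm products, yielding (a)--(c). For (d), the summability in $m$ of the bound in (b) makes $\{\Delta^{\mathcal{I}(j)}_{j,k,m,x}(\vartheta)\}_{m\ge0}$ an $L^{r_{\mathcal{I}(j)}}$-Cauchy sequence, hence convergent; almost sure convergence follows from (b) with Markov's inequality and Borel--Cantelli, and independence of the limit from $x$ follows from (a).

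The hardest part will be the bookkeeping in the third step: simultaneously merging the three regimes of exponential decay across the symmetrized multi-index sums while keeping the Hölder exponent correctly attached to each index pattern. Matching the repeated-index structure of $\mathcal{I}(j)$ to the graded moment bounds of Assumption \ref{assn_a4}(b) so that the integrability exponent comes out to be exactly $r_{\mathcal{I}(j)}$, rather than something smaller, is where the delicate counting lies; once Lemma \ref{lemma_ijl} is in hand, producing the decay itself is routine.
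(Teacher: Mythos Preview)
Your high-level strategy matches the paper's: decompose according to whether some index hits $k$, invoke Lemma \ref{lemma_ijl} termwise for the three regimes of decay (internal gaps, initial gap from the $X_{-m}$-conditioning, terminal gap from the $\overline{\bf Y}$-conditioning), and then combine via H\"older to land in $L^{r_{\mathcal{I}(j)}}$. Parts (c)--(d) are handled exactly as in the paper.

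Where you diverge is in the third step, and the divergence is not cosmetic. You attribute the polynomial prefactor $(k+m)^7$ to the count of index configurations, but in the paper it comes from a different place entirely. The paper first factors out the \emph{maximum} norm product $M^{\mathcal{I}(j)}_{j,k,m}:=\max_{\mathcal{T}(j)}\|\phi^{i_1}_{t_1}\|_\infty\cdots\|\phi^{i_j}_{t_j}\|_\infty$ and then shows that the remaining sum of geometric factors $\sum_{\mathcal{T}(j)}(\rho^{(t_1-1)_+}\wedge\cdots\wedge\rho^{(k+m-1-t_j-1)_+})$ is bounded by a \emph{constant} depending only on $\rho$ and $j$, via the dedicated summation Lemma \ref{rho_sum}; the polynomial $(k+m)^{j+1}\le(k+m)^7$ then enters only when that maximum is converted to an $L^{r_{\mathcal{I}(j)}}$-object via Lemma \ref{hoelder} (a weighted-$\ell^1$ trick, combined with Young's inequality for the mixed-index cases). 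Your pigeonhole argument, by contrast, bounds each term by $\rho^{\lfloor(k+m-1)/(j+1)\rfloor}$ and then multiplies by the raw number of configurations; but since the norm product $\|\phi^{\mathcal{I}(j)}_{\mathcal{T}(j)}\|_\infty$ still depends on $\mathcal{T}(j)$, you cannot simply pull it out, and if you take a max you are back to needing Lemma \ref{hoelder} and would pick up an \emph{additional} polynomial factor. Your description ``$L^{r_{\mathcal{I}(j)}}$-sums of the geometrically weighted norm products'' hints at the right object, but then the configuration count is double-counted.

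For part (b) the mechanism is also more specific than ``extra splittings'': the paper separates the range $\{-m'+1,\ldots,-m\}$ for $t_1$, applies Lemma \ref{lemma_ijl}(e)--(f) to the remaining indices, and then invokes Lemma \ref{rho_m}, whose recursive constants $(a_j,b_j)$ with $a_7\doteq 334.54$ are what produce the denominator $2qa_7\approx 1340$ after setting $q=2$. This is not a generic interpolation but a tailored iterated application of Lemma \ref{two_rho_sum}.
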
 

\begin{proof}[Proof of Lemma \ref{lemma-bound-1}] First, we prove parts (a) and (b). Recall $\mathcal{T}(j) = (t_1,\ldots,t_j)$. For part (a), define, suppressing the dependence of $A_{\mathcal{T}(j)}$ on $\vartheta$ and $\mathcal{I}(j)$,
\begin{align*}
A_{\mathcal{T}(j)} & :=
\begin{cases}
\begin{aligned}
&\Phi^{\mathcal{I}(j)}_{\vartheta\mathcal{T}(j)}\left[\overline{\bf Y}_{-m}^k,X_{-m}=x\right] -\Phi^{\mathcal{I}(j)}_{\vartheta\mathcal{T}(j)}\left[ \overline{\bf Y}_{-m}^{k-1},X_{-m}=x\right] \\
& \quad -\Phi^{\mathcal{I}(j)}_{\vartheta\mathcal{T}(j)}\left[\overline{\bf Y}_{-m}^k\right] +\Phi^{\mathcal{I}(j)}_{\vartheta\mathcal{T}(j)}\left[\overline{\bf Y}_{-m}^{k-1}\right],
\end{aligned}
& \text{ if } \max\{t_1,\ldots, t_j\} < k, \\
\Phi^{\mathcal{I}(j)}_{\vartheta\mathcal{T}(j)}\left[\overline{\bf Y}_{-m}^k,X_{-m}=x\right] -\Phi^{\mathcal{I}(j)}_{\vartheta\mathcal{T}(j)}\left[\overline{\bf Y}_{-m}^k\right], & \text{ otherwise},
\end{cases}\\
A_{\mathcal{T}(j,\ell,k)} &:= A_{t_1t_2\cdots t_{j-\ell} \underbrace{{\scriptstyle k\cdots k}}_{{\scriptstyle \ell \ \text{times}}}, }\quad \text{where } \mathcal{T}(j,\ell,k):= (\mathcal{T}(j-\ell),\underbrace{ k,\cdots, k}_{\ell \ \text{times}}).
\end{align*}
Then, we can write $\Delta^{\mathcal{I}(j)}_{j,k,m,x}(\vartheta)-\overline\Delta^{\mathcal{I}(j)}_{j,k,m}(\vartheta) = \sum_{\mathcal{T}(j)\in\{-m+1,\ldots,k\}^j} A_{\mathcal{T}(j)} = \Delta_a + \Delta_b + \Delta_c$, where 
\[
\Delta_a := \sum_{\mathcal{T}(j)\in\{-m+1,\ldots,k-1\}^j} A_{ \mathcal{T}(j)},\quad \Delta_b := \sum_{\ell=1}^{j-1} \binom{j}{\ell} \sum_{\mathcal{T}(j-\ell)\in\{-m+1,\ldots,k-1\}^{j-\ell}} A_{ \mathcal{T}(j,\ell,k)}, \quad \Delta_c:= A_{(k,
\ldots,k)},
\]
and $\Delta_b:=0$ when $j=1$.
From Lemma \ref{lemma_ijl} and the symmetry of $A_{ \mathcal{T}(j)}$, $\Delta_a$ is bounded by $\mathcal{C} B_{j,k,m} M^{\mathcal{I}(j)}_{j,k,m}$, where
\begin{align*} 
B_{j,k,m} & := \sum_{-m+1\leq t_1\leq t_2 \leq \cdots \leq t_j\leq k-1} \left(\rho^{(m+t_1-1)_+} \wedge \rho^{(t_2-t_1-1)_+} \wedge \cdots \wedge \rho^{(t_j-t_{j-1}-1)_+} \wedge \rho^{(k-1-t_j-1)_+} \right) \\
& = \sum_{1\leq t_1\leq t_2 \leq \cdots \leq t_j \leq k+m-1} \left(\rho^{(t_1-1)_+} \wedge \rho^{(t_2-t_1-1)_+} \wedge \cdots \wedge \rho^{(t_j-t_{j-1}-1)_+} \wedge \rho^{(k+m-1-t_j-1)_+} \right), \\ 
M^{\mathcal{I}(j)}_{j,k,m} & := \max_{-m+1\leq t_1, \ldots, t_j\leq k-1} \|\phi^{i_1}_{t_1}\|_{\infty}\| \phi^{i_2}_{t_2}\|_{\infty} \cdots \| \phi^{i_j}_{t_j}\|_{\infty}.
\end{align*}
From $(t-1)_+ \geq \lfloor t/2 \rfloor$ and Lemma \ref{rho_sum}, $B_{j,k,m}$ is bounded by $C_{j2}(\rho) \rho^{\lfloor (k+m-1)/4j\rfloor}$.

We proceed to derive a bound of $M^{\mathcal{I}(j)}_{j,k,m}$. Define $\|\phi^{i}\|^\ell_\infty:=\sum_{t=-\infty}^\infty (|t|\vee 1)^{-2}\|\phi^{i}_t\|_{\infty}^\ell$. When $i_1=i_2=\cdots=i_j$, it follows from Lemma \ref{hoelder} that $M^{\mathcal{I}(j)}_{j,k,m} \leq (k+m)^{j+1}\|\phi^{i_1}\|_{\infty}^j$, and $\|\phi^{i_1}\|_{\infty}^j \in L^{r_{\mathcal{I}(j)}}(\mathbb{P}_{\vartheta^*})$ from Assumption \ref{assn_a4}. In the other cases, observe that if $x,y,z \geq 0$, we have $xy \leq x^2+y^2$, $xyz \leq x ^3+ y ^3+z ^3$, $xy \leq x^{4} + y^{4/3}$, and $xy \leq x^{3} + y^{3/2}$ from Young's inequality. By using this result and Lemma \ref{hoelder}, we can bound $M^{\mathcal{I}(j)}_{j,k,m}$ by
\[
\begin{array}{ccc}
j=2 \text{ and } i_1 \neq i_2: & (k+m)^2 (\|\phi^{i_1}\|_{\infty}^2 + \|\phi^{i_2}\|_{\infty}^2), \\
j=3 \text{ and } i_1 \neq i_2=i_3: & (k+m)^3 ( \|\phi^{i_1}\|_{\infty}^2 + \|\phi^{i_2}\|_{\infty}^4), \\
j=3 \text{ and } i_1,i_2,i_3 \text{ are distinct}: &(k+m)^2 ( \|\phi^{i_1}\|_{\infty}^3 + \|\phi^{i_2}\|_{\infty}^3 + \|\phi^{i_3}\|_{\infty}^3), \\
j=4 \text{ and } i_1 \neq i_2=i_3=i_4: & (k+m)^3 ( \|\phi^{i_1}\|_{\infty}^4 + \|\phi^{i_2}\|_{\infty}^4),\\
j=4 \text{ and } i_1 = i_2 \neq i_3=i_4: & (k+m)^3 ( \|\phi^{i_1}\|_{\infty}^4 + \|\phi^{i_3}\|_{\infty}^4), \\
j=5 \text{ and } i_1 \neq i_2 = i_3=i_4=i_5: & (k+m)^3 ( \|\phi^{i_1}\|_{\infty}^{ 3 } + \|\phi^{i_2}\|_{\infty}^6).
\end{array}
\]
Therefore, from Assumption \ref{assn_a4}, $\Delta_a$ is bounded by the right-hand side of part (a). From Lemmas \ref{lemma_ijl} and \ref{rho_sum}, $\Delta_b$ is bounded by $\mathcal{C} \sum_{\ell=1}^{j-1} \sum_{-m+1 \leq t_1\leq \cdots \leq t_{j-\ell} \leq k-1} (\rho^{(m+t_1-1)_+} \wedge \rho^{(t_2-t_1-1)_+} \wedge \cdots \wedge \rho^{(k-t_{j- \ell }-1)_+}) M_{j,k+1,m}^{\mathcal{I}(j)} \leq \mathcal{C} \rho^{\lfloor (k+m-1)/4(j-1) \rfloor} M_{j,k+1,m}^{\mathcal{I}(j)} $. Similarly, $\Delta_c$ is bounded by $\mathcal{C} \rho^{\lfloor (k+m-1)/4(j-1) \rfloor} M_{j,k+1,m}^{\mathcal{I}(j)}$, and part (a) of the lemma follows.

For part (b), define, for $-m'+1 \leq t_1, \ldots, t_j \leq k$,
\begin{equation*}
D_{\mathcal{T}(j),m',x}:=
\begin{cases}
\Phi^{\mathcal{I}(j)}_{\theta \mathcal{T}(j)}[\overline{\bf Y}_{-m'}^k,X_{-m'}=x] -\Phi^{\mathcal{I}(j)}_{\theta \mathcal{T}(j)}[\overline{\bf Y}_{-m'}^{k-1},X_{-m'}=x], & \text{ if } \max\{t_1,\ldots, t_j\} < k, \\
\Phi^{\mathcal{I}(j)}_{\theta \mathcal{T}(j)}[\overline{\bf Y}_{-m'}^k,X_{-m'}=x], & \text{ otherwise},
\end{cases}
\end{equation*}
and define $D_{\mathcal{T}(j),m,x}$ similarly. Then, we can write $\Delta^{\mathcal{I}(j)}_{j,k,m,x}(\theta)= \sum_{\mathcal{T}(j)\in\{-m+1,\ldots,k\}^{j}}D_{\mathcal{T}(j),m,x}$ and $\Delta^{\mathcal{I}(j)}_{j,k,m',x}(\theta) = \sum_{\mathcal{T}(j)\in\{-m'+1,\ldots,k\}^{j}} D_{\mathcal{T}(j),m',x} = \Delta_d +\Delta_e$, where $ \Delta_d := \sum_{\mathcal{T}(j)\in\{-m+1,\ldots,k\}^{j}}D_{\mathcal{T}(j),m',x}$ and
\[
\Delta_e:= \sum_{\ell=1}^{j} \binom{j}{\ell}\sum_{t_1=-m'+1}^{-m} \cdots \sum_{t_\ell=-m'+1}^{-m} \sum_{t_{\ell+1}=-m+1}^k \cdots \sum_{t_{j}=-m+1}^k D_{\mathcal{T}(j),m',x}.
\]
From the same argument as part (a), $\Delta^{\mathcal{I}(j)}_{j,k,m,x}(\theta) - \Delta_d$ is bounded by the right-hand side of part (a). For $\Delta_e$, observe that, with $M_{j} := \max_{1\leq \ell \leq j}\binom{j}{\ell}$,
\begin{align*}
| \Delta_e | & \leq M_j \sum_{\ell=1}^{j} \sum_{t_1=-m'+1}^{-m} \sum_{t_2=-m'+1}^{-m} \cdots \sum_{t_{\ell}=-m'+1}^{-m} \sum_{t_{\ell+1}=-m+1}^k \cdots \sum_{t_{j}=-m+1}^k \left| D_{\mathcal{T}(j),m',x} \right| \nonumber \\
& \leq j M_j \sum_{t_1=-m'+1}^{-m} \sum_{t_2=-m'+1}^{k} \cdots \sum_{t_j=-m'+1}^{k} \left| D_{\mathcal{T}(j),m',x} \right| \nonumber \\
& \leq j M_j j! \sum_{t_1=-m'+1}^{-m} \sum_{t_1 \leq t_2 \leq \cdots \leq t_{j}\leq k }\left| D_{\mathcal{T}(j),m',x} \right|.
\end{align*}
From Lemma \ref{lemma_ijl}, if $t_1 \leq \cdots \leq t_{j}$, we have $|D_{\mathcal{T}(j),m',x}| \leq \mathcal{C} [\mathbb{I}\{t_j < k\} (\rho^{(t_2-t_1-1)_+} \wedge \rho^{(t_j -t_{j-1}-1)_+} \wedge \cdots \wedge\rho^{(k-1-t_j-1)_+} ) + \mathbb{I}\{t_j = k\} (\rho^{(t_2-t_1-1)_+} \wedge \cdots \wedge \rho^{(t_j -t_{j-1}-1)_+})] \|\phi^{\mathcal{I}(j)}_{\mathcal{T}(j)}\|_{\infty}$. Hence, part (b) follows from Lemma \ref{rho_m}.

For part (c), observe that $\sup_{m\geq 0} \sup_{x\in \mathcal{X}} \sup_{\vartheta\in \mathcal{N}^*} |\Delta^{\mathcal{I}(j)}_{j,k,m,x}(\vartheta) | \leq A+B$, where $A:= \sup_{m\geq 0}\sup_{x\in \mathcal{X}} \sup_{\vartheta\in \mathcal{N}^*} |\Delta^{\mathcal{I}(j)}_{j,k,m,x}(\vartheta) - \Delta^{\mathcal{I}(j)}_{j,k,0,x}(\vartheta) |$ and $B:=\sup_{x\in \mathcal{X}} \sup_{\vartheta\in \mathcal{N}^*} |\Delta^{\mathcal{I}(j)}_{j,k,0,x}(\vartheta) |$. $A$ is bounded by $K_{\mathcal{I}(j)} k^7 \rho^{ \lfloor (k-1)/ 1340 \rfloor }$ from part (b). $B$ does not depend on $m$ and is distributionally equivalent to $\sup_{x\in \mathcal{X}} \sup_{\vartheta\in \mathcal{N}^*} |\Delta^{\mathcal{I}(j)}_{j,1,k-1,x}(\vartheta)|$. This is bounded by $\sup_{x\in \mathcal{X}} \sup_{\vartheta\in \mathcal{N}^*} |\Delta^{\mathcal{I}(j)}_{j,1,k-1,x}(\vartheta) - \Delta^{\mathcal{I}(j)}_{j,1,0,x}(\vartheta)| + \sup_{x\in \mathcal{X}} \sup_{\vartheta\in \mathcal{N}^*} |\Delta^{\mathcal{I}(j)}_{j,1,0,x}(\vartheta)|$. The first term is in $L^{r_{\mathcal{I}(j)}}(\mathbb{P}_{\vartheta^*})$ from part (b), and the second term is in $L^{r_{\mathcal{I}(j)}}(\mathbb{P}_{\vartheta^*})$ from the definition of $\Delta^{\mathcal{I}(j)}_{j,k,m,x}(\vartheta)$. Therefore, there exists $M_{\mathcal{I}(j), k} \in L^{r_{\mathcal{I}(j)}}(\mathbb{P}_{\vartheta^*})$ such that $A+B \leq M_{\mathcal{I}(j), k}$, and part (c) holds in view of part (a). Part (d) follows from parts (a)--(c) because parts (a)--(c) imply that $\{\Delta^{\mathcal{I}(j)}_{j,k,m,x}(\vartheta)\}_{m\geq 0}$ and $\{\Delta^{\mathcal{I}(j)}_{j,k,m}(\vartheta)\}_{m\geq 0}$ are uniform $L^{r_{\mathcal{I}(j)}}(\mathbb{P}_{\vartheta^*})$-Cauchy sequences with respect to $\vartheta\in \mathcal{N}^*$ that converge to the same limit and $L^{q}(\mathbb{P}_{\vartheta^*})$ is complete. 
\end{proof}
 
\begin{lemma}\label{lemma-ratio}
Under Assumptions \ref{assn_a1}, \ref{assn_a2}, and \ref{assn_a4}, there exist random variables $\{K_k\}_{k=1}^n\in L^{(1+\varepsilon)q_\vartheta/\varepsilon}(\mathbb{P}_{\vartheta^*})$ and $\rho \in (0,1)$ such that, for all $1 \leq k \leq n$ and $m' \geq m \geq 0$,
\begin{align*}
\sup_{\vartheta \in \mathcal{N}^*} \left|\frac{ \overline p_{\vartheta}(Y_k|\overline{\bf{Y}}_{-m}^{k-1})}{\overline p_{\vartheta^*}(Y_k|\overline{\bf{Y}}_{-m}^{k-1})}\right| \leq K_k, \quad
 \sup_{x\in \mathcal{X}}\sup_{\vartheta \in \mathcal{N}^*} \left|\frac{ p_{\vartheta}(Y_k|\overline{\bf{Y}}_{-m}^{k-1}, X_{-m}=x)}{ p_{\vartheta^*}(Y_k|\overline{\bf{Y}}_{-m}^{k-1}, X_{-m}=x)}-\frac{ \overline p_{\vartheta}(Y_k|\overline{\bf{Y}}_{-m}^{k-1})}
{\overline p_{\vartheta^*}(Y_k|\overline{\bf{Y}}_{-m}^{k-1})}\right| \leq K_k \rho^{k+m-1}.
\end{align*}
Furthermore, these bounds hold uniformly in $x \in \mathcal{X}$ when $\overline p_{\vartheta}(Y_k|\overline{\bf{Y}}_{-m}^{k-1})$ and $\overline p_{\vartheta^*}(Y_k|\overline{\bf{Y}}_{-m}^{k-1})$ are replaced with $ p_{\vartheta}(Y_k|\overline{\bf{Y}}_{-m'}^{k-1}, X_{-m'}=x)$ and $ p_{\vartheta^*}(Y_k|\overline{\bf{Y}}_{-m'}^{k-1}, X_{-m'}=x)$.
\end{lemma}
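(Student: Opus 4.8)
The plan is to write every conditional density of $Y_k$ as a finite mixture over the current regime and then to control the \emph{denominators} through the uniform lower bound on the transition kernel in Assumption \ref{assn_a2}(a) and the \emph{numerators} through the geometric forgetting of the initial condition in Lemma \ref{x_diff}(a). Concretely, both $\overline p_{\vartheta}(Y_k|\overline{\bf{Y}}_{-m}^{k-1})$ and $p_{\vartheta}(Y_k|\overline{\bf{Y}}_{-m}^{k-1},X_{-m}=x)$ admit the representation $\sum_{x_k\in\mathcal{X}} g_{\vartheta_y}(Y_k|\overline{\bf{Y}}_{k-1},x_k)\pi_{\vartheta}(x_k)$, where $\pi_{\vartheta}(x_k)$ is the one-step predictive law of $X_k$ given $\overline{\bf{Y}}_{-m}^{k-1}$, computed from the initial law $\mathbb{P}_{\vartheta^*}(\cdot|\cdots)$ at time $-m$ in the first case and from the point mass $\delta_x$ in the second. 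Since $\pi_{\vartheta}(x_k)=\sum_{x_{k-1}}q_{\vartheta_x}(x_{k-1},x_k)\,\mathrm{filt}(x_{k-1})$ and $q_{\vartheta_x}\geq\sigma_->0$ by Assumption \ref{assn_a2}(a), every predictive weight satisfies $\pi_{\vartheta}(x_k)\geq\sigma_-$. Hence the denominators are bounded below by $\sigma_- G_{\vartheta^* k}$ while the numerators are trivially bounded above by $G_{\vartheta k}$, which yields the first inequality with $K_k=\sigma_-^{-1}\sup_{\vartheta\in\mathcal{N}^*}(G_{\vartheta k}/G_{\vartheta^* k})$; this lies in $L^{(1+\varepsilon)q_\vartheta/\varepsilon}(\mathbb{P}_{\vartheta^*})$ by Assumption \ref{assn_a4}(b).

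For the second inequality I would compare the two mixtures term by term. Their difference equals $\sum_{x_k} g_{\vartheta_y}(Y_k|\overline{\bf{Y}}_{k-1},x_k)[\pi_{\vartheta}^{\delta_x}(x_k)-\pi_{\vartheta}^{*}(x_k)]$, where $\pi_{\vartheta}^{\delta_x}$ and $\pi_{\vartheta}^{*}$ denote the predictive laws obtained from the two initializations at time $-m$. Because applying the transition kernel is a contraction in total variation, these predictive laws inherit the bound on the filtered laws of $X_{k-1}$, and Lemma \ref{x_diff}(a) (equation (\ref{DMR-C1}), applied with the two initial measures $\delta_x$ and $\mathbb{P}_{\vartheta^*}(\cdot|\cdots)$ placed at time $-m$ and propagated to $X_{k-1}$, i.e.\ over $k-1+m$ steps) gives $\sum_{x_k}|\pi_{\vartheta}^{\delta_x}(x_k)-\pi_{\vartheta}^{*}(x_k)|\leq 2\rho^{k+m-1}$. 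Bounding each $g_{\vartheta_y}$ by $G_{\vartheta k}$ then gives $|p_{\vartheta}(Y_k|\cdots,x)-\overline p_{\vartheta}(Y_k|\cdots)|\leq 2 G_{\vartheta k}\rho^{k+m-1}$, and the same estimate holds at $\vartheta^*$ with $G_{\vartheta^* k}$.

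To pass from these numerator bounds to the ratio bound I would use the elementary identity $a/b-c/d=(a-c)/b+(c/d)(d-b)/b$ with $a=p_{\vartheta}(Y_k|\cdots,x)$, $b=p_{\vartheta^*}(Y_k|\cdots,x)$, $c=\overline p_{\vartheta}(Y_k|\cdots)$, and $d=\overline p_{\vartheta^*}(Y_k|\cdots)$. Using $b,d\geq \sigma_- G_{\vartheta^* k}$, the two numerator bounds, and $c/d\leq K_k$ from the first inequality, each summand is at most $2\sigma_-^{-1}(G_{\vartheta k}/G_{\vartheta^* k}+K_k)\rho^{k+m-1}$; redefining $K_k$ to absorb $2\sigma_-^{-1}(G_{\vartheta k}/G_{\vartheta^* k}+K_k)$—still in $L^{(1+\varepsilon)q_\vartheta/\varepsilon}(\mathbb{P}_{\vartheta^*})$ by Assumption \ref{assn_a4}(b)—and taking the supremum over $x\in\mathcal{X}$ and $\vartheta\in\mathcal{N}^*$ proves it. The final ``furthermore'' claim follows by the identical argument after invoking the Markov property of $\{Z_k\}=\{(X_k,\overline{\bf{Y}}_k)\}$ from Assumption \ref{assn_a1} to rewrite $p_{\vartheta}(Y_k|\overline{\bf{Y}}_{-m'}^{k-1},X_{-m'}=x)$ as a filter started at time $-m$ from the initial law $\mathbb{P}_{\vartheta}(X_{-m}|\overline{\bf{Y}}_{-m'}^{k-1},X_{-m'}=x)$, so that it is again compared with the $\delta_x$-initialized filter over $k-1+m$ steps.

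The step I expect to be the main obstacle is the forgetting estimate: one must confirm that the one-step \emph{predictive} laws of $X_k$, rather than the filtered laws of $X_{k-1}$ directly treated in Lemma \ref{x_diff}(a), still differ by at most $2\rho^{k+m-1}$, which requires checking that the extra transition step does not enlarge the total-variation distance and that the exponent is exactly $k+m-1$ because the common initialization sits at time $-m$. The remaining ingredients—the lower bound $\sigma_-$ on the mixing weights and the moment bookkeeping keeping $K_k$ in $L^{(1+\varepsilon)q_\vartheta/\varepsilon}(\mathbb{P}_{\vartheta^*})$—are routine given Assumptions \ref{assn_a2}(a) and \ref{assn_a4}(b).
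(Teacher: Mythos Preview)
Your proposal is correct and follows essentially the same route as the paper: bound each conditional density in the range $[\sigma_- G_{\vartheta k},\,G_{\vartheta k}]$ via Assumption~\ref{assn_a2}(a), control the numerator difference by the forgetting bound of Lemma~\ref{x_diff}(a) applied to the filtered law of $X_{k-1}$, and pass to the ratio via the same algebraic identity $a/b-c/d=(a-c)/b+(c/d)(d-b)/b$ that the paper uses. The only cosmetic differences are in the constants (the paper records the range $[\sigma_-G_{\vartheta k},\sigma_+G_{\vartheta k}]$ rather than your $[\sigma_-G_{\vartheta k},G_{\vartheta k}]$) and that the paper dispatches the ``furthermore'' clause with a one-line ``proven similarly,'' whereas you spell out the Markov reduction explicitly; your concern about the predictive versus filtered law is handled exactly as you suggest, since pushing forward by $q_{\vartheta_x}$ does not increase total variation.
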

\begin{proof}[Proof of Lemma \ref{lemma-ratio}]
The first result follows from noting that $\overline p_{\vartheta}(Y_k|\overline{\bf{Y}}_{-m}^{k-1}) \\
= \sum_{(x_{k-1},x_k )\in \mathcal{X}^2} g_\vartheta(Y_k|\overline{\bf{Y}}_{k-1}, x_k) q_{\vartheta_x}(x_{k-1},x_k) \mathbb{P}_\vartheta(x_{k-1}|\overline{\bf{Y}}_{-m}^{k-1}) \in [\sigma_-G_{\vartheta k},\sigma_+G_{\vartheta k}]$ and using Assumption \ref{assn_a4}(b). For the second result, observe that $| p_{\vartheta}(Y_k|\overline{\bf{Y}}_{-m}^{k-1}, X_{-m}=x) - \overline p_{\vartheta}(Y_k|\overline{\bf{Y}}_{-m}^{k-1})| \leq \sum_{(x_{k-1},x_k )\in \mathcal{X}^2} g_\vartheta(Y_k|\overline{\bf{Y}}_{k-1}, x_k) q_{\vartheta_x}(x_{k-1},x_k) |\mathbb{P}_\vartheta(x_{k-1}|\overline{\bf{Y}}_{-m}^{k-1},X_{-m}=x) - \mathbb{P}_\vartheta(x_{k-1}|\overline{\bf{Y}}_{-m}^{k-1})| \leq \rho^{k+m-1}\sigma_+ G_{\vartheta k}/\sigma_-$, where the second inequality follows from Lemma \ref{x_diff}(a). The second result then follows from writing the left-hand side as
\[
\frac{ p_{\vartheta}(Y_k|\overline{\bf{Y}}_{-m}^{k-1},X_{-m}=x)- \overline p_{\vartheta}(Y_k|\overline{\bf{Y}}_{-m}^{k-1})}{ p_{\vartheta^*}(Y_k|\overline{\bf{Y}}_{-m}^{k-1},X_{-m}=x)}
+ \frac{ \overline p_{\vartheta}(Y_k|\overline{\bf{Y}}_{-m}^{k-1})}{\overline p_{\vartheta^*}(Y_k|\overline{\bf{Y}}_{-m}^{k-1})} \frac{\overline p_{\vartheta^*}(Y_k|\overline{\bf{Y}}_{-m}^{k-1}) - p_{\vartheta^*}(Y_k|\overline{\bf{Y}}_{-m}^{k-1},X_{-m}=x)}{ p_{\vartheta^*}(Y_k|\overline{\bf{Y}}_{-m}^{k-1},X_{-m}=x)},
\]
noting that $p_{\vartheta}(Y_k|\overline{\bf{Y}}_{-m}^{k-1},X_{-m}=x) \geq \sigma_-G_{\vartheta k}$, and using the derived bounds. The results with $ p_{\vartheta}(Y_k|\overline{\bf{Y}}_{-m'}^{k-1}, X_{-m'}=x)$ and $ p_{\vartheta^*}(Y_k|\overline{\bf{Y}}_{-m'}^{k-1}, X_{-m'}=x)$ are proven similarly.
\end{proof}

The following result originally appeared in equations (59)--(60) of \citet{kasaharashimotsu15jasa}. We state this as a lemma for ease of reference.
\begin{lemma} \label{lemma_normal_der}
Let $f(\mu,\sigma^2)$ denote the density of $N(\mu,\sigma^2)$. Then
\begin{equation*}
\left.\nabla_{\lambda_\mu^k}f( c_1\lambda_\mu, c_2 \lambda_\mu^2)\right|_{\lambda_\mu=0} =
\begin{cases}
c_1\nabla_{\mu} f(0,0) & \text{if}\ k = 1, \\
c_1^2\nabla_{\mu^2} f(0,0) + 2c_2 \nabla_{\sigma^2} f(0,0) & \text{if}\ k = 2, \\
c_1^3\nabla_{\mu^3} f(0,0) + 6 c_1 c_2 \nabla_{\mu\sigma^2} f(0,0) & \text{if}\ k = 3, \\
c_1^4\nabla_{\mu^4} f(0,0) + 12 c_1^2c_2 \nabla_{\mu^2} f(0,0)\nabla_{\sigma^2} f(0,0) + 12c_2^2 \nabla_{\sigma^4} f(0,0) & \text{if}\ k = 4.
\end{cases}
\end{equation*}
\end{lemma}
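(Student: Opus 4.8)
The plan is to recognize that the four claimed identities are a single chain-rule (Faà di Bruno) computation for the scalar composition $g(\lambda_\mu) := f(c_1\lambda_\mu, c_2\lambda_\mu^2)$, and that the only property of $f$ that is actually needed is smoothness in its two arguments at the base point; the specifically normal structure plays no role in this particular lemma. Concretely, I would expand $f(\mu,\sigma^2)$ in its arguments about the evaluation point (written $(0,0)$ in the statement and understood as the base value of $(\mu,\sigma^2)$ at $\lambda_\mu=0$) and substitute the two monomials $\mu=c_1\lambda_\mu$ and $\sigma^2=c_2\lambda_\mu^2$, turning $g$ into a power series in the single variable $\lambda_\mu$ whose coefficients are explicit.

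First I would write the finite Taylor expansion of $f$ with Peano remainder to order $k$ (for each $k\le 4$ separately), so that only $C^k$ differentiability of $f$ at the base point is required. Substituting the two monomials, and using $|\mu|+|\sigma^2|=O(|\lambda_\mu|)$ so that the remainder is $o(\lambda_\mu^{\,k})$, gives
\begin{equation*}
f(c_1\lambda_\mu, c_2\lambda_\mu^2) = \sum_{i+j\le k}\frac{c_1^i c_2^j}{i!\,j!}\,\nabla_{\mu^i(\sigma^2)^j} f(0,0)\,\lambda_\mu^{\,i+2j} + o(\lambda_\mu^{\,k}).
\end{equation*}
Since the coefficient of $\lambda_\mu^{\,k}$ in the Taylor expansion of $g$ equals $\nabla_{\lambda_\mu^k}g(0)/k!$ and the $o(\lambda_\mu^{\,k})$ remainder does not affect this coefficient, collecting the terms with $i+2j=k$ (all of which satisfy $i+j\le k$, hence appear above) yields the master identity
\begin{equation*}
\left.\nabla_{\lambda_\mu^k} f(c_1\lambda_\mu, c_2\lambda_\mu^2)\right|_{\lambda_\mu=0} = k!\sum_{i+2j=k}\frac{c_1^i c_2^j}{i!\,j!}\,\nabla_{\mu^i(\sigma^2)^j} f(0,0),
\end{equation*}
where the sum runs over nonnegative integers $(i,j)$ with $i+2j=k$.

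The remaining step is to enumerate the admissible pairs for each $k$ and read off the constants $k!/(i!\,j!)$: $k=1$ admits only $(1,0)$; $k=2$ admits $(2,0)$ and $(0,1)$, with coefficients $1$ and $2$; $k=3$ admits $(3,0)$ and $(1,1)$, with coefficients $1$ and $6$; and $k=4$ admits $(4,0)$, $(2,1)$, and $(0,2)$, with coefficients $1$, $12$, and $12$. These reproduce the four displayed formulas verbatim.

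I do not expect any real analytic obstacle, since the computation is entirely combinatorial; the only point requiring care is the compressed notation in the statement itself. The middle term on the $k=4$ line, written $\nabla_{\mu^2} f\,\nabla_{\sigma^2} f$, must be read as the single mixed third-order partial $\nabla_{\mu^2\sigma^2} f$, and $\nabla_{\sigma^4} f$ as the second partial $\nabla_{(\sigma^2)^2} f$; under these readings the coefficients $12 c_1^2 c_2$ (from $(2,1)$) and $12 c_2^2$ (from $(0,2)$) match exactly. I would remark that this identity coincides with equations (59)--(60) of \citet{kasaharashimotsu15jasa}, so the result may alternatively be dispatched by citation, the short derivation above being included only for completeness.
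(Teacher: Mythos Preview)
Your proof is correct and takes essentially the same approach as the paper: both are multivariate chain-rule (Fa\`a di Bruno) computations, with the paper organizing the calculation via the operator identity $(\nabla_{\lambda_\mu}+\nabla_u)^k$ applied to $f(c_1\lambda_\mu,c_2 u^2)|_{u=\lambda_\mu}$, whereas you substitute directly into the bivariate Taylor expansion and match the coefficient of $\lambda_\mu^{\,k}$. Your remark that the middle term in the $k=4$ line should be read as the mixed partial $\nabla_{\mu^2\sigma^2}f(0,0)$ rather than a product is on point and consistent with how the paper uses the lemma.
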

\begin{proof}[Proof of Lemma \ref{lemma_normal_der}]
Observe that a composite function $f(\lambda_\mu, h(\lambda_\mu))$ satisfies
$\nabla_{\lambda_\mu^k}f(\lambda_\mu,h(\lambda_\mu)) = (\nabla_{\lambda_\mu} + \nabla_u)^k f(\lambda_\mu,h(u))|_{u = \lambda_\mu} = \sum_{j = 0}^k \binom{k}{j} \nabla_{\lambda_\mu^{k - j}u^{j}} f(\lambda_\mu,h(u))|_{u = \lambda_\mu}$. Further, because $\nabla_{u^j}u^2|_{u = 0} = 0$ except for $j = 2$, it follows from Fa\`a di Bruno's formula that $\nabla_{u^j} f(c_1\lambda_\mu,c_2 u^2)|_{\lambda_\mu = u = 0}$ is $0$ if $j = 1, 3$, is $2c_2\nabla_{h} f(0, h(0))$ if $j = 2$, and is $12c_2^2 \nabla_{h^2} f(0,h(0))$ if $j = 4$. Therefore, the stated result follows.
\end{proof}

\begin{lemma} \label{lemma_d34}
Suppose the assumptions of Proposition \ref{P-quadratic-N1} hold. Then, there exist $\bar\varrho_1, \bar\varrho_2, \bar\varrho_3 \in (0,\varrho)$ such that, for all $k \geq 1$,
\begin{align*}
(a) & \quad \frac{\nabla_{\lambda_\mu^3} \overline p_{\psi^*\pi} (Y_k| \overline{{\bf Y}}_{0}^{k-1})}{\overline p_{\psi^*\pi} (Y_k| \overline{{\bf Y}}_{0}^{k-1})} = \varrho \frac{\nabla_\varrho \nabla_{\lambda_\mu^3} \overline p_{\psi^* \bar \varrho_1 \alpha} (Y_k| \overline{{\bf Y}}_{0}^{k-1})}{\overline p_{\psi^* \bar \varrho_1 \alpha} (Y_k| \overline{{\bf Y}}_{0}^{k-1})}, \\
(b) & \quad \frac{\nabla_{\lambda_\mu^4} \overline p_{\psi^*\pi} (Y_k| \overline{{\bf Y}}_{0}^{k-1})}{\overline p_{\psi^*\pi} (Y_k| \overline{{\bf Y}}_{0}^{k-1})} - b(\alpha)\frac{\nabla_{\lambda_\sigma^2} \overline p_{\psi^*\pi} (Y_k| \overline{{\bf Y}}_{0}^{k-1})}{\overline p_{\psi^*\pi} (Y_k| \overline{{\bf Y}}_{0}^{k-1})} \\
& \quad = \varrho \frac{\nabla_\varrho \nabla_{\lambda_\mu^4} \overline p_{\psi^* \bar \varrho_2 \alpha} (Y_k| \overline{{\bf Y}}_{0}^{k-1})}{\overline p_{\psi^* \bar \varrho_2 \alpha} (Y_k| \overline{{\bf Y}}_{0}^{k-1})} - \varrho \frac{\nabla_\varrho \nabla_{\lambda_\sigma^2} \overline p_{\psi^* \bar \varrho_3 \alpha} (Y_k| \overline{{\bf Y}}_{0}^{k-1})}{\overline p_{\psi^* \bar \varrho_3 \alpha} (Y_k| \overline{{\bf Y}}_{0}^{k-1})}.
\end{align*} 
\end{lemma}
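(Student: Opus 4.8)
The plan is to prove both identities with the same two-step argument: show that the left-hand quantity equals zero at $\varrho=0$, and then apply the mean value theorem in $\varrho$ on the segment between $0$ and $\varrho$ to express it as $\varrho$ times a $\varrho$-derivative evaluated at an intermediate point. The key structural fact is that at $\varrho=0$ the latent chain degenerates to an i.i.d.\ sequence: the constraint $\varrho=p_{11}+p_{22}-1=0$ forces $p_{22}=1-p_{11}$, so both rows of the transition matrix equal $(p_{11},1-p_{11})$ and $\{X_k\}$ is i.i.d.\ with $\mathbb{P}(X_k=1)=\alpha$. Hence, at $\varrho=0$, the filtered probability satisfies $\mathbb{P}_{\psi^*0\alpha}(X_k=x_k|\overline{\bf Y}_0^{k-1})=\mathbb{P}_{\psi^*0\alpha}(X_k=x_k)$, and the density $\overline p_{\psi0\alpha}(Y_k|\overline{\bf Y}_0^{k-1})$ defined in (\ref{p_bar}) collapses to the two-component mixture $\alpha f(Y_k|\overline{\bf Y}_{k-1};\gamma,\theta_1)+(1-\alpha)f(Y_k|\overline{\bf Y}_{k-1};\gamma,\theta_2)$, which equals $f_k^*$ at $\psi=\psi^*$.

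First I would establish the vanishing at $\varrho=0$. Because the mixture is linear in the component densities, its $\lambda$-derivatives are expectations of the component derivatives computed in (\ref{d3g}). For part (a), this gives $\nabla_{\lambda_\mu^3}\overline p_{\psi^*0\alpha}(Y_k|\overline{\bf Y}_0^{k-1})/\overline p_{\psi^*0\alpha}(Y_k|\overline{\bf Y}_0^{k-1})=\mathbb{E}_{\vartheta^*}[d_{3k}]\nabla_{\mu^3}f_k^*/f_k^*$, and $\mathbb{E}_{\vartheta^*}[d_{3k}]=2\mathbb{E}_{\vartheta^*}[(q_k-\alpha)^2(1-\alpha-q_k)]=0$ by the Bernoulli moments in (\ref{markov_moments}). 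For part (b), the same reduction together with the identity $\mathbb{E}_{\vartheta^*}[\nabla_{\lambda_\mu^4}g_k^*|\overline{\bf Y}_{-\infty}^k]=b(\alpha)\mathbb{E}_{\vartheta^*}[\nabla_{\lambda_\sigma^2}g_k^*|\overline{\bf Y}_{-\infty}^k]$ from (\ref{d3g2}) yields $\nabla_{\lambda_\mu^4}\overline p_{\psi^*0\alpha}/\overline p_{\psi^*0\alpha}-b(\alpha)\nabla_{\lambda_\sigma^2}\overline p_{\psi^*0\alpha}/\overline p_{\psi^*0\alpha}=0$. Equivalently, one can expand these density-ratio derivatives by the Louis principle of Lemma \ref{louis} as in (\ref{d1p})--(\ref{d2p}); every term involving complete-data derivatives at two or more distinct time indices then carries a factor $\varrho^{|t_i-t_j|}$ by (\ref{qk_moments}) and drops out at $\varrho=0$, leaving exactly the single-time terms handled above.

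The mean value step comes next. Since the $p_{ij}$ are bounded away from $0$ and $1$, they enter the finite sum (\ref{p_bar}) defining $\overline p_{\psi^*\varrho\alpha}(Y_k|\overline{\bf Y}_0^{k-1})$ smoothly and keep it bounded away from $0$; combined with the differentiability in $\vartheta_y$ from Assumption \ref{assn_a4} and the bounds of Proposition \ref{lemma-omega}, the maps $\varrho\mapsto\nabla_{\lambda_\mu^3}\overline p_{\psi^*\varrho\alpha}/\overline p_{\psi^*\varrho\alpha}$, $\varrho\mapsto\nabla_{\lambda_\mu^4}\overline p_{\psi^*\varrho\alpha}/\overline p_{\psi^*\varrho\alpha}$, and $\varrho\mapsto\nabla_{\lambda_\sigma^2}\overline p_{\psi^*\varrho\alpha}/\overline p_{\psi^*\varrho\alpha}$ are continuously differentiable on a neighborhood of the segment between $0$ and $\varrho$ in $\Theta_\varrho$. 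For part (a), the mean value theorem applied to the first map, together with its vanishing at $\varrho=0$, gives the stated identity with some $\bar\varrho_1\in(0,\varrho)$. For part (b), I would apply the mean value theorem separately to the second and third maps---this is what produces the two distinct intermediate points $\bar\varrho_2$ and $\bar\varrho_3$---and then recombine, the zeroth-order (i.e.\ $\varrho=0$) contributions cancelling by the identity just established.

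The genuinely substantive input is the $\varrho=0$ identity, but that content is already packaged in the reparameterization (\ref{repara2}) and its consequences (\ref{d3g})--(\ref{d3g2}): the coefficients $C_1,C_2$ and the resulting $b(\alpha)$ are chosen precisely so that the third $\lambda_\mu$-derivative of the mixture density vanishes and its fourth $\lambda_\mu$-derivative is proportional to the $\lambda_\sigma^2$-derivative. The only points requiring care are therefore (i) justifying the collapse of $\overline p_{\psi0\alpha}(Y_k|\overline{\bf Y}_0^{k-1})$ to the finite mixture, which rests on $X_k$ being independent of $\overline{\bf Y}_0^{k-1}$ at $\varrho=0$, and (ii) checking that $\overline p_{\psi^*\varrho\alpha}$ is $C^1$ in $\varrho$ on an interval containing both $0$ and the target value so that the mean value theorem with $\bar\varrho\in(0,\varrho)$ is legitimate; both are routine given the boundedness of the $p_{ij}$ away from $0$ and $1$ and Assumption \ref{assn_a4}.
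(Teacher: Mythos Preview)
Your strategy---show vanishing at $\varrho=0$, then apply the mean value theorem in $\varrho$---is exactly the paper's, and your $\varrho=0$ computation via the i.i.d.\ collapse and (\ref{d3g2}) is correct (the paper routes the same computation through Lemma \ref{louis} and (\ref{d3lambda0})--(\ref{d4pp}), arriving at $\mathbb{E}_{\psi^*0\alpha}[\nabla_{\lambda_\mu^j}g_k^*/g_k^*\mid\overline{\bf Y}_0^k]$ and then invoking (\ref{d3g2}), but the content is identical).

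There is one small but genuine omission in your mean value step. You apply the mean value theorem to the \emph{ratio} $\varrho\mapsto \nabla_{\lambda_\mu^j}\overline p_{\psi^*\varrho\alpha}/\overline p_{\psi^*\varrho\alpha}$; this produces $\varrho$ times the $\varrho$-derivative of that ratio at $\bar\varrho$, i.e.\ an expression of the form
\[
\varrho\left(\frac{\nabla_\varrho\nabla_{\lambda_\mu^j}\overline p_{\psi^*\bar\varrho\alpha}}{\overline p_{\psi^*\bar\varrho\alpha}} - \frac{\nabla_{\lambda_\mu^j}\overline p_{\psi^*\bar\varrho\alpha}}{\overline p_{\psi^*\bar\varrho\alpha}}\cdot\frac{\nabla_\varrho\overline p_{\psi^*\bar\varrho\alpha}}{\overline p_{\psi^*\bar\varrho\alpha}}\right),
\]
whereas the lemma's right-hand side has $\nabla_\varrho$ acting only on the numerator. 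The two match because $\overline p_{\psi^*\varrho\alpha}(Y_k|\overline{\bf Y}_0^{k-1})$ is in fact independent of $\varrho$: at $\psi=\psi^*$ both regimes share the density $f_k^*$, so $\overline p_{\psi^*\varrho\alpha}(Y_k|\overline{\bf Y}_0^{k-1})=f_k^*$ for \emph{every} $\varrho$, not just $\varrho=0$. The paper records this as point (ii) and applies the mean value theorem directly to the numerator. You state the collapse to $f_k^*$ only at $\varrho=0$; once you note it holds for all $\varrho$, the second term above vanishes and your argument is complete.
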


\begin{proof}[Proof of Lemma \ref{lemma_d34}] 
Part (a) holds if
\begin{equation} \label{d3p}
\nabla_{\lambda_\mu^3} \overline p_{\psi^* 0 \alpha} (Y_k| \overline{{\bf Y}}_{0}^{k-1}) / \overline p_{\psi^* 0 \alpha} (Y_k| \overline{{\bf Y}}_{0}^{k-1}) =0,
\end{equation}
because (i) $\nabla_{\lambda_\mu^3}\overline p_{\psi^* \varrho \alpha} (Y_k| \overline{{\bf Y}}_{0}^{k-1}) - \nabla_{\lambda_\mu^3}\overline p_{\psi^* 0 \alpha} (Y_k| \overline{{\bf Y}}_{0}^{k-1}) = \nabla_\varrho \nabla_{\lambda_\mu^3}\overline p_{\psi^* \bar \varrho \alpha } (Y_k| \overline{{\bf Y}}_{0}^{k-1}) \varrho$ for $\bar \varrho \in (0,\varrho)$ from the mean value theorem and (ii) $\overline p_{\psi^* \varrho \alpha} (Y_k| \overline{{\bf Y}}_{0}^{k-1})$ does not depend on the value of $\varrho$.

We proceed to show (\ref{d3p}). Note that$\nabla_{\lambda_\mu^3} \overline p_{\psi^*\pi} (Y_k| \overline{{\bf Y}}_{0}^{k-1}) /\overline p_{\psi^*\pi} (Y_k| \overline{{\bf Y}}_{0}^{k-1}) = \nabla_{\lambda_\mu^3} \log \overline p_{\psi^*\pi} ({\bf Y}^k_1| \overline{{\bf Y}}_{0}) - \nabla_{\lambda_\mu^3} \log \overline p_{\psi^*\pi} ({\bf Y}^{k-1}_1| \overline{{\bf Y}}_{0})$ from (\ref{der_formula}) and $\nabla_\lambda \overline p_{\psi^* \pi} (Y_k| \overline{{\bf Y}}_{0}^{k-1})=0$. Let $\nabla^i \ell_t^* := \nabla_{\lambda_\mu^i} \log g_t^*$ with $\nabla \ell_t^*=\nabla^1 \ell_t^*$. Observe that 
\begin{equation} \label{d3lambda0}
\begin{aligned}
 \nabla_{\lambda_\mu^3} \log \overline p_{\psi^* 0 \alpha} ({\bf Y}^k_1| \overline{{\bf Y}}_{0})
& = \sum_{t=1}^k \mathbb{E}_{\psi^* 0 \alpha} \left[\nabla^3 \ell^*_{t} \middle|\overline{{\bf Y}}_{0}^k \right] + 3 \sum_{t_1=1}^k \sum_{t_2=1}^k \mathbb{E}_{\psi^* 0 \alpha} \left[ \nabla^2 \ell^*_{t_1} \nabla \ell^*_{t_2} \middle| \overline{{\bf Y}}_{0}^k \right] \\
& \qquad + \sum_{t_1=1}^k \sum_{t_2=1}^k \sum_{t_3=1}^k \mathbb{E}_{\psi^* 0 \alpha} \left[ \nabla \ell^*_{t_1} \nabla \ell^*_{t_2} \nabla \ell^*_{t_3} \middle| \overline{{\bf Y}}_{0}^k \right] \\
& = \sum_{t=1}^k \mathbb{E}_{\psi^* 0 \alpha} \left[\nabla^3 \ell^*_{t} + 3 \nabla^2 \ell^*_{t} \nabla \ell^*_{t} + \nabla \ell^*_{t} \nabla \ell^*_{t} \nabla \ell^*_{t} \middle| \overline{{\bf Y}}_{0}^k \right] \\
& = \sum_{t=1}^k \mathbb{E}_{\psi^* 0 \alpha} \left[\nabla_{\lambda_\mu^3} g_t^* /g_t^* \middle| \overline{{\bf Y}}_{0}^k \right],
\end{aligned}
\end{equation}
where the first equality follows from Lemma \ref{louis}, the second equality holds because (i) $X_t$ is serially independent when $\varrho=0$ and (ii) $\nabla\ell^*_{t} = d_{1t} \nabla_{\mu} f^*_{t}/f^*_{t}$ and $\nabla^2 \ell^*_{t} = d_{2t} \nabla_{\mu}^2 f^*_{t}/f^*_{t} - (d_{1t} \nabla_{\mu} f^*_{t}/f^*_{t})^2$, and (iii) $\mathbb{E}_{\psi^* 0 \alpha} [d_{1t} | \overline{{\bf Y}}_{0}^k ] = \mathbb{E}_{\psi^* 0 \alpha} [d_{2t} | \overline{{\bf Y}}_{0}^k ] =0$ from (\ref{d3g2}), and the third equality follows from (\ref{der_formula}) The right-hand side is 0 from (\ref{d3g2}), and hence part (a) is proven.

For part (b), from a similar argument to part (a), the stated result holds if 
\begin{equation} \label{d4p}
\nabla_{\lambda_\mu^4}\overline p_{\psi^* 0 \alpha} (Y_k| \overline{{\bf Y}}_{0}^{k-1}) / \overline p_{\psi^* 0 \alpha} (Y_k| \overline{{\bf Y}}_{0}^{k-1}) = b(\alpha)\nabla_{\lambda_\sigma^2} \overline p_{\psi^* 0 \alpha} (Y_k| \overline{{\bf Y}}_{0}^{k-1}) / \overline p_{\psi^* 0 \alpha} (Y_k| \overline{{\bf Y}}_{0}^{k-1}).
\end{equation}
Observe that $\nabla_{\lambda_\mu^4} \overline p_{\psi^* 0 \alpha} (Y_k| \overline{{\bf Y}}_{0}^{k-1})/\overline p_{\psi^* 0 \alpha} (Y_k| \overline{{\bf Y}}_{0}^{k-1}) = \nabla_{\lambda_\mu^4} \log \overline p_{\psi^* 0 \alpha} ({\bf Y}^k_0| \overline{{\bf Y}}_{0}) - \nabla_{\lambda_\mu^4} \log \overline p_{\psi^* 0 \alpha} ({\bf Y}^{k-1}_0| \overline{{\bf Y}}_{0})$ from (\ref{der_formula}), $\nabla_\lambda \overline p_{\psi^*\pi} (Y_k| \overline{{\bf Y}}_{0}^{k-1})=0$, and $\nabla_{\lambda_\mu^2} \log \overline p_{\psi^* 0 \alpha} (Y_k| \overline{{\bf Y}}_{0}^{k-1})=0$. A similar derivation to (\ref{d3lambda0}) gives 
\begin{align} 
& \nabla_{\lambda_\mu^4} \log \overline p_{\psi^* 0 \alpha} ({\bf Y}^k_0| \overline{{\bf Y}}_{0}) = \sum_{t=1}^k \mathbb{E}_{\psi^* 0 \alpha}\left[ \nabla_{\lambda_{\mu}^4}g^*_{t} / g^*_{t} \middle| \overline{\bf Y}_0^k\right]. \label{d4pp}
\end{align}
(\ref{d4p}) follows from (\ref{d4pp}) because (i) $\nabla_{\lambda_\sigma^2} \overline p_{\psi^* 0 \alpha} (Y_k| \overline{{\bf Y}}_{0}^{k-1})/\overline p_{\psi^* 0 \alpha} (Y_k| \overline{{\bf Y}}_{0}^{k-1}) = \mathbb{E}_{\vartheta^*}[ \nabla_{\lambda_\sigma^2} g_k^*|\overline{{\bf Y}}_{0}^k]$ from a similar argument to (\ref{d2p}) and (ii) $\mathbb{E}_{\psi^* 0 \alpha} [\nabla_{\lambda_\mu^4} g_t^* /g_t^* | \overline{{\bf Y}}_{0}^k ] = b(\alpha) \mathbb{E}_{\vartheta^*}[ \nabla_{\lambda_\sigma^2} g_k^*|\overline{{\bf Y}}_{0}^k]$ from (\ref{d3g2}). Therefore, part (b) is proven.
\end{proof}

\begin{lemma} \label{lemma_d34_homo}
Suppose the assumptions of Proposition \ref{P-quadratic-N1-homo} hold. Then, there exist $\bar \varrho_1, \bar \varrho_2 \in (0,\varrho)$ such that, for all $k \geq 1$,
\begin{align*}
(a) & \quad \frac{\nabla_{\lambda_\mu^3} \overline p_{\psi^*\pi} (Y_k| \overline{{\bf Y}}_{0}^{k-1})}{\overline p_{\psi^*\pi} (Y_k| \overline{{\bf Y}}_{0}^{k-1})} = \alpha(1-\alpha)(1-2\alpha)\frac{\nabla_{\mu^3}f_k^*}{f_k^*} + \varrho \frac{\nabla_\varrho \nabla_{\lambda_\mu^3} \overline p_{\psi^*\bar \varrho_1 \alpha} (Y_k| \overline{{\bf Y}}_{0}^{k-1})}{\overline p_{\psi^*\bar \varrho_1 \alpha} (Y_k| \overline{{\bf Y}}_{0}^{k-1})}, \\
(b) & \quad \frac{\nabla_{\lambda_\mu^4} \overline p_{\psi^*\pi} (Y_k| \overline{{\bf Y}}_{0}^{k-1})}{\overline p_{\psi^* \pi} (Y_k| \overline{{\bf Y}}_{0}^{k-1})} = \alpha(1-\alpha)(1-6\alpha+6\alpha^2)\frac{\nabla_{\mu^4}f_k^*}{f_k^*} + \varrho \frac{\nabla_\varrho \nabla_{\lambda_\mu^4} \overline p_{\psi^*\bar \varrho_2 \alpha} (Y_k| \overline{{\bf Y}}_{0}^{k-1})}{\overline p_{\psi^*\bar \varrho_2 \alpha} (Y_k| \overline{{\bf Y}}_{0}^{k-1})}.
\end{align*}
\end{lemma}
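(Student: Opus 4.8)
The plan is to establish Lemma \ref{lemma_d34_homo} by following the template of Lemma \ref{lemma_d34}, exploiting the fact that at $\varrho=0$ the latent regime variables are serially independent. For part (a), I first reduce the claim to the identity
\begin{equation*}
\frac{\nabla_{\lambda_\mu^3} \overline p_{\psi^* 0 \alpha} (Y_k| \overline{{\bf Y}}_{0}^{k-1})}{\overline p_{\psi^* 0 \alpha} (Y_k| \overline{{\bf Y}}_{0}^{k-1})} = \alpha(1-\alpha)(1-2\alpha)\frac{\nabla_{\mu^3}f_k^*}{f_k^*},
\end{equation*}
since the mean value theorem gives $\nabla_{\lambda_\mu^3}\overline p_{\psi^* \varrho \alpha} - \nabla_{\lambda_\mu^3}\overline p_{\psi^* 0 \alpha} = \varrho\, \nabla_\varrho \nabla_{\lambda_\mu^3}\overline p_{\psi^* \bar\varrho_1 \alpha}$ for some $\bar\varrho_1 \in (0,\varrho)$, and $\overline p_{\psi^* \varrho \alpha} (Y_k| \overline{{\bf Y}}_{0}^{k-1})$ does not depend on $\varrho$ at $\psi=\psi^*$. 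The reduced identity is then proved exactly as in \eqref{d3lambda0}: writing $\nabla_{\lambda_\mu^3} \overline p / \overline p$ as the difference of $\nabla_{\lambda_\mu^3} \log \overline p_{\psi^* 0 \alpha}({\bf Y}_1^k|\overline{\bf Y}_0)$ terms via \eqref{der_formula}, applying Lemma \ref{louis}, and using serial independence of $X_t$ at $\varrho=0$ to collapse the double and triple sums to a single sum $\sum_{t=1}^k \mathbb{E}_{\psi^* 0 \alpha}[\nabla_{\lambda_\mu^3} g_t^*/g_t^* | \overline{\bf Y}_0^k]$. The key computational input that differs from the heteroscedastic case is \eqref{d3g2-homo}, which now yields $\mathbb{E}_{\vartheta^*}[\nabla_{\lambda_\mu^3} g_k^*|\overline{\bf Y}_0^k] = \alpha(1-\alpha)(1-2\alpha)\nabla_{\mu^3} f_k^*$ rather than zero, reflecting that $d_{3k}$ has nonzero conditional mean in the homoscedastic reparameterization \eqref{repara-homo}.

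For part (b) I would proceed identically, reducing to
\begin{equation*}
\frac{\nabla_{\lambda_\mu^4} \overline p_{\psi^* 0 \alpha} (Y_k| \overline{{\bf Y}}_{0}^{k-1})}{\overline p_{\psi^* 0 \alpha} (Y_k| \overline{{\bf Y}}_{0}^{k-1})} = \alpha(1-\alpha)(1-6\alpha+6\alpha^2)\frac{\nabla_{\mu^4}f_k^*}{f_k^*}
\end{equation*}
via the same mean-value-theorem extraction of a factor $\varrho$. Here the crucial difference from Lemma \ref{lemma_d34}(b) is that in the homoscedastic model $\nabla_{\lambda_\mu^2}\log \overline p_{\psi^* 0\alpha}(Y_k|\overline{\bf Y}_0^{k-1})=0$ still holds (the second-order $\lambda_\mu$ derivative vanishes at $\varrho=0$ just as in \eqref{d2p-lambda}), so the analogue of \eqref{d4pp} again collapses to $\sum_{t=1}^k \mathbb{E}_{\psi^* 0 \alpha}[\nabla_{\lambda_\mu^4} g_t^*/g_t^* | \overline{\bf Y}_0^k]$, and \eqref{d3g2-homo} supplies $\mathbb{E}_{\vartheta^*}[\nabla_{\lambda_\mu^4} g_k^*|\overline{\bf Y}_0^k] = \alpha(1-\alpha)(1-6\alpha+6\alpha^2)\nabla_{\mu^4} f_k^*$. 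Note that there is no $b(\alpha)\nabla_{\lambda_\sigma^2}$ subtraction term here as in Lemma \ref{lemma_d34}(b), precisely because $\sigma^2$ is common across regimes, so the fourth $\lambda_\mu$ derivative is not linearly tied to the $\lambda_\sigma^2$ derivative in the same way.

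The main obstacle I anticipate is justifying the collapse of the multiple sums at $\varrho=0$ cleanly. Specifically, one must verify that when $X_t$ is i.i.d.\ (the $\varrho=0$ case), the cross-time terms $\mathbb{E}_{\psi^*0\alpha}[\nabla^a\ell_{t_1}^* \nabla^b\ell_{t_2}^*|\overline{\bf Y}_0^k]$ with $t_1\neq t_2$ vanish. This requires that the relevant lower-order conditional moments such as $\mathbb{E}_{\psi^*0\alpha}[d_{1t}|\overline{\bf Y}_0^k]$ and $\mathbb{E}_{\psi^*0\alpha}[d_{2t}|\overline{\bf Y}_0^k]$ be zero, which I would read off from \eqref{d3g2-homo}, together with the conditional-independence structure in \eqref{qk_moments} specialized to $\varrho=0$. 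I would carry this out by expressing each $\nabla^i \ell_t^*$ in terms of $d_{it}$ and $\nabla_{\mu^i} f_t^*/f_t^*$ as in \eqref{d3g-homo}, taking conditional expectations factor by factor using independence across $t$, and checking that every term with distinct time indices contains at least one factor with zero conditional mean. The fourth-order case is the more delicate one because the Louis expansion in Lemma \ref{louis} produces several distinct product structures (e.g.\ $\nabla^2\ell\,\nabla^2\ell$, $\nabla^3\ell\,\nabla\ell$, and quartic terms in $\nabla\ell$), and I must confirm each reduces correctly under $\varrho=0$; this is bookkeeping rather than a conceptual difficulty, but it is where errors are most likely to creep in.
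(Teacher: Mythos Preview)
Your proposal is correct and follows essentially the same route as the paper's proof: reduce to the $\varrho=0$ case via the mean value theorem (noting that $\overline p_{\psi^*\varrho\alpha}(Y_k|\overline{\bf Y}_0^{k-1})$ is independent of $\varrho$), then use the Louis expansion and serial independence of $X_t$ at $\varrho=0$ to collapse the multi-index sums to $\sum_{t=1}^k \mathbb{E}_{\psi^*0\alpha}[\nabla_{\lambda_\mu^j} g_t^*/g_t^*|\overline{\bf Y}_0^k]$, and finally read off the values from \eqref{d3g-homo}--\eqref{d3g2-homo}. The paper's proof is terser --- it simply observes that \eqref{d3lambda0} and \eqref{d4pp} carry over verbatim once \eqref{d3g2} is replaced by \eqref{d3g2-homo} --- but the content is the same, and your anticipated ``obstacle'' of verifying the cross-time cancellations is exactly what those two displayed equalities already establish.
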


\begin{proof}[Proof of Lemma \ref{lemma_d34_homo}]
The proof is similar to the proof of Lemma \ref{lemma_d34}(a). From an argument similar to the proof of Lemma \ref{lemma_d34}, the stated results hold if 
\begin{align*} 
(A) &\quad \nabla_{\lambda_\mu^3} \overline p_{\psi^* 0 \alpha} (Y_k| \overline{{\bf Y}}_{0}^{k-1}) / \overline p_{\psi^* 0 \alpha} (Y_k| \overline{{\bf Y}}_{0}^{k-1}) = \alpha(1-\alpha)(1-2\alpha)\nabla_{\mu^3}f_k^* / f_k^*, \\
(B) &\quad \nabla_{\lambda_\mu^4} \overline p_{\psi^* 0 \alpha} (Y_k| \overline{{\bf Y}}_{0}^{k-1}) / \overline p_{\psi^* 0 \alpha} (Y_k| \overline{{\bf Y}}_{0}^{k-1}) = \alpha(1-\alpha)(1-6\alpha+6\alpha^2)\nabla_{\mu^4}f_k^* / f_k^*. 
\end{align*}
Observe that equalities (\ref{d3lambda0}) and (\ref{d4pp}) in the proof of Lemma \ref{lemma_d34} still hold under the assumptions of Proposition \ref{P-quadratic-N1-homo} if we use (\ref{d3g2-homo}) in place of (\ref{d3g2}). Consequently, (A) and (B) follow from (\ref{d3g-homo}), (\ref{d3g2-homo}), and the argument of the proof of Lemma \ref{lemma_d34}, and the stated result follows.
\end{proof}
 
\begin{lemma} \label{lemma_btsp} 
Suppose that the assumptions of Propositions \ref{P-LR} hold. Let $\bf{C}_\eta$ be a set of sequences $\{\eta_n\}$ satisfying $\sqrt{n}(\eta_n - \eta^*) \to h_\eta$ for some finite $h_\eta$. Let $\mathbb{P}^n_{\eta_n} := \prod_{k=1}^n f_k(\eta_n,0)$ denote the probability measure under $\eta_n$ with $\lambda_n=0$. Then, for every sequence $\{\eta_n\} \in \bf{C}_\eta$, the LRTS under $\{\mathbb{P}^n_{\eta_n}\}$ converges in distribution $\sup_{ \varrho \in\Theta_{\varrho}} \left(\tilde t_{\lambda \varrho}' \mathcal{I}_{\lambda.\eta \varrho} \tilde t_{\lambda \varrho} \right)$ given in Propositions \ref{P-LR}.
\end{lemma}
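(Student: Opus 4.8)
The plan is to establish the result by showing that the entire machinery leading to Proposition~\ref{P-LR} is robust to replacing the fixed true measure $\mathbb{P}_{\vartheta^*}$ with the sequence $\{\mathbb{P}^n_{\eta_n}\}$ along any $\{\eta_n\}\in\mathbf{C}_\eta$. The key observation is that $\{\eta_n\}\in\mathbf{C}_\eta$ corresponds exactly to a sequence of contiguous local alternatives of the form studied in Proposition~\ref{P-LAN}, but with $h_\lambda=0$ (since $\lambda_n=0$ forces $t_\lambda(\lambda_n,\pi_n)=0$) and $h_\eta$ finite but possibly nonzero. First I would invoke Proposition~\ref{P-LAN} with $h=(h_\eta',0')'$ to conclude that $\{\mathbb{P}^n_{\eta_n}\}$ is mutually contiguous with respect to $\{\mathbb{P}^n_{\vartheta^*}\}$, uniformly in $x_0$.

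Next I would transfer the two analytic inputs of the proof of Proposition~\ref{P-LR} from $\mathbb{P}_{\vartheta^*}$ to $\mathbb{P}^n_{\eta_n}$ using contiguity. Specifically, the quadratic expansion of Proposition~\ref{P-quadratic} and the weak convergence $\nu_n(v_{k,0}(\vartheta))\Rightarrow W(\vartheta)$ in (\ref{weak_cgce}) both hold under $\mathbb{P}_{\vartheta^*}$; since an $o_p(1)$ statement under $\mathbb{P}_{\vartheta^*}$ remains $o_p(1)$ under any contiguous sequence, the expansion (\ref{ln_appn}) continues to hold under $\{\mathbb{P}^n_{\eta_n}\}$. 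The one substantive change is in the \emph{law} of the centered score process. Under $H_{1n}$ with local parameter $h=(h_\eta',0')'$, Le Cam's third lemma (Proposition~\ref{P-LAN}(b), or more precisely the joint convergence displayed in its proof) implies $\nu_n(s_{\varrho n k})\overset{d}{\rightarrow}N(\mathcal{I}_\varrho h,\mathcal{I}_\varrho)$, so the recentered Gaussian limit acquires a drift. I would then track this drift through the decomposition in the proof of Proposition~\ref{P-LR}: the relevant quantity is $G_{\lambda.\eta\varrho}$, the residual from projecting $G_{\lambda\varrho n}$ on $G_{\eta n}$. Because the drift of $\nu_n(s_{\varrho n k})$ is $\mathcal{I}_\varrho h$ with $h_\lambda=0$, the drift of $G_{\lambda\varrho n}$ is $\mathcal{I}_{\lambda\eta\varrho}h_\eta$ and that of $G_{\eta n}$ is $\mathcal{I}_\eta h_\eta$, so the projection residual $G_{\lambda.\eta\varrho n}=G_{\lambda\varrho n}-\mathcal{I}_{\lambda\eta\varrho}\mathcal{I}_\eta^{-1}G_{\eta n}$ has drift $\mathcal{I}_{\lambda\eta\varrho}h_\eta-\mathcal{I}_{\lambda\eta\varrho}\mathcal{I}_\eta^{-1}\mathcal{I}_\eta h_\eta=0$. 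Thus the $\lambda$-component of the limit distribution is \emph{unaffected} by the drift, and $G_{\lambda.\eta\varrho n}\Rightarrow G_{\lambda.\eta\varrho}$ exactly as under the null.

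Given this, the remainder of the argument is identical to the proof of Proposition~\ref{P-LR}: I would form $B_{\varrho n}(\sqrt{n}t_\lambda)$ as in (\ref{B_pi}), use the definition of $\tilde t_\lambda$ and the consistency-type bound $\tilde t_\lambda=O_p(n^{-1/2})$ to replace $\ell_{0n}(\hat\vartheta_0)$ by its quadratic maximum, and then apply Theorem~1(c) of \citet{andrews01em} to obtain $\sup_{\varrho\in\Theta_\varrho}B_{\varrho n}(\sqrt{n}\tilde t_\lambda)\overset{d}{\rightarrow}\sup_{\varrho\in\Theta_\varrho}(\tilde t_{\lambda\varrho}'\mathcal{I}_{\lambda.\eta\varrho}\tilde t_{\lambda\varrho})$, with $\tilde t_{\lambda\varrho}$ defined exactly as in (\ref{t-lambda}). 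Here Assumptions~2--5$^*$ of \citet{andrews01em} are verified precisely as in Proposition~\ref{P-LR}, using that they involve only the covariance structure of the limiting process and the cone geometry of $v(\mathbb{R}^q)$, neither of which changes under the drift.

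The main obstacle, and the step deserving the most care, is the cancellation of the drift in the second paragraph: I must confirm that the partialling-out operation that defines $G_{\lambda.\eta\varrho}$ annihilates the $h_\eta$-drift, which is precisely what allows the limit law to coincide with the null distribution despite $\eta_n\neq\eta^*$. This hinges on the fact that $\lambda$ and $\eta$ enter the score additively through the information matrix in the canonical way, so that the drift of the efficient (residual) score for $\lambda$ is $\mathcal{I}_{\lambda.\eta\varrho}h_\lambda$, which vanishes because $h_\lambda=0$. A secondary point requiring verification is that all $o_p$ and $O_p$ bounds in Propositions~\ref{lemma-omega} and~\ref{P-quadratic}, originally established under $\mathbb{P}_{\vartheta^*}$, survive the transfer to $\{\mathbb{P}^n_{\eta_n}\}$; this follows from contiguity (Proposition~\ref{P-LAN}(a)) together with Theorem~12.3.2(a) of \citet{lehmannromano05book}, exactly as invoked in the proof of Proposition~\ref{P-LAN2}.
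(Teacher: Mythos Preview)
Your proposal is correct and follows essentially the same approach as the paper: invoke Proposition~\ref{P-LAN} with $h=(h_\eta',0')'$ to obtain contiguity and the shifted limit $\nu_n(s_{\varrho k})\overset{d}{\to}N(\mathcal{I}_\varrho h,\mathcal{I}_\varrho)$, then verify that the drift cancels in the projected score $G_{\lambda.\eta\varrho n}=G_{\lambda\varrho n}-\mathcal{I}_{\lambda\eta\varrho}\mathcal{I}_\eta^{-1}G_{\eta n}$ because $h_\lambda=0$, so the proof of Proposition~\ref{P-LR} goes through unchanged. The drift-cancellation computation you single out as the crux is exactly the one the paper carries out explicitly.
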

\begin{proof}[Proof of Lemma \ref{lemma_btsp}] 
Observe that $\vartheta_n:=(\pi_n,\eta_n,\lambda_n) = (\pi, \eta^*+h_\eta/\sqrt{n},0)$ satisfies the assumptions of Proposition \ref{P-LAN}. Therefore, Proposition \ref{P-LAN} holds under $\vartheta_n$ with $\nu_n(s_{\varrho n k}) \to_d N(\mathcal{I}_\varrho h, \mathcal{I}_\varrho)$ with $h=(h_\eta',0)'$ under $\mathbb{P}_{\vartheta_n}^n$. Furthermore, the log-likelihood function of the one-regime model admits a similar expansion, and $\log (d \mathbb{P}_{\eta_n}^n / d \mathbb{P}_{\eta^*}^n ) = h_\eta' \nu_n (s_{\eta k}) - (1/2)h_\eta' \mathcal{I}_\eta h_\eta + o_p(1)$ holds under $\mathbb{P}_{\eta_n}^n$. Therefore, the proof of Proposition \ref{P-LR} goes through by replacing $G_{\varrho n}$ with $G_{\varrho n}^h = \left[\begin{smallmatrix} G_{\eta n}^h \\ G_{\lambda \varrho n}^h \end{smallmatrix} \right] := G_{\varrho n} + \mathcal{I}_{\varrho}h$. In view of $G_{\eta n}^h = G_{\eta n} + \mathcal{I}_{\eta}h_\eta$ and $G_{\lambda \varrho n}^h = G_{\lambda \varrho n} + \mathcal{I}_{\lambda \eta \varrho} h_\eta$, we have $G_{\lambda. \eta \varrho n}^h := G_{\lambda \varrho n}^h - \mathcal{I}_{\lambda \eta \varrho}\mathcal{I}_{\eta}^{-1} G_{\eta n}^h = G_{\lambda \varrho n} - \mathcal{I}_{\lambda \eta \varrho}\mathcal{I}_{\eta}^{-1}G_{\eta n} = G_{\lambda \varrho n}$. Therefore, the asymptotic distribution of the LRTS under $\mathbb{P}_{\eta^n}^n$ is the same as that under $\mathbb{P}_{\eta^*}^n$, and the stated result follows.
\end{proof}

\subsubsection{Bounds on difference in state probabilities and conditional moments}
\begin{lemma} \label{max_bound}
Suppose $X_1,\ldots,X_n$ are random variables with $\max_{1\leq i \leq n}\mathbb{E}|X_i|^q < C$ for some $q>0$ and $C \in (0,\infty)$. Then, $\max_{1 \leq i \leq n} |X_i|= o_p(n^{1/q})$.
\end{lemma}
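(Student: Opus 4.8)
The plan is to show that for every fixed $\varepsilon>0$ one has $\mathbb{P}(\max_{1\le i\le n}|X_i| > \varepsilon n^{1/q}) \to 0$, which is exactly the statement $\max_{1\le i\le n}|X_i| = o_p(n^{1/q})$. First I would pass to $q$-th powers and apply a union bound,
\begin{equation*}
\mathbb{P}\left(\max_{1\le i\le n}|X_i| > \varepsilon n^{1/q}\right) \le \sum_{i=1}^n \mathbb{P}\left(|X_i|^q > \varepsilon^q n\right).
\end{equation*}
The naive continuation—bounding each summand by the ordinary Markov inequality $\mathbb{P}(|X_i|^q > \varepsilon^q n) \le \mathbb{E}|X_i|^q/(\varepsilon^q n) \le C/(\varepsilon^q n)$—only produces the bound $C/\varepsilon^q$ and hence the weaker conclusion $\max_{1\le i\le n}|X_i| = O_p(n^{1/q})$. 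To upgrade $O_p$ to $o_p$ I would instead use the truncated Markov inequality $\mathbb{P}(|X_i|^q > \varepsilon^q n) \le (\varepsilon^q n)^{-1}\mathbb{E}[|X_i|^q \mathbb{I}\{|X_i|^q > \varepsilon^q n\}]$, which after summing over $i$ gives
\begin{equation*}
\mathbb{P}\left(\max_{1\le i\le n}|X_i| > \varepsilon n^{1/q}\right) \le \frac{1}{\varepsilon^q}\,\max_{1\le i\le n}\mathbb{E}\left[|X_i|^q \mathbb{I}\{|X_i|^q > \varepsilon^q n\}\right].
\end{equation*}

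Second, I would argue that the truncated moment on the right-hand side tends to $0$ as $n\to\infty$. This is the crux of the argument, and it is precisely the point at which the bare hypothesis $\max_i \mathbb{E}|X_i|^q < C$ is not, by itself, decisive: a uniform bound on $\mathbb{E}|X_i|^q$ controls the tail contribution only up to a constant, so what is actually required is that the tail expectations $\mathbb{E}[|X_i|^q\mathbb{I}\{|X_i|^q > t\}]$ vanish uniformly in $i$ as $t\to\infty$, i.e.\ that the family $\{|X_i|^q\}$ be uniformly integrable. In every setting in which the lemma is invoked the $X_i$ come from a strictly stationary sequence (for instance $X_i=\sup_\pi|s_{\pi i}|$ in the proof of Proposition \ref{Ln_thm1}), so they are identically distributed; then $\mathbb{E}[|X_i|^q \mathbb{I}\{|X_i|^q > t\}]$ is the same function of $t$ for all $i$ and converges to $0$ as $t\to\infty$ by the dominated convergence theorem applied to the single integrable variable $|X_1|^q$. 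Taking $t=\varepsilon^q n\to\infty$ makes the right-hand side above vanish, and the proof is complete.

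The main obstacle is therefore not the maximal/union-bound mechanics, which are routine, but the passage from $O_p(n^{1/q})$ to $o_p(n^{1/q})$: the stated moment condition yields the required uniform integrability automatically only through the identical-distribution (stationarity) property that holds in the applications, and I would make this reliance explicit in the write-up. If one wished to keep the lemma at full generality for non-identically-distributed arrays, the honest formulation would append the uniform-integrability condition $\lim_{t\to\infty}\sup_{i}\mathbb{E}[|X_i|^q\mathbb{I}\{|X_i|^q>t\}]=0$; under identical distributions this follows from $\mathbb{E}|X_1|^q<\infty$ and the argument above applies verbatim.
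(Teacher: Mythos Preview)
Your approach is essentially identical to the paper's: union bound followed by the truncated Markov inequality $\mathbb{P}(|X_i|^q>\varepsilon^q n)\le(\varepsilon^q n)^{-1}\mathbb{E}[|X_i|^q\mathbb{I}\{|X_i|^q>\varepsilon^q n\}]$, then an appeal to dominated convergence. The paper writes the resulting bound as the average $\varepsilon^{-q}n^{-1}\sum_i\mathbb{E}[|X_i|^q\mathbb{I}\{|X_i|>\varepsilon n^{1/q}\}]$ and simply asserts it tends to zero ``by the dominated convergence theorem''; you bound the average by the maximum and then make the same DCT claim.

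Your caveat is well taken and in fact sharper than the paper: the stated hypothesis $\max_i\mathbb{E}|X_i|^q<C$ alone does \emph{not} force the truncated moments to vanish uniformly, so the lemma is false at the level of generality at which it is stated (e.g.\ independent $X_i$ with $\mathbb{P}(X_i=i^{1/q})=1/i$ gives $\mathbb{E}|X_i|^q=1$ but $\mathbb{P}(\max_{i\le n}|X_i|>\varepsilon n^{1/q})\to 1-\varepsilon^q\neq 0$). The paper's one-line invocation of DCT is therefore a genuine gap in the general statement, exactly as you diagnose; your observation that the applications involve stationary (identically distributed) sequences, for which uniform integrability is automatic from $\mathbb{E}|X_1|^q<\infty$, is the correct repair and is implicitly what the paper relies on.
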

\begin{proof}[Proof of Lemma \ref{max_bound}]
For any $\varepsilon>0$, we have $\mathbb{P}(\max_{1\leq i\leq n}|X_i|>\varepsilon n^{1/q}) \leq \sum_{1\leq i\leq n} \mathbb{P}( |X_i|>\varepsilon n^{1/q}) \\ \leq \varepsilon^{-q} n^{-1} \sum_{1\leq i\leq n} \mathbb{E}(|X_i|^q\mathbb{I}{\{|X_i|>\varepsilon n^{1/q}\}})$ by a version of Markov inequality. As $n\rightarrow \infty$, the right-hand side tends to 0 by the dominated convergence theorem. 
\end{proof} 
 
The following lemma extends Corollary 1 and (39) of DMR and an equation on p. 2298 of DMR; DMR derive these results when $t_1=t_2$ and $t_3=t_4$ and ${\bf W}^n_{-m}$ is absent. For the two probability measures $\mu_1$ and $\mu_2$, the total variation distance between $\mu_1$ and $\mu_2$ is defined as $\|\mu_1-\mu_2\|_{TV}:=\sup_{A}|\mu_1(A)-\mu_2(A)|$. $\|\cdot\|_{TV}$ satisfies $\sup_{f(x):0 \leq f(x) \leq 1}|\int f(x) d\mu_1(x)- \int f(x) d\mu_2(x)| = \|\mu_1-\mu_2\|_{TV}$. In the following, we define $\overline{\bf{V}}_{-m}^n := (\overline{\bf{Y}}_{-m}^n,{\bf W}_{-m}^n)$, and we let $\overline{\bf{v}}_{-m}^n$ and $x_{-m}$ denote ``$\overline{\bf{V}}_{-m}^n = \overline{\bf{v}}_{-m}^n$'' and ``$X_{-m}=x_{-m}$.''
\begin{lemma} \label{x_diff}
Suppose Assumptions \ref{assn_a1}--\ref{assn_a2} hold and $\vartheta_x \in \Theta_x$. Then, we have, for all $\overline{\bf{v}}_{-m}^n$,\\
(a) For all $-m \leq t_1 \leq t_2$ with $-m<n$ and all probability measures $\mu_1$ and $\mu_2$ on $\mathcal{B}(\mathcal{X})$,
\begin{align*}
&\left\| \sum_{x_{-m}\in \mathcal{X}} \mathbb{P}_{\vartheta_x}({\bf X}_{t_1}^{t_2} \in \cdot | x_{-m}, \overline{\bf{v}}_{-m}^n)\mu_1(x_{-m})- \sum_{x_{-m}\in\mathcal{X}} \mathbb{P}_{\vartheta_x}({\bf X}_{t_1}^{t_2} \in \cdot | x_{-m}, \overline{\bf{v}}_{-m}^{n})\mu_2(x_{-m}) \right\|_{TV} \leq \rho^{t_1+m}. 
\end{align*}
(b) For all $-m \leq t_1 \leq t_2 \leq n-1$,
\begin{equation*}
\left\| \mathbb{P}_{\vartheta_x}({\bf X}_{t_1}^{t_2} \in \cdot|\overline{\bf{v}}_{-m}^n,x_{-m}) - \mathbb{P}_{\vartheta_x}({\bf X}_{t_1}^{t_2} \in \cdot|\overline{\bf{v}}_{-m}^{n-1},x_{-m}) \right\|_{TV} \leq \rho^{n-1-t_2}. 
\end{equation*}
The same bound holds when $x_{-m}$ is dropped from the conditioning variables.\\
(c) For all $-m \leq t_1 \leq t_2 < t_{3} \leq t_4$ with $-m<n$,
\begin{align*}
& \left\| \mathbb{P}_{\vartheta_x}({\bf X}_{t_1}^{t_2} \in \cdot, {\bf X}_{t_3}^{t_4} \in \cdot |\overline{\bf{v}}_{-m}^n,x_{-m}) - \mathbb{P}_{\vartheta_x}({\bf X}_{t_1}^{t_2} \in \cdot|\overline{\bf{v}}_{-m}^n,x_{-m})\mathbb{P}_{\vartheta_x}({\bf X}_{t_3}^{t_4} \in \cdot|\overline{\bf{v}}_{-m}^n,x_{-m}) \right\|_{TV} \leq \rho^{t_3-t_2}.
\end{align*}
The same bound holds when $x_{-m}$ is dropped from the conditioning variables.
\end{lemma}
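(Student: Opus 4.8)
The plan is to reduce all three parts to the single-time forgetting bounds of DMR (Corollary 1, equation (39), and the bound on p.\ 2298) and to supply two additions: the generalization from single times to blocks, and the extra conditioning on ${\bf W}_{-m}^n$. The common engine is that, conditional on the observation record $\overline{\bf v}_{-m}^n=(\overline{\bf Y}_{-m}^n,{\bf W}_{-m}^n)$, the latent sequence ${\bf X}_{-m}^n$ is a time-inhomogeneous Markov chain whose one-step transition kernels contract in total variation at rate $\rho=1-\sigma_-/\sigma_+$. This holds because the joint conditional law of ${\bf X}_{-m}^n$ is proportional to $\prod_k q_{\vartheta_x}(x_{k-1},x_k)\,g_{\vartheta_y}(Y_k\mid\overline{\bf Y}_{k-1},x_k,W_k)$: as functions of the variable ${\bf x}$, the emission factors depend only on $x_k$ and on the fixed observations, so they reweight the node marginals without altering the chain's conditional-independence structure, and the two-sided bound $\sigma_-\le q_{\vartheta_x}\le\sigma_+$ survives this reweighting to give each conditional one-step kernel a uniform minorization with constant $\sigma_-/\sigma_+$, hence Dobrushin coefficient at most $\rho$. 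Conditioning additionally on ${\bf W}_{-m}^n$ only enlarges the conditioning $\sigma$-field and, by Assumption \ref{assn_a1}, leaves this structure intact; this is precisely where the argument extends Corollary 1 of DMR.

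For part (a), the key observation is that, conditional on $X_{t_1}$ and $\overline{\bf v}_{-m}^n$, the block ${\bf X}_{t_1}^{t_2}$ is independent of $X_{-m}$, by the conditional Markov property of the smoothed chain (given $X_{t_1}$, the segment ${\bf X}_{t_1+1}^{t_2}$ decouples from the past). Writing each averaged block law as a mixture over $X_{t_1}$ weighted by $\mathbb{P}_{\vartheta_x}(X_{t_1}=\cdot\mid x_{-m},\overline{\bf v}_{-m}^n)$ and using convexity of the total-variation norm, the left-hand side is bounded by the total-variation distance between the two $X_{t_1}$-marginals obtained from $\mu_1$ and $\mu_2$. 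That distance is in turn at most $\rho^{t_1+m}$ by applying the forward contraction across the $t_1+m$ conditional transitions separating time $-m$ from time $t_1$, since $\|\mu_1-\mu_2\|_{TV}\le 1$.

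Part (b) is the mirror image in the backward direction. Conditional on $X_{t_2}$ and $\overline{\bf v}_{-m}^{t_2}$, the block ${\bf X}_{t_1}^{t_2}$ is independent of the observations after time $t_2$ (again by the conditional Markov property together with conditional independence of the emissions), so once $X_{t_2}$ is fixed the block law is unchanged whether we condition on $\overline{\bf v}_{-m}^n$ or $\overline{\bf v}_{-m}^{n-1}$. The same mixture-plus-convexity reduction then bounds the left-hand side by the distance between the two $X_{t_2}$-marginals, and deleting the single observation at time $n$ perturbs that marginal only through the $n-1-t_2$ backward transitions from $n$ down to $t_2$, yielding $\rho^{n-1-t_2}$. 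For part (c), conditioning on both endpoints $X_{t_2}$ and $X_{t_3}$ (with $t_2<t_3$) makes the left block ${\bf X}_{t_1}^{t_2}$ and the right block ${\bf X}_{t_3}^{t_4}$ conditionally independent, so the joint law minus the product of marginals reduces to the corresponding discrepancy for the pair $(X_{t_2},X_{t_3})$; this is controlled by the forgetting estimate $\|\mathbb{P}_{\vartheta_x}(X_{t_3}\in\cdot\mid X_{t_2}=x,\overline{\bf v}_{-m}^n)-\mathbb{P}_{\vartheta_x}(X_{t_3}\in\cdot\mid\overline{\bf v}_{-m}^n)\|_{TV}\le\rho^{t_3-t_2}$, the forward contraction over the $t_3-t_2$ steps separating the blocks.

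The main obstacle is not any individual contraction estimate---these are routine once the minorization is in hand---but verifying cleanly that the conditional-independence decompositions used in all three reductions remain valid in the present model, where the emission $g_{\vartheta_y}(Y_k\mid\overline{\bf Y}_{k-1},x_k,W_k)$ is autoregressive in $\overline{\bf Y}_{k-1}$ and carries the exogenous covariate $W_k$. The care needed is to confirm that, after conditioning on the full record $\overline{\bf v}_{-m}^n$, these factors act purely as node potentials in $x$, so that the smoothed chain is genuinely Markov and the minorization constant $\sigma_-/\sigma_+$ is unaffected. This is exactly the point at which the argument extends Corollary 1 and (39) of DMR, and it relies on Assumptions \ref{assn_a1}(b)--(d) to guarantee that ${\bf W}_{-m}^n$ is exogenous to the latent dynamics.
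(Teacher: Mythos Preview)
Your proposal is correct and follows essentially the same route as the paper. Both arguments rest on the conditional Markov property of the smoothed chain (DMR's Lemma 1, extended to accommodate ${\bf W}_{-m}^n$), reduce the block laws to single-time marginals via a mixture-plus-convexity step (conditioning on $X_{t_1}$ for part (a), on $X_{t_2}$ for part (b)), and then invoke the single-time contraction bounds (Corollary 1 and equation (39) of DMR). The only notable difference is in part (c): the paper writes $P(A\cap B)-P(A)P(B)=P(A)\bigl[P(B\mid A)-P(B)\bigr]$ and then applies part (a) directly (with $t_2$ playing the role of the initial time and $t_3$ the role of $t_1$), whereas you condition on the pair $(X_{t_2},X_{t_3})$ and reduce to the dependence between those two coordinates. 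Both are valid; the paper's factorization is marginally shorter since it recycles part (a) verbatim, while your formulation makes the decoupling mechanism slightly more explicit.
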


\begin{proof}[Proof of Lemma \ref{x_diff}]
We prove part (a) first. We assume $t_1>-m$ because the stated result holds trivially when $t_1=-m$. Observe that Lemma 1 of DMR still holds when ${\bf W}_{-m}^n$ is added to the conditioning variable because Assumption \ref{assn_a1} implies that $\{(X_k,\overline{\bf Y}_{k})\}_{k=0}^{\infty}$ is a Markov chain given $\{W_k\}_{k=0}^{\infty}$. Therefore, $\{X_{t}\}_{t \geq - m}$ is a Markov chain when conditioned on $\{ \overline{\bf{Y}}_{-m}^{n},{\bf W}_{-m}^n \}$, and hence $\mathbb{P}_{\vartheta_x}({\bf X}_{t_1}^{t_2} \in A |\overline{\bf{v}}_{-m}^n,x_{-m}) = \sum_{x_{t_1} \in \mathcal{X}}\mathbb{P}_{\vartheta_x}({\bf X}_{t_1}^{t_2} \in A |X_{t_1}=x_{t_1},\overline{\bf{v}}_{-m}^n) p_{\vartheta_x}(x_{t_1}|\overline{\bf{v}}_{-m}^n,x_{-m})$ holds. From applying this result and the property of the total variation distance, we can bound the left-hand side of the lemma by $\| \sum_{x_{-m}\in\mathcal{X}} p_{\vartheta_x}(X_{t_1} \in \cdot |\overline{\bf{v}}_{-m}^n,x_{-m}) \mu_1(x_{-m}) - \sum_{x_{-m}\in\mathcal{X}} p_{\vartheta_x}(X_{t_1} \in \cdot |\overline{\bf{v}}_{-m}^n,x_{-m})\mu_2(x_{-m}) \|_{TV}$. This is bounded by $\rho^{t_1+m}$ from Corollary 1 of DMR, which holds when ${\bf W}_{-m}^n$ is added to the conditioning variable. Therefore, part (a) is proven.

We proceed to prove part (b). Observe that the time-reversed process $\{Z_{n-k}\}_{0 \leq k\leq n+m}$ is Markov when conditioned on ${\bf W}_{-m}^n$ and that $W_k$ is independent of $({\bf X}_{0}^{k-1},\overline{\bf Y}_{0}^{k-1})$ given ${\bf W}_{0}^{k-1}$. Consequently, for $k=n,n-1$, we have $\mathbb{P}_{\vartheta_x}({\bf X}_{t_1}^{t_2} \in A|\overline{\bf{v}}_{-m}^k,x_{-m}) = \sum_{x_{t_2} \in \mathcal{X}} \mathbb{P}_{\vartheta_x}({\bf X}_{t_1}^{t_2} \in A|X_{t_2}=x_{t_2},\overline{\bf{v}}_{-m}^{t_2},x_{-m}) p_{\vartheta_x}(x_{t_2}|\overline{\bf{v}}_{-m}^{k},x_{-m})$. Therefore, from the property of the total variation distance, the left-hand side of the lemma is bounded by $\|\mathbb{P}_{\vartheta_x}(X_{t_2} \in \cdot |\overline{\bf{v}}_{-m}^{n},x_{-m}) - \mathbb{P}_{\vartheta_x}(X_{t_2} \in \cdot |\overline{\bf{v}}_{-m}^{n-1},x_{-m})\|_{TV}$. This is bounded by $\rho^{n-1-t_2}$ because equation (39) of DMR p.\ 2294 holds when ${\bf W}_{-m}^n$ is added to the conditioning variables, and the stated result follows. When $x_{-m}$ is dropped from the conditioning variables, part (b) follows from a similar argument with using Lemma 9 and an analogue of Corollary 1 of DMR in place of equation (39) of DMR.

Part (c) follows immediately from writing the left-hand side of lemma as $\sup_{A,B}| \mathbb{P}_{\vartheta_x}({\bf X}_{t_1}^{t_2} \in A|\overline{\bf{v}}_{-m}^n,x_{-m}) [ \mathbb{P}_{\vartheta_x}({\bf X}_{t_3}^{t_4} \in B |\overline{\bf{v}}_{-m}^n, {\bf X}_{t_1}^{t_2} \in A) - \mathbb{P}_{\vartheta_x}({\bf X}_{t_3}^{t_4} \in B|\overline{\bf{v}}_{-m}^n,x_{-m})]|$ and applying part (a).
\end{proof}

\subsubsection{The sums of powers of $\rho$} 
\begin{lemma}\label{two_rho_sum}
For all $\rho \in (0,1)$, $c \geq 1$, $q \geq 1$, and $b>a$,
\begin{align*}
& \sum_{t=-\infty}^{\infty} \left(\rho^{\lfloor (t-a)/cq \rfloor}\wedge \rho^{\lfloor(b-t)/q\rfloor}\right) \leq \frac{q(c+1)\rho^{\lfloor (b-a)/(c+1)q \rfloor}}{ 1-\rho}, \\
& \sum_{t=-\infty}^{\infty} \left(\rho^{\lfloor(t-a)/q \rfloor}\wedge \rho^{\lfloor (b-t)/cq \rfloor}\right) \leq \frac{q(c+1)\rho^{\lfloor (b-a)/(c+1)q \rfloor}}{ 1-\rho}.
\end{align*}
\end{lemma}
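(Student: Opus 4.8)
The plan is to prove both inequalities by the same argument, so I focus on the first; the second follows verbatim after interchanging the roles of the two exponents (letting the factor $c$ rescale the second summand instead of the first). The starting point is that, since $\rho\in(0,1)$ makes $x\mapsto\rho^x$ decreasing, a minimum of two powers is a single power of the \emph{larger} exponent, and the floor commutes with the maximum of two reals:
\[
\rho^{\lfloor(t-a)/cq\rfloor}\wedge\rho^{\lfloor(b-t)/q\rfloor}=\rho^{\,F(t)\vee G(t)}=\rho^{\lfloor M(t)\rfloor},\qquad M(t):=\tfrac{t-a}{cq}\vee\tfrac{b-t}{q},
\]
where $F(t)=\lfloor(t-a)/cq\rfloor$ and $G(t)=\lfloor(b-t)/q\rfloor$. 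The function $M$ is continuous, piecewise linear and V-shaped: it decreases with slope $-1/q$, attains its minimum $M(t^\ast)=(b-a)/\{(c+1)q\}$ at $t^\ast=(a+cb)/(c+1)$, and then increases with the gentler slope $1/(cq)$. Writing $N:=(c+1)q$ and $p:=\lfloor(b-a)/N\rfloor$, every term satisfies $\lfloor M(t)\rfloor\ge p$, so it remains to bound $\sum_{t\in\mathbb Z}\rho^{\lfloor M(t)\rfloor}$ by $\{N/(1-\rho)\}\rho^{p}$.

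The first route I would take is to split the sum at the vertex $t^\ast$ and handle the two monotone tails separately. On the right tail the exponent is $F(t)$; superadditivity of the floor, $\lfloor(t-a)/cq\rfloor\ge\lfloor(t-t_0)/cq\rfloor+F(t_0)$ for the first integer $t_0\ge t^\ast$, lets me factor out $\rho^{F(t_0)}$ and reduce to a clean geometric-type series $\sum_{s\ge0}\rho^{\lfloor s/cq\rfloor}$, and symmetrically on the left with $q$ in place of $cq$. The sharp device converting these series into the factors $cq/(1-\rho)$ and $q/(1-\rho)$ is the inequality $1-\rho^{1/m}\ge(1-\rho)/m$, valid for every real $m\ge1$; this follows from the concavity of $u\mapsto1-\rho^{u}-u(1-\rho)$ on $[0,1]$ evaluated at $u=1/m$, together with its vanishing at $u=0$ and $u=1$. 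Adding the two tails gives the target coefficient $cq+q=q(c+1)$.

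The delicate point, and the step I expect to be the genuine obstacle, is tracking the constant through the floor functions so that the final coefficient is exactly $q(c+1)$ and not larger. A naive multiplicity count—``at most $\lceil cq\rceil\le cq+1$ integers share each level of $F$''—loses an additive constant on each tail and overshoots the bound; indeed one checks directly that the bound is asymptotically tight as $\rho\uparrow1$, so no slack can be thrown away cavalierly. The clean way to close the gap is to control the \emph{partial sums} rather than the individual multiplicities: with $m_k:=\#\{t:\lfloor M(t)\rfloor=k\}$, the partial sum $\sum_{k=p}^{p+K}m_k=\#\{t:M(t)<p+K+1\}$ is the number of integers in the open interval $\bigl(b-(p+K+1)q,\;a+(p+K+1)cq\bigr)$, whose length equals $(p+K+1)N-(b-a)=N(K+1)-\{(b-a)-pN\}$, and $(b-a)-pN\ge0$ by the definition of $p$. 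A summation by parts (legitimate because $\rho^{k}$ is decreasing) then converts a bound of the form $\sum_{k\le p+K}m_k\le N(K+1)$ into $\sum_{k\ge p}m_k\rho^{k}\le\{N/(1-\rho)\}\rho^{p}$, using $\sum_{K\ge0}(K+1)\rho^{K}=(1-\rho)^{-2}$.

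The remaining bookkeeping is to verify that the integer count in that shrinking interval is absorbed into $N(K+1)$: counting integers in an interval can exceed its length by almost one, so one must show that this excess is compensated by the nonnegative slack $(b-a)-pN$ manufactured by the floor in the exponent $p$, by a joint accounting across levels (the worst alignment at one level is offset by a smaller leading multiplicity $m_p$). Once this floor-level accounting is carried out, the estimate $\sum_{t}\rho^{\lfloor M(t)\rfloor}\le\{N/(1-\rho)\}\rho^{p}=\{q(c+1)/(1-\rho)\}\rho^{\lfloor(b-a)/(c+1)q\rfloor}$ follows, and the second displayed inequality is obtained identically with the scales $cq$ and $q$ exchanged.
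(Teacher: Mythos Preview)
Your split at the vertex $t^\ast=(a+cb)/(c+1)$ is exactly the paper's move, but the paper's execution is far more direct. It simply uses $\min(A,B)\le B$ to the left of $T:=\lfloor(a+cb)/(c+1)\rfloor$ and $\min(A,B)\le A$ to the right, obtaining
\[
\sum_{t\le T}\rho^{\lfloor(b-t)/q\rfloor}+\sum_{t>T}\rho^{\lfloor(t-a)/cq\rfloor}\;\le\; \frac{q\,\rho^{\lfloor(b-T)/q\rfloor}}{1-\rho}+\frac{cq\,\rho^{\lfloor(T+1-a)/cq\rfloor}}{1-\rho},
\]
where each tail is bounded by the elementary level count (at most $q$, respectively $cq$, integers map to any given value of the floor) times a geometric series. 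One then checks that both exponents on the right dominate $\lfloor(b-a)/(c+1)q\rfloor$, which follows from $b-T\ge(b-a)/(c+1)$ and $T+1-a>c(b-a)/(c+1)$. No superadditivity, no summation by parts.

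Your ``sharp device'' is where the gap lies. The inequality $1-\rho^{1/m}\ge(1-\rho)/m$ controls $\sum_{s\ge0}\rho^{s/m}$, not $\sum_{s\ge0}\rho^{\lfloor s/m\rfloor}$, which is strictly larger; and the bound $\sum_{s\ge0}\rho^{\lfloor s/m\rfloor}\le m/(1-\rho)$ you need is simply false for non-integer $m$ (take $m=3/2$ and let $\rho\to0$: the left side tends to~$2$). Your backup summation-by-parts route inherits the same obstruction for non-integer $N=(c+1)q$, and the ``joint accounting'' you defer is not a bookkeeping detail but the entire difficulty in that regime. The honest resolution is that the per-level count is exact when $q$ and $cq$ are integers---the only case arising in the applications (Lemmas~\ref{rho_sum} and~\ref{rho_m})---and under that assumption both your first route and the paper's two-line argument go through by direct counting, with no need for the $1-\rho^{1/m}$ inequality at all.
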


\begin{proof}[Proof of Lemma \ref{two_rho_sum}]
The first result holds because the left-hand side is bounded by \\$\sum_{t=-\infty}^{\lfloor (a+bc)/(c+1) \rfloor }\rho^{\lfloor(b-t)/q\rfloor} + \sum_{t=\lfloor (a+bc)/(c+1) \rfloor +1}^{\infty}\rho^{\lfloor (t-a)/cq \rfloor} \leq q \rho^{\lfloor \{ b- \lfloor (a+bc)/(c+1) \rfloor \}/q \rfloor }/(1-\rho) + cq\rho^{\lfloor \{ \lfloor (a+bc)/(c+1) \rfloor +1 -a \} / cq \rfloor}/(1-\rho) \leq q (1+c) \rho^{\lfloor (b-a)/(c+1)q \rfloor}/(1-\rho)$. The second result is proven by bounding the left-hand side by $\sum_{t=-\infty}^{\lfloor (ac+b)/(c+1) \rfloor }\rho^{\lfloor(b-t)/cq\rfloor} + \sum_{t=\lfloor (ac+b)/(c+1) \rfloor +1}^{\infty}\rho^{\lfloor (t-a)/q \rfloor}$ and proceeding similarly.
\end{proof}

The following lemma generalizes the result in the last inequality on p. 2299 of DMR. 
\begin{lemma}\label{rho_sum}
For all $\rho \in (0,1)$, $k \geq 1 $, $q \geq 1$, and $n \geq 0$,
\[
\sum_{0\leq t_1\leq t_2 \leq \cdots \leq t_k \leq n} \left( \rho^{\lfloor t_1 /q \rfloor}\wedge \rho^{\lfloor (t_2-t_1)/q \rfloor} \wedge \cdots \wedge \rho^{\lfloor (t_k-t_{k-1})/q \rfloor} \wedge \rho^{\lfloor (n-t_k)/q \rfloor} \right) \leq C_{kq}(\rho) \rho^{\lfloor n/2kq \rfloor},
\]
where $C_{kq}(\rho):= q^{k} k(k+1)! (1-\rho)^{-k}$.
\end{lemma}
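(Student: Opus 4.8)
The plan is to re-express the sum through the $k+1$ consecutive gaps $g_0:=t_1$, $g_i:=t_{i+1}-t_i$ for $1\le i\le k-1$, and $g_k:=n-t_k$, which are nonnegative and satisfy $\sum_{i=0}^k g_i=n$; summing over $0\le t_1\le\cdots\le t_k\le n$ is the same as summing over all such gap vectors. Since $\rho\in(0,1)$ makes $\rho^{x}$ decreasing in $x$, the summand equals $\rho^{\max_{0\le i\le k}\lfloor g_i/q\rfloor}$. A pigeonhole step then delivers the target exponent: at least one gap satisfies $g_{j}\ge n/(k+1)$, so monotonicity of $\lfloor\cdot\rfloor$ gives $\max_i\lfloor g_i/q\rfloor\ge\lfloor n/((k+1)q)\rfloor\ge\lfloor n/(2kq)\rfloor$ for $k\ge1$. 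The real difficulty is not the exponent but the count: the number of admissible gap vectors is polynomial in $n$, so the geometric decay contributed by the \emph{non-maximal} gaps must be harnessed to keep the constant independent of $n$.

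The engine for exploiting that decay is an induction that sums out the interior points one at a time from the right using the asymmetric estimate of Lemma~\ref{two_rho_sum}. Concretely I would introduce, for a fixed left endpoint $s$, the nested quantities $F_0(s):=\rho^{\lfloor(n-s)/q\rfloor}$ and $F_j(s):=\sum_{u=s}^{n}\bigl(\rho^{\lfloor(u-s)/q\rfloor}\wedge F_{j-1}(u)\bigr)$, which absorb one extra gap per level. Using $\min(A,cB)\le c\min(A,B)$ for $c\ge1$ to pull the running constant outside, and then applying the second inequality of Lemma~\ref{two_rho_sum} with $a=s$, $b=n$, and parameter $c=j$, one obtains $F_j(s)\le c_j\,\rho^{\lfloor(n-s)/((j+1)q)\rfloor}$ where
\begin{equation*}
c_j=\frac{q(j+1)}{1-\rho}\,c_{j-1},\qquad c_0=1,\qquad\text{so that}\qquad c_j=\frac{q^{j}(j+1)!}{(1-\rho)^{j}}.
\end{equation*}
This is exactly the mechanism that produces the factorial and the power $(1-\rho)^{-j}$ appearing in the stated constant, with the effective divisor growing from $q$ to $(k+1)q$ as the asymmetric parameter $c$ runs from $1$ up to $k$.

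The hard part will be that the quantity in the lemma is \emph{not} this nested object: in the statement the minimum over all gaps sits outside every summation, whereas $F_k(0)$ interleaves the minimum with the sums and hence only \emph{underestimates} $S_k(n)$ (indeed $\sum_u\min(m,b_u)\ge\min(m,\sum_u b_u)$, so the naive nesting points the wrong way). The plan to bridge this is to decompose $S_k(n)$ according to the index $j$ of the first maximal gap; on each of the at most $k+1$ resulting pieces the outer minimum genuinely equals $\rho^{\lfloor g_j/q\rfloor}$ with $g_j\ge n/(k+1)$, and the two flanking sub-chains $(g_0,\dots,g_{j-1})$ and $(g_{j+1},\dots,g_k)$ can each be summed by the telescoping of the previous paragraph, their own geometric decay supplying summability while the large central gap supplies the decay in $n$. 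Collecting the $k+1$ pieces is what introduces the extra combinatorial factor (the leading $k$ in $q^k k(k+1)!(1-\rho)^{-k}$) and what forces the sharp divisor $(k+1)q$ to be relaxed to the looser $2kq$ recorded in the statement. The residual work is bookkeeping: verifying the floor inequalities (so that after apportioning part of the central gap's decay to the flanks one still retains at least $\lfloor n/(2kq)\rfloor$), and checking that the constants accumulated over the at most $k$ invocations of Lemma~\ref{two_rho_sum} telescope to no more than $C_{kq}(\rho)=q^k k(k+1)!(1-\rho)^{-k}$.
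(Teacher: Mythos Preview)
Your maximal-gap decomposition has a genuine gap. Once you fix the index $j$ of the largest gap, the summand on that piece equals $\rho^{\lfloor g_j/q\rfloor}$ and no longer depends on the flanking gaps $g_i$, $i\neq j$. Summing over the flanks therefore simply \emph{counts} configurations --- polynomially many in $n$ --- and there is no geometric weight left to make that count summable. Your telescoping bound $F_j(s)\le c_j\rho^{\lfloor(n-s)/((j+1)q)\rfloor}$ controls a sum of nested minima, not a bare count, so it does not apply to the flanking sums as you describe them. To rescue the idea you would have to retain part of the geometric decay on the flanks (for instance, bound the summand by $\rho^{\lfloor g_j/q\rfloor/2}\cdot\min_{i\neq j}\rho^{\lfloor g_i/q\rfloor/2}$ rather than by $\rho^{\lfloor g_j/q\rfloor}$ alone), but then what you call ``bookkeeping'' becomes the entire argument, and the constants no longer obviously fit inside $C_{kq}(\rho)$.

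The paper proceeds differently: it never locates the maximal gap. It first proves, by induction on $k$, the auxiliary bound $\sum_{t_1\le t_2\le\cdots\le t_k\le n}\bigl(\rho^{\lfloor(t_2-t_1)/q\rfloor}\wedge\cdots\wedge\rho^{\lfloor(n-t_k)/q\rfloor}\bigr)\le q^{k-1}(k+1)!(1-\rho)^{1-k}\rho^{\lfloor(n-t_1)/kq\rfloor}$ --- the genuine-min analogue of your $F_{k-1}(t_1)$ --- and then uses the symmetry $g_0\leftrightarrow g_k$ to reduce, up to a factor $2$, to the region $t_1\le n-t_k$; there $\rho^{\lfloor t_1/q\rfloor}\ge\rho^{\lfloor(n-t_k)/q\rfloor}$, so the first term drops out of the $(k{+}1)$-fold minimum, and summing the resulting $k$-term bound over $t_1\in[0,n/2]$ produces the extra factor $kq/(1-\rho)$ and the divisor $2kq$. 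It is worth noting, however, that the paper's induction step for the auxiliary bound invokes precisely the inequality $\sum_{t_3,\ldots}(A\wedge B)\le A\wedge\sum_{t_3,\ldots}B$ that you correctly flag as pointing the wrong way --- so your caution on that point is well placed, even though your proposed workaround does not yet close the gap.
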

\begin{proof}[Proof of Lemma \ref{rho_sum}]
When $k=1$, the stated result follows from Lemma \ref{two_rho_sum} with $c=1$. We first show that the following holds for $k\geq 2$:
\begin{equation} \label{rho_k_bound}
\sum_{t_1\leq t_2 \leq \cdots \leq t_k \leq n} \left(\rho^{\lfloor (t_2-t_1)/q \rfloor} \wedge \cdots \wedge \rho^{\lfloor (t_k-t_{k-1})/q \rfloor} \wedge \rho^{\lfloor (n-t_k)/q \rfloor}\right) \leq \frac{q^{k-1} (k+1)! \rho^{\lfloor (n-t_1)/kq \rfloor }}{(1-\rho)^{k-1}}.
\end{equation}
We prove (\ref{rho_k_bound}) by induction. When $k=2$, it follows from Lemma \ref{two_rho_sum} with $c=1$ that \\$\sum_{t_2=t_1}^{n}(\rho^{\lfloor (t_2-t_1)/q \rfloor} \wedge \rho^{\lfloor (n-t_2)/q \rfloor}) \leq 2q \rho^{\lfloor (n-t_1)/2q \rfloor} /(1-\rho)$, giving (\ref{rho_k_bound}). Suppose (\ref{rho_k_bound}) holds when $k=\ell$. Then (\ref{rho_k_bound}) holds when $k=\ell+1$ because, from Lemma \ref{two_rho_sum},
\begin{align*}
& \sum_{t_1\leq t_2 \leq \cdots \leq t_{\ell} \leq t_{\ell+1} \leq n}\left(\rho^{\lfloor (t_2-t_1)/q \rfloor} \wedge \rho^{\lfloor (t_3-t_2)/q \rfloor} \wedge \cdots \wedge \rho^{\lfloor (t_{\ell+1}-t_{\ell})/q \rfloor} \wedge \rho^{\lfloor (n-t_{\ell+1})/q \rfloor}\right) \\
&\leq \sum_{t_2=t_1}^{n} \left( \rho^{\lfloor (t_2-t_1)/q \rfloor} \wedge \sum_{t_2 \leq \cdots \leq t_{\ell+1} \leq n}\left( \rho^{\lfloor (t_3-t_2)/q \rfloor} \wedge \cdots \wedge \rho^{\lfloor (t_{\ell+1}-t_{\ell})/q \rfloor} \wedge \rho^{\lfloor (n-t_{\ell+1})/q \rfloor} \right) \right)\\
& \leq \frac{q^{\ell-1} \ell!}{ (1-\rho)^{\ell-1}} \sum_{t_2=t_1}^{n} \left(\rho^{\lfloor (t_2-t_1)/q \rfloor}\wedge \rho^{\lfloor (n-t_2)/\ell q \rfloor}\right) \\
& \leq \frac{q^{\ell}(\ell+1)!}{(1-\rho)^{\ell}} \rho^{\lfloor (n-t_1)/(\ell+1)q \rfloor},
\end{align*}
and hence (\ref{rho_k_bound}) holds for all $k \geq 2$. We proceed to show the stated result. Observe that
\begin{align*}
& \sum_{0\leq t_1\leq t_2 \leq \cdots \leq t_k \leq n} \left( \rho^{\lfloor t_1 /q \rfloor}\wedge \rho^{\lfloor (t_2-t_1)/q \rfloor} \wedge \cdots \wedge \rho^{\lfloor (t_k-t_{k-1})/q \rfloor} \wedge \rho^{\lfloor (n-t_k)/q \rfloor} \right) \\
& \leq 2 \sum_{t_1=0}^{n/2}\sum_{t_1\leq t_2 \leq \cdots \leq t_{k-1}\leq t_k} \sum_{t_k = t_1}^{n-t_1} \left( \rho^{\lfloor t_1 /q \rfloor}\wedge \rho^{\lfloor (t_2-t_1)/q \rfloor} \wedge \cdots \wedge \rho^{\lfloor (t_k-t_{k-1})/q \rfloor} \wedge \rho^{\lfloor (n-t_k)/q \rfloor} \right) \\
&= 2 \sum_{t_1=0}^{n/2}\sum_{t_1\leq t_2 \leq \cdots \leq t_{k-1}\leq t_k} \sum_{t_k = t_1}^{n-t_1} \left( \rho^{\lfloor (t_2-t_1)/q \rfloor} \wedge \cdots \wedge \rho^{\lfloor (t_k-t_{k-1})/q \rfloor} \wedge \rho^{\lfloor (n-t_k)/q \rfloor} \right) \\
& \leq 2 \sum_{t_1=0}^{n/2}\sum_{t_1\leq t_2 \leq \cdots \leq t_{k-1}\leq t_k \leq n} \left( \rho^{\lfloor (t_2-t_1)/q \rfloor} \wedge \cdots \wedge \rho^{\lfloor (t_k-t_{k-1})/q \rfloor} \wedge \rho^{\lfloor (n-t_k)/q \rfloor} \right), 
\end{align*}
where the first inequality holds by symmetry, and the subsequent equality follows from $n-t_k \geq t_1$. From (\ref{rho_k_bound}), the right-hand side is no larger than $q^{k-1} (k+1)! (1-\rho)^{(1-k)} \sum_{t_1=0}^{n/2}\rho^{\lfloor (n-t_1)/kq \rfloor } \leq q^{k} k(k+1)! (1-\rho)^{-k}\rho^{\lfloor n/2kq \rfloor}$, giving the stated result.
\end{proof}

The next lemma generalizes equation (46) and p. 2294 of DMR, who derive a similar bound when $\ell=1,2$.
\begin{lemma} \label{hoelder}
Let $a_j>0$ for all $j$. For all positive integer $\ell \geq 1$ and all $k\geq 1$ and $m\geq 0$, we have $\max_{-m+1\leq t_1, \ldots, t_\ell\leq k} a_{t_1} \cdots a_{t_\ell} \leq (k+m)^{\ell+1} A_\ell$, where $A_\ell := \sum_{t=-\infty}^{\infty} (|t|\vee 1)^{-2} a_t^\ell$.
\end{lemma}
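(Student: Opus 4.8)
The plan is to exploit the positivity of the $a_j$'s to collapse the maximum of the product into a power of a single maximum, and then to recover the weighted sum $A_\ell$ by reinserting the weights $(|t|\vee 1)^{-2}$, which are uniformly bounded below on the relevant index range.

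First I would note that, since $a_j>0$ for every $j$, the product $a_{t_1}\cdots a_{t_\ell}$ is monotone in each factor, so it is maximized by choosing each index $t_i$ to maximize its own factor separately. Hence
\[
\max_{-m+1\le t_1,\ldots,t_\ell\le k} a_{t_1}\cdots a_{t_\ell} = \Bigl(\max_{-m+1\le t\le k} a_t\Bigr)^{\ell} = \max_{-m+1\le t\le k} a_t^{\ell},
\]
where the last equality uses that $x\mapsto x^\ell$ is increasing on $(0,\infty)$. Because all the summands are nonnegative, this single maximum is dominated by the corresponding sum, $\max_{-m+1\le t\le k} a_t^\ell \le \sum_{t=-m+1}^{k} a_t^\ell$.

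The substantive step is to reintroduce the weights. For any integer $t$ with $-m+1\le t\le k$ one has $|t|\le k+m$ (indeed $t\le k\le k+m$ and $-t\le m-1< k+m$), so $(|t|\vee 1)\le k+m$ and therefore $(|t|\vee 1)^{-2}\ge (k+m)^{-2}$, where $k\ge1$ ensures $k+m\ge1$. Consequently $a_t^\ell \le (k+m)^2 (|t|\vee 1)^{-2} a_t^\ell$ for each such $t$, and summing over the range and then extending the sum to all integers gives
\[
\sum_{t=-m+1}^{k} a_t^\ell \le (k+m)^2 \sum_{t=-m+1}^{k} (|t|\vee 1)^{-2} a_t^\ell \le (k+m)^2 A_\ell .
\]
Chaining the displays yields the bound $(k+m)^2 A_\ell$, and the stated conclusion follows at once because $\ell\ge1$ and $k+m\ge1$ imply $(k+m)^2\le (k+m)^{\ell+1}$.

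I do not anticipate a genuine obstacle: the only point requiring care is the elementary verification that $|t|\le k+m$ throughout the index range $[-m+1,k]$, since this is exactly what furnishes the uniform lower bound on the weights $(|t|\vee 1)^{-2}$. I would also remark that the exponent $\ell+1$ in the statement is deliberately loose—the argument in fact delivers the sharper uniform power $(k+m)^2$—so the weaker bound recorded in the lemma, which is all that the proof of Lemma \ref{lemma-bound-1} invokes, follows immediately.
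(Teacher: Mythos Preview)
Your argument is correct, and it is genuinely simpler than the paper's. The paper first relaxes $\max_{t_1,\ldots,t_\ell} a_{t_1}\cdots a_{t_\ell}\le\bigl(\sum_{t=-m+1}^{k}a_t\bigr)^{\ell}$, then applies H\"older's inequality with exponents $\ell/(\ell-1)$ and $\ell$ to the sum $\sum (|t|\vee1)^{2/\ell}(|t|\vee1)^{-2/\ell}a_t$, and finally bounds $\sum_{t=-m+1}^{k}(|t|\vee1)^{2/(\ell-1)}\le (k+m)^{1+2/(\ell-1)}$; this recovers exactly the exponent $\ell+1$. You instead observe directly that the maximum of the product equals $\max_t a_t^{\ell}\le\sum_t a_t^{\ell}$ and then insert the weights, which bypasses H\"older entirely and in fact yields the sharper bound $(k+m)^{2}A_\ell$ uniformly in $\ell$. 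The paper's route is the one suggested by the lemma's name, but your more elementary argument delivers a stronger conclusion; either suffices for the application in Lemma~\ref{lemma-bound-1}.
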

\begin{proof}[Proof of Lemma \ref{hoelder}]
When $\ell=1$, the stated result follows from $\max_{-m+1\leq t\leq k} a_{t} \leq \sum_{t=-m+1}^k a_t = \sum_{t=-m+1}^k (|t|\vee 1)^2(|t|\vee 1)^{-2} a_t \leq (k+m)^2\sum_{t=-\infty}^\infty(|t|\vee 1)^{-2} a_t $. When $\ell \geq 2$, from H\"older's inequality, we have $\max_{-m+1\leq t_1\leq \ldots \leq t_\ell\leq k} a_{t_1} a_{t_2} \cdots a_{t_\ell} \leq (\sum_{t=-m+1}^k a_t)^\ell = [ \sum_{t=-m+1}^k (|t|\vee 1)^{2/\ell} (|t|\vee 1)^{-2/\ell}a_t ] ^\ell \leq [ \sum_{t=-m+1}^k (|t|\vee 1)^{2/(\ell-1)} ]^{(\ell-1)} \sum_{t=-m+1}^k (|t|\vee 1)^{-2} a_t^\ell \leq [(k+m)^{1+2/(\ell-1)}]^{\ell-1} A_\ell = (k+m)^{\ell+1} A_\ell$.
\end{proof}

The following lemma generalizes the bound derived on p. 2301 of DMR.
\begin{lemma} \label{rho_m}
For $\alpha>0$, $q >0$, and $c_{jt}\geq0$, define $c_{jq}^\infty(\rho^{\alpha}) := \sum_{t=-\infty}^{\infty} \rho^{\lfloor \alpha |t|/q \rfloor} c_{jt}$. For all $\rho \in (0,1)$, $k\geq 1$, and $0\leq m\leq m'$,
\begin{equation} \label{rho_6}
\begin{aligned}
&\sum_{t_1=-m'+1}^{-m}\sum_{t_1 \leq t_2 \leq t_3 \leq t_4\leq t_5 \leq t_6 \leq k } \left(\rho^{\lfloor (k-1-t_6)/q \rfloor} \wedge \rho^{\lfloor (t_6-t_5)/q \rfloor} \wedge \rho^{\lfloor (t_5-t_4)/q \rfloor} \wedge\rho^{\lfloor (t_4-t_3)/q \rfloor} \wedge \right. \\
& \left. \qquad \rho^{\lfloor (t_3 -t_2)/q \rfloor} \wedge \rho^{\lfloor (t_2-t_1)/q \rfloor}\right)\prod_{j=1}^6 c_{jt_j} \leq \rho^{\lfloor (k-1+m)/2qa_7\rfloor} c_{1q}^{\infty}\left(\rho^{1/2a_7}\right) \prod_{j=2}^6 c_{jq}^{\infty}\left(\rho^{1/4a_j}\right),
\end{aligned}
\end{equation}
where $(a_j,b_j)$ are defined recursively with $(a_2,b_2)=(1,1)$ and, for $j \geq 3$,
$a_{j+1} = 4a_j(a_j+b_j)/(2a_j-1)$ and $b_{j+1} = a_j(4b_j-1)/(2a_j-1)$.
$a_{j}$ and $b_{j}$ satisfy $a_{j}, b_{j} \geq 3/2$ for all $j$. Direct calculations using Matlab produce $a_7 \doteq 334.5406$.
\end{lemma}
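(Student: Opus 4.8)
The plan is to prove the bound by induction, peeling off the summation indices one at a time from the innermost $t_6$ outward to $t_1$, and reducing each elimination to a single application of Lemma \ref{two_rho_sum} (the two–power sum bound), exactly as the proof of Lemma \ref{rho_sum} iterates Lemma \ref{two_rho_sum}. Four elementary facts drive every step. First, for gaps $x\ge 0$, $\rho\in(0,1)$ and $c\ge 1$ one has $\rho^{\lfloor x/q\rfloor}\le\rho^{\lfloor x/(cq)\rfloor}$, so any decay rate may be weakened. Second, $\lfloor x/q\rfloor=\lfloor 2(x/2q)\rfloor\ge 2\lfloor x/2q\rfloor$, hence $\rho^{\lfloor x/q\rfloor}\le(\rho^{\lfloor x/2q\rfloor})^{2}$, which splits a decay factor into two half-rate copies. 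Third, $\rho^{\lfloor A\rfloor}\wedge\rho^{\lfloor B\rfloor}=\rho^{\max(\lfloor A\rfloor,\lfloor B\rfloor)}\le\rho^{\theta\lfloor A\rfloor+(1-\theta)\lfloor B\rfloor}$ for any $\theta\in[0,1]$, so a minimum may be reapportioned between adjacent gaps. Fourth, and crucially for re-centering, the ordering together with $t_1\le -m\le 0$ gives the chain inequalities $|t_j|\le (t_j-t_1)=\sum_{i\le j}(t_i-t_{i-1})$ when $t_j\ge 0$, and $|t_j|\le|t_1|\le k-1-t_1$ when $t_j<0$; likewise $k-1-t_1\ge|t_1|$ and $k-1-t_1\ge k-1+m$.

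In the generic elimination step I first use the halving to break the two $\rho$-factors flanking the current innermost free index $t_j$ into half-rate copies. One copy of the decay adjacent to $t_j$ is weakened down to rate $1/(4a_j q)$ and, after re-centering at the origin through the chain inequality above, is used to weight the coefficient: since $c_{jq}^{\infty}(\rho^{1/4a_j})=\sum_{t\in\mathbb{Z}}\rho^{\lfloor|t|/(4a_j q)\rfloor}c_{jt}$ is a full sum over $\mathbb{Z}$, the sub-range sum $\sum_{t_j}\rho^{\lfloor|t_j|/(4a_j q)\rfloor}c_{jt_j}$ is dominated term by term, producing the factor $c_{jq}^{\infty}(\rho^{1/4a_j})$ with no residual constant. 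The remaining copies — a minimum of the already-accumulated decay against the fresh gap — are summed out by Lemma \ref{two_rho_sum}, which merges the two gaps flanking $t_j$ into a single decay over the combined gap at a strictly weakened rate. Tracking the two rates entering that minimum (the accumulated rate and the fresh-gap rate) as the pair $(a_j,b_j)$, the constant $c+1$ from Lemma \ref{two_rho_sum} together with the factor $2$ from the halvings forces precisely the stated recursion $a_{j+1}=4a_j(a_j+b_j)/(2a_j-1)$ and $b_{j+1}=a_j(4b_j-1)/(2a_j-1)$, started from $(a_2,b_2)=(1,1)$.

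After $t_6,\dots,t_2$ have been removed, a single sum over $t_1\in\{-m'+1,\dots,-m\}$ remains, carrying an accumulated decay over the gap $(t_1,k-1)$ at rate $1/(a_7 q)$ and the product $\prod_{j=2}^{6}c_{jq}^{\infty}(\rho^{1/4a_j})$. I split this last decay into two half-rate copies: one copy gives $\rho^{\lfloor(k-1-t_1)/(2a_7 q)\rfloor}\le\rho^{\lfloor(k-1+m)/2qa_7\rfloor}$ because $t_1\le -m$, while the other, re-centered by $k-1-t_1\ge|t_1|$, yields $\rho^{\lfloor|t_1|/(2a_7 q)\rfloor}$; summing the latter against $c_{1t_1}$ over all of $\mathbb{Z}$ produces $c_{1q}^{\infty}(\rho^{1/2a_7})$. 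This explains the factor-two discrepancy between the rate $1/(2a_7)$ carried by $t_1$ and the overall decay, and the rates $1/(4a_j)$ carried by the inner coefficients.

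The main obstacle is the simultaneous rate bookkeeping, which must keep every reserved rate positive and the total reserved budget at or below what the minimum gap supplies, all while the floor operations lose nothing beyond the claimed rates. This is governed by the auxiliary invariant $a_j,b_j\ge 3/2$, which I would establish by induction in parallel with the main estimate: it guarantees $2a_j-1\ge 2>0$, so the recursion is well defined and no exponent degenerates, and it is exactly what makes the constant-free bound survive. Iterating the recursion six times then gives the numerical value $a_7\doteq 334.5406$ quoted in the statement, and the elementary facts of the first paragraph suffice to verify that the compounded floor inequalities do not exceed the stated rates.
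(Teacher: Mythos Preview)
Your elimination step has a structural gap. You propose to (i) reserve a factor $\rho^{\lfloor|t_j|/(4a_jq)\rfloor}$ to weight $c_{jt_j}$, so that $\sum_{t_j}\rho^{\lfloor|t_j|/(4a_jq)\rfloor}c_{jt_j}\le c_{jq}^{\infty}(\rho^{1/4a_j})$, and then (ii) ``sum out'' the leftover minimum via Lemma~\ref{two_rho_sum}. But (i) and (ii) are both sums over the same index $t_j$; there is only one such sum available. Lemma~\ref{two_rho_sum} bounds an \emph{unweighted} sum $\sum_t(\rho^{\lfloor(t-a)/cq\rfloor}\wedge\rho^{\lfloor(b-t)/q\rfloor})$ and cannot be applied in parallel with the weighted sum that produces $c_{jq}^{\infty}$. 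Even if you could combine them, Lemma~\ref{two_rho_sum} carries a prefactor $q(c+1)/(1-\rho)$ at every step, whereas the stated bound has no such constant. And the recursion that would emerge from iterating Lemma~\ref{two_rho_sum} with halving is a one-parameter recursion of the form $a_{j+1}=2(a_j+1)$ (or a close variant), not the two-parameter recursion $a_{j+1}=4a_j(a_j+b_j)/(2a_j-1)$, $b_{j+1}=a_j(4b_j-1)/(2a_j-1)$; the denominator $2a_j-1$ and the coupling through $b_j$ have no source in the Lemma~\ref{two_rho_sum} mechanism.

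The paper does not use Lemma~\ref{two_rho_sum} here at all. Its engine is a \emph{pointwise} product bound obtained by a case split at the threshold $\bar t_j=(a_jt_{j+1}+t_1)/(a_j+b_j)$: for $t_j\le\bar t_j$ the minimum equals $\rho^{\lfloor(t_{j+1}-t_j)/q\rfloor}$, for $t_j\ge\bar t_j$ it equals $\rho^{\lfloor(b_jt_j-t_1)/(a_jq)\rfloor}$, and in each case an elementary algebraic inequality (proved separately for $t_j\ge0$ and $t_j<0$, using $t_1\le0$) shows that the relevant exponent splits as
\[
\text{(new accumulated, independent of $t_j$)}\ +\ \text{(residual)}\quad\text{with}\quad \text{residual}\ \ge\ \frac{|t_j|}{4a_jq}.
\]
Hence $\min(\cdot)\le\rho^{\lfloor(b_{j+1}t_{j+1}-t_1)/(a_{j+1}q)\rfloor}\cdot\rho^{\lfloor|t_j|/(4a_jq)\rfloor}$ pointwise, and the \emph{single} sum $\sum_{t_j}c_{jt_j}\rho^{\lfloor|t_j|/(4a_jq)\rfloor}\le c_{jq}^{\infty}(\rho^{1/4a_j})$ gives the coefficient factor with no extra constant. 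The stated recursion is exactly the algebra of that exponent split. Your treatment of the final $t_1$ step and the re-centering idea are fine; what is missing is this pointwise case-split inequality in place of Lemma~\ref{two_rho_sum}.
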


\begin{proof}[Proof of Lemma \ref{rho_m}]
First, observe that the following result holds for $a,b>1/4$, $t_1\leq 0$, and $t_j, t_{j+1} \geq t_1$:
\begin{equation} \label{t_ab}
\begin{aligned}
(a) & \text{ if } t_j \leq \frac{at_{j+1}+t_1}{a+b}, \quad \text {then } \frac{|t_j|}{4a} \leq \frac{a(4a+1)t_{j+1}+(2a-1)t_1}{4a(a+b)} - t_j, \\
(b) & \text{ if } t_j \geq \frac{at_{j+1}+t_1}{a+b}, \quad \text {then } \frac{|t_j|}{4a} \leq \frac{b}{a}t_j - \frac{a(4b-1)t_{j+1}+(2a+4b+1)t_1}{4a(a+b)}.
\end{aligned}
\end{equation}
(a) holds because (i) when $t_j \leq 0$, we have $t_j \leq (at_{j+1}+t_1)/(a+b) \Rightarrow (4a-1)t_j/4a \leq [a(4a-1)t_{j+1}+(4a-1)t_1]/4a(a+b) \Rightarrow -t_j/4a \leq [a(4a-1)t_{j+1}+(4a-1)t_1]/4a(a+b)-t_j$ and $a(4a-1)t_{j+1}+(4a-1)t_1 \leq a(4a-1)t_{j+1}+(4a-1)t_1+2a(t_{j+1}-t_1) = a(4a+1)t_{j+1}+(2a-1)t_1$; (ii) when $t_j \geq 0$, we have $t_j \leq (at_{j+1}+t_1)/(a+b) \Rightarrow (4a+1)t_j/4a \leq [a(4a+1)t_{j+1}+(4a+1)t_1]/4a(a+b) \Rightarrow t_j/4a \leq [a(4a+1)t_{j+1}+(4a+1)t_1]/4a(a+b)- t_j$ and $(4a+1)t_1 \leq (2a-1)t_1$.

(b) holds because (i) when $t_j \leq 0$, we have $t_j \geq (at_{j+1}+t_1)/(a+b) \Rightarrow (4b+1)t_j/4a \geq [a(4b+1)t_{j+1}+(4b+1)t_1]/4a(a+b) \Rightarrow -t_j/4a \leq bt_j/a - [a(4b+1)t_{j+1}+(4b+1)t_1]/4a(a+b)$ and $a(4b+1)t_{j+1}+(4b+1)t_1 \geq a(4b+1)t_{j+1}+(4b+1)t_1 -2a(t_{j+1}-t_1) = a(4b-1)t_{j+1}+(2a+4b+1)t_1$; (ii) when $t_j \geq 0$, we have $t_j \geq (at_{j+1}+t_1)/(a+b) \Rightarrow (4b-1)t_j/4a \geq [a(4b-1)t_{j+1}+(4b-1)t_1]/4a(a+b) \Rightarrow t_j/4a \leq bt_j/a - [a(4b-1)t_{j+1}+(4b-1)t_1]/4a(a+b)$ and $a(4b-1)t_{j+1}+(4b-1)t_1 \geq a(4b-1)t_{j+1}+(2a+4b+1)t_1$.

We proceed to derive the stated bound. It follows from (a) and (b) and $\lfloor x + y \rfloor \geq \lfloor x \rfloor +\lfloor y \rfloor$ that, with $\overline t_j = (a_jt_{j+1}+t_1)/(a_j+b_j)$,
\begin{align}
& \sum_{t_j=-m'+1}^{k}\left( \rho^{\lfloor (t_{j+1} -t_j)/q \rfloor} \wedge \rho^{\lfloor (b_j t_j-t_1)/a_jq \rfloor}\right) c_{jt_j} \nonumber \\
& \leq \rho^{\lfloor \frac{a_j(4b_j-1)t_{j+1} -(2a_j-1)t_1}{4a_j(a_j+b_j)q} \rfloor} \left( \sum_{t_j\leq\overline t_j} \rho^{\lfloor\frac{a_j(4a_j+1)t_{j+1}+(2a_j-1)t_1}{4a_j(a_j+b_j)q} - \frac{t_j}{q} \rfloor} + \sum_{t_j\geq\overline t_j} \rho^{\lfloor \frac{b_j}{a_jq}t_j - \frac{a_j(4b_j-1)t_{j+1}+(2a_j+4b_j+1)t_1}{4a_j(a_j+b_j)q} \rfloor}\right) c_{jt_j}\nonumber \\
& \leq \rho^{\lfloor \frac{a_j(4b_j-1)t_{j+1} -(2a_j-1)t_1}{4a_j(a_j+b_j)q}\rfloor} c_{jq}^{\infty}\left(\rho^{1/4a_j}\right) \nonumber \\
& = \rho^{\lfloor\frac{b_{j+1}t_{j+1} -t_1}{a_{j+1}}\rfloor} c_{jq}^{\infty} \left(\rho^{1/4a_j}\right). \label{rho_j_sum}
\end{align}
Observe that $a_{j+1} \geq 2a_j \geq 2$ and $b_{j+1} \geq 2b_j - (1/2) \geq 3/2$ for all $j \geq 2$. Therefore, we can apply (\ref{t_ab}) and (\ref{rho_j_sum}) to the left-hand side of (\ref{rho_6}) sequentially for $j=2,3,\ldots,6$. Consequently, the left-hand side of (\ref{rho_6}) is no larger than
\[
\sum_{t_1=-m'+1}^{-m}\rho^{\lfloor \frac{b_7(k-1) -t_1}{a_7q} \rfloor} c_{1t_1} \prod_{j=2}^6 c_{jq}^{\infty}\left(\rho^{1/4a_j}\right).
\]
Observe that $|t_1|\leq k-1-2t_1-m$ because $t_1 \leq -m \Rightarrow -t_1 \leq -2t_1-m \leq k-1-2t_1-m$. From $b_7(k-1) \geq k-1 $ and $|t_1|\leq k-1-2t_1-m$, the sum is bounded by
\[
\sum_{t_1=-m'+1}^{-m}\rho^{\lfloor \frac{k-1 -t_1}{a_7 q} \rfloor} c_{1t_1} = \rho^{\lfloor\frac{k-1+m}{2a_7q}\rfloor} \sum_{t_1=-m'+1}^{-m}\rho^{\lfloor\frac{k-1-2t_1-m}{2a_7q}\rfloor} c_{1t_1} \leq \rho^{\lfloor\frac{k-1+m}{2a_7q}\rfloor} c_{1q}^{\infty}\left(\rho^{1/2a_7}\right),
\]
and the stated result follows.
\end{proof}

\subsubsection{Derivation of $\vartheta_{M_0+1,x} = (\vartheta_{xm}',\pi_{xm}')'$ and $\pi_{xm}=(\varrho_m,\alpha_m,\phi_m')'$}\label{subsec:p_m_repara}
Define $\overline J_{m0} := \{1,\ldots,M_0\}\setminus J_m$, and let $p_j$ and $p^*_j$ denote $\mathbb{P}_{\vartheta_{M_0+1}}(X_k=j)$ and $\mathbb{P}_{\vartheta_{M_0}^*}(X_k=j)$, respectively.

We parameterize the transition probability of $X_k$ in terms of its stationary distribution and the first to the $(m-1)$-th rows and the $(m+1)$-th to the $(M_0+1)$-th rows of its transition matrix.\footnote{Suppose a Markov process has a transition probability $P$ and stationary distribution $\pi$ whose elements are strictly positive. If $\pi$ and all the rows of $P$ except for one are identified, then the remaining row of $P$ is identified from the relation $\pi P = \pi$.}
For $i \in \overline J_m$, we reparameterize $(p_{im},p_{i,m+1})$ to $p_{iJ} = p_{im}+p_{i,m+1}=\mathbb{P}_{\vartheta_{M_0+1}}(X_k \in J_m|X_{k-1}=i)$ and $p_{im|iJ} = p_{im}/(p_{im}+p_{i,m+1})$. Furthermore, we reparameterize $(p_m,p_{m+1})$ in the stationary distribution to $p_J = p_m + p_{m+1}=\mathbb{P}_{\vartheta_{M_0+1}}(X_k \in J_m)$ and $p_{m|J}=p_m/(p_m + p_{m+1})=\mathbb{P}_{\vartheta_{M_0+1}}(X_k=m|X_k \in J_m)$. Therefore, with $\land$ and $\lor$ denoting ``and'' and ``or,'' the transition probability of $X_k$ is summarized by 
$\vartheta_{M_0+1,x} :=( \{p_{iJ},p_{im|iJ} \}_{i \in \overline J_m}, \{p_{ij}\}_{i \in \overline J_m \land j \in \overline J_{m0}}, \{p_{m+1,j}\}_{j=1}^{M_0}, \{p_{j}\}_{j \in \overline J_{m0}}, p_J, p_{m|J})$.

Split $\vartheta_{M_0+1,x}$ as $\vartheta_{M_0+1,x}=(\vartheta_{xm}',\pi_{xm}')'$, where $\vartheta_{xm} := ( \{p_{ij}\}_{i \in \overline J_m \land j \in \overline J_{m0}}, \{p_{iJ} \}_{i \in \overline J_m}, \{p_{j}\}_{j \in \overline J_{m0}}, p_J)$ and $\pi_{xm} :=( \{p_{im|iJ} \}_{i \in \overline J_m}, \{p_{m+1,j}\}_{j=1}^{M_0}, p_{m|J})$.
When the $m$-th and $(m+1)$-th regimes are combined into one regime, the transition probability of $X_k$ equals the transition probability of $X_k$ under $\vartheta_{M_0,x}^*$ if and only if $\vartheta_{xm} = \vartheta_{xm}^*:= \{p_{ij} = p_{ij}^{*}\ \text{for}\ i \in \overline J_m \land(1 \leq j \leq m-1); p_{ij} = p_{i,j-1}^{*}\ \text{for}\ i \in \overline J_m \land (m+2 \leq j \leq M_0); p_{iJ} = p_{im}^{*}\ \text{for } i\in \bar J_m;p_j = p_{j}^* \ \text{for}\ 1 \leq j \leq m-1; p_j = p_{j-1}^* \ \text{for}\ m+2 \leq j \leq M_0; p_J = p_m^* \}$. $\pi_{xm}$ is the part of $\vartheta_{M_0+1,x}$ that is not identified under $H_{0m}$.

We proceed to derive the reparameterization of some elements of $\pi_{xm}$ in terms of $(\alpha_m,\varrho_m)$. First, map $p_{m+1,m}$ and $p_{m+1,m+1}$ to $p_{m+1,J}:=p_{m+1,m}+p_{m+1,m+1}=\mathbb{P}_{\vartheta_{M_0+1}}(X_k \in J|X_{k-1}=m+1)$ and $p_{m+1,m|J}:=p_{m+1,m}/p_{m+1,J}=\mathbb{P}_{\vartheta_{M_0+1}}(X_k=m|X_k \in J,X_{k-1}=m+1)$. Let $P_J$ and $\pi_J$ denote the transition matrix and stationary distribution of $X_k$ restricted to lie in $J_m$. The second row of $P_J$ is given by $(p_{m+1,m|J},1-p_{m+1,m|J})$, and $\pi_J$ is given by $(p_{m|J},1-p_{m|J})$. From the relation $\pi_J= \pi_J P_J$, we can obtain the first row of $P_J$ as a function of $p_{m+1,m|J}$ and $p_{m|J}$. Finally, the elements of $P_J$ are mapped to $(\varrho_m,\alpha_m)$ as in Section \ref{sec: testing-1}.

\clearpage

\bibliography{markov}

\begin{thebibliography}{59}
\newcommand{\enquote}[1]{``#1''}
\expandafter\ifx\csname natexlab\endcsname\relax\def\natexlab#1{#1}\fi

\bibitem[{Andrews(1999)}]{andrews99em}
Andrews, D. W.~K. (1999), \enquote{Estimation When a Parameter is on a
  Boundary,} \textit{Econometrica}, 67, 1341--1383.

\bibitem[{Andrews(2001)}]{andrews01em}
--- (2001), \enquote{Testing when a Parameter is on the Boundary of the
  Maintained Hypothesis,} \textit{Econometrica}, 69, 683--734.

\bibitem[{Andrews and Ploberger(1994)}]{andrewsploberger94em}
Andrews, D. W.~K. and Ploberger, W. (1994), \enquote{Optimal Tests when a
  Nuisance Parameter is Present Only Under the Alternative,}
  \textit{Econometrica}, 62, 1383--1414.

\bibitem[{Andrews and Ploberger(1995)}]{andrewsploberger95as}
--- (1995), \enquote{Admissibility of the Likelihood Ratio Test When a Nuisance
  Parameter is Present Only Under the Alternative,} \textit{Annals of
  Statistics}, 23, 1609--1629.

\bibitem[{Ang and Bekaert(2002)}]{angbekaert02rfs}
Ang, A. and Bekaert, G. (2002), \enquote{International Asset Allocation with
  Regime Shifts,} \textit{Review of Financial Studies}, 15, 1137--1187.

\bibitem[{Ang and Timmermann(2012)}]{angtimmermann12annual}
Ang, A. and Timmermann, A. (2012), \enquote{Regime Changes and Financial
  Markets,} \textit{Annual Review of Financial Economics}, 4, 313--337.

\bibitem[{Bianchi(2013)}]{bianchi13restud}
Bianchi, F. (2013), \enquote{Regime Switches, Agents' Beliefs, and Post-World
  War II U.S. Macroeconomic Dynamics,} \textit{Review of Economic Studies}, 80,
  463--490.

\bibitem[{Bickel et~al.(1998)Bickel, Ritov, and Ryd{\'e}n}]{brr98as}
Bickel, P.~J., Ritov, Y., and Ryd{\'e}n, T. (1998), \enquote{Asymptotic
  normality of the maximum-likelihood estimator for general hidden Markov
  models,} \textit{Annals of Statistics}, 26, 1614--1635.

\bibitem[{Carrasco et~al.(2014)Carrasco, Hu, and Ploberger}]{carrasco14em}
Carrasco, M., Hu, L., and Ploberger, W. (2014), \enquote{Optimal Test for
  Markov Switching Parameters,} \textit{Econometrica}, 82, 765--784.

\bibitem[{Carter and Steigerwald(2012)}]{cartersteigerwald12em}
Carter, A.~V. and Steigerwald, D.~G. (2012), \enquote{Testing for Regime
  Switching: A Comment,} \textit{Econometrica}, 80, 1809--1812.

\bibitem[{Chen and Li(2009)}]{chenli09as}
Chen, J. and Li, P. (2009), \enquote{Hypothesis Test for Normal Mixture Models:
  The {EM} Approach,} \textit{Annals of Statistics}, 37, 2523--2542.

\bibitem[{Chen et~al.(2012)Chen, Li, and Fu}]{chenlifu12jasa}
Chen, J., Li, P., and Fu, Y. (2012), \enquote{Inference on the Order of a
  Normal Mixture,} \textit{Journal of the American Statistical Association},
  107, 1096--1105.

\bibitem[{Chen et~al.(2014)Chen, Ponomarevab, and Tamer}]{chen14joe}
Chen, X., Ponomarevab, M., and Tamer, E. (2014), \enquote{Likelihood Inference
  in Some Finite Mixture Models,} \textit{Journal of Econometrics}, 182,
  87--99.

\bibitem[{Chesher(1984)}]{chesher84em}
Chesher, A. (1984), \enquote{Testing for Neglected Heterogeneity.}
  \textit{Econometrica}, 52, 865--872.

\bibitem[{Cho and White(2007)}]{chowhite07em}
Cho, J.~S. and White, H. (2007), \enquote{Testing for Regime Switching,}
  \textit{Econometrica}, 75, 1671--1720.

\bibitem[{Dannemann and Holtzmann(2008)}]{dannemann08cjstat}
Dannemann, J. and Holtzmann, H. (2008), \enquote{Testing for Two States in a
  Hidden Markov Model,} \textit{Canadian Journal of Statistics}, 36, 505--520.

\bibitem[{Davies(1977)}]{davies77bm}
Davies, R. (1977), \enquote{Hypothesis Testing When a Nuisance Parameter is
  Present Only Under the Alternative,} \textit{Biometrika}, 64, 247--254.

\bibitem[{Davies(1987)}]{davies87bm}
--- (1987), \enquote{Hypothesis Testing When a Nuisance Parameter is Present
  Only Under the Alternative,} \textit{Biometrika}, 74, 33--43.

\bibitem[{Douc and Matias(2001)}]{doucmatias01bernouiil}
Douc, R. and Matias, C. (2001), \enquote{Asymptotics of the Maximum Likelihood
  Estimator for General Hidden Markov Models,} \textit{Bernoulli}, 7, 381--420.

\bibitem[{Douc et~al.(2004)Douc, Moulines, and Ryd{\'e}n}]{dmr04as}
Douc, R., Moulines, {\'E}., and Ryd{\'e}n, T. (2004), \enquote{Asymptotic
  Properties of the Maximum Likelihood Estimator in Autoregressive Models with
  Markov Regime,} \textit{Annals of Statistics}, 32, 2254--2304.

\bibitem[{Dufour and Luger(2017)}]{dufour17emreviews}
Dufour, J.-M. and Luger, R. (2017), \enquote{Identification-Robust Moment-Based
  Tests for Markov Switching in Autoregressive Models,} \textit{Econometric
  Reviews}, 36, 713--727.

\bibitem[{Durrett(2010)}]{durrett10book}
Durrett, R. (2010), \textit{Probability: Theory and Examples}, Cambridge
  University Press, 4th ed.

\bibitem[{Francq and Roussignol(1998)}]{francq98stat}
Francq, C. and Roussignol, M. (1998), \enquote{Ergodicity of Autoregressive
  Processes with Markov-Switching and Consistency of the Maximum-Likelihood
  Estimator,} \textit{Statistics}, 32, 151--173.

\bibitem[{Garcia(1998)}]{garcia98ier}
Garcia, R. (1998), \enquote{Asymptotic Null Distribution of the Likelihood
  Ratio Test in Markov Switching Models,} \textit{International Economic
  Review}, 39, 763--788.

\bibitem[{Gassiat and Keribin(2000)}]{gassiat00esaim}
Gassiat, E. and Keribin, C. (2000), \enquote{The Likelihood Ratio Test for the
  Number of Components in a Mixture with Markov Regime,} \textit{ESAIM:
  Probability and Statistics}, 4, 25--52.

\bibitem[{Gu et~al.(2017)Gu, Koenker, and Volgushev}]{gukoenkervolgushev17wp}
Gu, J., Koenker, R., and Volgushev, S. (2017), \enquote{Testing for Homogeneity
  in Mixture Models,} Forthcoming in Econometric Theory.

\bibitem[{Hamilton(1989)}]{hamilton89em}
Hamilton, J.~D. (1989), \enquote{A New Approach to the Economic Analysis of
  Nonstationary Time Series and the Business Cycle,} \textit{Econometrica}, 57,
  357--384.

\bibitem[{Hamilton(1994)}]{hamilton94book}
--- (1994), \textit{Time Series Analysis}, Princeton University Press.

\bibitem[{Hamilton(2005)}]{hamilton05stlouis}
--- (2005), \enquote{What's Real About the Business Cycle?} \textit{Federal
  Reserve Bank of St. Louis Review}, 87, 435--452.

\bibitem[{Hamilton(2008)}]{hamilton08palgrave}
--- (2008), \enquote{Regime Switching Models,} in \textit{The New Palgrave
  Dictionary of Economics}, eds. Durlauf, S.~N. and Blume, L.~E., Palgrave
  Macmillan, 2nd ed.

\bibitem[{Hamilton(2016)}]{hamilton16hdbk}
--- (2016), \enquote{Macroeconomic Regimes and Regime Shifts,} in
  \textit{Handbook of Macroeconomics}, eds. Taylor, J.~B. and Uhlig, H.,
  Elsevier, vol.~2, chap.~3.

\bibitem[{Hamilton and Susmel(1994)}]{hamiltonsusmel94joe}
Hamilton, J.~D. and Susmel, R. (1994), \enquote{Autoregressive Conditional
  Heteroskedasticity and Changes in Regime.} \textit{Journal of Econometrics},
  64, 307--333.

\bibitem[{Hansen(1992)}]{hansen92jae}
Hansen, B. (1992), \enquote{The Likelihood Ratio Test Under Non-Standard
  Conditions: Testing the Markov Switching Model of GNP,} \textit{Journal of
  Applied Econometrics}, 7, 61--82.

\bibitem[{Hansen(1996{\natexlab{a}})}]{hansen96em}
--- (1996{\natexlab{a}}), \enquote{Inference When a Nuisance Parameter is not
  Identified under the Null Hypothesis,} \textit{Econometrica}, 64, 413--430.

\bibitem[{Hansen(1996{\natexlab{b}})}]{hansen96et}
--- (1996{\natexlab{b}}), \enquote{Stochastic Equicontinuity for Unbounded
  Dependent Heterogeneous Arrays,} \textit{Econometric Theory}, 12, 347--359.

\bibitem[{Hartigan(1985)}]{hartigan85book}
Hartigan, J. (1985), \enquote{Failure of Log-likelihood Ratio Test,} in
  \textit{Proceedings of the Berkeley Conference in Honor of Jerzy Neyman and
  Jack Kiefer}, eds. {Le Cam}, L. and Olshen, R., Berkeley: University of
  California Press, vol.~2, pp. 807--810.

\bibitem[{Ho and Nguyen(2016)}]{honguyen16as}
Ho, N. and Nguyen, X. (2016), \enquote{Convergence Rates of Parameter
  Estimation for Some Weakly Identifiable Finite Mixtures,} \textit{Annals of
  Statistics}, 44, 2726--2755.

\bibitem[{Jensen and Petersen(1999)}]{jp99as}
Jensen, J.~L. and Petersen, N.~V. (1999), \enquote{Asymptotic normality of the
  Maximum Likelihood Estimator in State Space Models,} \textit{Annals of
  Statistics}, 27, 514--535.

\bibitem[{Kahn and Rich(2007)}]{kahnrich07jme}
Kahn, J.~A. and Rich, R.~W. (2007), \enquote{Tracking the New Economy: Using
  Growth Theory to Detect Changes in Trend Productivity,} \textit{Journal of
  Monetary Economics}, 54, 1670--1701.

\bibitem[{Kasahara and Shimotsu(2015)}]{kasaharashimotsu15jasa}
Kasahara, H. and Shimotsu, K. (2015), \enquote{Testing the Number of Components
  in Normal Mixture Regression Models,} \textit{Journal of the American
  Statistical Association}, 110, 1632--1645.

\bibitem[{Kasahara and Shimotsu(2017)}]{kasaharashimotsu17hamilton}
--- (2017), \enquote{Asymptotic Properties of the Maximum Likelihood Estimator
  in Regime Switching Econometric Models,} Preprint, University of British
  Columbia.

\bibitem[{Krishnamurthy and Ryd{\'e}n(1998)}]{krishnamurthy98jtsa}
Krishnamurthy, V. and Ryd{\'e}n, T. (1998), \enquote{Consistent Estimation of
  Linear and Non-Linear Autoregressive Models with Markov Regime,}
  \textit{Journal of Time Series Analysis}, 19, 291--307.

\bibitem[{{Le Gland} and Mevel(2000)}]{legrandmevel00math}
{Le Gland}, F. and Mevel, L. (2000), \enquote{Exponential Forgetting and
  Geometric Ergodicity in Hidden Markov Models,} \textit{Mathematics of
  Control, Signals, and Systems}, 13, 63--93.

\bibitem[{Lee and Chesher(1986)}]{leechesher86joe}
Lee, L.-F. and Chesher, A. (1986), \enquote{Specification Testing when Score
  Test statistics are Identically Zero.} \textit{Journal of Econometrics}, 31,
  121--149.

\bibitem[{Lehmann and Romano(2005)}]{lehmannromano05book}
Lehmann, E.~L. and Romano, J.~P. (2005), \textit{Testing Statistical
  Hypotheses}, Springer, 3rd ed.

\bibitem[{Leroux(1992)}]{leroux92spa}
Leroux, B.~G. (1992), \enquote{Maximum-Likelihood Estimation for Hidden Markov
  Models,} \textit{Stochastic Processes and their Applications}, 40, 127--143.

\bibitem[{Levin et~al.(2009)Levin, Peres, and Wilmer}]{levin09book}
Levin, D.~A., Peres, Y., and Wilmer, E.~L. (2009), \textit{Markov Chains and
  Mixing Times}, American Mathematical Society.

\bibitem[{Liu and Shao(2003)}]{liushao03as}
Liu, X. and Shao, Y. (2003), \enquote{Asymptotics for Likelihood Ratio Tests
  under Loss of Identifiability,} \textit{Annals of Statistics}, 31, 807--832.

\bibitem[{Louis(1982)}]{louis82jrssb}
Louis, T.~A. (1982), \enquote{Finding the Observed Information Matrix When
  Using the EM Algorithm,} \textit{Journal of the Royal Statistical Society
  Series B}, 44, 226--233.

\bibitem[{Marmer(2008)}]{marmer08empirical}
Marmer, V. (2008), \enquote{Testing the Null Hypothesis of No Regime Switching
  with an Application to {GDP} Growth Rates,} \textit{Empirical Economics}, 35,
  101--122.

\bibitem[{Morley and Piger(2012)}]{morleypiger12restat}
Morley, J. and Piger, J. (2012), \enquote{The Asymmetric Business Cycle,}
  \textit{Review of Economics and Statisitcs}, 94, 208--221.

\bibitem[{Newey and McFadden(1994)}]{neweymcfadden94hdbk}
Newey, W.~K. and McFadden, D.~L. (1994), \enquote{Large Sample Estimation and
  Hypothesis Testing,} in \textit{Handbook of Econometrics}, Amsterdam:
  North-Holland, vol.~4, pp. 2111--2245.

\bibitem[{Okimoto(2008)}]{okimoto08jfqa}
Okimoto, T. (2008), \enquote{New Evidence of Asymmetric Dependence Structures
  in International Equity Markets,} \textit{Journal of Financial and
  Quantitative Analysis}, 43, 787--815.

\bibitem[{Pollard(1990)}]{pollard90book}
Pollard, D. (1990), \textit{Empirical Processes: Theory and Applications},
  vol.~2 of \textit{CBMS Conference Series in Probability and Statistics},
  Hayward, CA: Institute of Mathematical Statistics.

\bibitem[{Qu and Zhuo(2017)}]{quzhuo17wp}
Qu, Z. and Zhuo, F. (2017), \enquote{Likelihood Ratio Based Tests for Markov
  Regime Switching,} Preprint, Boston Unversity.

\bibitem[{Rotnitzky et~al.(2000)Rotnitzky, Cox, Bottai, and
  Robins}]{rotnitzky00bernoulli}
Rotnitzky, A., Cox, D.~R., Bottai, M., and Robins, J. (2000),
  \enquote{Likelihood-based Inference with Singular Information Matrix,}
  \textit{Bernoulli}, 6, 243--284.

\bibitem[{Schorfheide(2005)}]{schorfheide05red}
Schorfheide, F. (2005), \enquote{Learning and Monetary Policy Shifts,}
  \textit{Review of Economic Dynamics}, 8, 392--419.

\bibitem[{Sims and Zha(2006)}]{simszha06aer}
Sims, C. and Zha, T. (2006), \enquote{Were There Regime Switches in U.S.
  Monetary Policy?} \textit{American Economic Review}, 96, 54--81.

\bibitem[{Woodbury(1971)}]{woodbury71biometrics}
Woodbury, M.~A. (1971), \enquote{Discussion of Paper by Hartley and Hocking,}
  \textit{Biometrics}, 27, 808--817.

\end{thebibliography}

\clearpage

{

\begin{table} 
	\centering
	\caption{Rejection frequencies (\%) under the null hypothesis at the nominal 10\%, 5\%, and 1\% levels}
		\begin{tabular}{l l |c c c| c c c} \hline \hline 
	 \multicolumn{8}{c}{$H_0: M=1$} \\ \hline
		 &    & \multicolumn{3}{c|}{Model 1}& \multicolumn{3}{c}{Model 2} \\ 
		 & Test    & 10\% & 5\% & 1\% & 10\% & 5\% & 1\% \\ \hline
		$n=200$& \text{LRT}   & 10.43 & 4.63 & 1.13 & 10.17 & 5.27 & 1.00 \\
		& \text{supTS}    	 & 9.87 & 5.10 & 0.93 & 9.63 &4.67 & 0.90 \\
		& \text{QLRT}     & 10.03 & 4.97 & 1.03 & --- & --- & --- \\ \hline
		$n=500$& \text{LRT}   & 8.80 & 4.03 & 0.67& 9.13 & 4.30 & 1.23 \\ 
		& \text{supTS}    & 9.50 & 4.57 & 0.60 & 9.23 & 5.07 & 0.90 \\
		& \text{QLRT}     & 9.07 & 4.43 & 0.80 & --- & --- & --- \\ \hline \hline
	 \multicolumn{8}{c}{$H_0: M=2$} \\ \hline 
		 \multicolumn{2}{c|}{LRT}    & \multicolumn{3}{c|}{Model 1}& \multicolumn{3}{c}{Model 2} \\ 
		& $(p_{11},p_{22})$  & 10\% & 5\% & 1\% & 10\% & 5\% & 1\% \\ \hline
		$n=200$& $(0.5,0.5)$ & 12.06 & 7.16 & 1.70 & 10.57 & 4.87 & 0.80 \\ 
			   & $(0.7,0.7)$ & 11.97 & 6.07 & 1.70 &  9.63 & 4.53 & 1.07 \\ \hline
		$n=500$& $(0.5,0.5)$ & 9.77 & 4.43 & 0.70 &  8.37 & 3.90 & 0.73  \\ 
		       & $(0.7,0.7)$ & 8.40 & 4.20 & 0.70 &  9.63 & 4.80 & 0.63\\ \hline \hline
	\end{tabular}\\ 
	\begin{flushleft}
Notes: We use 199 bootstrap samples and 3000 replications. For testing $H_0: M=2$ using Models 1 and 2, we generate the data under $(\beta,\mu_1,\mu_2,\sigma) = (0.5,-1,1,1)$ and $(\beta,\mu_1,\mu_2,\sigma_1,\sigma_2) = (0.5,-1,1,0.9,1.2)$, respectively. 
\end{flushleft}
		\label{table:bootstrap-size}
\end{table} 

\begin{table}[]
	\centering 
	\caption{Rejection frequencies (\%) for testing $H_0: M=1$ under the alternative hypothesis}
	\begin{tabular}{l|l|ccc | ccc}
		\hline \hline
		 & & \multicolumn{3}{c|}{ Model 1 }& \multicolumn{3}{c}{ Model 2} \\
		$(p_{11},p_{22})$  & Test  & $\mu_1=0.20$ & $\mu_1=0.6$ & $\mu_1=1.0$& $\mu_1=0.20$ & $\mu_1=0.6$ & $\mu_1=1.0$ \\\hline 
		$(0.25,0.25)$& \text{LRT}     & 4.87 & 46.90 & 99.63 & 16.40 & 78.00 & 99.97 \\ 
		& \text{supTS}     & 6.23 & 56.43 & 95.90& 16.37 & 70.97 & 95.37 \\ 
		& \text{QLRT}     & 5.10 & 8.00 & 55.27& --- & --- & --- \\\hline
		$(0.50,0.50)$& \text{LRT}  & 3.80 & 7.03 & 67.87 & 13.70 & 43.77 & 92.77 \\ 
		& \text{supTS}     & 4.07 & 4.40 & 4.60 & 14.70 & 35.77 & 35.30 \\ 
		& \text{QLRT}     & 4.90 & 9.40 & 82.50 & --- & --- & --- \\\hline
		$(0.70,0.70)$& \text{LRT}     & 4.10 & 10.23 & 91.07& 14.63 & 51.37 & 98.17 \\ 
		& \text{supTS}     & 4.57 & 7.40 & 26.37 & 14.90 & 36.20 & 43.43 \\ 
		& \text{QLRT}     & 5.13 & 8.53 & 58.73 & --- & --- & --- \\\hline
		$(0.90,0.90)$& \text{LRT}     & 5.33 & 46.87 & 99.97 & 23.27 & 79.87 & 100.00 \\ 
		& \text{supTS}     & 6.77 & 13.90 & 4.40 & 19.10 & 41.17 & 35.30 \\ 
		& \text{QLRT}     & 4.83 & 5.63 & 5.97 & --- & --- & --- \\\hline \hline 
	\end{tabular}\\ 
	\begin{flushleft} 
Notes: Nominal level of 5\% and $n=500$. We use 199 bootstrap samples and 3000 replications. We set $\mu_2=-\mu_1$ for both models, $(\beta,\sigma)=(0.5,1.0)$ for Model 1, and $(\beta,\sigma_1,\sigma_2) = (0.5,1.1,0.9)$ for Model 2. 
\end{flushleft} 
		\label{table:bootstrap-power-model1}
\end{table}

\begin{table}[]
	\centering 
	\caption{Rejection frequencies (\%) for testing $H_0: M=2$ under the alternative hypothesis}
	\begin{tabular}{l|c|c | c|c }
		\hline \hline
		 &  \multicolumn{2}{c|}{ Model 1 }& \multicolumn{2}{c}{ Model 2} \\ \cline{2-5}
		 & $(\mu_1,\mu_2,\mu_3)$& $(\mu_1,\mu_2,\mu_3)$& $(\mu_1,\mu_2,\mu_3)$& $(\mu_1,\mu_2,\mu_3)$\\
$(p_{11},p_{22},p_{33})$ & $=(1,0 ,-1)$& $=(2,0 ,-2)$& $=(1,0 ,-1)$& $=(2,0 ,-2)$ \\\hline 
$(0.5,0.5,0.5)$		     & 5.23 & 30.80 & 10.33 & 60.03 \\ \hline 
$(0.7,0.7,0.7)$          & 8.47 & 94.03 & 23.10 & 99.33 \\ \hline \hline 
	\end{tabular}\\ 
	\begin{flushleft} 
Notes: Nominal level of 5\% and $n=500$. We use 199 bootstrap samples and 3000 replications. 
We set $(\beta,\sigma)=(0.5,1.0)$ for Model 1 and $(\beta,\sigma_1,\sigma_2,\sigma_3) = (0.5,0.6, 0.9, 1.2)$ for Model 2. For both Models 1 and 2, we set $p_{ij} = (1-p_{ii})/2$ for $j\neq i$, so that, for example, $(p_{12},p_{13})=(0.15,0.15)$ when $p_{11}=0.7$. \end{flushleft} 
		\label{table:bootstrap-power-model2}
\end{table}
     
\clearpage

\begin{table} 
\centering
	\caption{Parameter estimates: U.S. GDP per capita growth, 1960Q1--2014Q4 }
\begin{tabular}[t]{c|cc|cc|cc}   \hline\hline
 \multicolumn{7}{c }{Panel A: Model 1 with common variance}\\ \hline
 & \multicolumn{2}{c| }{$M=2$} & \multicolumn{2}{c| }{$M=3$} & \multicolumn{2}{c}{$M=4$} \\ 
 & coeff. & s.e.& coeff. & s.e.& coeff. & s.e. \\ \hline
$\mu_1$&-0.634&0.200&-0.823&0.151&-2.348&0.649\\
$\mu_2$&0.951&0.176&0.692&0.172&-0.330&0.179\\
$\mu_3$& -- & -- &2.023&0.236&0.532&0.161\\
$\mu_4$& -- & -- & -- & -- &2.025&0.184\\
$\sigma$&0.913&0.053&0.752&0.052&0.832&0.040\\
$\beta$&0.787&0.041&0.773&0.046&0.639&0.053\\ \hline \hline
\multicolumn{7}{c }{Panel B: Model 2 with switching variance}\\ \hline
 & \multicolumn{2}{c| }{$M=2$} & \multicolumn{2}{c| }{$M=3$} & \multicolumn{2}{c}{$M=4$} \\ 
        & coeff. & s.e.& coeff. & s.e.& coeff. & s.e. \\ \hline
$\mu_1   $&0.370&0.123&-0.643&0.308&-0.698&0.359\\
$\mu_2   $&0.426&0.178& 0.618&0.179& 0.580&0.192\\
$\mu_3   $& --  & --  & 1.826&0.325& 1.569&0.523\\
$\mu_4   $& --  & --  & --   & --  & 2.218&0.830\\
$\sigma_1$&0.655&0.063& 1.091&0.167& 1.041&0.197\\
$\sigma_2$&1.495&0.138& 0.605&0.058& 0.578&0.073\\
$\sigma_3$& --  & --  & 0.892&0.154& 0.670&0.282\\
$\sigma_4$& --  & --  & --   & --  & 0.879&0.336\\
$\beta$   &0.867&0.036& 0.784&0.050& 0.783&0.050
\\\hline\hline
\end{tabular}\label{table:parameter}
\end{table} 
\begin{table} 
\centering
\caption{Selection of the number of regimes: U.S. GDP per capita growth, 1960Q1--2014Q4 }
\begin{tabular}[t]{c|ccc|cc|ccc|cc}   \hline\hline 
&\multicolumn{5}{c|}{Model 1 with common variance}&\multicolumn{5}{c}{Model 2 with switching variance}\\ \hline
 & &&&\multicolumn{2}{c| }{LRT}&&&&\multicolumn{2}{c }{LRT}\\
$M_0$&log-like.&AIC&BIC&LR$_n$&$p$-val. &log-like.&AIC&BIC&LR$_n$&$p$-val. \\ \hline
1&-331.70&669.39&\textbf{679.58}&20.86&0.000&-331.70&669.39&679.58&47.25&0.000\\
2&-321.27&656.54&680.29&27.77&0.000&-308.07&632.15&\textbf{659.29}&22.14&0.000\\
3&-307.39&\textbf{640.77}&684.89&15.23&0.020&-297.01&\textbf{624.01}&674.91&\textbf{4.87}&\textbf{0.392}\\
4&-299.77&641.54&712.81&\textbf{6.57}&\textbf{0.523}&-294.57&637.14&718.59&3.01&0.397\\
\hline\hline
\end{tabular}\label{table:select}
\end{table} 	
}

 \begin{figure}[tb] %
	\caption{The posterior probabilities of each regime (Model 1 with common variance): U.S. GDP per capita growth, 1960Q1--2014Q4 }
	\centering
   \begin{minipage}{.9\linewidth}
   \centering
   \medskip
   \textbf{M=2}\\ \smallskip
   \includegraphics[width=0.9\linewidth]{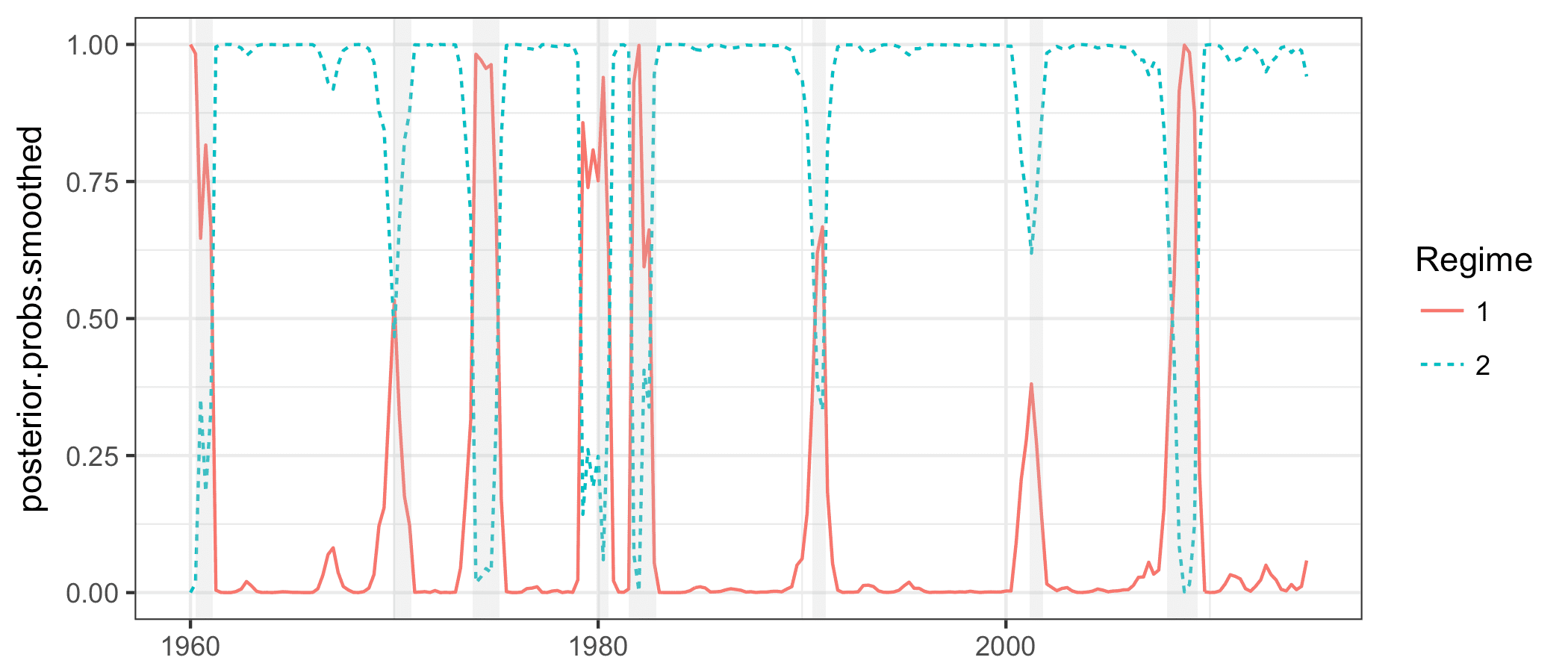}
   \end{minipage}
   \begin{minipage}{.9\linewidth}
   \centering
   \medskip
   \textbf{M=3}\\ \smallskip
   \includegraphics[width=0.9\linewidth]{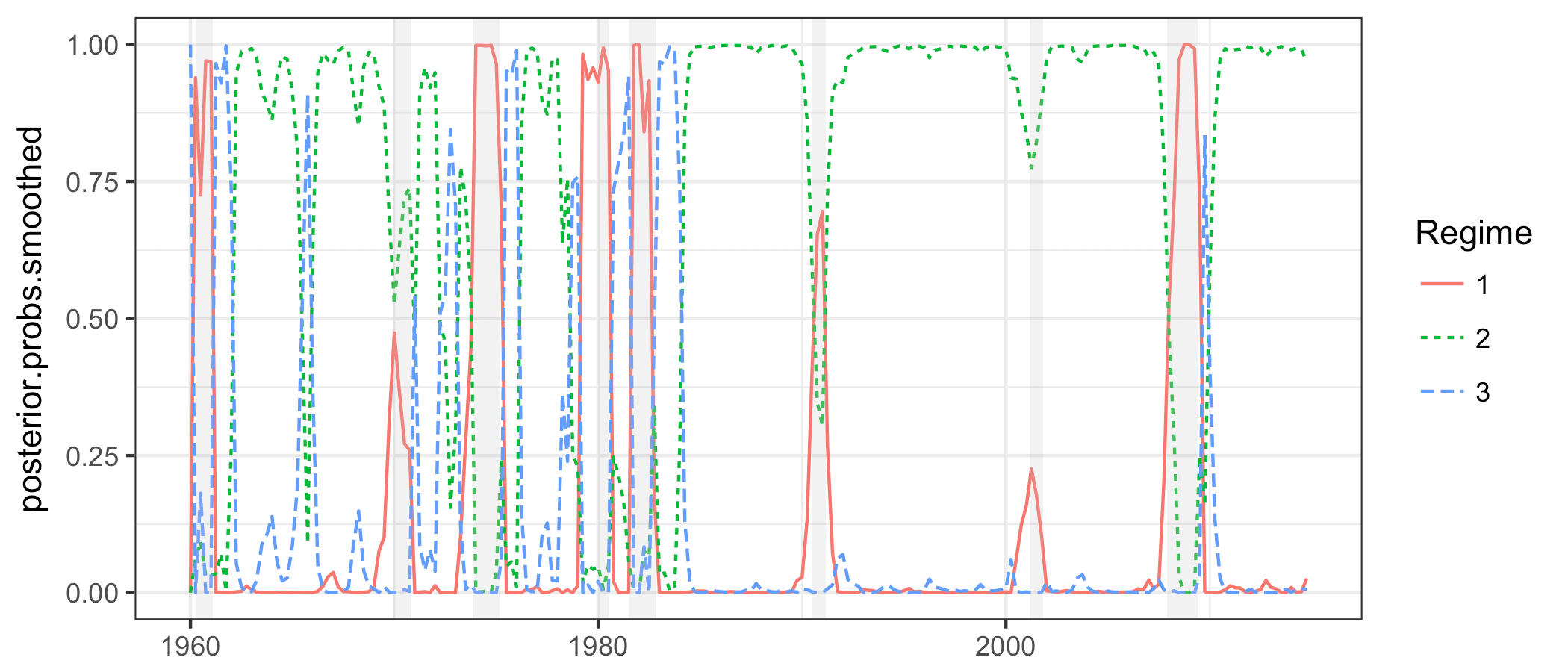}
   \end{minipage} 
   \begin{minipage}{.9\linewidth}
   \centering
   \medskip
   \textbf{M=4}\\ \smallskip
   \includegraphics[width=0.9\linewidth]{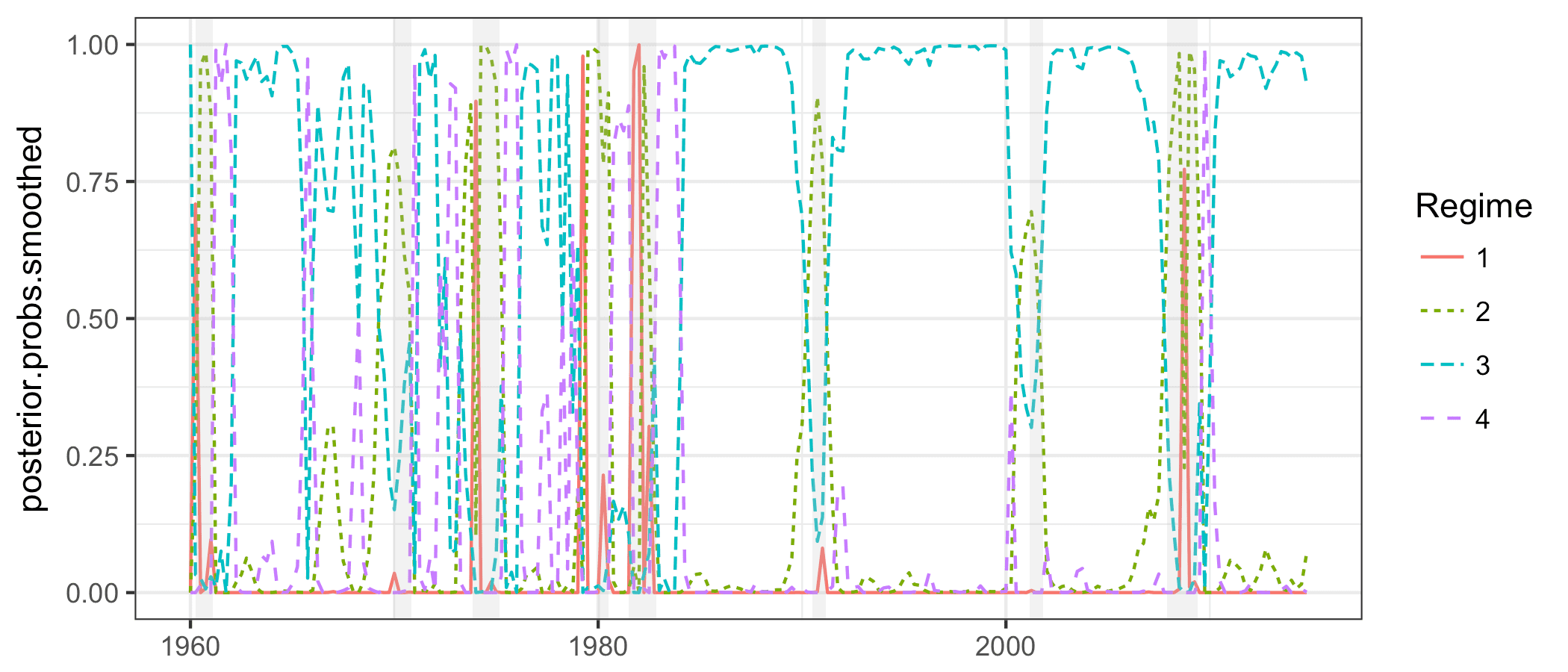}
   \end{minipage} 
\label{fig:common}
 \end{figure}

 \begin{figure}[tb] 
 	\caption{The posterior probabilities of each regime (Model 2 with switching variance): U.S. GDP per capita growth, 1960Q1--2014Q4 }\medskip
  	\centering
  	\begin{minipage}{.9\linewidth}
	 	\centering
   \medskip
   \textbf{M=2}\\ \smallskip
  		\includegraphics[width=0.9\linewidth]{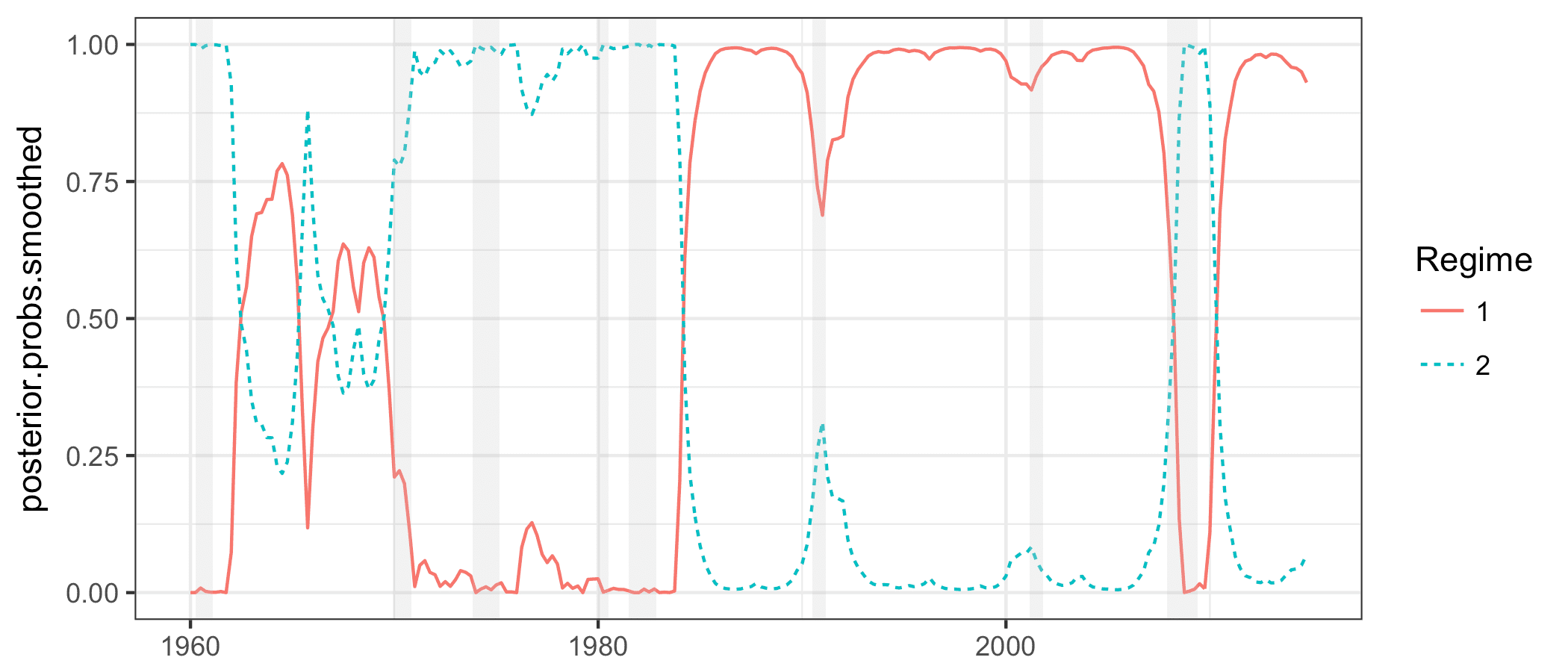}
  	\end{minipage}
  	\begin{minipage}{.9\linewidth}
	 	\centering
   \medskip
   \textbf{M=3}\\ \smallskip
  		\includegraphics[width=0.9\linewidth]{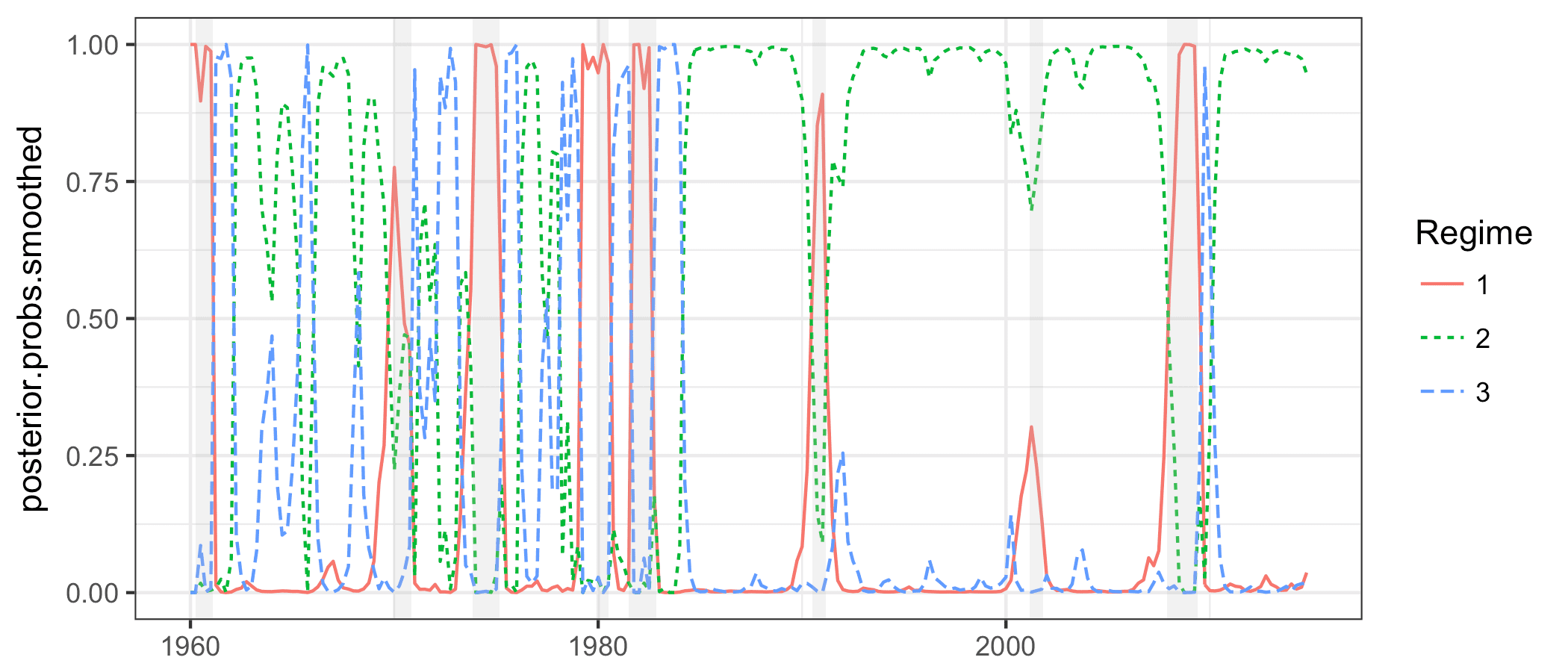}
  	\end{minipage} 
  	\begin{minipage}{.9\linewidth}
	 	\centering
   \medskip
   \textbf{M=4}\\ \smallskip
  		\includegraphics[width=0.9\linewidth]{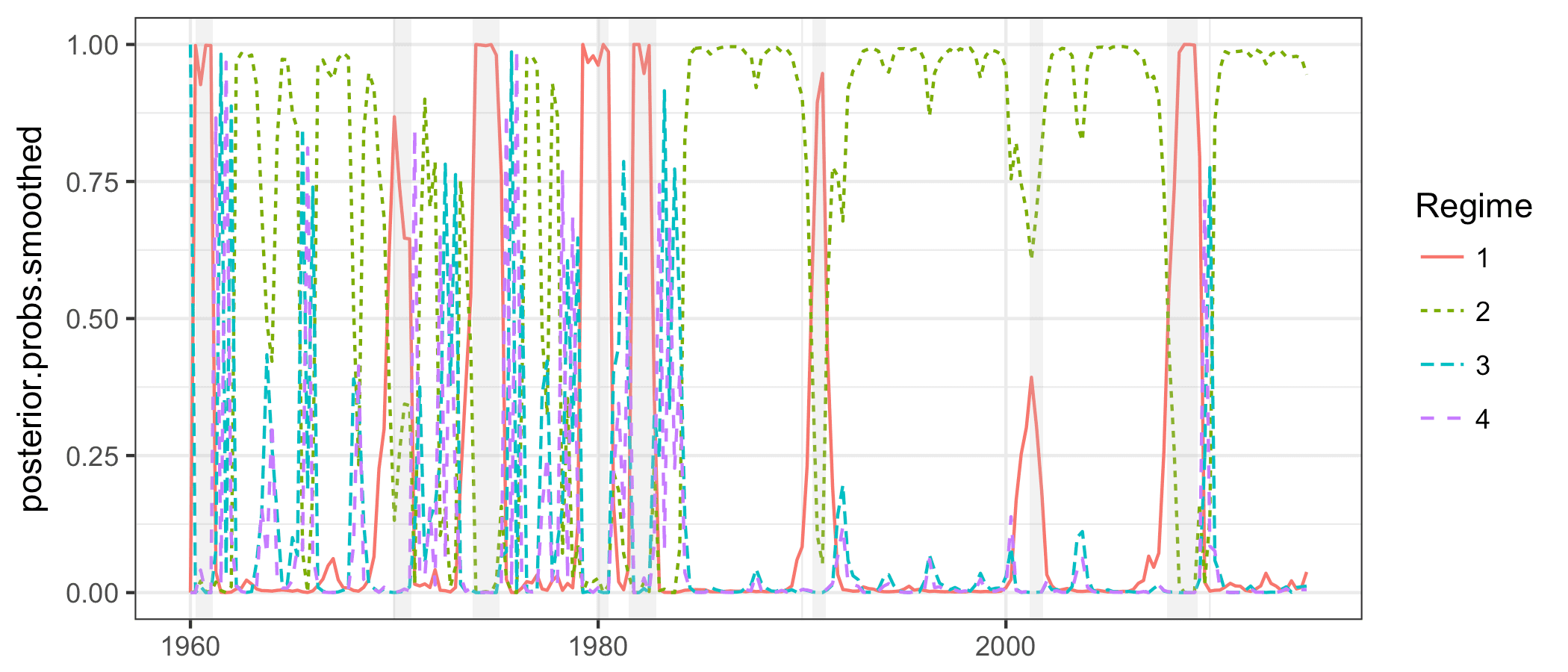}
  	\end{minipage} 
  	\label{fig:switching}
 \end{figure}

\end{document}